\newtheorem{definition}{Definition}[section]
\newtheorem{lemma}[definition]{Lemma}
\newtheorem{proposition}[definition]{Proposition}
\newtheorem{theorem}[definition]{Theorem}
\newtheorem{remark}[definition]{Remark}
\newtheorem{assumption}[definition]{Assumption}
\numberwithin{equation}{section}
\begin{document}
\title{An optimal lower bound for the low density Fermi gas}

\author{Emanuela L. Giacomelli}
\affil{University of Milan, Department of Mathematics, Via Cesare Saldini 50, 20133 Milan, Italy}

\maketitle

\abstract{\textcolor{black}{We consider a dilute Fermi gas in three dimensions interacting through a positive, radially symmetric, compactly supported, and integrable potential in the thermodynamic limit. We establish a second-order lower bound for the ground-state energy density, with an error term that is optimal in the sense that it matches the order of the next correction term conjectured by Huang and Yang in 1957 [Huang and Yang, Phys. Rev. 105, 767–775, 1957]. Although a first rigorous derivation of the Huang–Yang formula has recently been obtained in the combined papers [Giacomelli, Hainzl, Nam and Seiringer,arXiv:2409.17914][Giacomelli, Hainzl, Nam and Seiringer,arXiv:2505.22340], the present work takes a different approach, inspired by the construction of suitable trial states introduced in [Giacomelli, Hainzl, Nam and Seiringer,arXiv:2409.17914]. In particular, we provide a simple proof that is effective enough to yield the optimal error bound, by adapting the argument originally used to establish the upper bound on the energy density in [Giacomelli, Hainzl, Nam and Seiringer,arXiv:2409.17914].}}

\tableofcontents

\section{Introduction}
\textcolor{black}{We consider a system of $N$ interacting fermions with spin $\sigma = \{\uparrow, \downarrow\}$ confined in a box $\Lambda := [-L/2,L/2]^3$, with periodic boundary conditions. The Hamiltonian of the system acts on $\bigwedge^N L^2(\Lambda; \mathbb{C}^2)$ and is given by:
\begin{equation}\label{eq: def HN}
  H_N= -\sum_{i=1}^N\Delta_{x_i} + \sum_{i<j=1}^N V(x_i-x_j).
\end{equation}
We assume that the interaction potential $V$ satisfies the following assumptions.
 \begin{assumption}\label{asu: potential V} The interaction potential $V$ is of the form
\begin{equation}\label{eq: period V}
  V(x) = \sum_{n\in\mathbb{Z}^3}V_\infty (x+nL),
\end{equation}
where $V_\infty \in L^1(\mathbb{R}^3)$ is radial, non negative and compactly supported with $\mathrm{supp} V _\infty\subset \{x\in\mathbb{R}^3\, \vert\, |x| < R_0\}\subset \mathbb{R}^3$, for some $R_0>0$.
\end{assumption}
Since the  Hamiltonian is spin independent, the number of particles with each spin component is conserved. Therefore, we restrict our attention to wave functions of $N$ particles, where $N= N_\uparrow + N_\downarrow$, and $N_\sigma$ denotes the number of particles with spin $\sigma \in \{\uparrow, \downarrow\}$. In the following, we denote the set of these wave functions by $\mathfrak{h}(N_\uparrow, N_\downarrow)\subset \bigwedge^N L^2(\Lambda; \mathbb{C}^2)$. Our goal is to study  the ground state energy density in the dilute regime  $\rho_\sigma \rightarrow 0$, in the thermodynamic limit:
\begin{equation}
  e(\rho_\uparrow, \rho_\downarrow) = \lim_{\substack{L \rightarrow \infty \\ \frac{N_\sigma}{L^3} \rightarrow \rho_\sigma, \,\sigma\in\{\uparrow,\downarrow\}}} \frac{E_L(N_\uparrow, N_\downarrow)}{L^3},
\end{equation}
where $E_L(N_\uparrow, N_\downarrow)$ is the ground state energy, i.e., $E_{L}(N_\uparrow, N_\downarrow) = \inf_{\Psi\in\mathfrak{h}(N_\uparrow, N_\downarrow)}\langle \Psi, H_N \Psi\rangle/\langle \Psi, \Psi\rangle$. It is well known that the limit above exists and is independent of the boundary conditions (see \cite{Ro,R}). Throughout the paper, we use the notation $\rho = \rho_\uparrow + \rho_\downarrow$.}

\textcolor{black}{In recent decades, considerable effort has been devoted to rigorously establishing asymptotic formulas for the ground state energy and free energy of dilute quantum gases.  For bosonic systems, first-order asymptotics were proven in \cite{Dy,LY1,LY2,DMS,S08,Yin}, while the second-order correction, known as the Lee-Huang-Yang term, was rigorously derived in \cite{BCS, FGJMO,FS1,FS2,HHST, HHNST,YY}.}

\textcolor{black}{The situation for dilute Fermi gases is qualitatively different, as the Pauli exclusion principle gives rise to a fundamentally different energy asymptotics structure.}

\textcolor{black}{In this setting, the derivation of the Huang–Yang (HY) asymptotics \cite{HY}, conjectured in 1957, has long represented a central problem. In the  case $\rho_\uparrow = \rho_\downarrow = \rho/2 \rightarrow 0$, the HY formula predicts
\begin{equation}\label{eq: HY formula}
  e(\rho_\uparrow, \rho_\downarrow) = \frac{3}{5}(3\pi^2)^{\frac{2}{3}} \rho^{\frac{5}{3}} + 2\pi a \rho^2 + \frac{4}{35}(11 - 2\log 2) (9\pi)^{\frac{2}{3}} a^2 \rho^{\frac{7}{3}} + o(\rho^{7/3}),  \qquad \rho\rightarrow 0,
\end{equation}
where $a$ denotes the scattering length of the interaction potential. The first rigorous proof of the HY asymptotics was obtained in \cite{GHNS25}, where a matching lower bound was established to complement the upper bound derived in \cite{GHNS24}. More recently, \cite{CWZ25} proved the HY formula also in the low-temperature regime. Several earlier works contributed important intermediate results; see, for instance, \cite{LSS, FGHP, Gia1, Lau, GHNS24}.}

\textcolor{black}{We adapt a method inspired by \cite{GHNS24} to derive a lower bound on the energy asymptotics. The resulting bound is weaker than the full Huang–Yang (HY) asymptotics; nevertheless, our main goal is to show that, with a suitable adaptation of the approach of \cite{GHNS24}, one can obtain a lower bound whose error term is of the same order as the HY correction. Compared to \cite{GHNS24}, we adopt a less refined strategy, in the sense that we neglect part of the excitations around the Fermi surface, as their contribution is of order $\rho^{7/3}$. This simplification is not possible in \cite{GHNS24}, where a more delicate analysis is required.}

\textcolor{black}{Our analysis highlights that the derivation of the HY term requires a very refined treatment of excitations around the Fermi surface, which is precisely what makes the problem technically challenging. The main outcome is an energy asymptotic with an error term of the same order as the HY correction, valid for any compactly supported, positive, and integrable interaction potential. This slightly extends the class of potentials considered in \cite{GHNS24, GHNS25, CWZ25}. We also emphasize that the HY correction originates from correlations between particles of opposite spin. For single-component Fermi gases, the behavior of the ground-state energy is qualitatively different, as shown in \cite{LS1, LS3} (see also \cite{LS2, LS4} for related results at positive temperature).  A simplified version of \eqref{eq: HY formula} in the Gross–Pitaevskii regime was recently derived in \cite{CWZ24}.
The proof relies on conjugating the Hamiltonian with suitable unitary transformations, which allows one to capture the relevant quantum correlations among the particles. Such conjugations are a standard and powerful tool in the study of many-body quantum systems and have been used extensively in this context; see, for instance, \cite{FGHP, Gia1, GHNS24, CWZ25}. Here, however, this strategy leads to a considerably shorter and simpler proof than those in \cite{GHNS25, CWZ25}, although it does not recover the full HY correction.}

\textcolor{black}{The relevant correlations are described in terms of pairs of fermions, one with momentum inside the Fermi ball and one outside it. For the low-density Fermi gas, this method was first developed in \cite{FGHP, Gia1}. Often referred to as the ``bosonization method'', it was originally introduced in the 1950s by Sawada and Brueckner, Gukuda, and Brout \cite{Sa, SBFB} in the context of high-density Fermi gases. More recently, it has been rigorously applied to both the low-density regime \cite{FGHP, Gia1, Gia2, LS3, LS4, GHNS24, CWZ24, GHNS25, CWZ25} and the high-density setting \cite{BNPSS, BPSS, CHN1, CHN2, CHN3, FRS} in the study of the ground state energy of fermionic systems.} \\
\noindent\textbf{Organization of the paper.} \textcolor{black}{In Section~\ref{sec: main and strategy}, we present the main results and outline the strategy of the proof, introducing the unitary transformations used to analyze the energy asymptotics. Section~\ref{sec: useful bounds} collects several auxiliary estimates needed to control the error terms. In Section~\ref{sec: conj T1}, we conjugate the Hamiltonian with the first quasi-bosonic unitary transformation. Section~\ref{sec: T2 conjugation} contains the second conjugation, which leads to an optimal lower bound for the ground state energy density. Finally, in Section~\ref{sec: number excit}, we derive improved estimates on the number of excitations outside the Fermi ball as a consequence of the energy asymptotics.}\\ 

\section{Main results}\label{sec: main and strategy}
Using the notations introduced in the previous section, we state the main theorem of the paper.
\begin{theorem}[Optimal energy asymptotics]\label{thm: optimal lw bd} Let $V, V_\infty$ be as in Assumption \ref{asu: potential V}. As $\rho\rightarrow 0$, it holds that
\begin{equation} \label{eq: opt lw bd}
  e(\rho_\uparrow, \rho_\downarrow) = \frac{3}{5}(6\pi^2)^{\frac{2}{3}} \big(\rho_\uparrow^{\frac{5}{3}} + \rho_\downarrow^{\frac{5}{3}}\big) + 8\pi a\rho_\uparrow\rho_\downarrow + \mathcal{O}\big(\rho^{\frac{7}{3}}\big),
  \end{equation}
  where $a$ is the scattering length of the interaction potential $V_\infty$.
\end{theorem}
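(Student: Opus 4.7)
The upper bound matching \eqref{eq: opt lw bd} already follows from the much sharper result of \cite{GHNS}, so the entire task is the lower bound, improving the $\rho^{2+1/5}$ error of \cite{Gia1} to the optimal $\rho^{7/3}$. The plan is to refine the bosonization method of \cite{FGHP,Gia1} by performing \emph{two} consecutive quasi-bosonic unitary rotations, where the second one, inspired by the upper bound construction of \cite{GHNS}, is the main new ingredient. After reducing to a fixed box $\Lambda$ and carrying out a particle-hole transformation around the non-interacting Fermi ball (which by itself already produces the free kinetic term $\tfrac{3}{5}(6\pi^2)^{2/3}(\rho_\uparrow^{5/3}+\rho_\downarrow^{5/3})L^3$), the Hamiltonian is rewritten in terms of excitation operators $b_{k,\sigma}$ creating a particle above and a hole below the Fermi surface, with a suitable momentum cut-off that isolates the relevant effective correlation operator $\mathcal{H}^{\mathrm{corr}}$.

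The first unitary $T_1$ is the Bogoliubov-type rotation generated (roughly) by $\sum_{k}\eta_k\,(b^\ast_{k,\uparrow}b^\ast_{-k,\downarrow}-\mathrm{h.c.})$, with coefficients $\eta_k$ chosen as the momentum-space representation of the zero-energy scattering solution associated with $V_\infty$. Conjugating by $e^{T_1}$ should cancel the off-diagonal $b^\ast b^\ast + b b$ pieces of $\mathcal{H}^{\mathrm{corr}}$ and extract the constant $8\pi a\rho_\uparrow\rho_\downarrow L^3$ through the standard identity relating $a$ to the integral of $V_\infty$ against the scattering solution. What is left is a positive renormalized kinetic operator, a diagonal quadratic form in the $b$'s, and several remainders: some encode the mismatch between the full and the truncated scattering equation, and others come from the fact that the $b_{k,\sigma}$ satisfy canonical commutation relations only up to a fermionic correction supported in a relevant momentum shell.

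The novelty over \cite{Gia1} is a second unitary $T_2$ designed to diagonalize, up to controlled errors, the residual diagonal quadratic form in the $b$'s on the thin momentum shell near the Fermi surface, where the dispersion is too small for the previous approach to gain the desired precision. After this second conjugation, the effective operator should be bounded below by a manifestly nonnegative quadratic form in the excitation operators, plus constants contributing only at the target order $\mathcal{O}(\rho^{7/3})$. Combining this with the auxiliary operator estimates from Section \ref{sec: useful bounds} and letting $L\to\infty$ then yields \eqref{eq: opt lw bd}.

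The main technical obstacle I anticipate is the propagation of the quasi-bosonic commutator defects through both $T_1$ and $T_2$: each Duhamel expansion of $e^{-T_i}(\cdot)e^{T_i}$ generates remainders in which the fermionic corrections to the CCR are weighted by $\widehat{V_\infty}$ and by the coefficients $\eta_k$, and each such remainder must be dominated by moments of the excitation number operator $\mathcal{N}$. This forces a bootstrap: start from the preliminary a priori bound on $\langle\Psi,\mathcal{N}\Psi\rangle$ implied by \eqref{eq: energy density rho2 lw bd}, run the conjugations once to get an improved energy lower bound, deduce an improved bound on $\mathcal{N}$, and iterate a finite number of times until both \eqref{eq: opt lw bd} and the refined excitation estimate of Section \ref{sec: number excit} close simultaneously. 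The fact that $V_\infty$ is only assumed integrable (not smooth, as in part of \cite{Gia1}) further requires treating all Fourier sums of $\widehat{V_\infty}$ through $L^1$-type bounds rather than pointwise decay, which is an additional complication one must handle throughout.
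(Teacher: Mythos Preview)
Your overall architecture matches the paper: particle-hole transformation, then two quasi-bosonic unitaries $T_1$ and $T_2$, with $T_2$ (inspired by \cite{GHNS}) the new ingredient handling the momentum shell near the Fermi surface. Two points, however, need correction.

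First, your description of what remains after $T_1$ and of what $T_2$ does is off. After $T_1$ extracts $8\pi a\rho_\uparrow\rho_\downarrow$, the leftover is not a diagonal $b^\ast b$ form but an \emph{off-diagonal pairing} operator
\[
\widetilde{\mathbb{Q}}_> \;=\; \int V(1-\varphi)(x-y)\, a_\uparrow(u^\gg_x)a_\uparrow(v^\gg_x)a_\downarrow(u^\gg_y)a_\downarrow(v^\gg_y) + \mathrm{h.c.},
\]
supported on the shell that was excluded by the infrared cutoff in $T_1$. The generator $B_2$ of $T_2$ carries coefficients $\hat f_{r,r'}(p)=2\widehat{V_\varphi}(p)/\big(|r+p|^2-|r|^2+|r'-p|^2-|r'|^2\big)$ precisely so that $[\mathbb{H}_0,B_2-B_2^\ast]$ reproduces $\widetilde{\mathbb{Q}}_>$ exactly; a second Duhamel then reduces everything to the single estimate $|\langle[\widetilde{\mathbb{Q}}_>,B_2-B_2^\ast]\rangle|\le CL^3\rho^{7/3}$. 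This last bound is where the Huang--Yang lattice-point integral (cf.\ \cite{CW,Kanno,GHNS}) enters, and it is what fixes the error at exactly $\rho^{7/3}$. Designing $T_2$ to ``diagonalize a diagonal quadratic form'' would lead you astray.

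Second, no bootstrap is needed. The paper runs the entire argument on the crude a priori input $\langle\mathcal{N}\rangle\le CL^3\rho^{7/6}$ from Lemma~\ref{lem: a priori est}, together with Gr\"onwall-type propagation estimates for $\mathbb{H}_0$ and $\mathbb{Q}_4$ under both unitaries (Propositions~\ref{pro: propagation est} and \ref{pro: prop est H0 T2}). The refined excitation estimate $\langle\mathcal{N}\rangle\le CL^3\rho^{4/3}$ of Theorem~\ref{thm: optimal number operator} is deduced \emph{a posteriori} as a corollary of the energy asymptotics, not fed back into the proof. Finally, the paper does not invoke \cite{GHNS} for the upper bound: Section~\ref{sec: up bd} re-derives the optimal upper bound from the trial state $RT_1\Omega$, which is part of the theorem since it extends the result of \cite{Gia1} from smooth to merely $L^1$ potentials.
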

The theorem provides an asymptotic for the ground-state energy density with an error estimate consistent with the order of the HY correction. This result can be viewed as a completion of the analysis in \cite[Theorem 2.1]{Gia1}, where we derived an optimal upper bound for $e(\rho_\uparrow, \rho_\downarrow)$. Here, we obtain the complementary optimal lower bound, inspired by the method later developed in \cite{GHNS24}, which employs two quasi-bosonic unitary operators to isolate the relevant correlations at order $\rho^{7/3}$. Note that in this paper we slightly enlarge the class of interaction potentials compared to those considered in \cite{Gia1} (as well as in \cite{FGHP, GHNS24, GHNS25, CWZ25}). 
As a consequence of Theorem \ref{thm: optimal lw bd}, we can refine the estimate of excitations outside the Fermi ball, obtaining a bound in agreement with that of \cite[Proposition 2.4]{GHNS25}. Before presenting this result, we introduce some additional notations.\\

\noindent\textbf{Fermionic Fock space}. Let $\mathcal{F}_{\mathrm{f}}$ be the fermionic Fock space, $\mathcal{F}_{\mathrm{f}} = \bigoplus_{n\geq 0} \mathcal{F}^{(n)}_{\mathrm{f}}$, with $\mathcal{F}^{(n)}_{\mathrm{f}} =  L^2(\Lambda; \mathbb{C}^2)^{\wedge n}$ and $\mathcal{F}^{(0)}_{\mathrm{f}} = \mathbb{C}$. We denote by $\Omega$ the vacuum. Furthermore, for any $f\in L^2(\Lambda; \mathbb{C}^2)$, i.e., $f = (f_\uparrow, f_\downarrow)$, we denote by $a^\ast(f)$ (resp. $a(f)$) the fermionic creation/annihilation operators, which satisfy canonical anti-commutation relations. For more details we refer to \cite{FGHP,Gia1,Gia2}. As it is well-known, the fermionic creation/annihilation operators are bounded, i.e., 
\[
  \|a(f)\|, \|a^\ast(f)\| = \|f\|_{L^2(\Lambda; \mathbb{C}^2)}.
\] 
In our analysis, we often use 
\[
  \hat{a}_{k,\sigma} = a_\sigma(f_k) = a(\delta_\sigma f_k) = \frac{1}{L^{\frac{3}{2}}}\int_{\Lambda}dx\, a_{x,\sigma}e^{-ik\cdot x}, \qquad (\hat{a}_{k,\sigma})^\ast = \hat{a}^\ast_{k,\sigma},
\]
where $(\delta_\sigma f_k)(x,\sigma^\prime)\in L^2(\Lambda; \mathbb{C}^2) $ and $(\delta_\sigma f_k)(x,\sigma^\prime) = \delta_{\sigma,\sigma^\prime}e^{ik\cdot x}/L^{\frac{3}{2}}$.
Formally, we can express the operator $a_{x,\sigma}$ as $a(\delta_{x,\sigma})$ with $\delta_{x,\sigma}(y,\sigma^\prime) = \delta_{\sigma,\sigma^\prime}\delta(x-y)$. Here, $\delta_{\sigma,\sigma^\prime}$ is the Kronecker delta, and
\[
  \delta(x-y) = \frac{1}{L^3}\sum_{k\in\frac{2\pi}{L}\mathbb{Z}^3} e^{ik\cdot (x-y)}.
\]
We denote the number operator as 
\[
  \mathcal{N} = \sum_{\sigma\in \{\uparrow, \downarrow\}}\mathcal{N}_\sigma = \sum_{\sigma\in \{\uparrow, \downarrow\}} \hat{a}_{k,\sigma}^\ast \hat{a}_{k,\sigma} = \sum_{\sigma\in \{\uparrow, \downarrow\}}\int dx\, a^\ast_{x,\sigma}a_{x,\sigma}.
\]
The second-quantized Hamiltonian is explicitly written as
\[
  \mathcal{H} = \sum_{\sigma \in \{\uparrow, \downarrow\}}\sum_{k\in\frac{2\pi}{L}\mathbb{Z}^3} |k|^2 \hat{a}^\ast_{k,\sigma}\hat{a}_{k,\sigma} + \frac{1}{2L^3}\sum_{\sigma,\sigma^\prime\in \{\uparrow, \downarrow\}} \sum_{k,p,q}\hat{V}(k)\hat{a}_{p+k,\sigma}^\ast\hat{a}_{q-k,\sigma^\prime}^\ast\hat{a}_{q,\sigma^\prime}\hat{a}_{p,\sigma}.
\]
Denoting by $\mathcal{F}_{\mathrm{f}}(N_\uparrow, N_\downarrow)\subset \mathcal{F}_{\mathrm{f}}$ the subset of $\mathcal{F}_{\mathrm{f}}$ given by $N-$particles states with $N = N_\uparrow + N_\downarrow$, the ground state energy of the system can be written as 
\[
  E_L(N_\uparrow, N_\downarrow)= \inf_{\psi \in \mathcal{F}_{\mathrm{f}}^{(N_\uparrow, N_\downarrow)}}\frac{\langle \psi, \mathcal{H}\psi\rangle}{\langle\psi, \psi\rangle}.
\]

We can now explicitly state the improved  estimates for the number of excitations.
\begin{theorem}[Excitation estimates]\label{thm: optimal number operator}
Let $\psi$ be a normalized $N-$particle fermionic state, with  $N = N_\uparrow + N_\downarrow$ such that
\begin{equation}\label{eq: new app gs}
  \left|\frac{\langle \psi,\mathcal{H}\psi\rangle}{L^3} - \frac{3}{5}(6\pi^2)^{\frac{2}{3}} (\rho_\uparrow^{\frac{5}{3}} + \rho_\downarrow^{\frac{5}{3}}) - 8\pi a\rho_\uparrow\rho_\downarrow \right|  \leq C\rho^{\frac{7}{3}}.
\end{equation}
Under the same assumptions as in Theorem \ref{thm: optimal lw bd}, it holds that
\begin{equation}\label{eq: optimal bound number op}
  \sum_{\sigma = \{\uparrow, \downarrow\}}\sum_{\substack{k\in\frac{2\pi}{L}\mathbb{Z}^3 \\ |k|> k_F^\sigma}}\langle \psi, \hat{a}_{k,\sigma}^\ast \hat{a}_{k,\sigma}\psi\rangle  =  \sum_{\sigma = \{\uparrow, \downarrow\}}\sum_{\substack{k\in\frac{2\pi}{L}\mathbb{Z}^3 \\ |k| \leq k_F^\sigma}}\langle \psi, \hat{a}_{k,\sigma} \hat{a}_{k,\sigma}^\ast \psi\rangle \leq CL^3\rho^{\frac{4}{3}}.
\end{equation}
Furthermore, let $0\leq\epsilon<1$ and 
\[
  \mathcal{N}_>^{(\epsilon)} = \sum_{\sigma = \{\uparrow, \downarrow\}}\sum_{|k| > k_F^\sigma + (k_F^\sigma)^{1+\epsilon}}\hat{a}_{k,\sigma}^\ast \hat{a}_{k,\sigma}.
\]
Under the same assumptions as above, it holds that
\[
  \langle \psi, \mathcal{N}_>^{(\epsilon)} \psi\rangle \leq CL^3\rho^{\frac{5}{3}  - \frac{\epsilon}{3}}.
\]  \end{theorem}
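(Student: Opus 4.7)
My plan is to read off the excitation estimates from a refined operator inequality that I expect to emerge as a by-product of the two-step bosonization analysis carried out in Sections~\ref{sec: conj T1}--\ref{sec: T2 conjugation}. Specifically, I anticipate that those sections yield a bound of the form
\[
\mathcal{H} \geq E_0(L,\rho_\uparrow,\rho_\downarrow) + c_0\, \mathcal{T} - C L^3 \rho^{7/3},
\]
where $E_0 = \bigl[\tfrac{3}{5}(6\pi^2)^{2/3}(\rho_\uparrow^{5/3}+\rho_\downarrow^{5/3}) + 8\pi a\, \rho_\uparrow\rho_\downarrow\bigr] L^3$ is the floor matching Theorem~\ref{thm: optimal lw bd}, and
\[
\mathcal{T} = \sum_\sigma\!\Bigl[\sum_{|k|>k_F^\sigma}\!\big(|k|^2 - (k_F^\sigma)^2\big)\hat a_{k,\sigma}^* \hat a_{k,\sigma} + \sum_{|k|\leq k_F^\sigma}\!\big((k_F^\sigma)^2 - |k|^2\big)\hat a_{k,\sigma}\hat a_{k,\sigma}^*\Bigr]
\]
is the nonnegative localized kinetic energy measured from the Fermi surface. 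Testing against $\psi$ and invoking the hypothesis~\eqref{eq: new app gs}, I obtain the a priori estimate $\langle \psi, \mathcal{T}\psi\rangle \leq C L^3 \rho^{7/3}$, which is the only input used below.

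For the second claim the argument is a clean spectral-gap comparison: for every $|k| > k_F^\sigma + (k_F^\sigma)^{1+\epsilon}$ one has
\[
|k|^2 - (k_F^\sigma)^2 = (|k| - k_F^\sigma)(|k| + k_F^\sigma) \geq 2(k_F^\sigma)^{2+\epsilon} \geq c\, \rho^{(2+\epsilon)/3},
\]
so $\mathcal{T} \geq c\,\rho^{(2+\epsilon)/3} \mathcal{N}_>^{(\epsilon)}$ as operators, and the bound $\langle \psi, \mathcal{N}_>^{(\epsilon)}\psi\rangle \leq C L^3 \rho^{5/3-\epsilon/3}$ follows.

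For the first claim I would split the excitation region above $k_F^\sigma$ into the thin annular shell $\{k_F^\sigma < |k| \leq k_F^\sigma + (k_F^\sigma)^2\}$ and its complement, which is exactly the support of $\mathcal{N}_>^{(1)}$. The $\epsilon=1$ instance of the previous step gives $\langle \psi, \mathcal{N}_>^{(1)}\psi\rangle \leq CL^3\rho^{4/3}$. On the shell, Pauli exclusion $\hat a_{k,\sigma}^*\hat a_{k,\sigma}\leq 1$ reduces the matter to a volume count: the number of admissible lattice momenta is $O\bigl(L^3 (k_F^\sigma)^2 \cdot (k_F^\sigma)^2\bigr) = O(L^3\rho^{4/3})$, which is exactly the balance point between the spectral and the volumetric bound. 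Summing these two contributions and invoking particle-hole symmetry on the $N_\sigma$-particle sector (which delivers the equality of the two sums in~\eqref{eq: optimal bound number op} modulo the negligible mismatch between $N_\sigma$ and the number of lattice points inside the Fermi ball) yields the first estimate.

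The main obstacle is, of course, the operator inequality $\mathcal{H}\geq E_0 + c_0\mathcal{T} - CL^3\rho^{7/3}$ itself. This is strictly stronger than what is needed for Theorem~\ref{thm: optimal lw bd}, since it requires the positive localization operator $c_0\mathcal{T}$ to survive both quasi-bosonic conjugations $T_1$ and $T_2$ with an $\mathcal{O}(1)$ prefactor, rather than being absorbed into the $\mathcal{O}(\rho^{7/3})$ error budget. I would expect the proof to revisit the error estimates of Sections~\ref{sec: conj T1}--\ref{sec: T2 conjugation} and verify that a fixed fraction of the kinetic localization is preserved after each conjugation; this should be available once the energy asymptotics has been pushed to the optimal order, and is the sharpest point where the improvements over \cite{Gia1} enter.
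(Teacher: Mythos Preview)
Your argument rests on the operator inequality $\mathcal{H}\geq E_0 + c_0\,\mathcal{T} - CL^3\rho^{7/3}$ with $\mathcal{T}=R\mathbb{H}_0R^\ast$ the bare kinetic localization. This inequality is \emph{false}: testing it on the trial state $\psi_{\mathrm{trial}}=RT_1\Omega$ of Section~\ref{sec: up bd} (which satisfies~\eqref{eq: new app gs}) would force $\langle T_1\Omega,\mathbb{H}_0 T_1\Omega\rangle\leq CL^3\rho^{7/3}$, whereas a direct computation using $\|\nabla\varphi\|_{2}\leq C$ and $\|v_\sigma\|_2^2\leq C\rho$ shows this quantity is genuinely of order $L^3\rho^2$. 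The physical reason is that the correlations described by $T_1$ raise the kinetic energy by $\sim\rho_\uparrow\rho_\downarrow\int V\varphi$ per unit volume, which is then overcompensated by the interaction; neither piece is individually $\mathcal{O}(\rho^{7/3})$. What the lower-bound analysis actually yields is the bound $\langle T_2^\ast T_1^\ast R^\ast\psi,\mathbb{H}_0\,T_2^\ast T_1^\ast R^\ast\psi\rangle\leq CL^3\rho^{7/3}$ for the \emph{fully transformed} state, i.e.\ a fixed fraction of $RT_1T_2\,\mathbb{H}_0\,T_2^\ast T_1^\ast R^\ast$ survives, not of $R\mathbb{H}_0R^\ast$.

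Since $T_1,T_2$ do not commute with $\mathbb{H}_0$ (indeed the whole point of $T_2$ is that $[\mathbb{H}_0,B_2-B_2^\ast]=-2\widetilde{\mathbb{Q}}_>$), your spectral-gap argument $\mathcal{T}\geq c\rho^{(2+\epsilon)/3}\mathcal{N}_>^{(\epsilon)}$ cannot be applied directly to $\psi$; it applies to $T_2^\ast T_1^\ast R^\ast\psi$ and gives control of $\mathcal{N}_>^{(\epsilon)}$ only on that state. The missing ingredient is therefore a \emph{propagation step}: one must show, via Gr\"onwall-type estimates on $\partial_\lambda\langle\xi_\lambda,\mathcal{N}\xi_\lambda\rangle$ and $\partial_\lambda\langle\xi_\lambda,\mathcal{N}_>^{(\epsilon)}\xi_\lambda\rangle$ along $T_{2;\lambda}$ and $T_{1;\lambda}$, that these bounds are stable under undoing the two quasi-bosonic transformations, at cost $L^3\rho^{4/3}$ and $L^3\rho^{5/3-\epsilon/3}$ respectively (cf.\ Proposition~\ref{pro: N} and Lemma~\ref{lem: integral t}). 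This is exactly the route the paper takes; your shell-plus-Pauli argument for the first claim is then superfluous, since the propagation of $\mathcal{N}$ itself already delivers $\langle R^\ast\psi,\mathcal{N}R^\ast\psi\rangle\leq CL^3\rho^{4/3}$ without splitting into shells.
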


\begin{remark}[Hard core interaction]\label{rem: lw bd hardcore} It is interesting to extend the energy density asymptotics discussed in Theorem \ref{thm: optimal lw bd} to the hard core potential $V_{\mathrm{hc}}$, i.e, 
\begin{equation}\label{eq: hc int}
  V_{\mathrm{hc}}(x) = \begin{cases} \infty &\mbox{if}\,\,\, |x| \leq 1, \\ 0 &\mbox{if}\,\,\, |x| >1.\end{cases}
\end{equation}
 This type of interaction potential is indeed included in the analysis in \cite{LSS}.  Following the approximation techniques in \cite[Lemma 3.3, Remark 3.4]{FS2}, the hard-core potential $V_{\mathrm{hc}}$ can be approximated using integrable $L^1$ interaction potentials. By employing these approximations, the proof of Theorem \ref{thm: optimal lw bd} can be adapted (only at the level of a lower bound) to handle interactions of the form given by \eqref{eq: hc int}. However, it is important to note that in this case, we do not obtain an optimal lower bound.
\end{remark}

\subsection{Unitary transformations}

\textcolor{black}{In this section, we introduce the unitary transformations that will be used to prove Theorem~\ref{thm: optimal lw bd}. Our focus will be on the proof of the lower bound matching \eqref{eq: opt lw bd}; the main ideas underlying the corresponding upper bound are briefly discussed in Section~\ref{sec: up bd}.}

\textcolor{black}{The section is organized as follows. In Section~\ref{sec: particle-hole}, we introduce the particle–hole transformation, which allows us to isolate the correlation energy, defined as the difference between the ground state energy and the free Fermi gas (FFG) energy. In Section~\ref{sec: def T1}, we present the first quasi-bosonic unitary transformation, which is used to describe low-energy excitations around the Fermi ball. A closely related transformation was already employed in \cite{Gia1}, and it is sufficient, both here and in \cite{Gia1}, to obtain an optimal upper bound (see Section~\ref{sec: up bd}). Finally, in Section~\ref{sec: T2 unitary}, we introduce a second quasi-bosonic unitary transformation, inspired by the one used in \cite{GHNS24}, which is essential for deriving the optimal lower bound.}

\subsubsection{Particle-hole transformation}\label{sec: particle-hole}
In this section, we recall the definition of the \textit{particle-hole transformation}. To start, we fix the notation for the free Fermi gas. We write 
\begin{equation}\label{eq: FFG}
    \psi_{\mathrm{FFG}} = \prod_{\sigma\in \{\uparrow, \downarrow\}} \prod_{k\in \mathcal{B}_F^\sigma} \hat{a}_{k,\sigma}^\ast\Omega,
\end{equation}
  with the Fermi ball given by
\[
 \mathcal{B}_F^\sigma := \big\{ k\in (2\pi/L)\mathbb{Z}^3\, \,\,\vert\,\,\, |k| \leq k_F^\sigma\big\}.
\]
A direct calculation, see \cite{FGHP, Gia1}, gives (taking $L$ large enough)
  \begin{equation}\label{eq: FFG energy}
    E_{\mathrm{FFG}} = \langle \psi_{\mathrm{FFG}}, \mathcal{H}_N \psi_{\mathrm{FFG}}\rangle  = \frac{3}{5}(6\pi^2)^{\frac{2}{3}}\left(\rho_\uparrow^{\frac{5}{3}} + \rho_{\downarrow}^{\frac{5}{3}}\right) L^3+\hat{V}(0)\rho_\uparrow\rho_\downarrow L^3+ \mathcal{O}(L^3\rho^{\frac{8}{3}}).
\end{equation}
\begin{remark}[Filled Fermi sea]
We suppose the Fermi ball to be completely filled, i.e., we only consider the values of $N_\sigma$ such that $N_\sigma = |\mathcal{B}_F^\sigma|$. This is not a loss of generality since the corresponding densities $\rho_\sigma$ form a dense subset of $\mathbb{R}_+$ in the large volume limit. See \cite[Section 3]{FGHP} and \cite[Remark 1.7]{LS3} for more details.
\end{remark}
\begin{definition}[Particle-hole transformation]\label{def: ferm bog}
Let $u$,$v: L^2(\Lambda;\mathbb{C}^2)\rightarrow L^2(\Lambda;\mathbb{C}^2)$ be the operators with integral kernels given by 
\begin{equation} \label{eq: def u,v}
  v_{\sigma,\sigma^\prime} (x;y)  =  \frac{\delta_{\sigma,\sigma^\prime}}{L^3}\sum_{k\in\mathcal{B}_{F}^\sigma} e^{ik\cdot(x-y)}, \qquad u_{\sigma, \sigma^\prime}(x;y) = \frac{\delta_{\sigma,\sigma^\prime}}{L^3} \sum_{k\notin\mathcal{B}_F^{\sigma}}e^{ik\cdot(x-y)}.
\end{equation}
The particle-hole transformation is a unitary operator $R: \mathcal{F}\rightarrow\mathcal{F}$ such that the following properties hold: 
\begin{itemize}
  \item[(i)] The state $R\Omega$ is such that $(R\Omega)^{(n)} = 0$ whenever $n\neq N$ and $(R\Omega)^{(N)} = \Psi_{\mathrm{FFG}}$.
  \item[(ii)] It holds that
  \begin{equation}\label{eq: prop II R}
    R^\ast a_{x,\sigma}^\ast R = a_\sigma^\ast(u_x) + a_\sigma( v_x),
  \end{equation}
  where
  \[
    a_\sigma^\ast(u_x) = \int_{\Lambda} dy\, u_\sigma(y;x)a_{y,\sigma}^\ast, \qquad a_\sigma( v_x) = \int_{\Lambda} dy\,v_\sigma(y;x) a_{y,\sigma}.
  \] 
\end{itemize}
\end{definition}
\textcolor{black}{Note that equivalently, the transformation $R$ can be defined by the condition $(i)$ above and:
  \begin{equation}\label{eq: def R momentum space}
  R^\ast \hat{a}_{k,\sigma} R = \begin{cases} \hat{a}_{k,\sigma} &\mbox{if}\,\,\, k\notin\mathcal{B}_F^\sigma, \\ \hat{a}^\ast_{-k,\sigma} &\mbox{if}\,\,\, k\in \mathcal{B}_F^\sigma. \end{cases}
\end{equation} 
The next proposition\footnote{Note that our notation for the operators in $\mathbb{Q}$ differs from that used in \cite{FGHP, Gia1}.} allows us to isolate the correlation energy  by conjugating $\mathcal{H}$ with $R$. Here and in the following we use the notation $f_x = f(\cdot - x)$ (note that $u_x = u(\cdot\,; x) = u(\cdot - x)$ and $v_x = v(\cdot\,; x) = v(\cdot - x)$).}
\begin{proposition}[Conjugation by $R$]\label{pro: fermionic transf}
  Let $V$ be as in Assumption \ref{asu: potential V}. Let $\psi\in \mathcal{F}_{\mathrm{f}}$ be a normalized state, such that $\mathcal{N}_\sigma\psi = N_\sigma\psi$ and $N = N_\uparrow + N_\downarrow$. Then 
  \begin{equation} 
    \langle \psi, \mathcal{H}\psi\rangle \geq E_{\mathrm{FFG}} + \langle R^\ast\psi, \mathbb{H}_0 R^\ast\psi\rangle + \langle R^{\ast} \psi, \mathbb{Q}R^\ast\psi\rangle,
  \end{equation}
  where $E_{\mathrm{FFG}}$ is the energy of the free Fermi in \eqref{eq: FFG energy}. The operator $\mathbb{H}_0$ is given by 
\begin{equation}\label{eq: def H0}
  \mathbb{H}_0 = \sum_\sigma\sum_k ||k|^2 -(k_F^\sigma)^2| \hat{a}_{k,\sigma}^\ast \hat{a}_{k,\sigma}.
\end{equation}
The operator $\mathbb{Q}$ can be written as $\mathbb{Q} = \sum_{i=1}^4\mathbb{Q}_i$, with 
\begin{eqnarray}\label{eq: def Qi}
  \mathbb{Q}_4 &=& \frac{1}{2}\sum_{\sigma\neq \sigma^\prime}\int dxdy\, V(x-y) \left( {a}^\ast_\sigma(u_x){a}^\ast_{\sigma^\prime}(u_y){a}_{\sigma^\prime}(u_y){a}_\sigma(u_x)\right),
  \\
  \mathbb{Q}_1 &=& \sum_{\sigma\neq \sigma^\prime}\int dxdy\, V(x-y) a^\ast_\sigma(u_x)a^\ast_{\sigma}( v_x)a_{\sigma^\prime}( v_y)a_{\sigma^\prime}(u_y),\nonumber
  \\
  && + \frac{1}{2}\sum_{\sigma\neq\sigma^\prime}\int dxdy\, V(x-y) \left(a^\ast_\sigma( v_x)a^\ast_{\sigma^\prime}( v_y)a_{\sigma^\prime}( v_y)a_\sigma( v_x) -2a^\ast_\sigma(u_x)a^\ast_{\sigma^\prime}( v_y)a_{\sigma^\prime}( v_y)a_\sigma(u_x)\right)\nonumber,
 \\
  \mathbb{Q}_3 &=& -\sum_{\sigma\neq\sigma^\prime} \int dxdy\, V(x-y)\left(a^\ast_\sigma(u_x) a^\ast_{\sigma^\prime}(u_y) a^\ast_{\sigma}( v_x)a_{\sigma^\prime}(u_y) - a^\ast_\sigma(u_x) a^\ast_{\sigma^\prime}( v_y) a^\ast_{\sigma}( v_x)a_{\sigma^\prime}( v_y)\right) + \mathrm{h.c.}\nonumber,
  \\
  \mathbb{Q}_2 &=& \frac{1}{2}\sum_{\sigma\neq\sigma^\prime} \int dxdy\, V(x-y) a^\ast_\sigma(u_x)a^\ast_{\sigma^\prime}(u_y)a^\ast_{\sigma^\prime}( v_y)a^\ast_{\sigma}( v_x) + \mathrm{h.c.}\nonumber
\end{eqnarray}
Furthermore, we also have 
\begin{equation}\label{eq: EFFG Qsigmasigma'}
  \langle \psi, \mathcal{H}\psi\rangle = E_{\mathrm{FFG}} + \langle R^\ast \psi, \mathbb{H}_0 R^\ast \psi\rangle + \langle R^\ast \psi, \mathbb{Q}^{\sigma,\sigma^\prime}R^\ast\psi\rangle,
\end{equation}
where $\mathbb{Q}^{\sigma, \sigma^\prime} = \sum_{i=1}^4 \mathbb{Q}^{\sigma, \sigma^\prime}_i$ and each $\mathbb{Q}^{\sigma,\sigma^\prime}_i$ is defined from the corresponding $\mathbb{Q}_i$ by replacing $\sum_{\sigma\neq \sigma^\prime}$ with $\sum_{\sigma,\sigma^\prime}$.
\end{proposition}
\textcolor{black}{The proof of the above proposition follows from a straightforward computation based on the decomposition \eqref{eq: prop II R}. We refer to \cite{FGHP} for further details and related references.} Note that the fact that we can ignore the interaction between particles with the same spin is a consequence of the positivity of the interaction potential (see \cite{FGHP}) and will be used in the discussion of the lower bound. In Section \ref{sec: up bd}, where we discuss the main ideas for the upper bound, we will instead use \eqref{eq: EFFG Qsigmasigma'}.

Before proceeding further, we collect some a priori estimates valid for any approximate ground state. 
\begin{definition}[Approximate ground state]\label{def: approx gs} Let $\psi\in \mathcal{F}_{\mathrm{f}}$ be a normalized $N$--particle state with $N = N_\uparrow + N_\downarrow$. We say that $\psi$ is an approximate ground state of $\mathcal{H}$ if 
\[
  \left| \langle \psi, \mathcal{H}\psi\rangle - \sum_{\sigma = \uparrow, \downarrow}\sum_{k\in\mathcal{B}_F^\sigma} |k|^2\right|\leq CL^3\rho^2.
\]
\end{definition}
\begin{lemma}[A priori bound for $\mathbb{H}_0, \mathbb{Q}_4, \mathbb{Q}_4^{\sigma, \sigma^\prime}, \mathcal{N}$]\label{lem: a priori est} Let $\psi$ be an approximate ground state. Under the same assumptions as in Theorem \ref{thm: optimal lw bd}, we have 
\[
  \langle R^\ast\psi,\mathbb{H}_0 R^\ast \psi\rangle \leq CL^3\rho^2, \quad \langle R^\ast \psi, \mathbb{Q}_4 R^\ast \psi\rangle\leq CL^3\rho^2, \quad\langle R^\ast \psi, \mathbb{Q}_4^{\sigma,\sigma^\prime}R^\ast \psi\rangle\leq CL^3\rho^2, \quad  \langle R^\ast \psi, \mathcal{N}R^\ast \psi\rangle \leq CL^3 \rho^{\frac{7}{6}}.
\]
\end{lemma}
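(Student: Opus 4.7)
The plan is to derive all four estimates simultaneously from Proposition \ref{pro: fermionic transf} applied to $\psi$. Combining the inequality
\begin{equation*}
\langle \psi,\mathcal{H}\psi\rangle \;\geq\; E_{\mathrm{FFG}} + \langle R^\ast\psi,\mathbb{H}_0 R^\ast\psi\rangle + \langle R^\ast\psi,\mathbb{Q} R^\ast\psi\rangle
\end{equation*}
with the approximate ground state hypothesis $\langle \psi,\mathcal{H}\psi\rangle \leq \sum_\sigma\sum_{k\in\mathcal{B}_F^\sigma}|k|^2 + CL^3\rho^2$ and the expansion \eqref{eq: FFG energy} for $E_{\mathrm{FFG}}$ (noting that $\hat V(0)\rho_\uparrow\rho_\downarrow L^3\geq 0$ and the $\mathcal{O}(L^3\rho^{8/3})$ remainder is subleading), I would first obtain the combined upper bound
\begin{equation*}
\langle R^\ast\psi,(\mathbb{H}_0+\mathbb{Q})R^\ast\psi\rangle \;\leq\; CL^3\rho^2.
\end{equation*}

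To disentangle this into separate bounds for $\mathbb{H}_0$, $\mathbb{Q}_4$, and $\mathbb{Q}_4^{\sigma,\sigma'}$, I would exploit $V\geq 0$, which makes $\mathbb{Q}_4$ and $\mathbb{Q}_4^{\sigma,\sigma'}$ manifestly nonnegative density-density operators, and rewrite the first contribution in $\mathbb{Q}_1$ as a positive quadratic form. The remaining pieces of $\mathbb{Q}_1$ together with $\mathbb{Q}_2,\mathbb{Q}_3$ are estimated by the Cauchy--Schwarz bosonization inequalities of \cite{FGHP, Gia1},
\begin{equation*}
|\langle R^\ast\psi,\mathbb{Q}_i R^\ast\psi\rangle| \;\leq\; \alpha\,\langle R^\ast\psi,(\mathbb{H}_0+\mathbb{Q}_4)R^\ast\psi\rangle + \alpha^{-1}CL^3\rho^2,
\end{equation*}
so that for $\alpha$ small the $\mathbb{H}_0+\mathbb{Q}_4$ contributions absorb into the left-hand side of the combined bound. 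This yields both the $\mathbb{H}_0$ and the $\mathbb{Q}_4$ estimates; the $\mathbb{Q}_4^{\sigma,\sigma'}$ bound follows from the analogue of Proposition \ref{pro: fermionic transf} in which the positive equal-spin interaction channel is retained rather than discarded, the equal-spin versions of $\mathbb{Q}_{1,2,3}$ being absorbed by the same Cauchy--Schwarz strategy.

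For the number operator, the strategy is a shell decomposition in momentum space around the Fermi surface. For a threshold $E>0$, split $\mathcal{N} = \mathcal{N}_{\mathrm{near}} + \mathcal{N}_{\mathrm{far}}$ according to whether $||k|^2-(k_F^\sigma)^2|\leq E$ or $>E$. The far part satisfies the operator inequality $\mathcal{N}_{\mathrm{far}} \leq E^{-1}\mathbb{H}_0$ by \eqref{eq: def H0}, hence has expectation $\leq CL^3\rho^2/E$ using the bound on $\mathbb{H}_0$. The near part is controlled by the Pauli principle through the number of momenta in a spherical shell of thickness $\sim E/k_F^\sigma$ and area $\sim (k_F^\sigma)^2$, giving $\mathcal{N}_{\mathrm{near}} \leq \#\{k:||k|^2-(k_F^\sigma)^2|\leq E\}\lesssim L^3 k_F^\sigma E \lesssim L^3 \rho^{1/3} E$. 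Balancing the two terms at $E=\rho^{5/6}$ produces the claimed $CL^3\rho^{7/6}$. The hardest step is the second one: the Cauchy--Schwarz bounds on $\mathbb{Q}_2,\mathbb{Q}_3$ rely on a delicate analysis of the $u,v$ kernels combined with the $L^1$-integrability of $V$ (the $v$-kernels being supported on the Fermi ball provides the essential small factor), and verifying that these bounds from \cite{FGHP,Gia1} extend to the broader class of potentials of Assumption \ref{asu: potential V} is where the real technical work sits.
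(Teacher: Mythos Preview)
Your overall strategy is sound and is precisely the route of \cite{FGHP} to which the paper defers (the paper gives no independent proof). The shell decomposition for $\mathcal N$ and the Cauchy--Schwarz absorption for $\mathbb Q_2,\mathbb Q_3$ are exactly right, and the claimed extension to potentials in Assumption~\ref{asu: potential V} is unproblematic since those bounds only use $\|a_\sigma(v_\cdot)\|\le C\rho^{1/2}$ and $V\in L^1$.

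Two points are worth correcting. First, the bound on $\mathbb H_0$ is much more direct than your simultaneous absorption scheme: since $V\ge 0$ one may simply drop the interaction in $\mathcal H$, and the identity $R^\ast\big(\sum_{k,\sigma}|k|^2\hat a^\ast_{k,\sigma}\hat a_{k,\sigma}\big)R=\sum_{\sigma}\sum_{k\in\mathcal B_F^\sigma}|k|^2+\mathbb H_0$ on states with $N_\sigma=|\mathcal B_F^\sigma|$ yields $\langle R^\ast\psi,\mathbb H_0 R^\ast\psi\rangle\le CL^3\rho^2$ immediately from Definition~\ref{def: approx gs}. Doing this first, then $\mathcal N$, then $\mathbb Q_4$, removes any hint of circularity; in your ordering the estimate for $\mathbb Q_1$ (and part of $\mathbb Q_3$) needs $\mathcal N$, which you only derive afterwards. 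One can close the loop via the operator inequality $\mathcal N\le E^{-1}\mathbb H_0+CL^3\rho^{1/3}E$ with $E=\rho^{5/6}$, giving $\rho\mathcal N\le C\rho^{1/6}\mathbb H_0+CL^3\rho^{13/6}$, which is indeed absorbable---but the direct route is cleaner. Second, the first contribution in $\mathbb Q_1$, namely $\sum_{\sigma\ne\sigma'}\int V(x-y)\,a^\ast_\sigma(u_x)a^\ast_\sigma(v_x)a_{\sigma'}(v_y)a_{\sigma'}(u_y)$, is not a positive quadratic form; it is an off-diagonal $b^\ast_\sigma b_{\sigma'}$ term and must be estimated (via $|\langle\cdot,\mathbb Q_1\cdot\rangle|\le C\rho\langle\cdot,\mathcal N\cdot\rangle$, as in \eqref{eq: est Q1}) rather than kept.
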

 For the proof of Lemma \ref{lem: a priori est}, we refer to \cite[Lemma 3.5, Lemma 3.9, Corollary 3.7]{FGHP}. Via the a priori bounds, one can easily prove (see \cite[Proposition 3.3]{FGHP}) that for any $\psi\in \mathcal{F}$, it holds that
\begin{equation}\label{eq: est Q1}
  |\langle \psi, \mathbb{Q}_1  \psi\rangle |\leq C\rho\langle  \psi, \mathcal{N} \psi\rangle.
\end{equation}
The estimate above will guarantee that $\mathbb{Q}_1$ is sub-leading (see Section \ref{sec: lw bd T1}). Furthermore, proceeding similarly as in \cite[Section 7.2]{Gia1}, we will also prove that for any approximate ground state $\psi$, it holds that
\begin{equation}\label{eq: est Q3}
  |\langle R^\ast \psi, \mathbb{Q}_3 R^\ast \psi\rangle | \leq CL^3\rho^{\frac{7}{3}}, 
\end{equation}
which implies that the effective correlation energy is given by 
\begin{equation}\label{eq: eff corr en T1}
  \mathcal{H}_{\mathrm{corr}}^{\mathrm{eff}} = \mathbb{H}_0 + \mathbb{Q}_2 + \mathbb{Q}_4.
\end{equation}
\subsubsection{First quasi-bosonic transformation}\label{sec: def T1}
In this section, we introduce the first quasi-bosonic transformation that will be used throughout our analysis. As already mentioned, this unitary operator is closely related to the one introduced in \cite[Sections 2.1 and 4]{Gia1}.

We begin by defining the quasi-bosonic creation and annihilation operators, which provide an effective description of the low-energy excitations around the Fermi ball. These operators are constructed from pairs of fermionic creation and annihilation operators, with one particle carrying momentum inside the Fermi ball and the other outside. More precisely, we define
\begin{equation}\label{eq: def bp bpr}
  b_{p,\sigma} = \sum_{k\in\frac{2\pi}{L}\mathbb{Z}^3} b_{p,k,\sigma} = \sum_{k\in\frac{2\pi}{L}\mathbb{Z}^3}\hat{u}_\sigma(k+p)\hat{v}_\sigma(k)\hat{a}_{k+p,\sigma}\hat{a}_{-k,\sigma}, \qquad b_{p,\sigma}^\ast = (b_{p,\sigma})^\ast,
\end{equation}
with $\hat{u}_\sigma,\hat{v}_\sigma$ being the Fourier coefficients of the kernels introduced in \eqref{eq: def u,v}:
\begin{equation}\label{eq: def u hat, v hat}
\hat{u}_\sigma(k) = \begin{cases}
  0 &\mbox{if}\,\,\, |k| \leq k_F^\sigma, \\ 1 &\mbox{if}\,\,\, |k| > k_F^\sigma,
  \end{cases}\qquad \hat{v}_\sigma(k) = \begin{cases} 1 &\mbox{if}\,\,\, |k| \leq k_F^\sigma, \\ 0 &\mbox{if}\,\,\, |k| > k_F^\sigma. \end{cases}
\end{equation}
Similarly as in \cite{Gia1}, in order to extract the $8\pi a \rho_\uparrow \rho_\downarrow$ correction, it is enough to use the almost bosonic operators with the $\hat{u}_\sigma(\cdot)$ replaced by $\hat{u}^>_\sigma(\cdot)$ given by 
\begin{equation}\label{eq: def u>}
  \hat{u}^>_\sigma(k) := \hat{u}_\sigma(k)\widehat{\eta}^>_{\sigma}(k),
\end{equation}
where $\widehat{\eta}^>_\sigma$ is the periodization of a smooth function from $\mathbb{R}^3$ to $\mathbb{R}$ such that 
\begin{equation}
  \mathcal{F}(\eta^>_\sigma)(k) = \begin{cases} 0 &\mbox{if}\,\,\, |k| < 2k_F^\sigma, \\ 1 &\mbox{if}\,\,\, |k| \geq 3k_F^\sigma,\end{cases}
\end{equation}
here $\mathcal{F}$ denotes here the Fourier transform in $\mathbb{R}^3$.
\begin{remark}[Comparison with \cite{Gia1}]\label{rem: comparisono Gia1 cut-off}
Unlike in \cite{Gia1}, we do not employ a cutoff function at high momenta (of order $\rho^{-\beta}$ for some $\beta >0$) in the definition of $\hat{u}^>$. This simplifies the analysis and, more importantly, allows us to work with a larger class of interaction potentials. As a consequence, in the estimate of many error terms we often write $\hat{u}^>_\sigma = 1 - \hat{\alpha}^<_\sigma$, with $\hat{\alpha}^<_\sigma$  being the periodization of a smooth function $\mathbb{R}^3\rightarrow\mathbb{R}$ such that
\begin{equation}\label{eq: def alpha<}
  \mathcal{F}({\alpha}^<_\sigma)(k) = \begin{cases} 1 &\mbox{if}\,\,\, |k| < 2k_F^\sigma, \\ 0 &\mbox{if}\,\,\, |k| \geq 3 k_F^\sigma. \end{cases}
\end{equation}
This implies $u^>_\sigma = \delta - \alpha^<_\sigma$. It will be convenient to use that $\|\alpha^<_\sigma\|_{L^1(\Lambda)} \leq C$, which can be proven as in \cite[Proposition 4.1]{Gia1}. We also emphasize that, in contrast to \cite{Gia1}, we do not localize $\hat{v}_\sigma$.
\end{remark}
We now define the family of unitary transformations that will be used in our analysis. These transformations are quadratic in the quasi-bosonic operators. For any $\lambda \in \mathbb{R}$, we define
\begin{eqnarray}\label{eq: def T1}
  T_{1;\lambda}&:= &\exp\left(\frac{\lambda}{L^3}\sum_{p,r,r^\prime\in \frac{2\pi}{L}\mathbb{Z}^3}\hat{\varphi}(p)b_{p,r,\uparrow}b_{-p,r^\prime, \downarrow}\widehat{\eta}_>(r+p)\widehat{\eta}_>(r^\prime - p) - \mathrm{h.c.}\right)\nonumber
  \\
  &=& \exp\left(\lambda\int_{\Lambda_L \times \Lambda_L}dzdz^\prime \, \varphi(z-z^\prime) a_\uparrow(u_{z}^>) a_\uparrow( v_z) a_\downarrow (u_{z^\prime}^>)a_\downarrow( v_{z^\prime}) - \mathrm{h.c.}\right) =: \exp(\lambda(B_1 - B_1^\ast)), 
\end{eqnarray}
where
\[
  \hat{\varphi}(p) = \int_{\Lambda} dx \, \varphi(x)e^{-ip\cdot x}.
\]
Here $\varphi(x)$ denotes the periodization of a localized version of the solution to the zero energy scattering equation in $\mathbb{R}^3$, as introduced in \cite[Eq. (2.11)--(2.13)]{Gia1}. For completeness, we recall its definition. We  write 
\begin{equation}\label{eq: def phi}
  \varphi(x) = \sum_{n\in\mathbb{Z}^3}\varphi_\infty(x + nL),
\end{equation}
or, equivalently,
\[
  \varphi(x) = \frac{1}{L^3}\sum_{p\in\frac{2\pi}{L}\mathbb{Z}^3}\hat{\varphi}_\infty(p)e^{ip\cdot x},\qquad \hat{\varphi}_\infty(p) = \int_{\mathbb{R}^3} dx\, \varphi_\infty(x) e^{-ip\cdot x}.
\]
The function $\varphi_\infty$ is defined as $\varphi_\infty = \varphi_0 \chi_{\sqrt[3]{\rho}}$, where $\varphi_0$ is the solution of the zero energy scattering equation in $\mathbb{R}^3$
\begin{equation}\label{eq: scatter eq phi0}
 2\Delta\varphi_0 + V_\infty(1-\varphi_0) = 0 \qquad \mbox{in}\,\,\,\mathbb{R}^3, \quad \mbox{with}\quad \varphi_0(x)\rightarrow 0 \quad \mbox{as}\,\,\, |x| \rightarrow \infty,
\end{equation}
and $\chi$ is a smooth cut-off function satisfying  $0\leq \chi\leq 1$ and
\begin{equation}\label{eq: def chi}
  \chi(x) = \chi(|x|) = \begin{cases} 1 &\mbox{if}\,\,\, |x| \leq 1 \\ 0, &\mbox{if}\,\,\, |x| \geq 2, \end{cases} \qquad \chi_{\sqrt[3]{\rho}}(\cdot) := \chi(\cdot /\rho^{1/3}).
\end{equation}

Because of the localization introduced by $\chi$,  the function $\varphi_\infty$ does not solve the zero energy scattering equation exactly. Instead, it satisfies (see \cite[Section 4, Eq. (4.7)]{Gia1})
\[
  2\Delta\varphi_\infty + V_\infty(1-\varphi_\infty) = \mathcal{E}_{\{\varphi_0, \chi_{\sqrt[3]{\rho}}\}}^\infty,
\]
where
\begin{equation}\label{eq: def scatt err infty}
  \mathcal{E}_{\{\varphi_0, \chi_{\sqrt[3]{\rho}}\}}^\infty(x) = -4\nabla \varphi_0(x) \nabla \chi_{\sqrt[3]{\rho}}(x) - 2\varphi_0(x) \Delta\chi_{\sqrt[3]{\rho}}(x).
\end{equation} 
We denote by $\mathcal{E}_{\{\varphi_0, \chi_{\sqrt[3]{\rho}}\}}(x)$ its periodization to the box $\Lambda$, i.e., 
\begin{equation}\label{eq: period error scattering}
  \mathcal{E}_{\{\varphi_0, \chi_{\sqrt[3]{\rho}}\}}(x) = \sum_{n\in\mathbb{Z}^3}\mathcal{E}_{\{\varphi_0, \chi_{\sqrt[3]{\rho}}\}}^\infty(x + nL).
\end{equation}
The following lemma collects several useful properties of $\varphi$, which we will be repeatedly used to estimate the error terms.
\begin{lemma}[Bounds for $\varphi$]\label{lem: bound phi} Let $V$ be as in Assumption \ref{asu: potential V}. Let $\varphi$ as in \eqref{eq: def phi}. Taking $L$ large enough, the following holds
\begin{itemize}
  \item[(i)] For all $x\in \Lambda$, there exists $C>0$ such that
  \begin{equation}\label{eq: uniform norm phi >}
   \qquad|\varphi(x)| \leq C.
  \end{equation}
  \item[(ii)] It holds that
\begin{equation}
  \|\varphi\|_{L^2(\Lambda)} \leq C\rho^{-\frac{1}{6}},\qquad \|\nabla\varphi\|_{L^2(\Lambda)}\leq C,
  \end{equation}
 and
  \begin{equation}  \|\varphi\|_{L^1(\Lambda)} \leq C\rho^{-\frac{2}{3}}, \qquad \|\nabla\varphi\|_{L^1(\Lambda)} \leq C\rho^{-\frac{1}{3}}, \qquad \|\Delta\varphi\|_{L^1(\Lambda)} \leq C. 
  \end{equation}
\end{itemize}
\end{lemma}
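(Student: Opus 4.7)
The strategy is to reduce all the bounds to estimates on the unperiodized scattering profile $\varphi_\infty = \varphi_0\chi_{\sqrt[3]{\rho}}$ living on $\mathbb{R}^3$, using the standard decay properties of the zero-energy scattering solution $\varphi_0$, and then to invoke the large-$L$ hypothesis to transfer these $\mathbb{R}^3$ bounds to $\Lambda$.

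First I would record the classical decay estimates on $\varphi_0$ that follow from Assumption \ref{asu: potential V} and the scattering equation \eqref{eq: scatter eq phi0}: since $V_\infty$ is nonnegative, radial and compactly supported in $\{|x|<R_0\}$, standard ODE arguments (see e.g.\ the treatments in \cite{LY1,FGHP,Gia1}) give $0\le\varphi_0(x)\le C/(1+|x|)$, $|\nabla\varphi_0(x)|\le C/(1+|x|^2)$, and $|\Delta\varphi_0(x)|\le C V_\infty(x)$. Next I would use that $\chi_{\sqrt[3]{\rho}}$ is supported in $\{|x|\le 2\rho^{-1/3}\}$ with $|\nabla^j\chi_{\sqrt[3]{\rho}}|\le C\rho^{j/3}$, and that for $L$ large enough we have $2\rho^{-1/3}<L/2$, so the periodization \eqref{eq: def phi} collapses to a single term: $\varphi(x)=\varphi_\infty(x)$ on $\Lambda$. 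In particular, (i) is immediate because $|\varphi_\infty|\le|\varphi_0|\le C$ uniformly.

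For (ii), I would split the derivatives using Leibniz: $\nabla\varphi_\infty = \chi_{\sqrt[3]{\rho}}\nabla\varphi_0 + \varphi_0\nabla\chi_{\sqrt[3]{\rho}}$, and analogously for $\Delta\varphi_\infty$. The $L^2$ bound $\|\varphi\|_{L^2}\le C\rho^{-1/6}$ follows by integrating $C/(1+|x|)^2$ over $\{|x|\le 2\rho^{-1/3}\}$, which gives a factor $\rho^{-1/3}$. For $\|\nabla\varphi\|_{L^2}$, the main piece $\int|\nabla\varphi_0|^2 \lesssim \int(1+|x|)^{-4}<\infty$ is bounded uniformly in $\rho$, and the cutoff piece $\varphi_0\nabla\chi_{\sqrt[3]{\rho}}$ is controlled on the annulus $\{\rho^{-1/3}\le|x|\le 2\rho^{-1/3}\}$ using $|\varphi_0|\le C\rho^{1/3}$ and $|\nabla\chi_{\sqrt[3]{\rho}}|\le C\rho^{1/3}$, giving $O(\rho^{1/3})$. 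The $L^1$ bounds are analogous: $\int_{|x|\le 2\rho^{-1/3}}(1+|x|)^{-1}dx\lesssim\rho^{-2/3}$ and $\int(1+|x|)^{-2}dx\lesssim\rho^{-1/3}$.

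The only mildly delicate step is $\|\Delta\varphi\|_{L^1}\le C$. Here I would use the scattering equation to rewrite $\chi_{\sqrt[3]{\rho}}\Delta\varphi_0 = -\tfrac12\chi_{\sqrt[3]{\rho}}V_\infty(1-\varphi_0)$, whose $L^1$ norm is bounded by $\|V_\infty\|_{L^1}(1+\|\varphi_0\|_\infty)\le C$. The mixed term $\nabla\varphi_0\cdot\nabla\chi_{\sqrt[3]{\rho}}$ is supported on the annulus of volume $\sim\rho^{-1}$, with integrand bounded by $C\rho^{1/3}\cdot\rho^{2/3}=C\rho$, hence $L^1$ norm $O(1)$; the term $\varphi_0\Delta\chi_{\sqrt[3]{\rho}}$ is estimated identically using $|\Delta\chi_{\sqrt[3]{\rho}}|\le C\rho^{2/3}$ and $|\varphi_0|\le C\rho^{1/3}$ on the annulus. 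Summing yields the desired $O(1)$ bound. I expect the only place that truly requires care is this last estimate, because naive power counting at the outer rim of the cutoff must exactly balance the volume of the annulus; everywhere else the bounds are a direct consequence of pointwise decay of $\varphi_0$ together with a single volume factor.
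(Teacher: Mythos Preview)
Your proposal is correct and follows the standard route: reduce to the compactly supported profile $\varphi_\infty=\varphi_0\chi_{\sqrt[3]{\rho}}$ on $\mathbb{R}^3$, use the explicit decay $\varphi_0(x)=a/|x|$ outside $\mathrm{supp}\,V_\infty$ together with $0\le\varphi_0\le 1$, and count powers of $\rho$ coming from the cutoff scale. The paper does not give its own proof here but simply refers to \cite[Appendix A]{Gia1}, where exactly this argument is carried out; your write-up matches that approach. One small caveat: the pointwise bound $|\nabla\varphi_0(x)|\le C/(1+|x|)^2$ you invoke globally is slightly more than what $V_\infty\in L^1$ alone guarantees near the origin, but for the integral norms you actually need it suffices to use the explicit form $\varphi_0=a/|x|$ on $\{|x|>R_0\}$ and the finiteness of $\int_{|x|<R_0}|\nabla\varphi_0|^2$ (obtained e.g.\ from the radial ODE and $\|V_\infty\|_{L^1}<\infty$), so the conclusions stand.
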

\textcolor{black}{The proof of Lemma \ref{lem: bound phi} can be found in \cite[Lemma 4.6]{Gia1}.}
\begin{remark}Compared to the unitary operator used in \cite{GHNS24} (see \cite[Eqs.(2.18)]{GHNS24}), we localize the solution of the scattering equation in configuration space (see\eqref{eq: def phi}), rather than introducing a momentum cut-off.
This choice simplifies the analysis of $\varphi$ and allows for an explicitly control of the localization error defined in~\eqref{eq: def scatt err infty}.
Since we are only interested in terms up to order~$\mathcal{O}(\rho^{7/3})$, this approach is particularly convenient.
However, because $\varphi$ is not localized in momentum space, we must use $\hat{u}^>_\sigma$ in place of~$\hat{u}_\sigma$ in $T_{1;\lambda}$.
In~\cite{GHNS24}, a  similar restriction was implicitly enforced through the momentum cut-off applied to the scattering solution, so that no such replacement was necessary.
\end{remark}
\subsubsection{Second quasi-bosonic transformation}\label{sec: T2 unitary}
We denote by $T_2$ the second quasi-bosonic unitary transformation we are going to use. We will use it only for momenta close to $k_F^\sigma$. More precisely, after the conjugation of the correlation energy under $T_1$, we are left with a new effective correlation energy:  
\[
  \widetilde{\mathcal{H}}_{\mathrm{corr}}^{\mathrm{eff}} := \mathbb{H}_0 + \widetilde{\mathbb{Q}}_>,
\]
which we will conjugate under $T_2$. Here $\widetilde{\mathbb{Q}}_>$ is defined as 
\begin{equation}\label{eq: def Qtilde>}
  \widetilde{\mathbb{Q}}_{>} = \int_{\Lambda \times \Lambda} dxdy\, V_\varphi(x-y)a_\uparrow(u^\gg_x) a_\uparrow( v_x^\gg)a_\downarrow(u^\gg_y)a_\downarrow( v_y^\gg) + \mathrm{h.c.},
\end{equation}
with $V_\varphi:= V(1-\varphi)$. Also, $u^{\gg}_\sigma$ and $v^\gg_\sigma$ have Fourier coefficients, $\hat{u}^\gg_\sigma$ and $\hat{v}^\gg_\sigma$,  defined as 
\begin{equation}\label{eq: def v>> u>>}
\hat{v}_\sigma^\gg(k):= \hat{v}_\sigma(k) \widehat{\eta}^\gg_\sigma(k), \qquad \hat{u}^\gg_\sigma(k) := \hat{u}^<_\sigma(k)\widehat{\chi}^\gg_\sigma(k),
\end{equation}
with $\hat{u}^>_\sigma = 1-\hat{u}^<_\sigma$, $\hat{v}_\sigma$ and $\hat{u}^>_\sigma$ as in \eqref{eq: def u,v} and \eqref{eq: def u>}, respectively, and $\widehat{\chi}^{\gg}_\sigma$, $\widehat{\eta}^{\gg}_\sigma$ being the periodization of \footnote{Note that the choice of the exponent $3/2$ turns out to be optimal.}
\[
  \mathcal{F}({\chi}^{\gg}_{\sigma})(k) := \begin{cases} 0 &\mbox{if}\,\,\, k_F^\sigma \leq |k| \leq k_F^\sigma + (k_F^\sigma)^{\frac{3}{2}}, 
  \\
  1 &\mbox{if}\,\,\, k_F^\sigma + (k_F^\sigma)^{\frac{3}{2}} < |k| \leq 3k_F^\sigma
  \end{cases}, \qquad \mathcal{F}({\eta}^{\gg}_{\sigma})(k) = \begin{cases} 0 &\mbox{if}\,\,\, k_F^\sigma - (k_F^\sigma)^{\frac{3}{2}}  \leq |k| \leq k_F^\sigma,
 \\
 1 &\mbox{if}\,\,\,  |k| < k_F^\sigma - (k_F^\sigma)^{\frac{3}{2}},
  \end{cases}
\] 
where we recall that $\mathcal{F}$ denotes the Fourier transform in $\mathbb{R}^3$.
We then introduce new quasi-bosonic operators:
\begin{equation}\label{eq: def b tilde}
  \widetilde{b}_{p,\sigma} = \sum_{k\in\frac{2\pi}{L}\mathbb{Z}^3} \widetilde{b}_{p,k,\sigma} = \sum_{k\in\frac{2\pi}{L}\mathbb{Z}^3} \hat{u}_\sigma^{\gg}(k+p)\hat{v}_\sigma^\gg(k)\hat{a}_{k+p,\sigma}\hat{a}_{-k,\sigma}, \qquad \widetilde{b}^\ast_{p,\sigma} = \left(\widetilde{b}_{p,\sigma}\right)^\ast,
\end{equation}
and, for any $\lambda\in \mathbb{R}$, we define
\begin{equation}\label{eq: def T2 unitary}
  T_{2;\lambda}:= \exp\left(\frac{\lambda}{L^3}\sum_{p,r,r^\prime} \hat{f}_{r,r^\prime}(p)\, \widetilde{b}_{p,r,\uparrow}\widetilde{b}_{-p,r^\prime, \downarrow} - \mathrm{h.c.} \right) =: \exp(\lambda (B_2 - B_2^\ast)),
\end{equation} 
with
\begin{equation}\label{eq: def frr'}
  \hat{f}_{r,r^\prime}(p) = \frac{2\widehat{V_\varphi}(p)}{|r+p|^2 -|r|^2 + |r^\prime - p|^2 - |r^\prime|^2}, \qquad \widehat{V_\varphi}(p) = \int_{\Lambda_L} dx\, V_\varphi(x)\, e^{ip\cdot x}.
\end{equation}
The choice of $\hat{f}_{r,r^\prime}$ ensures that $(1/2)[\mathbb{H}_0, B_2 - B_2^\ast] = \widetilde{\mathbb{Q}}_>$, which allows us to estimate the effective correlation energy near the Fermi surface.

In the estimates of the error terms, it will be useful to use the expression of $T_{2;\lambda}$ in configuration space, which we now discuss. \textcolor{black}{Proceeding as in  \cite{GHNS24},  for $|r+p|> k_F^\uparrow > |r|$ and for $|r^\prime - p|> k_F^\downarrow > |r^\prime|$, we can write}
  \begin{equation}\label{eq: integral dt}
  \frac{1}{|r+p|^2 - |r|^2 + |r^\prime - p|^2 - |r^\prime|^2} = \int_0^{\infty} dt \, e^{-t|r+p|^2} e^{t|r|^2} e^{-t|r^\prime - p|^2}e^{t|r^\prime|^2}.
  \end{equation}
  Introducing the notation 
 \begin{equation}\label{eq def ut vt}
    \hat{u}^{t}_\sigma(k) := e^{-t|k|^2}\hat{u}^\gg_\sigma(k), \qquad \hat{v}^{t}_\sigma(k) := e^{t|k|^2}\hat{v}_{\sigma}^\gg(k),
  \end{equation}
  and using the definition of $\hat{f}_{r,r^\prime}$, we can write
  \begin{equation}\label{eq: B2 conf space}
  B_2 = 2\int_0^{\infty}dt\int_{\Lambda \times \Lambda} dxdy  V_\varphi(x-y) a_\uparrow(u^t_x)a_\uparrow( v^t_x)a_\downarrow(u^t_y)a_\downarrow( v^t_y),\nonumber
  \end{equation}
  where 
  \begin{equation}\label{eq: def ut vt x space}
    u^t_\sigma(x;y) = \frac{1}{L^3}\sum_{k\in \frac{2\pi}{L}\mathbb{Z}^3} \hat{u}_\sigma^t(k)e^{ik\cdot (x-y)}, \qquad v^t_\sigma(x;y) = \frac{1}{L^3}\sum_{k\in \frac{2\pi}{L}\mathbb{Z}^3} \hat{v}_\sigma^t(k)e^{ik\cdot (x-y)}.
  \end{equation}
The rigorous conjugation of $\widetilde{\mathcal{H}}^{\mathrm{eff}}_{\mathrm{corr}}$ under $T_2$, will be discussed in Section \ref{sec: T2 conjugation}.
\begin{remark}[Comparison with \cite{GHNS24}]\label{comparison new up bd} The unitary transformation $T_{2;\lambda}$  introduced above can be compared with the one defined in \cite[Eq. (2.19)]{GHNS24}. There are two main differences between the two constructions. First, in the denominator of $B_{2}$ we have $V_\varphi$, whereas in \cite[Eq. (2.19)]{GHNS24} the authors directly use $8\pi a$, which corresponds to the small momentum approximation of $V_\varphi$. The second, and more substantial, difference is that in our definition \eqref{eq: def T2 unitary} (see also \eqref{eq: def frr'}) there is no singularity in the denominator. This regularity follows from the support properties of $\hat{u}^\gg_\sigma$ and $\hat{v}^\gg_\sigma$. Discarding momenta near $k_F^\uparrow$ and $k_F^\downarrow$ simplifies the analysis but only allows us to obtain optimal estimates: essentially because we are neglecting momenta that contribute to the Huang–Yang correction of order $\rho^{7/3}$.
\end{remark}
\section{Useful bounds}\label{sec: useful bounds}
In this section we collect some bounds which we are going to use often to estimate the error terms. Let $g_{z}(z^\prime):= g(z-z^\prime)\in L^1(\Lambda)$. In the next lemma we prove some operator norm bounds for the following operators 
\begin{equation}\label{eq: def op b g}
  b_\sigma (g_z):= \int_{\Lambda} dz^\prime\, g(z-z^\prime)a_\sigma(u_{z^\prime}^>)a_{\sigma}( v_{z^\prime}), \qquad b^\ast_\sigma(g_z) := \int_{\Lambda} dz^\prime\, \overline{g(z-z^\prime)} a_\sigma^\ast( v_{z^\prime})a_\sigma^\ast(u_{z^\prime}^>),
\end{equation}
and 
\begin{equation}\label{eq: def op bj g}
  b_{\ell,\sigma}(g_z):= \int_{\Lambda} dz^\prime\, g(z-z^\prime)a_\sigma(u_{z^\prime}^>)a_{\sigma}(\partial_\ell v_{z^\prime}), \quad b^\ast_{\ell,\sigma}(g_z) := \int_{\Lambda} dz^\prime\, \overline{g(z-z^\prime)} a_\sigma^\ast(\partial_\ell v_{z^\prime})a_\sigma^\ast(u_{z^\prime}^>), \quad \ell = 1,2,3.
\end{equation}
\begin{lemma}\label{lem: bound b phi}Let $\varphi$ as in \eqref{eq: def phi}. For $j,\ell = 1,2,3$, it holds that
  \begin{equation} \label{eq: op bound b phi}
      \| b_\sigma(\varphi_z)\| \leq C \rho^{\frac{1}{3}}, \quad \| {b}_{\ell,\sigma}(\varphi_z)\| \leq C \rho^{\frac{2}{3}},  \quad \| b_\sigma(\partial_j\varphi_z)\| \leq C \rho^{\frac{1}{2}}, \quad \| {b}_{\ell,\sigma}(\partial_j\varphi_z)\| \leq C \rho^{\frac{5}{6}},
  \end{equation}
  and the same is true for the adjoint operators. Furthermore, let $\mathcal{E}_{\{\varphi_0, \chi_{\sqrt[3]{\rho}}\}}$ be as in \eqref{eq: period error scattering}. Then 
\begin{equation}\label{eq: bound b err scatt}
  \|b_\sigma((\mathcal{E}_{\{\varphi_0, \chi_{\sqrt[3]{\rho}}\}})_z)\| \leq C\rho,
\end{equation}
and the same holds for the adjoint operator.
\end{lemma}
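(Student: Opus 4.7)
The strategy is to use the decomposition $\hat{u}^>_\sigma = 1-\hat{\alpha}^<_\sigma$ from Remark~\ref{rem: comparisono Gia1 cut-off}, equivalent to $a_\sigma(u^>_{z'}) = a_{z',\sigma} - a_\sigma(\alpha^<_{z'})$ in position space, to split
\[
  b_\sigma(g_z) \;=\; \int_\Lambda dz'\, g(z-z')\, a_{z',\sigma}\, a_\sigma(v_{z'}) \;-\; \int_\Lambda dz'\, g(z-z')\, a_\sigma(\alpha^<_{z'})\, a_\sigma(v_{z'}) \;=:\; I_1(g_z) - I_2(g_z),
\]
and to bound each of the two pieces separately by a clean fermionic estimate, after which the first two bounds in \eqref{eq: op bound b phi} will follow from Lemma~\ref{lem: bound phi} by direct substitution.

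For $I_1(g_z)$, I would view it as a two-body fermionic annihilation with integral kernel $K(z',y) := g(z-z')\,v_\sigma(y;z')$. After antisymmetrization of $K$ in $(z',y)$, the standard Pauli-based estimate $\|\int\!\!\int K(x,y)\,a_{x,\sigma}a_{y,\sigma}\,dxdy\|_{\mathrm{op}} \leq \sqrt{2}\,\|K\|_{L^2(\Lambda\times\Lambda)}$, combined with $\|v_\sigma(\cdot;z')\|_{L^2}^2 = \rho_\sigma$ independent of $z'$, yields $\|I_1(g_z)\|_{\mathrm{op}} \leq C\rho^{1/2}\|g\|_{L^2(\Lambda)}$. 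For $I_2(g_z)$, the one-body bound $\|a_\sigma(f)\|\leq \|f\|_{L^2}$ applied to both $\alpha^<_{z'}$ and $v_{z'}$, together with $\|\alpha^<_\sigma(\cdot;z')\|_{L^2}^2 = L^{-3}\sum_{k}|\hat{\alpha}^<_\sigma(k)|^2 \leq C\rho$ (boundedness of $\hat{\alpha}^<_\sigma$ and its support on $\{|k|\leq 3k_F^\sigma\}$), plus integration in $z'$, gives $\|I_2(g_z)\|_{\mathrm{op}}\leq C\rho\|g\|_{L^1(\Lambda)}$. Combining, one obtains the master inequality
\[
  \|b_\sigma(g_z)\|_{\mathrm{op}} \;\leq\; C\big(\rho^{1/2}\|g\|_{L^2(\Lambda)} + \rho\,\|g\|_{L^1(\Lambda)}\big),
\]
from which $\|b_\sigma(\varphi_z)\|\leq C\rho^{1/3}$ and $\|b_\sigma(\partial_j\varphi_z)\|\leq C\rho^{1/2}$ in \eqref{eq: op bound b phi} follow immediately by plugging in the norms of $\varphi,\nabla\varphi$ from Lemma~\ref{lem: bound phi}.

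The bounds for $b_{\ell,\sigma}(g_z)$ are proved by the same argument with $v$ replaced by $\partial_\ell v$; the only change is the sharpened estimate $\|\partial_\ell v_\sigma\|_{L^2}^2 = L^{-3}\sum_{k\in\mathcal{B}_F^\sigma}|k_\ell|^2 \leq C(k_F^\sigma)^2\rho_\sigma\leq C\rho^{5/3}$, which produces the extra $\rho^{1/3}$ factor in both bounds. Finally, \eqref{eq: bound b err scatt} follows from the same master inequality once one verifies the auxiliary estimates $\|\mathcal{E}_{\{\varphi_0,\chi_{\sqrt[3]\rho}\}}\|_{L^2(\Lambda)}\leq C\rho^{1/2}$ and $\|\mathcal{E}_{\{\varphi_0,\chi_{\sqrt[3]\rho}\}}\|_{L^1(\Lambda)}\leq C$; these follow from the explicit form of $\mathcal{E}$ given after \eqref{eq: period error scattering}, the scattering asymptotics $|\varphi_0(x)|\leq C/(1+|x|)$ and $|\nabla\varphi_0(x)|\leq C/(1+|x|)^2$ outside $\mathrm{supp}\,V_\infty$, and the annular support (on scale $|x|\sim\rho^{-1/3}$ and volume $\sim\rho^{-1}$) of $\nabla\chi_{\sqrt[3]\rho}$ and $\Delta\chi_{\sqrt[3]\rho}$. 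The main mildly delicate point is the clean justification of the two-body fermionic operator bound on $I_1$, which is standard in the bosonization literature and can be handled as in \cite[Appendix~A]{Gia1}.
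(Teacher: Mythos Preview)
Your overall strategy—proving a master inequality of the form $\|b_\sigma(g_z)\|\le C(\rho^{1/2}\|g\|_{L^2}+\rho\,\|g\|_{L^1})$ and then specializing to $g\in\{\varphi,\partial_j\varphi,\mathcal{E}_{\{\varphi_0,\chi_{\sqrt[3]\rho}\}}\}$—is precisely the route the paper points to via \cite[Lemma~5.2]{LS1}; your bound on $I_2$, the extension to $b_{\ell,\sigma}$ via $\|\partial_\ell v_\sigma\|_{L^2}\le C\rho^{5/6}$, and your $L^1/L^2$ estimates on $\mathcal{E}_{\{\varphi_0,\chi_{\sqrt[3]\rho}\}}$ are all correct.

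There is, however, a real gap in your justification of the $I_1$ bound. The ``standard Pauli-based estimate'' you invoke,
\[
  \Big\|\int\!\!\int K(x,y)\,a_{x,\sigma}a_{y,\sigma}\,dx\,dy\Big\|_{\mathrm{op}}\;\le\;\sqrt{2}\,\|K\|_{L^2(\Lambda\times\Lambda)},
\]
is \emph{false} in general for fermions. Take orthonormal modes $e_1,\dots,e_{2N}$ and the antisymmetric kernel corresponding to $A=\sum_{i=1}^{N}a_{e_{2i-1}}a_{e_{2i}}$. On the subspace where each pair $(2i{-}1,2i)$ is either doubly occupied or empty, $A$ acts as the collective spin-lowering operator $S^-$ on $N$ qubits, whence $\|A\|=\|S^-\|\sim N/2$ (achieved on Dicke states near $m=0$), while $\|K\|_{L^2}=\sqrt{2N}$. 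Thus $\|A\|/\|K\|_{L^2}\sim\sqrt{N}\to\infty$, and no Hilbert--Schmidt bound of the claimed type can hold; the Pauli principle only yields control via the trace norm of $K$, which is far too weak for your kernel $K(z',y)=g(z-z')v_\sigma(y;z')$.

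The desired bound $\|I_1(g_z)\|\le C\rho^{1/2}\|g\|_{L^2}$ is nonetheless true, but its proof must exploit the specific structure of the operator (the fact that, after the cancellation $a_\sigma(v_{z'})^2=0$, the two annihilation factors live on the disjoint momentum shells $\{|k|\le k_F^\sigma\}$ and $\{|k|>k_F^\sigma\}$), rather than a black-box two-body estimate. This is what the references \cite[Lemmas~4.8--4.9]{Gia1} and \cite[Lemma~5.2]{LS1} actually do; your pointer to \cite[Appendix~A]{Gia1} concerns bounds on $\varphi$ and does not supply the missing operator estimate.
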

The proof of Lemma \ref{lem: bound b phi} can be done similarly as the one of \cite[Lemma 4.8, Lemma 4.9]{Gia1}. However, a different and simpler proof can be found in \cite[Lemma 5.2]{LS1}, for which one needs only to bound the $L^1$ and $L^2$ norms of $\varphi$, $\partial_j \varphi$, $\mathcal{E}_{\{\varphi_0, \chi_{\sqrt[3]{\rho}}\}}$.
\begin{proposition}[Bounds for the number operator]\label{pro: N} Let $\lambda\in [0,1]$ and let $\psi$ be an approximate ground state in the sense of Definition \ref{def: approx gs}. It holds  that
\begin{equation}\label{eq: est N}
\langle T^\ast_{1;\lambda} R^\ast \psi, \mathcal{N}T^\ast_{1;\lambda} R^\ast \psi\rangle \leq C\langle T_1^\ast R^\ast \psi, \mathcal{N} T^\ast_1 R^\ast \psi\rangle + CL^{3}\rho^{\frac{5}{3}}, \qquad \langle T^\ast_{1;\lambda} R^\ast \psi, \mathcal{N} T^\ast_{1;\lambda} R^\ast \psi\rangle \leq CL^3\rho^{\frac{7}{6}}.
\end{equation}
Furthermore, let
\begin{equation}\label{eq: def N>}
  \mathcal{N}_> := \sum_\sigma\sum_{|k|\geq 2k_F} \hat{a}^\ast_{k,\sigma}\hat{a}_{k,\sigma}.
\end{equation}
It holds that
\begin{equation}\label{eq: N>}
  \langle T^\ast_{1;\lambda} R^\ast \psi , \mathcal{N}_> T^\ast_{1;\lambda} R^\ast \psi\rangle \leq C\rho^{-\frac{2}{3}}\langle T^\ast_{1;\lambda} R^\ast \psi, \mathbb{H}_0 T^\ast_{1;\lambda} R^\ast \psi\rangle,
\end{equation}
and
\begin{equation}\label{eq: N> lambda 1}
  \langle T^\ast_{1;\lambda} R^\ast \psi, \mathcal{N}_> T^\ast_{1;\lambda} R^\ast \psi\rangle \leq C\rho^{-\frac{2}{3}}\langle T^\ast_{1} R^\ast \psi, \mathbb{H}_0 T^\ast_{1} R^\ast \psi\rangle + CL^3 \rho^{\frac{5}{3}}.
\end{equation}
\end{proposition}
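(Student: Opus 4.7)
Write $\phi_\lambda := T_{1;\lambda}^\ast R^\ast\psi$ and $X_1 := B_1 - B_1^\ast$, so that $\partial_\lambda \phi_\lambda = -X_1 \phi_\lambda$. The proof splits into three essentially independent tasks. Bound \eqref{eq: N>} requires no conjugation analysis at all: since $||k|^2 - (k_F^\sigma)^2|\geq 3(k_F^\sigma)^2 \geq c\rho^{2/3}$ for $|k|\geq 2k_F^\sigma$, one has the operator inequality $\mathbb{H}_0\geq c\rho^{2/3}\mathcal{N}_>$ on $\mathcal{F}$, and taking expectation in $\phi_\lambda$ gives \eqref{eq: N>} immediately.

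For \eqref{eq: est N} I propose a Gr\"onwall argument. Since $B_1$ is a product of four annihilation operators, $[\mathcal{N}, B_1] = -4 B_1$, and hence
\[
  \frac{d}{d\lambda}\langle \phi_\lambda, \mathcal{N}\phi_\lambda\rangle = 8\,\Re\langle \phi_\lambda, B_1\phi_\lambda\rangle.
\]
The key pointwise estimate is $|\langle \phi, B_1\phi\rangle|\leq C(\langle \phi, \mathcal{N}\phi\rangle + L^3\rho^{5/3}\|\phi\|^2)$. I would obtain this by factoring $B_1 = \int_\Lambda dz\, a_\uparrow(u^>_z) a_\uparrow(v_z)\, b_\downarrow(\varphi_z)$, applying Cauchy--Schwarz in $z$, invoking the operator bound $\|b_\downarrow(\varphi_z)\|\leq C\rho^{1/3}$ from Lemma~\ref{lem: bound b phi}, and controlling $\int dz\, \|a^\ast_\uparrow(v_z)a^\ast_\uparrow(u^>_z)\phi\|^2$ by repeated anticommutator expansion---crucially using $u^>_\sigma = \delta - \alpha^<_\sigma$ together with $\|\alpha^<\|_1\leq C$ from Remark~\ref{rem: comparisono Gia1 cut-off} to turn the would-be divergent $\|u^>_z\|_{L^2}^2$ piece into a multiple of $\mathcal{N}$ plus a controlled $L^3\rho^{4/3}$ residual. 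Gr\"onwall then yields both bounds in \eqref{eq: est N}: integrating forward from $\lambda=0$, with the a priori estimate $\langle R^\ast\psi, \mathcal{N} R^\ast\psi\rangle\leq CL^3\rho^{7/6}$ from Lemma~\ref{lem: a priori est} and the observation $L^3\rho^{5/3}\leq L^3\rho^{7/6}$, gives the second inequality; integrating backward from $\lambda=1$ gives the first.

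For \eqref{eq: N> lambda 1}, \eqref{eq: N>} applied at $\lambda$ reduces the claim to $\langle\phi_\lambda, \mathbb{H}_0\phi_\lambda\rangle\leq C\langle\phi_1, \mathbb{H}_0\phi_1\rangle + CL^3\rho^{7/3}$. I would obtain this by a second Gr\"onwall, now for $g(\lambda):=\langle\phi_\lambda,\mathbb{H}_0\phi_\lambda\rangle$, whose derivative is $\langle\phi_\lambda,[X_1,\mathbb{H}_0]\phi_\lambda\rangle$. Writing $[B_1,\mathbb{H}_0]$ in configuration space produces a $B_1$-type operator with $v,u^>$ replaced by their Laplacians, so the derivative variants $\|b_\sigma(\partial_j\varphi_z)\|\leq C\rho^{1/2}$ and $\|b_{\ell,\sigma}(\varphi_z)\|\leq C\rho^{2/3}$ from Lemma~\ref{lem: bound b phi}, combined with the same $u^> = \delta - \alpha^<$ trick, yield $|\langle\phi,[X_1,\mathbb{H}_0]\phi\rangle|\leq C\langle\phi,\mathbb{H}_0\phi\rangle + CL^3\rho^{7/3}$, which closes the Gr\"onwall; multiplying by $C\rho^{-2/3}$ then gives \eqref{eq: N> lambda 1}.

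The main technical obstacle is the sharp pointwise estimate on $\langle\phi, B_1\phi\rangle$. A naive bound using $\|b_\downarrow(\varphi_z)\|\leq C\rho^{1/3}$ and $\|a^\ast_\uparrow(v_z)a^\ast_\uparrow(u^>_z)\phi\|\leq \rho^{1/2}\|u^>_z\|_2\|\phi\|$ cannot work, because $\|u^>_z\|_{L^2}$ diverges in the absence of an ultraviolet cutoff on $\hat u^>$. The decomposition $u^> = \delta - \alpha^<$ with $\|\alpha^<\|_1\leq C$ is precisely what separates a diagonal contribution (absorbed into $\mathcal{N}$ via anticommutators) from an integrable $\alpha^<$-piece giving the sharp $L^3\rho^{5/3}$ residual; the analogous cancellation mechanism drives the commutator bound for $[X_1,\mathbb{H}_0]$ in the last task.
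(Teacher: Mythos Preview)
Your overall architecture---Gr\"onwall for $\mathcal N$ and for $\mathcal N_>$ (or $\mathbb H_0$), plus the trivial operator inequality $\mathbb H_0\ge c\rho^{2/3}\mathcal N_>$---is exactly the paper's route (via \cite[Prop.~4.11, 4.15]{Gia1}). The gap is in the mechanism you propose for the key bound $|\langle\phi,B_1\phi\rangle|\le C\langle\phi,\mathcal N\phi\rangle+CL^3\rho^{5/3}$.

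You want to control $\int dz\,\|a^\ast_\uparrow(v_z)a^\ast_\uparrow(u^>_z)\phi\|^2$; but this quantity is \emph{infinite}. Anticommuting, the fully contracted piece is $\|v_z\|_2^2\,\|u^>_z\|_2^2\,\|\phi\|^2$, and $\|u^>_z\|_2^2=L^{-3}\sum_k|\hat u^>(k)|^2=\infty$ since $\hat u^>$ has no UV cutoff. The decomposition $u^>=\delta-\alpha^<$ does not help here: it trades $\|u^>_z\|_2^2$ for a term containing $\langle\delta_z,\delta_z\rangle=\delta(0)$, equally divergent. That trick is useful when $u^>$ appears as a kernel \emph{between two different points} (as in Propositions \ref{pro: H0}--\ref{pro: Q2}), not for the diagonal $L^2$ norm. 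The fix is simply to run Cauchy--Schwarz the other way: write $B_1=\int dz'\,b_\uparrow(\varphi_{z'})\,a_\downarrow(u^>_{z'})a_\downarrow(v_{z'})$, bound $\|b^\ast_\uparrow(\varphi_{z'})\|\le C\rho^{1/3}$, anticommute $a_\downarrow(v_{z'})$ past $a_\downarrow(u^>_{z'})$ so that the bounded factor $\|a_\downarrow(v_{z'})\|\le C\rho^{1/2}$ comes off, and use $\int dz'\|a_\downarrow(u^>_{z'})\phi\|^2\le\langle\phi,\mathcal N_>\phi\rangle$ (finite!). This gives $|\langle\phi,B_1\phi\rangle|\le C\rho^{5/6}L^{3/2}\|\phi\|\,\|\mathcal N_>^{1/2}\phi\|$, hence the desired inequality.

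For \eqref{eq: N> lambda 1} your route through $\mathbb H_0$ is more laborious than needed and not quite as you describe: $[\mathbb H_0,B_1-B_1^\ast]$ produces $\mathbb T_1$ (Proposition~\ref{pro: H0}), which involves $\Delta\varphi$, and Lemma~\ref{lem: bound b phi} gives no operator bound on $b_\sigma((\Delta\varphi)_z)$ because $\|\Delta\varphi\|_2$ is not controlled for $V\in L^1$ only. The paper's shortcut is to propagate $\mathcal N_>$ directly: since $\hat v_\sigma$ is supported below $2k_F$ and $\hat u^>_\sigma$ above, $[\mathcal N_>,B_1]$ is again of $B_1$ type (only the $u^>$ factors contribute), so the same estimate as above closes a Gr\"onwall for $\langle\phi_\lambda,\mathcal N_>\phi_\lambda\rangle$ with inhomogeneity $CL^3\rho^{5/3}$; integrating backward from $\lambda=1$ and applying \eqref{eq: N>} at $\lambda=1$ gives \eqref{eq: N> lambda 1}.
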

\textcolor{black}{The proof of Proposition \ref{pro: N} is as in \cite[Proposition 4.11, Proposition 4.15]{Gia1}. In particular, to prove \eqref{eq: N> lambda 1}, one has to use the propagation bounds proved in Proposition \ref{pro: propagation est}.} Note that, similarly as in \cite{Gia1}, the operator $\mathcal{N}_>$ will appear often in the estimates of the error terms, since
\begin{equation}\label{eq: u> wrt N>}
\int_{\Lambda_L}dx\, \|a_\sigma(u_x^>)\psi\|^2 \leq C\sum_{|k|\geq 2k_F}\langle \psi, \hat{a}_{k,\sigma}^\ast \hat{a}_{k,\sigma}\psi\rangle.
\end{equation}


\section{Conjugation under the  transformation $T_1$}\label{sec: conj T1}
In this section, we conjugate the correlation energy by the quasi-bosonic unitary transformation introduced in Section~\ref{sec: def T1}. Our main focus will be the proof of the lower bound corresponding to the energy density asymptotics stated in Theorem~\ref{thm: optimal lw bd}. At the end of the section, we briefly outline how the argument can be adapted to obtain the upper bound. Before entering the rigorous analysis, we first provide an intuitive explanation of the effect of the conjugation under $T_1$.\\

\noindent\textbf{Ideas on the conjugation under $T_1$.}\label{sec: strategy T1} As anticipated in Section \ref{sec: particle-hole} (see \eqref{eq: est Q1} and \eqref{eq: est Q3}), by applying the particle-hole transformation from Definition \ref{def: ferm bog} together with the priori estimates in Lemma \ref{lem: a priori est}, we obtain (see Section \ref{sec: Q3} and Section \ref{sec: lw bd T1} for the rigorous proof)
\[
  \langle R^\ast \psi, (\mathbb{H}_{0} + \mathbb{Q}) R^\ast \psi\rangle  \sim \langle R^\ast\psi, (\mathbb{H}_0 + \mathbb{Q}_2 + \mathbb{Q}_4)R^\ast \psi\rangle,
\]
for any $\psi$ approximate ground state with $\mathbb{H}_0$ and $\mathbb{Q}$ as in Proposition \ref{pro: fermionic transf}. Denoting $\mathcal{H}_{\mathrm{corr}}^{\mathrm{eff}} = \mathbb{H}_0 + \mathbb{Q}_2 + \mathbb{Q}_4$ and conjugating it under $T_1$, we get 
\begin{eqnarray*}
  \langle R^\ast \psi, \mathcal{H}_{\mathrm{corr}}^{\mathrm{eff}} R^\ast \psi\rangle\hspace{-0.25cm} &=& \hspace{-0.25cm}\langle T^\ast_1 R^\ast \psi, (\mathbb{H}_0 + \mathbb{Q}_2 + \mathbb{Q}_4) T^\ast_1 R^\ast \psi\rangle 
   - \int_0^1 d\lambda\, \partial_\lambda \langle T^\ast_{1;\lambda} R^\ast \psi, (\mathbb{H}_0 + \mathbb{Q}_4 + \mathbb{Q}_2) T^\ast_{1;\lambda} R^\ast \psi\rangle 
   \\
   &=& \hspace{-0.25cm} \langle T^\ast_1 R^\ast \psi, (\mathbb{H}_0 + \mathbb{Q}_2 + \mathbb{Q}_4) T^\ast_1 R^\ast \psi\rangle  + \int_0^1 d\lambda\, \langle T^\ast_{1;\lambda} R^\ast \psi, [\mathbb{H}_0 + \mathbb{Q}_4 + \mathbb{Q}_2 , B_1 - B_1^\ast] T^\ast_{1;\lambda} R^\ast \psi\rangle.
\end{eqnarray*}
The choice of $\varphi$ in the definition of $T_1$ implies that (see Section \ref{sec: scattering}) $\mathbb{Q}_2 + [\mathbb{H}_0 + \mathbb{Q}_4, B_1 - B_1^\ast]$ is almost zero up to error terms supported close to $k_F^\sigma$ ($\sigma = \uparrow, \downarrow$) due to the cut-off in \eqref{eq: def u>}. More precisely, neglecting other errors $\mathcal{O}(L^3\rho^{7/3})$, we have 
\[
  \mathbb{Q}_2 + [\mathbb{H}_0 + \mathbb{Q}_4, B_1 - B_1^\ast] \sim \widetilde{\mathbb{Q}},
\]
where
\[
  \widetilde{\mathbb{Q}} = \frac{1}{L^3}\sum_{p,r,r^\prime}(\hat{V} - \hat{V}\ast \hat{\varphi})(p)b_{p,r,\uparrow}b_{-p,r^\prime,\downarrow}\hat{u}^<_\uparrow(r+p)\hat{u}^<_\downarrow(r^\prime - p) + \mathrm{h.c.},
\]
and $\hat{u}^<_\sigma(k) := 1 - \hat{u}^>_\sigma(k)$ with $\hat{u}^>_\sigma$ defined as in \eqref{eq: def u>}.
Applying the Duhamel's formula again  (see Proposition \ref{pro: T2<} and Section \ref{sec: lw bd T1}) up to errors of the order $\mathcal{O}(L^3\rho^{7/3})$, we find that 
\begin{align*}
  \langle R^\ast \psi, \mathcal{H}^{\mathrm{eff}}_{\mathrm{corr}} R^\ast \psi\rangle &\sim \langle T^\ast_{1} R^\ast \psi, (\mathbb{H}_0 + \mathbb{Q}_4) T^\ast_{1} R^\ast \psi\rangle + \langle T^\ast_{1} R^\ast \psi, \widetilde{\mathbb{Q}}  T^\ast_{1} R^\ast \psi\rangle
   + \int_0^1 d\lambda\, \langle T^\ast_{1;\lambda} R^\ast \psi, [\mathbb{Q}_2, B_1 - B_1^\ast] T^\ast_{1;\lambda} R^\ast \psi\rangle
  \\
  & \quad  +\int_0^1d\lambda\int_1^\lambda d\lambda^\prime  \langle T^\ast_{1;\lambda^\prime} R^\ast \psi, [\mathbb{Q}_2, B_1 - B_1^\ast] T^\ast_{1;\lambda} R^\ast \psi\rangle,
\end{align*}
and (see Proposition \ref{pro: Q2}),
\begin{multline*}
 \int_0^1 d\lambda\, \langle T^\ast_{1;\lambda} R^\ast \psi, [\mathbb{Q}_2, B_1 - B_1^\ast] T^\ast_{1;\lambda} R^\ast \psi\rangle
 +\int_0^1d\lambda\int_1^\lambda d\lambda^\prime \langle T^\ast_{1;\lambda^\prime} R^\ast \psi, [\mathbb{Q}_2, B_1 - B_1^\ast] T^\ast_{1;\lambda} R^\ast \psi\rangle 
 \\
 \sim   -\rho_\uparrow\rho_\downarrow\sum_{p\in\frac{2\pi}{L}\mathbb{Z}^3}\hat{V}(p)\hat{\varphi}(p).
\end{multline*}
In conclusion, after the conjugation under $T_1$ and using the explicit formula for the FFG energy in \eqref{eq: FFG energy}, for $L$ large enough, we find that
\[
 E_L(N_\uparrow, N_\downarrow) \geq  \frac{3}{5}(6\pi^2)^{\frac{2}{3}}\left (\rho_\uparrow^{\frac{5}{3}} + \rho_\downarrow^{\frac{5}{3}}\right)L^3 + 8\pi a  \rho_\uparrow \rho_\downarrow L^3 +  \frac{1}{2}\langle T^\ast_{1}R^\ast \psi, (\mathbb{H}_0   + 2\widetilde{\mathbb{Q}})T^\ast_{1}R^\ast \psi\rangle + \mathcal{E}_{T_1}, \qquad |\mathcal{E}_{T_1} |\leq CL^3\rho^{\frac{7}{3}},
\]
where the factor $1/2$ in front of $\mathbb{H}_0$ is due to the fact that we partially used the positivity of $\mathbb{H}_0$ to estimate many error terms. To prove an optimal lower bound,  we need to refine our analysis for $\widetilde{\mathbb{Q}}$. This is exactly what was missing in \cite{Gia1}. To do that, we first prove that (see Section \ref{sec: reduction new eff corr en})
\[
  \langle T^\ast_1 R^\ast \psi, \widetilde{\mathbb{Q}}T^\ast_1 R^\ast \psi\rangle = \langle T^\ast_1 R^\ast \psi, \widetilde{\mathbb{Q}}_>T^\ast_1 R^\ast \psi\rangle + \mathcal{E}, \qquad |\mathcal{E}| \leq CL^3\rho^{\frac{7}{3}},
\]
with $\widetilde{\mathbb{Q}}_{>}$ as in \eqref{eq: def Qtilde>}. The correlations in $\widetilde{\mathbb{Q}}_{>}$ will be treated with the quasi-bosonic Bogoliubov transformation introduced in \eqref{eq: def T2 unitary}. 
\subsection{Propagation estimates}\label{sec: propagation estimates}
In this section we collect a number of propositions useful to propagate the bounds in Lemma \ref{lem: a priori est} via a Gr\"onwall-like argument. The proofs presented here are similar to those in \cite[Section 5]{Gia1}. However, due to the different localization we employ for the momenta inside or outside the Fermi ball, we recall some of the relevant estimates for completeness. Through this section (and in what follows), we use the shorthand notation $\sum_\sigma$ for $\sum_{\sigma = \uparrow, \downarrow}$, $\sum_p$ for $\sum_{p\in (2\pi/L)\mathbb{Z}^3}$, $\int dx$ for $\int_{\Lambda} dx$ and $\|\cdot\|_p$ for $\|\cdot\|_{L^p(\Lambda)}$.
In the following propositions, we will repeatedly use the bounds
\begin{equation}\label{eq: bounds av}
  \|a_\sigma(\partial^n v)\| = \|\partial^n v_\sigma\|_2 \leq C\rho^{\frac{n}{3} + \frac{1}{2}},\qquad \|v\|_\infty \leq C\rho,
\end{equation}
which follow from the support properties of $\hat{v}_\sigma$.
Moreover, we frequently write $u^>_\sigma = \delta - \alpha^<_\sigma$ with $\alpha^<_\sigma$ defined in \eqref{eq: def alpha<}, and $(u^>_\sigma)^2 = \delta  - \nu_\sigma$ where $(u^>_\sigma)^2$ denotes the function whose Fourier coefficients are given by $(\hat{u}^>_\sigma)^2(\cdot)$. Here, $\hat{\nu}_\sigma(\cdot)$ is the periodization of a smooth function on $\mathbb{R}^3\rightarrow \mathbb{R}$ satisfying 
\begin{equation}\label{eq: def nu}
  \mathcal{F}({\nu}_\sigma)(k) = \begin{cases} 1 &\mbox{if}\,\,\, |k| < 2k_F^\sigma \\ 0 &\mbox{if}\,\,\, |k| > 3k_F^\sigma.\end{cases}
\end{equation}
We recall that $\mathcal{F}$ denotes the Fourier transform on $\mathbb{R}^3$.
Throughout the paper, we repeatedly make use of the estimates
\begin{equation}\label{eq: bounds alpha< nu}
  \|\alpha^<_\sigma\|_1 \leq C, \qquad \|\nu_\sigma\|_1 \leq C, \qquad \|\alpha^<_\sigma\|_\infty \leq C\rho, \qquad \|\nu_\sigma\|_\infty \leq C\rho,
\end{equation}
where the $L^\infty$ bounds follow from the support properties of $\hat{\alpha}^<_\sigma$ and $\hat{\nu}_\sigma$, while the $L^1$ estimates can be obtained as explained in \cite[Proposition 4.1]{Gia1}.
\begin{proposition}[Propagation estimate for $\mathbb{H}_0$ - Part I]\label{pro: H0} Let $\lambda\in [0,1]$ and let $\psi$ be an approximate ground state as in Definition \ref{def: approx gs}. Under the assumptions of Theorem \ref{thm: optimal lw bd}, it holds  that
\begin{equation}\label{eq: prop H0}
  \partial_\lambda T_{1;\lambda}\mathbb{H}_0 T^\ast_{1;\lambda} =  T_{1;\lambda}( \mathbb{T}_1 + \mathcal{E}_{\mathbb{H}_0}) T^\ast_{1;\lambda},
\end{equation}
with
\begin{equation}\label{eq: def T1 kinetic}
  \mathbb{T}_1 =- 2\int dxdy\,  \Delta\varphi(x-y) a_\uparrow(u^>_x)a_\uparrow( v_x)a_\downarrow(u^>_y) a_\downarrow( v_y) + \mathrm{h.c.}
\end{equation}
and
\begin{equation}\label{eq: error kin energy}
  \mathcal{E}_{\mathbb{H}_0} = 2\sum_{j=1}^3\int dxdy\, \partial_j \varphi(x-y) \left[a_\uparrow(u^>_x)a_\uparrow(\partial_j v_x)a_\downarrow(u^>_y) a_\downarrow( v_y) - a_\uparrow(u^>_x)a_\downarrow( v_x)a_\downarrow(u^>_y) a_\downarrow(\partial_j v_y)\right] + \mathrm{h.c.}
\end{equation}
Furthermore, for any $0 <\delta <1$, it holds that 
\begin{align}\label{eq: est err kin}
  |\langle T^\ast_{1;\lambda}R^\ast \psi,\mathcal{E}_{\mathbb{H}_0} T^\ast_{1;\lambda}R^\ast \psi\rangle| &\leq CL^3\rho^{\frac{7}{3}} + \delta\langle T^\ast_1 R^\ast \psi, \mathbb{H}_0 T^\ast_1 R^\ast \psi\rangle\nonumber
  \\
  &\quad+ C\int_{1}^\lambda d\lambda^\prime\,  (\rho^{\frac{1}{2}}\langle  T^\ast_{1;\lambda^\prime} R^\ast \psi, \mathbb{H}_0T^\ast_{1;\lambda^\prime} R^\ast \psi\rangle + \rho^{\frac{4}{3}}\langle T^\ast_{1;\lambda^\prime} R^\ast \psi, \mathcal{N}T^\ast_{1;\lambda^\prime} R^\ast \psi\rangle).
\end{align}
\end{proposition}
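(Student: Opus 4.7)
The plan is to combine the chain rule for the unitary flow $T_{1;\lambda}$ with a momentum-space computation of the commutator $[\mathbb{H}_0, B_1 - B_1^\ast]$, and then estimate the resulting error term $\mathcal{E}_{\mathbb{H}_0}$ using the $b$-operator bounds of Lemma~\ref{lem: bound b phi}. Since $B_1 - B_1^\ast$ is anti-Hermitian, the chain rule yields $\partial_\lambda\bigl(T_{1;\lambda}\mathbb{H}_0 T_{1;\lambda}^\ast\bigr) = T_{1;\lambda}[B_1 - B_1^\ast,\mathbb{H}_0]T_{1;\lambda}^\ast$, so identity~\eqref{eq: prop H0} reduces to the operator equality $[B_1 - B_1^\ast, \mathbb{H}_0] = \mathbb{T}_1 + \mathcal{E}_{\mathbb{H}_0}$.

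To prove that equality I would work in momentum space using $[\mathbb{H}_0,\hat{a}_{k,\sigma}] = -||k|^2 - (k_F^\sigma)^2|\,\hat{a}_{k,\sigma}$. On the support $\hat u^>_\sigma(k+p)\hat v_\sigma(k)$ of the pair $\hat a_{k+p,\sigma}\hat a_{-k,\sigma}$ inside $b_{p,k,\sigma}$, one momentum sits outside and the other inside the Fermi ball, so the two absolute values open with opposite signs and telescope to $|k+p|^2 - |k|^2$. Summing over both spin pairs in $b_{p,r,\uparrow}b_{-p,r^\prime,\downarrow}$ gives the total gap $(|r+p|^2-|r|^2)+(|r^\prime-p|^2-|r^\prime|^2) = 2|p|^2 + 2p\cdot(r-r^\prime)$. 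Transforming back to configuration space and taking $2\Re$ from $B_1 - B_1^\ast$, the $2|p|^2$ part combines with $\hat\varphi(p)$ via $|p|^2\hat\varphi(p) = -\widehat{\Delta\varphi}(p)$ to produce precisely the $-2\Delta\varphi(x-y)$ factor of $\mathbb{T}_1$ in \eqref{eq: def T1 kinetic}; the linear pieces $2p\cdot r$ and $-2p\cdot r^\prime$, after integration by parts in the $v_x$ and $v_y$ kernels respectively, give the two mixed contributions $\partial_j\varphi\cdot\partial_j v$ of $\mathcal{E}_{\mathbb{H}_0}$ in \eqref{eq: error kin energy}.

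For the bound \eqref{eq: est err kin}, I would rewrite each summand of $\mathcal{E}_{\mathbb{H}_0}$ as a product of a $b$-operator on one spin with an \emph{outside--inside} pair on the other spin, e.g.\ $\int dx\, a_\uparrow(u^>_x)a_\uparrow(\partial_j v_x)\, b_\downarrow\bigl((\partial_j\varphi)_x\bigr)$ with $b_\downarrow$ as in \eqref{eq: def op b g}. The operator bound $\|b_\downarrow((\partial_j\varphi)_x)\|\leq C\rho^{1/2}$ of Lemma~\ref{lem: bound b phi}, combined with Cauchy--Schwarz in $x$ and \eqref{eq: u> wrt N>}, reduces the estimate to controlling $\langle\phi,\mathcal{N}_>\phi\rangle$ (handled via \eqref{eq: N>} of Proposition~\ref{pro: N}) together with $\langle\phi,\mathbb{H}_0\phi\rangle$; coupled with the a priori estimates of Lemma~\ref{lem: a priori est}, this produces the leading $CL^3\rho^{7/3}$ contribution. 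Residual pieces which carry an $\mathbb{H}_0$ at parameter $\lambda$ are transported back to $\lambda=1$ through the Duhamel identity $\langle T^\ast_{1;\lambda}R^\ast\psi,\mathbb{H}_0 T^\ast_{1;\lambda}R^\ast\psi\rangle = \langle T^\ast_{1}R^\ast\psi,\mathbb{H}_0 T^\ast_{1}R^\ast\psi\rangle - \int_\lambda^1 d\lambda^\prime\, \langle T^\ast_{1;\lambda^\prime}R^\ast\psi, (\mathbb{T}_1 + \mathcal{E}_{\mathbb{H}_0})T^\ast_{1;\lambda^\prime}R^\ast\psi\rangle$, and a further splitting of $\mathbb{T}_1$ by Young into a $\delta\mathbb{H}_0$ part (absorbable into the $\delta$-term of \eqref{eq: est err kin}) plus a $\rho^{4/3}\mathcal{N}$ part (generating the integrand of the $\int_1^\lambda$ term).

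I expect the main obstacle to be the careful bookkeeping when extracting $\mathbb{T}_1$ from the commutator: the replacement $|p|^2\hat\varphi(p) \leftrightarrow -\widehat{\Delta\varphi}(p)$ is clean only up to the smooth cutoff in $\hat u^>_\sigma$, and one must verify that the commutators involving the cutoff factors go only into $\mathcal{E}_{\mathbb{H}_0}$ (and ultimately into the $CL^3\rho^{7/3}$ remainder after the bounds above), rather than spoiling the coefficient of $\mathbb{T}_1$. The decomposition $\hat u^>_\sigma = 1 - \hat\alpha^<_\sigma$ of Remark~\ref{rem: comparisono Gia1 cut-off}, with $\|\alpha^<_\sigma\|_1 \leq C$, is the key technical input that keeps these contributions at the required order and permits the wider class of interactions covered here compared with \cite{Gia1}.
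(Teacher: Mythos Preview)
Your derivation of the identity \eqref{eq: prop H0} is correct and matches the paper: the commutator computation in momentum space, with the split of $(|r+p|^2-|r|^2)+(|r'-p|^2-|r'|^2)$ into $2|p|^2$ and $2p\cdot(r-r')$, is exactly how $\mathbb{T}_1$ and $\mathcal{E}_{\mathbb{H}_0}$ arise (the paper cites \cite[Proposition~5.1]{Gia1} for this). Your last paragraph, however, misidentifies the obstacle: the equality $|p|^2\hat\varphi(p)=-\widehat{\Delta\varphi}(p)$ is exact and unaffected by the cutoff $\hat u^>_\sigma$; the decomposition $\hat u^>_\sigma=1-\hat\alpha^<_\sigma$ enters only later, inside the estimates of the commutator $[\mathbb{A},B_1]$.

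The genuine gap is in your strategy for \eqref{eq: est err kin}. The paper does \emph{not} bound $\langle\xi_\lambda,\mathcal{E}_{\mathbb{H}_0}\xi_\lambda\rangle$ directly and then transport $\mathbb{H}_0$ via Duhamel; instead it applies Duhamel to $\mathbb{A}$ (essentially $\mathcal{E}_{\mathbb{H}_0}$) itself, writing $\langle\xi_\lambda,\mathbb{A}\xi_\lambda\rangle=\langle\xi_1,\mathbb{A}\xi_1\rangle-\int_1^\lambda\langle\xi_{\lambda'},[\mathbb{A},B_1]\xi_{\lambda'}\rangle$. The boundary term at $\lambda=1$ is bounded by $CL^3\rho^{7/3}+\delta\langle\xi_1,\mathbb{H}_0\xi_1\rangle$ (via \cite[Lemma~5.2]{Gia1}), and the main work is a term-by-term estimate of the double commutator $[\mathbb{A},B_1]$, whose contractions yield exactly the $\rho^{1/2}\mathbb{H}_0$ and $\rho^{4/3}\mathcal{N}$ integrands in \eqref{eq: est err kin}. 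Your route fails at two points. First, the direct Cauchy--Schwarz bound using $\|b_\downarrow((\partial_j\varphi)_x)\|\leq C\rho^{1/2}$ and $\|a_\uparrow(\partial_j v_x)\|\leq C\rho^{5/6}$ gives $CL^{3/2}\rho^{4/3}\|\mathcal{N}_>^{1/2}\xi_\lambda\|$, which after \eqref{eq: N>} and Young yields only $CL^3\rho^2$, a factor $\rho^{1/3}$ short of the required $\rho^{7/3}$. Second, your Duhamel on $\mathbb{H}_0$ places $\mathbb{T}_1+\mathcal{E}_{\mathbb{H}_0}$ in the integrand, and your proposed ``splitting of $\mathbb{T}_1$ by Young into $\delta\mathbb{H}_0+\rho^{4/3}\mathcal{N}$'' is not available: $\mathbb{T}_1$ carries $\Delta\varphi$, for which Lemma~\ref{lem: bound phi} gives only $\|\Delta\varphi\|_{L^1}\leq C$ (no $L^2$ bound for $V\in L^1$), so there is no $b$-operator norm of the type $\|b_\sigma((\Delta\varphi)_z)\|\leq C\rho^\alpha$ to invoke. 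In fact $\mathbb{T}_1$ is the leading term that later cancels against $\mathbb{T}_2^>+\mathbb{Q}_{2;>}$ via the scattering equation (Proposition~\ref{pro: scatt canc}); it is not individually small.
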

\begin{proof} 
 To prove \eqref{eq: prop H0}, we compute
  \begin{equation*} 
   \partial_\lambda T_{1;\lambda} \mathbb{H}_0T^\ast_{1;\lambda} = - T_{1;\lambda} [\mathbb{H}_0,B_1]T_{1;\lambda}^\ast + \mathrm{h.c.}
  \end{equation*}
  The equality in \eqref{eq: prop H0} follows directly from a computation of the commutator. We skip this calculation and refer to \cite[Proposition 5.1]{Gia1} for details. The estimate in \eqref{eq: est err kin} is the one proved in\cite[Lemma 5.2]{Gia1}. We discuss it again here in view of some technical differences with respect to the proof in \cite[Lemma 5.2]{Gia1}.  As in \cite[Lemma 5.2]{Gia1}, it is enough to estimate
\begin{equation}\label{eq: def A}
\mathbb{A}:=  2\sum_{j=1}^3\int\, dxdy\, \partial_j \varphi(x-y)\, a_\uparrow^\ast(u_x)a_\uparrow^\ast(\partial_j v_x)a_\downarrow^\ast(u_y) a_\downarrow^\ast( v_y).
\end{equation}
To have a shorter notation, we set $\xi_\lambda := T^\ast_{1;\lambda}R^\ast \psi$. It is convenient (see also the discussion in \cite[Section 5.1]{Gia1}) to use Duhamel's formula to write 
\begin{equation}\label{eq: prop A}
  \langle \xi_\lambda, \mathbb{A} \xi_\lambda \rangle  = \langle \xi_1, \mathbb{A}\xi_1\rangle  + \int_1^{\lambda} d\lambda^\prime \partial_\lambda\langle\xi_{\lambda^\prime}, \mathbb{A}\xi_{\lambda^\prime}\rangle = \langle \xi_1, \mathbb{A}\xi_1\rangle  - \int_1^{\lambda} d\lambda^\prime\langle\xi_{\lambda^\prime}, [\mathbb{A}, B_1]\xi_{\lambda^\prime}\rangle.
\end{equation}
The estimate of $\langle \xi_1, \mathbb{A}\xi_1\rangle$ can be done as in \cite[Lemma 5.2]{Gia1}, see \cite[Eqs. (5.60)--(5.63)] {Gia1}. Omitting then the details, we directly write that for any $0<\delta<1$, we have
\begin{equation}
  |\langle \xi_1, \mathbb{A}\xi_1 \rangle | \leq CL^3 \rho^{\frac{7}{3}} + \delta \langle \xi_1, \mathbb{H}_0 \xi_1\rangle.
\end{equation}
We now estimate the error terms coming from the commutator $[\mathbb{A}, B_1]$:
\begin{multline*}
   [\mathbb{A}, B_1] = \sum_{j=1}^3\int dxdydzdz^\prime \partial_j\varphi(x-y)\varphi(z-z^\prime) \, [a_\uparrow^\ast( u^>_x) a_\downarrow^\ast(u^>_y)a_\downarrow^\ast( v_y)a_\uparrow^\ast(\partial_j v_x) ,a_\uparrow(u^>_z)a_\uparrow( v_z)a_\downarrow(u^>_{z^\prime})a_\downarrow( v_{z^\prime})] .
\end{multline*}
We compute
\begin{align}\label{eq: comm B A}
&[a_\uparrow^\ast(u^>_x) a_\downarrow^\ast(u^>_y)a_\downarrow^\ast( v_y)a_\uparrow^\ast(\partial_j v_x) ,a_\uparrow(u^>_z)a_\uparrow( v_z)a_\downarrow(u^>_{z^\prime})a_\downarrow( v_{z^\prime})] 
  \\
  &=  -a_\uparrow^\ast( u^>_x) a_\downarrow^\ast(u^>_y)\big(-  \partial_jv_\uparrow(x;z)v_\downarrow (y;z^\prime) + \partial_jv_\uparrow(x;z)a^\ast_\downarrow( v_y) a_\downarrow ( v_{z^\prime})
  + v_\downarrow(y;z^\prime) a^\ast_\uparrow(\partial_j v_x) a_\uparrow( v_z)\big)a_\uparrow(u^>_z)a_\downarrow(u^>_{z^\prime})\nonumber
  \\
  &- a_\uparrow( v_z)a_\downarrow( v_{z^\prime})\big((u^>_\downarrow)^2(z^\prime;y)(u^>_\uparrow)^2(z;x) - (u^>_\uparrow)^2(z;x) a^\ast_\downarrow(u^>_y)a_\downarrow(u^>_{z^\prime}) -(u^>_\downarrow)^2(z^\prime;y)a^\ast_\uparrow(u^>_x)a_\uparrow(u^>_z)\big)a_\downarrow^\ast(v_y)a_\uparrow(\partial_j v_x),\nonumber
\end{align}
In the following, we estimate all the type of error terms. From the first line in the right hand side of $\eqref{eq: comm B A}$, we have three types of error terms, which we denote by $\mathrm{I}_a$, $\mathrm{I}_b$ and $\mathrm{I}_c$.  The first one is 
\begin{align}\label{eq: term Ia kin err}
  \mathrm{I}_a &=  \sum_{j=1}^3\int dxdydzdz^\prime\, \partial_j\varphi(x-y)\varphi(z-z^\prime)\partial_j v_\uparrow(x;z) v_\downarrow (y;z^\prime) \langle\xi_\lambda, a^\ast_\uparrow(u^>_x)a^\ast_\downarrow(u^>_y)  a_\uparrow(u^>_z)a_\downarrow(u^>_{z^\prime})\xi_\lambda\rangle\nonumber
\\
& = \sum_{j=1}^3\int dydz\, \left\langle \left(\int dx\, \partial_j \varphi(x-y) \partial_j v_\uparrow (x;z)a_\uparrow(u^>_x)\right) a_\downarrow(u^>_y)\xi_\lambda, \left(\int dz^\prime\, \varphi(z-z^\prime) v_\downarrow(y;z^\prime) a_\downarrow(u^>_{z^\prime})\right) a_\uparrow(u^>_{z})\xi_\lambda\right\rangle.
\end{align}
Proceeding similarly as in \cite[Proposition 4.2 (Eq. (4.16))]{GHNS24}, i.e., using that $0\leq \hat{u}_\uparrow^> \leq 1$, we can bound
\[
   \left\| \int dx\, \partial_j \varphi(x-y) \partial_\ell v_\uparrow (x;z)a_\uparrow(u^>_x)\right\|\leq \int dt\, |\partial_j\varphi(y-t)|^2 |\partial_\ell v_\uparrow(z;t)|^2,
\]
and
\begin{equation}\label{eq: conv L2 norm}
  \left\| \int dz^\prime\, \varphi(z-z^\prime) v_\downarrow(y;z^\prime) a_\downarrow(u^>_{z^\prime})\right\| \leq \int dt\, |\varphi(z-t)|^2 | v_\downarrow(y;t)|^2.
\end{equation}
Therefore, by Cauchy-Schwarz's inequality, we have 
\begin{multline*}
  |\mathrm{I}_a| 
\leq \sum_{j=1}^3\left(\int dzdy dt\, |\partial_j \varphi(y-t)|^2 |\partial_\ell v_\uparrow(t-z)|^2 \|a_\downarrow(u^>_y)\xi_\lambda\|^2\right)^{\frac{1}{2}}\times 
\\
\times \left( \int dzdy dt\, |\varphi(z-t)|^2 |v_\downarrow(y;t)|^2  \|a_\uparrow(u^>_z)\xi_\lambda\|^2 \right)^{\frac{1}{2}}.
\end{multline*}
Using now Lemma \ref{lem: bound phi} and \eqref{eq: bounds av}, we get that 
\[
  |\mathrm{I}_a| \leq\sum_{j=1}^3 \|\varphi\|_2 \|\partial_j\varphi\|_2 \|\partial_\ell v_\uparrow\|2\|v_\downarrow\|_2 \langle \xi_\lambda, \mathcal{N}_> \xi_\lambda\rangle \leq C\rho^{\frac{4}{3} - \frac{1}{6}}\langle   \xi_\lambda, \mathcal{N}_> \xi_\lambda\rangle \leq C\rho^{\frac{2}{3} - \frac{1}{6}}\langle \xi_\lambda, \mathbb{H}_0 \xi_\lambda\rangle, 
\]
where we used \eqref{eq: u> wrt N>} and \eqref{eq: N>}. Therefore, all together we proved that 
\begin{equation}
  |\mathrm{I}_a| \leq C\rho^{\frac{1}{2}}\langle \xi_\lambda, \mathbb{H}_0 \xi_\lambda\rangle.
\end{equation}
Another type of error term is: 
\[
  \mathrm{I}_b = \sum_{j=1}^3\int dxdz\,\partial_j v_\uparrow(x;z) \langle\xi_\lambda, a^\ast_\uparrow( u^>_x)b^\ast_{\downarrow}(\partial_j\varphi_x)  a_\uparrow(u^>_z)b_\downarrow(\varphi_z)\xi_\lambda\rangle.
\]
To estimate $\mathrm{I}_{b}$ it is convenient to rewrite it as 
\[
  \mathrm{I}_b = \frac{1}{L^3}\sum_{j=1}^3\sum_k (-ik_j)\hat{v}_\uparrow(k) \left\langle \left(\int dx\,  e^{-ik\cdot x} b_{\downarrow}(\partial_j\varphi_x)a_\uparrow( u^>_x)\xi_\lambda\right), \left(\int dz\, e^{-ik\cdot z}\, b_\downarrow(\varphi_z)a_\uparrow(u^>_z)\xi_\lambda\right)\right\rangle.
\]
Using then that $|(-ik_j)\hat{v}_\uparrow(k)|\leq C\rho^{1/3}$ and the  Cauchy-Schwarz inequality, we find
\begin{align*}
  |\mathrm{I}_b| &\leq C\rho^{\frac{1}{3}}\sum_{j=1}^3\sum_k \left\|\int dx\,  e^{-ik\cdot x} b_{\downarrow}(\partial_j\varphi_x)a_\uparrow( u^>_x)\xi_\lambda\right\|\left\|\int dz\, e^{-ik\cdot z}\, b_\downarrow(\varphi_z)a_\uparrow(u^>_z)\xi_\lambda\right\|
  \\
  &\leq C\sum_{j=1}^3\rho^{\frac{1}{3}}\left(\int dx \|b_{\downarrow}(\partial_j\varphi_x)a_\uparrow( u^>_x)\xi_\lambda\|^2\right)^{\frac{1}{2}}\left( \int dz\, \|b_\downarrow(\varphi_z)a_\uparrow(u^>_z)\xi_\lambda \|^2\right)^{\frac{1}{2}}
  \\
  &\leq  C\rho^{\frac{7}{6}}\langle \xi_\lambda, \mathcal{N}_> \xi_\lambda\rangle \leq C\rho^{\frac{1}{2}}\langle \xi_\lambda,\mathbb{H}_0 \xi_\lambda\rangle,
\end{align*}
where we used also \eqref{eq: u> wrt N>} and \eqref{eq: N>} together with the bounds in Lemma \ref{lem: bound b phi}.
The last type of error term coming from the first line in the right hand side of \eqref{eq: comm B A} is 
\[
  \mathrm{I}_c= \sum_{j=1}^3\int dydz^\prime\,v_\downarrow(y;z^\prime)\langle \xi_\lambda, a^\ast_\downarrow(u^>_y){b}_{j,\uparrow}^\ast(\partial_j\varphi_y) b_\uparrow(\varphi_{z^\prime}) a_\downarrow(u^>_{z^\prime})\xi_\lambda\rangle.
\]
  The estimate for $\mathrm{I}_c$ can be done similarly as the one for $\mathrm{I}_b$, we omit the details and we write $|\mathrm{I}_c|\leq  C\rho^{\frac{1}{2}}\langle \xi_\lambda, \mathbb{H}_0 \xi_\lambda\rangle$.
We now estimate all the contributions coming from the second line in the right hand side of \eqref{eq: comm B A}. To do that is sufficient to consider two different types of terms. The first one is, for fixed $j=1,2,3$,  
\begin{equation}
  \mathrm{II}_a = \sum_{j=1}^3\int dxdydzdz^\prime \partial_j\varphi(x-y) \varphi (z-z^\prime) (u^>_\uparrow)^2(z;x)  \langle \xi_\lambda, a^\ast_\downarrow(u^>_y) a_\uparrow( v_z)a_\downarrow( v_{z^\prime}) a^\ast_\downarrow( v_y)a^\ast_\uparrow(\partial_j v_x)a_\downarrow(u^>_{z^\prime}) \xi_\lambda\rangle.
\end{equation}
In order to estimate $\mathrm{II}_a$, similarly as what explained in Remark \ref{rem: comparisono Gia1 cut-off}, it is convenient to write $(\hat{u}^>_\sigma)^2 = \delta - {\nu}_\sigma$, with $\hat{\nu}_\sigma$ introduced in \eqref{eq: def nu}. To estimate $\mathrm{II}_a$ we then have to bound two terms: 
\begin{equation*}
  \mathrm{II}_{a;1} =   \sum_{j=1}^3\int dxdydz^\prime \partial_j\varphi(x-y) \varphi (x-z^\prime) \langle \xi_\lambda, a^\ast_\downarrow(u^>_y) a_\uparrow( v_x)a_\downarrow( v_{z^\prime}) a^\ast_\downarrow( v_y)a^\ast_\uparrow(\partial_j v_x)a_\downarrow(u^>_{z^\prime}) \xi_\lambda\rangle,
  \end{equation*}
  and
  \begin{equation*}
  \mathrm{II}_{a;2} = -\sum_{j=1}^3 \int dxdydzdz^\prime \partial_j\varphi(x-y) \varphi (z-z^\prime) \nu_\uparrow(z;x)  \langle \xi_\lambda, a^\ast_\downarrow(u^>_y) a_\uparrow( v_z)a_\downarrow( v_{z^\prime}) a^\ast_\downarrow( v_y)a^\ast_\uparrow(\partial_j v_x)a_\downarrow(u^>_{z^\prime}) \xi_\lambda\rangle.
  \end{equation*}
  We consider first $\mathrm{II}_{a;1}$. Using \eqref{eq: bounds av}, we have 
  \begin{align*}
  |\mathrm{II}_{a;1}| &\leq C\rho^{2+\frac{1}{3}}\sum_{j=1}^3 \int dxdydz^\prime |\partial_j\varphi(x-y)||\varphi(x-z^\prime)| \|a_\downarrow( u^>_y)\xi_\lambda\|\|a_\downarrow( u^>_{z^\prime}) \xi_\lambda\rangle\| \leq C\rho^{2+\frac{1}{3}} \|\partial_j\varphi\|_1\|\varphi\|_1 \langle \xi_\lambda, \mathcal{N}_>\xi_\lambda\rangle
  \\
  &\leq C\rho^{\frac{2}{3}}\langle \xi_\lambda, \mathbb{H}_0\xi_\lambda\rangle,
  \end{align*}
  where we also used Lemma \ref{lem: bound phi} and  \eqref{eq: u> wrt N>}, \eqref{eq: N>}. The estimate for $\mathrm{II}_{a;2}$ can be done similarly, using also the bound in \eqref{eq: bounds alpha< nu}. We therefore have 
\[
  |\mathrm{II}_{a;2}| \leq  C\rho^{1+\frac{1}{3}} \sum_{j=1}^3\int dxdydzdz^\prime |\partial_j\varphi(x-y)||\varphi(z-z^\prime)||\nu_\uparrow(z;x) |\|a_\downarrow( u^>_y)\xi_\lambda\|\|a_\downarrow( u^>_{z^\prime}) \xi_\lambda\rangle\|\leq C\rho^{\frac{2}{3}}\langle \xi_\lambda, \mathbb{H}_0\xi_\lambda\rangle.
\]
The last term to consider is 
\[
  \mathrm{II}_b = -\sum_{j=1}^3\int dxdydzdz^\prime \partial_j\varphi(x-y)\varphi(z-z^\prime)(u^>_\downarrow)^2(z^\prime;y)(u^>_\uparrow)^2(z;x) \langle \xi_\lambda, a_\uparrow( v_z)a_\downarrow( v_{z^\prime})a^\ast_\downarrow( v_y)a^\ast_\uparrow(\partial_j v_x)\xi_\lambda\rangle.
\]
It is convenient to write it partially in normal order: 
\[
a_\uparrow( v_z)a_\downarrow( v_{z^\prime})a^\ast_\downarrow( v_y)a^\ast_\uparrow(\partial_j v_x)=  - v_\downarrow(z^\prime;y) a_\uparrow^\ast(\partial_j v_x) a_\uparrow(v_z) - a^\ast_\downarrow( v_y) a_\uparrow( v_z)  a^\ast_\uparrow(\partial_j v_x)a_\downarrow( v_{z^\prime})+ v_\downarrow(z^\prime;y) \partial_j v_\uparrow(z;x).
\]
As a consequence, we write $\mathrm{II}_b$ as a sum of three terms, $\mathrm{II}_b = \mathrm{II}_{b;1} + \mathrm{II}_{b;2} + \mathrm{II}_{b;3}$. The first one is 
\[
  \mathrm{II}_{b;1} = \sum_{j=1}^3\int dxdydzdz^\prime \partial_j\varphi(x-y)\varphi(z-z^\prime)(u^>_\downarrow)^2(z^\prime;y)(u^>_\uparrow)^2(z;x)v_\downarrow(z^\prime;y)\langle \xi_\lambda, a_\uparrow^\ast (\partial_j v_x)a_\uparrow(v_z)\xi_\lambda\rangle.
\]
As before, we write $(u^>_\sigma)^2 = \delta - \nu_\sigma$. Therefore, we write $\mathrm{II}_{b;1} = \mathrm{II}_{b;1;0} + \mathrm{II}_{b;1;1} + \mathrm{II}_{b;1;2}$, where $\mathrm{II}_{b;1;\ast}$ denotes the part of $\mathrm{II}_{b;1}$ having exactly $\ast\in \{0,1,2\}$ $\delta$. We first consider  
\[
\mathrm{II}_{b;1;2} = \sum_{j=1}^3\int dxdy\,  \partial_j\varphi(x-y)\varphi(x-y)v_\downarrow(y;y) \langle \xi_\lambda,a^\ast_\uparrow(\partial_j v_x) a_\uparrow ( v_x)\xi_\lambda\rangle.
\]
Writing then $\varphi \partial_j\varphi = \frac{1}{2}\partial_j \varphi^2$ and integrating by parts, we find 
\begin{align*}
  |\mathrm{II}_{b;1;2}| &\leq C\rho\sum_{j=1}^3\int dxdy\, |\varphi(x-y)|^2 \left(\|a_\uparrow(\partial_j^2 v_x) \xi_\lambda\|\|a_\uparrow ( v_x)\xi_\lambda\| + \|a_\uparrow(\partial_j v_x)\psi\|^2\right)
  \\
  &\leq C\rho^{1 +\frac{2}{3}} \|\varphi\|_2 \langle \xi_\lambda, \mathcal{N}\xi_\lambda\rangle \leq C\rho^{\frac{4}{3}}\langle \xi_\lambda, \mathcal{N}\xi_\lambda\rangle ,
\end{align*}
where we also used Lemma \ref{lem: bound phi}. The term $\mathrm{II}_{b;1;1}$ can be estimated similarly as $\mathrm{II}_{b;1;2}$ using also the bounds in \eqref{eq: bounds av} and \eqref{eq: bounds alpha< nu}. Therefore, we find 
\[
  |\mathrm{II}_{b;1;1}| \leq C\rho^{\frac{4}{3}}\langle \xi_\lambda, \mathcal{N}\xi_\lambda\rangle.
\]
For the term $\mathrm{II}_{b;1;0}$ we have:
\begin{align*}
  |\mathrm{II}_{b;1;0}| &\leq\sum_{j=1}^3 \int dxdydzdz^\prime |\partial_j\varphi(x-y)||\varphi(z-z^\prime)||\nu_\downarrow(z^\prime;y)||\nu_\uparrow(z;x)||v_\downarrow(z^\prime;y)||a_\uparrow (\partial_j v_x)\xi_\lambda\|\|a_\uparrow(v_z)\xi_\lambda\|
  \\
  &\leq C\rho^{2+\frac{1}{3}}\sum_{j=1}^3\|\nu_\sigma\|_1 \|\partial_j\varphi\|_1 \|\varphi\|_1\langle \xi_\lambda, \mathcal{N}\xi_\lambda\rangle \leq C\rho^{\frac{4}{3}}\langle \xi_\lambda, \mathcal{N}\xi_\lambda\rangle.
\end{align*}
We therefore conclude that 
\[
  |\mathrm{II}_{b;1}|\leq C\rho^{\frac{4}{3}}\langle \xi_\lambda, \mathcal{N}\xi_\lambda\rangle.
\]
The term $\mathrm{II}_{b;2}$: 
\[
  \mathrm{II}_{b;2} = \sum_{j=1}^3\int dxdydzdz^\prime \partial_j\varphi(x-y)\varphi(z-z^\prime)(u^>_\downarrow)^2(z^\prime;y)(u^>_\uparrow)^2(z;x)\langle \xi_\lambda, a^\ast_\downarrow(v_y) a_\uparrow(v_z) a_\uparrow^\ast (\partial_j v_x)a_\downarrow(v_{z^\prime})\xi_\lambda\rangle,
\]
can be estimated similarly as $\mathrm{II}_{b;1}$, using that $\|a_\uparrow(v_z) a_\uparrow^\ast (\partial_j v_x)\|\leq C\rho^{1+1/3}$ in place of $\|v_\sigma\|_\infty\leq C\rho$. We omit the details. We have
\[
  | \mathrm{II}_{b;2}|\leq C\rho^{\frac{4}{3}}\langle \xi_\lambda, \mathcal{N}\xi_\lambda\rangle.
\]
It remains to consider the constant term:
\begin{equation}\label{eq: term IIIf H0}
  \mathrm{II}_{b;3} = -\sum_{j=1}^3\int dxdydzdz^\prime \partial_j\varphi(x-y)\varphi(z-z^\prime)(u^>_\downarrow)^2(z^\prime;y)(u^>_\uparrow)^2(z;x)\partial_j v_\uparrow(x;z)v_\downarrow (y;z^\prime).
\end{equation} 
Writing as before $(u^>_\sigma)^2 = \delta - \nu_\sigma$, we decompose  $\mathrm{II}_{b;3} = \mathrm{II}_{b;3;0} + \mathrm{II}_{b;3;1} + \mathrm{II}_{b;3;2}$ with $\mathrm{II}_{b;3;\ast}$ given by the term having $\ast \in \{0,1,2\}$ $\delta$. We note that, similarly as for the term in \cite[Eq. (5.54)]{Gia1}, we have $\mathrm{II}_{b;3;2} =0$, which can be seen by rewriting it in momentum space (see \cite{Gia1} for more details). The term $\mathrm{II}_{b;2;1}$ is
\begin{equation}
  \mathrm{II}_{b;2;1} = \sum_{j=1}^3\int dxdydzdz^\prime \partial_j\varphi(x-y)\varphi(z-z^\prime)\left(\delta(z^\prime;y)\nu_\uparrow(z;x) + \delta(z;x)\nu_\downarrow(z^\prime;y)\right)\partial_j v_\uparrow(x;z)v_\downarrow (y;z^\prime),
\end{equation}
and can be estimated using Lemma \ref{lem: bound phi} together with the bound in \eqref{eq: bounds alpha< nu} for $\|\nu_\sigma\|_\infty$:
\begin{eqnarray}
  |\mathrm{II}_{b;2;1}| &\leq& C\rho^{\frac{7}{3} +1}  \sum_{j=1}^3\int dxdydz |\partial_j\varphi(x-y)| |\varphi(z-y)|\leq CL^3\rho^{\frac{7}{3} + 1}\sum_{j=1}^3\|\partial_j\varphi\|_1 \|\varphi\|_1  \leq CL^3\rho^{\frac{7}{3}}\nonumber.
\end{eqnarray}
To conclude, we estimate
\begin{equation}
  \mathrm{II}_{b;2;0} =  -\sum_{j=1}^3\int dxdydzdz^\prime  \partial_j\varphi(x-y)\varphi(z-z^\prime)\nu_\downarrow(z^\prime;y)\nu_\uparrow(z;x)\partial_j v_\uparrow(x;z)v_\downarrow (y;z^\prime).
\end{equation}
Using the bounds in \eqref{eq: bounds alpha< nu} for $\nu_\sigma$ and proceeding similar as above,  we find that
 \[
 |\mathrm{II}_{b;2;0}| \leq CL^3\rho^{\frac{7}{3}} \sum_{j=1}^3\|\partial_j\varphi\|_1 \|\varphi\|_1 \|\nu_\downarrow\|_1\|\nu_\uparrow\|_\infty\leq CL^3\rho^{\frac{7}{3}}. 
 \] 
 Collecting all the bounds, we conclude the proof. 
\end{proof}
We now propagate the estimate for the operator $\mathbb{Q}_{4}$ introduced in \eqref{eq: def Qi}.
\begin{proposition}[Propagation estimate for $\mathbb{Q}_{4}$]\label{pro: Q4}
Let $\lambda\in [0,1]$. Let $\psi$ be an approximate ground state as in Definition \ref{def: approx gs}. Under the  assumptions of Theorem \ref{thm: optimal lw bd}, it holds that  
\begin{equation}
\partial_\lambda T_{1;\lambda} \mathbb{Q}_{4}T^\ast_{1;\lambda} =  T_{1;\lambda}\mathbb{T}_2T_{1;\lambda}^\ast + \mathcal{E}_{\mathbb{Q}_{4}},
\end{equation}  
where 
\begin{equation}\label{eq: def T2}
  \mathbb{T}_2 := \int\, dxdy\, V(x-y)\varphi(x-y)a_\uparrow(u_x)a_\uparrow( v_x)a_\downarrow(u_y)a_\downarrow( v_y) + \mathrm{h.c.},
\end{equation}
and
\begin{equation} \label{eq: est err Q4 B1}
  |\langle T^\ast_{1;\lambda}\psi, \mathcal{E}_{\mathbb{Q}_{4}}T^\ast_{1;\lambda}\psi| \leq   CL^{\frac{3}{2}}\rho^{\frac{4}{3}}\|\mathbb{Q}^{\frac{1}{2}}_{4}T^\ast_{1;\lambda}\psi\|.
\end{equation}
\end{proposition}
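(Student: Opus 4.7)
Differentiating under the unitary flow,
\begin{equation*}
\partial_\lambda T_{1;\lambda}\mathbb{Q}_4 T^\ast_{1;\lambda} = T_{1;\lambda}[B_1 - B_1^\ast, \mathbb{Q}_4] T^\ast_{1;\lambda},
\end{equation*}
so the task reduces to computing $[\mathbb{Q}_4, B_1]$ and $[\mathbb{Q}_4, B_1^\ast]$. I would expand each of these using the canonical anti-commutation relations together with Leibniz (valid since $B_1$ is quartic in $a$). The only elementary contraction that survives is $\{a^\ast_\sigma(u_x), a_\sigma(u^>_z)\} = u^>_\sigma(z,x)$: pairings of the form $\{a^\ast_\sigma(u_x), a_\sigma(v_z)\}$ vanish by the momentum support disjointness $\hat u_\sigma \hat v_\sigma = 0$, and cross-spin pairings vanish as well.

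The fully-contracted four-operator part of $[\mathbb{Q}_4, B_1]$ thus equals, up to an overall sign,
\begin{equation*}
\int dxdydzdz'\,V(x-y)\varphi(z-z')\, u^>_\uparrow(z,x)\, u^>_\downarrow(z',y)\, a_\uparrow(v_z) a_\downarrow(v_{z'}) a_\downarrow(u_y) a_\uparrow(u_x).
\end{equation*}
Splitting $u^>_\sigma = \delta - \alpha^<_\sigma$ as in Remark \ref{rem: comparisono Gia1 cut-off} and keeping only the $\delta\otimes\delta$ piece collapses $z\to x$ and $z'\to y$; a short reordering of the four annihilators then reconstructs the non-h.c.\ part of $\mathbb{T}_2$. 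The conjugate commutator $[\mathbb{Q}_4, B_1^\ast]$ yields the hermitian conjugate, and everything else is collected into $\mathcal{E}_{\mathbb{Q}_4}$.

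The error $\mathcal{E}_{\mathbb{Q}_4}$ then consists of the $\alpha^<$ corrections to the four-operator piece and the six-operator single-contraction pieces (in which only one of the two $a^\ast$'s of $\mathbb{Q}_4$ is contracted). For each of them I would split $V = V^{1/2}\cdot V^{1/2}$ and apply Cauchy-Schwarz, pairing one $V^{1/2}$ together with the pair $a_\downarrow(u_y) a_\uparrow(u_x)$ inherited from $\mathbb{Q}_4$ so as to reconstruct the norm $\|\mathbb{Q}_4^{1/2} T^\ast_{1;\lambda}\psi\|$; the remaining factor is estimated via the $L^1$, $L^2$, $L^\infty$ bounds on $\varphi$ of Lemma \ref{lem: bound phi}, the uniform bound $\|\alpha^<_\sigma\|_1 \le C$ recalled in Remark \ref{rem: comparisono Gia1 cut-off}, and the operator-norm bounds $\|a_\sigma(v_z)\|\le C\rho^{1/2}$ and $\|b_\sigma(\varphi_z)\|\le C\rho^{1/3}$ from Lemma \ref{lem: bound b phi}. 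Counting powers of $\rho$ together with the $L^{3/2}$-factor surviving from the spatial integrations reproduces the claimed prefactor $L^{3/2}\rho^{4/3}$. The main obstacle will be the bookkeeping of signs and normal orderings in the commutator expansion, together with finding, for every error term, a Cauchy-Schwarz split that attaches the two $a$-factors from $\mathbb{Q}_4$ on the right so as to produce $\|\mathbb{Q}_4^{1/2}\cdot\|$ rather than $\mathcal{N}$ or $\mathbb{H}_0$. A secondary subtlety is that $\alpha^<_\sigma$ is not small in $L^\infty$, so the reduction $u^>\to \delta$ must always be paired with an $L^1$-convolution estimate rather than a pointwise one.
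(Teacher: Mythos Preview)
Your plan matches the paper's proof almost exactly: compute $[\mathbb{Q}_4,B_1]$, split $u^>_\sigma=\delta-\alpha^<_\sigma$ in the fully contracted piece to extract $\mathbb{T}_2$, and bound the single-contraction six-operator terms and the $\alpha^<$ remainders via Cauchy--Schwarz against $\|\mathbb{Q}_4^{1/2}\xi_\lambda\|$.

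Two small points are worth flagging. First, for the six-operator error terms you will \emph{not} be able to avoid a factor of $\|\mathcal{N}^{1/2}\xi_\lambda\|$: after Cauchy--Schwarz one side carries $a_\downarrow(u_y)a_\uparrow(u_x)\xi_\lambda$ and produces $\|\mathbb{Q}_4^{1/2}\xi_\lambda\|$, but the other side carries a single $a_\sigma(u_x)\xi_\lambda$ and yields $\|\mathcal{N}^{1/2}\xi_\lambda\|$. The paper then invokes the non-optimal a priori bound $\langle\xi_\lambda,\mathcal{N}\xi_\lambda\rangle\le CL^3\rho^{7/6}$ from Proposition~\ref{pro: N} to convert $\rho^{5/6}\|\mathcal{N}^{1/2}\xi_\lambda\|$ into $L^{3/2}\rho^{17/12}\le L^{3/2}\rho^{4/3}$; this is where the $L^{3/2}$ for those terms actually comes from, not from a pure spatial integration. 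Second, your remark that $\alpha^<_\sigma$ is not small in $L^\infty$ is off: since $\hat\alpha^<_\sigma$ is bounded and supported in $\{|k|\le 3k_F^\sigma\}$ one has $\|\alpha^<_\sigma\|_\infty\le C\rho$, and the paper uses precisely this (together with $\|\varphi\|_1\le C\rho^{-2/3}$) to bound the four-operator $\alpha^<$ corrections by $CL^{3/2}\rho^{4/3}\|\mathbb{Q}_4^{1/2}\xi_\lambda\|$.
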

\begin{proof} 
Similarly as in Proposition \ref{pro: H0}, we compute
  \begin{equation} \label{eq: der lambda Q1}
    \partial_\lambda T_{1;\lambda} \mathbb{Q}_{4} T^\ast_{1;\lambda} = -T_{1;\lambda} [\mathbb{Q}_{4}, B_1]T^\ast_{1;\lambda} + \mathrm{h.c.}
  \end{equation}
We have
  \begin{multline*} 
    \hspace{-0.3cm}[\mathbb{Q}_{4},B_1] = \frac{1}{2}\sum_{\sigma\neq\sigma^\prime} \int dxdydzdz^\prime\, V(x-y)\varphi(z-z^\prime)\times
    \\
    \times [a^\ast_\sigma(u_x)a^\ast_{\sigma^\prime}(u_y)a_{\sigma^\prime}a(u_y)a_{\sigma}(u_x), a_\uparrow(u^>_z)a_\uparrow( v_z)a_\downarrow(u^>_{z^\prime})a_\downarrow( v_{z^\prime})],
    \end{multline*}
with
    \begin{align}\label{eq: comm Q4 B1}
     & [a^\ast_\sigma(u_x)a^\ast_{\sigma^\prime}(u_y)a_{\sigma^\prime}(u_y)a_{\sigma}(u_x), a_\uparrow(u^>_z)a_\uparrow( v_z)a_\downarrow(u^>_{z^\prime})a_\downarrow( v_{z^\prime})]\nonumber
      \\
     & = -\big(\delta_{\sigma^\prime,\uparrow} u^>_{\sigma^\prime}(z;y)a^\ast_\sigma(u_x)a_\downarrow(u^>_{z^\prime}) - \delta_{\sigma^\prime,\downarrow}u^>_{\sigma^\prime}(z^\prime;y)a^\ast_\sigma(u_x)a_\uparrow(u^>_z)-\delta_{\sigma,\uparrow}u^>_\sigma(z;x) a^\ast_{\sigma^\prime}(u_y)a_\downarrow(u^>_{z^\prime}) \nonumber
     \\
     &\quad + \delta_{\sigma,\downarrow}u^>_{\sigma}(z^\prime;x) a^\ast_{\sigma^\prime}(u_y)a_\uparrow(u^>_z)+ \delta_{\sigma,\uparrow}\delta_{\sigma^\prime,\downarrow} u^>_\sigma(z;x)u^>_{\sigma^\prime}(z^\prime;y) - \delta_{\sigma,\downarrow}\delta_{\sigma^\prime,\uparrow} u^>_\sigma(z^\prime;x)u^>_{\sigma^\prime}(z;y)\big) \times \nonumber
     \\
     &\quad \quad \times a_\uparrow( v_z)a_\downarrow( v_{z^\prime})a_{\sigma^\prime}(u_y)a_\sigma(u_x).
    \end{align}
 As for the previous proposition, the proof is similar to the one in \cite[Proposition 5.4]{Gia1}, we redo it since we are using a different localization for $\hat{u}_\sigma$. In the following, we write $\xi_\lambda := T^\ast_{1;\lambda}R^\ast \psi$ for simplicity. The first four terms in \eqref{eq: comm Q4 B1} can be estimated in the same way. We consider for instance: 
\[
  \mathrm{I}  =  \int dxdydz\, V(x-y)u^>_\uparrow(z;y) \langle\xi_\lambda, a_\downarrow^\ast(u_x)b_\downarrow(\varphi_z)a_\uparrow( v_z) a_\uparrow(u_y)a_\downarrow(u_x)\xi_\lambda\rangle .
\]
We write $u^>_\uparrow(z;y) = \delta(z-y) - \alpha^<_\uparrow(z;y)$, with $\hat{\alpha}^<_\sigma$ as in \eqref{eq: def alpha<} and we split $\mathrm{I}$ in two parts: 
\begin{eqnarray*}
  \mathrm{I} &=& \int dxdy\, V(x-y) \langle\xi_\lambda, a_\downarrow^\ast(u_x)b_\downarrow(\varphi_y)a_\uparrow( v_y) a_\uparrow(u_y)a_\downarrow(u_x)\xi_\lambda\rangle
  \\
  && - \int dxdydz\, V(x-y)\alpha^<_\uparrow(z;y) \langle\xi_\lambda, a_\downarrow^\ast(u_x)b_\downarrow(\varphi_z)a_\uparrow( v_z) a_\uparrow(u_y)a_\downarrow(u_x)\xi_\lambda,\rangle = \mathrm{I}_1 + \mathrm{I}_2.
\end{eqnarray*}
We first consider $\mathrm{I}_1$. Using Lemma \ref{lem: bound b phi} to bound $\|b_\downarrow(\varphi_y)\|$ and the first of the two bounds in \eqref{eq: bounds av}, with the aid of Cauchy-Schwarz inequality, we get
\[
  |\mathrm{I}_1| \leq C\rho^{\frac{1}{2} + \frac{1}{3}}\int dxdy\, V(x-y)\|a_\downarrow (u_x)\xi_\lambda\|\|a_\uparrow(u_y)a_\downarrow(u_x)\xi_\lambda\| \leq C\|V\|_1^{\frac{1}{2}}\rho^{\frac{5}{6}} \|\mathcal{N}^{\frac{1}{2}}\xi_\lambda\|\|\mathbb{Q}_4^{\frac{1}{2}}\xi_\lambda\|\leq C\rho^{\frac{5}{6}} \|\mathcal{N}^{\frac{1}{2}}\xi_\lambda\|\|\mathbb{Q}_4^{\frac{1}{2}}\xi_\lambda\|.
\]
The estimate for $\mathrm{I}_2$ can be done similarly, using that also the bound for $\|\alpha^<_\uparrow\|_1$ in \eqref{eq: bounds alpha< nu}. Therefore, all together, we find that 
\[
  |\mathrm{I}| \leq C\rho^{\frac{5}{6}} \|\mathcal{N}^{\frac{1}{2}}\xi_\lambda\|\|\mathbb{Q}_4^{\frac{1}{2}}\xi_\lambda\|.
\]
Combining the last two contributions in \eqref{eq: comm Q4 B1}, we get
\begin{equation} 
  \mathrm{II} = -\int dxdydzdz^\prime\, V(x-y)\varphi(z-z^\prime)u^>_\uparrow(z;x)u^>_\downarrow(z^\prime;y) a_\uparrow(u_x) a_\uparrow( v_z)a_\downarrow(u_y) a_\downarrow( v_{z^\prime}).
\end{equation}
As above, we write $u^>_\sigma = \delta - \alpha^<_\sigma$ and we write  $\mathrm{II} = \mathrm{II}_0 + \mathrm{II}_{1} + \mathrm{II}_{2}$, where the index $\ast$ in $\mathrm{II}_\ast$ denotes the number of $\delta$. The main term is $\mathrm{II}_2$, which corresponds to $\mathbb{T}_2$ in \eqref{eq: def T2}. More precisely, $\mathrm{II}_2 = - \mathbb{T}_2$. The term $\mathrm{II}_1$ can be estimated by Cauchy-Schwarz and using again the first bound in \eqref{eq: bounds av} and the one for $\|\alpha_\sigma^<\|_\infty$ in \eqref{eq: bounds alpha< nu}. We indeed have 
\begin{multline*}
  |\mathrm{II}_1| \leq C\rho\int dxdydz^\prime |V(x-y)||\varphi(x-z^\prime)| |\alpha^<(z^\prime;y)|\| a_\uparrow(u_x) a_\downarrow(u_y)\xi_\lambda\|
  \\
  \leq CL^{\frac{3}{2}}\rho^2\|V\|_1\|\varphi\|_1 \|\mathbb{Q}_4^{\frac{1}{2}}\xi_\lambda\|
  \leq CL^{\frac{3}{2}}\rho^{\frac{4}{3}}\|\mathbb{Q}_4^{\frac{1}{2}}\xi_\lambda\|,
\end{multline*}
where we also used Lemma \ref{lem: bound phi}. The estimate for the term $\mathrm{II}_0$ can be done similarly using also that $\| \alpha_\sigma^<\|_1 \leq C$. Therefore we find that $|\mathrm{II}_0| \leq CL^{\frac{3}{2}}\rho^{\frac{4}{3}}\|\mathbb{Q}_4^{\frac{1}{2}}\xi_\lambda\|$.
Combining all the estimates and using the non optimal bound for the number operator in \eqref{eq: est N}, i.e., $\langle T_{1;\lambda}^\ast R^\ast\psi, \mathcal{N} T_{1;\lambda}^\ast R^\ast\psi\rangle \leq CL^3\rho^{7/6}$, we proved the estimate in \eqref{eq: est err Q4 B1}.
\end{proof}

Proceeding as in the proof of Proposition \ref{pro: Q4}, we derive propagation bounds for the operator $\mathbb{Q}_4^{\sigma,\sigma^\prime}$, defined in Proposition \ref{pro: fermionic transf}. These bounds will be used in the proof of the upper bound in Section \ref{sec: up bd}. 
\begin{proposition}\label{pro: Q4sigmasigma'}
Let $\lambda\in [0,1]$. Let $\psi$ be an approximate ground state as in Definition \ref{def: approx gs}. Under the assumptions of Theorem \ref{thm: optimal lw bd}, it holds that  
\begin{equation}
\partial_\lambda T_{1;\lambda} \mathbb{Q}_4^{\sigma,\sigma^\prime}T^\ast_{1;\lambda} =  T_{1;\lambda} (\mathbb{T}_2 + \mathcal{E}_{ \mathbb{Q}_4^{\sigma,\sigma^\prime}}) T^\ast_{1;\lambda},
\end{equation}  
with $\mathbb{T}_2$ as in \eqref{eq: def T2} and 
\begin{equation} 
  |\mathcal{E}_{ \mathbb{Q}_4^{\sigma,\sigma^\prime}}| \leq   CL^{\frac{3}{2}}\rho^{\frac{4}{3}}\| (\mathbb{Q}_4^{\sigma,\sigma^\prime})^{\frac{1}{2}}\xi_\lambda\|.
\end{equation}
\end{proposition}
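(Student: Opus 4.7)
I would mimic the proof of Proposition \ref{pro: Q4} essentially verbatim, splitting
\[
\mathbb{Q}_4^{\sigma,\sigma^\prime} = \mathbb{Q}_4 + \widetilde{\mathbb{Q}}_4^{=}, \qquad \widetilde{\mathbb{Q}}_4^{=} := \frac{1}{2}\sum_\sigma \int dx\,dy\, V(x-y)\, a^\ast_\sigma(u_x) a^\ast_\sigma(u_y) a_\sigma(u_y) a_\sigma(u_x),
\]
so that $\widetilde{\mathbb{Q}}_4^{=}$ collects exactly the equal-spin summands absent from $\mathbb{Q}_4$. Differentiating in $\lambda$ gives
\[
\partial_\lambda T_{1;\lambda}\mathbb{Q}_4^{\sigma,\sigma^\prime} T^\ast_{1;\lambda} = -T_{1;\lambda}\bigl([\mathbb{Q}_4, B_1] + [\widetilde{\mathbb{Q}}_4^{=}, B_1]\bigr)T^\ast_{1;\lambda} + \mathrm{h.c.}
\]
The first commutator was already analysed in the proof of Proposition \ref{pro: Q4} and yields $T_{1;\lambda}\mathbb{T}_2 T^\ast_{1;\lambda}$ plus an error controlled by $CL^{3/2}\rho^{4/3}\|\mathbb{Q}_4^{1/2}\xi_\lambda\|$, which, by the positivity of $\widetilde{\mathbb{Q}}_4^{=}$, is dominated by $CL^{3/2}\rho^{4/3}\|(\mathbb{Q}_4^{\sigma,\sigma^\prime})^{1/2}\xi_\lambda\|$.

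For the equal-spin commutator, I would use the factorisation $B_1 = \int dz\,dz'\, \varphi(z-z')\, A_\uparrow(z) A_\downarrow(z')$ with $A_\sigma(z) := a_\sigma(u^>_z) a_\sigma(v_z)$. Since $\widetilde{\mathbb{Q}}_4^{=}$ is built from four same-spin operators, parity lets the opposite-spin factor $A_{\sigma^\prime}(z')$ commute through, so the computation reduces, spin sector by spin sector, to $[a^\ast_\sigma(u_x) a^\ast_\sigma(u_y) a_\sigma(u_y) a_\sigma(u_x),\, A_\sigma(z)]\, A_{\sigma^\prime}(z')$. The key observation is that $\{a_\sigma(v_z), a^\ast_\sigma(u_\bullet)\} = 0$ by the disjoint Fourier supports of $\hat v_\sigma$ and $\hat u_\sigma$, so only the single anticommutators $\{a_\sigma(u^>_z), a^\ast_\sigma(u_\bullet)\}$ survive; the resulting kernel I would decompose as $u^>_\sigma = \delta - \alpha^<_\sigma$ exactly as in Remark \ref{rem: comparisono Gia1 cut-off}. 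Each resulting term has the schematic form ``three same-spin annihilations plus one same-spin creation'' paired with the intact $A_{\sigma^\prime}(z')$, which is structurally different from $\mathbb{T}_2$ (where the two spin sectors each carry a single $(u,v)$-pair), so no new main term emerges.

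Finally, each error term is estimated by Cauchy--Schwarz: one factor of the form $a_\sigma(u_\bullet) a_{\sigma^\prime}(u_\bullet)$ contributes $\|(\mathbb{Q}_4^{\sigma,\sigma^\prime})^{1/2}\xi_\lambda\|$, while the remaining factor is absorbed into a $b$-operator $b_{\sigma^\prime}(\varphi_{z'})$ bounded by $C\rho^{1/3}$ via Lemma \ref{lem: bound b phi}, together with an $a_\sigma(v_\bullet)$ factor bounded by $C\rho^{1/2}$ and the surviving $a^\ast_\sigma(u_\bullet) a_\sigma(u_\bullet)$ piece absorbed into the $\mathbb{Q}_4^{\sigma,\sigma^\prime}$-norm on the other side. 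Combining with $\|V\|_1\leq C$, $\|\varphi\|_1\leq C\rho^{-2/3}$, $\|\alpha^<_\sigma\|_1\leq C$ (Lemma \ref{lem: bound phi} and Remark \ref{rem: comparisono Gia1 cut-off}), and an $L^{3/2}$ volume factor from the unrestricted spatial integration, one obtains the claimed bound $CL^{3/2}\rho^{4/3}\|(\mathbb{Q}_4^{\sigma,\sigma^\prime})^{1/2}\xi_\lambda\|$. The principal obstacle is purely combinatorial bookkeeping---tracking the various single-contraction terms and splitting each consistently through $u^> = \delta - \alpha^<$ so that the $L^1$ bounds apply---but no genuinely new a priori estimate beyond those in Lemmas \ref{lem: bound phi}, \ref{lem: bound b phi} and Proposition \ref{pro: N} is needed.
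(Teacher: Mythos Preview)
Your proposal is correct and essentially matches the paper's approach: the paper gives no separate proof but simply states that one proceeds ``similarly as in the proof of Proposition~\ref{pro: Q4}'', and your splitting $\mathbb{Q}_4^{\sigma,\sigma'}=\mathbb{Q}_4+\widetilde{\mathbb{Q}}_4^{=}$ together with the observation that the equal-spin commutator produces no $\mathbb{T}_2$-contribution (only terms of type $\mathrm{I}$ in the language of that proof) is precisely a clean way to execute that strategy. The structural point---that in $[\widetilde{\mathbb{Q}}_4^{=},B_1]$ only single contractions of $a_\sigma(u^>_z)$ against $a^\ast_\sigma(u_\bullet)$ survive, yielding terms with three same-spin $u$-operators and hence no $\mathbb{T}_2$-like quartic---is correct, and the error bounds follow from the same Cauchy--Schwarz arguments as for term $\mathrm{I}$ in Proposition~\ref{pro: Q4}, now with $\|a_\sigma(u_y)a_\sigma(u_x)\xi_\lambda\|$ absorbed into $\|(\mathbb{Q}_4^{\sigma,\sigma'})^{1/2}\xi_\lambda\|$.
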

Since the proof of Proposition \ref{pro: Q4sigmasigma'} can be done similarly as the one of Proposition \ref{pro: Q4sigmasigma'}, we omit the details.
We now prove some propagation estimates for the operator $\mathbb{Q}_2$ defined in \eqref{eq: def Qi}. We also consider a regularized version of it, which we will need to reconstruct the scattering equation (see Proposition \ref{pro: scatt canc}). We then define
\begin{equation}
	\mathbb{Q}_{2;>} = \frac{1}{2}\sum_{\sigma\neq\sigma^\prime} \int dxdy\, V(x-y) a^\ast_\sigma(u^>_x)a^\ast_\sigma(u_y^>)a^\ast_{\sigma^\prime}( v_y)a^\ast_{\sigma^\prime}( v_x) + \mathrm{h.c.}
\end{equation}
\begin{proposition}[Propagation estimate for $\mathbb{Q}_2, \mathbb{Q}_{2;>}$]\label{pro: Q2} Let $\lambda\in [0,1]$. Let $\psi$ be an approximate ground state as in Definition \eqref{def: approx gs}. Under the assumptions of Theorem \ref{thm: optimal lw bd}, it holds that  
 \begin{equation} 
      \partial_\lambda T_{1;\lambda} \mathbb{Q}_{2;>}T^\ast_{1;\lambda} = 2\rho_\uparrow \rho_\downarrow\int_{\Lambda_L\times \Lambda_L} dxdy V(x-y)\varphi(x-y) + T_{1;\lambda}\mathcal{E}_{\mathbb{Q}_{2;>}}T^\ast_{1;\lambda}, 
    \end{equation}
    \begin{equation} 
      \partial_\lambda\langle T^\ast_{1;\lambda}R^\ast \psi, \mathbb{Q}_2 T^\ast_{1;\lambda}R^\ast \psi\rangle = 2\rho_\uparrow \rho_\downarrow\int_{\Lambda_L\times \Lambda_L} dxdy V(x-y)\varphi(x-y) + T_{1;\lambda}\mathcal{E}_{\mathbb{Q}_2}T^\ast_{1;\lambda}, 
    \end{equation}
    with,  
    \begin{multline} 
  |\langle T^\ast_{1;\lambda}R^\ast\psi, \mathcal{E}_{\mathbb{Q}_2}T_{1;\lambda}R^\ast\psi\rangle|,\, |\langle T^\ast_{1;\lambda}R^\ast\psi,  \mathcal{E}_{\mathbb{Q}_{2;>}}T_{1;\lambda}R^\ast\psi\rangle|  \leq CL^3\rho^{\frac{7}{3}} + C\rho^{\frac{1}{2}}\|\mathbb{Q}_4^{\frac{1}{2}}T^\ast_{1;\lambda}R^\ast\psi\|\|\mathbb{H}_0^{\frac{1}{2}}T^\ast_{1;\lambda}R^\ast\psi\| 
  \\
  +C\rho\langle T^\ast_{1;\lambda}R^\ast\psi, \mathcal{N}T^\ast_{1;\lambda}R^\ast\psi\rangle +  C\rho^{\frac{2}{3}}\langle T^\ast_{1;\lambda}R^\ast\psi,\mathbb{H}_0 T^\ast_{1;\lambda}R^\ast\psi\rangle .
    \end{multline}
\end{proposition}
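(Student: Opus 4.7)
The plan is to apply Duhamel's identity, reduce to a single fermionic commutator, compute it by direct anti-commutator manipulations, and extract the leading scalar while bounding the residual operator-valued pieces using the tools of Section~\ref{sec: useful bounds}.

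Since $B_1-B_1^\ast$ is anti-self-adjoint, one has
\[
\partial_\lambda T_{1;\lambda}\mathbb{Q}_{2;>}T^\ast_{1;\lambda}=-T_{1;\lambda}[\mathbb{Q}_{2;>},B_1-B_1^\ast]T^\ast_{1;\lambda},
\]
and analogously for $\mathbb{Q}_2$. Writing $\mathbb{Q}_{2;>}=\mathbb{Q}_{2;>}^++(\mathbb{Q}_{2;>}^+)^\ast$ with $\mathbb{Q}_{2;>}^+$ the all-creation quartic, the cross commutators $[\mathbb{Q}_{2;>}^-,B_1]$ and $[\mathbb{Q}_{2;>}^+,B_1^\ast]$ vanish (all-annihilation versus all-annihilation, and its mirror, fully anti-commute with the even sign $(-1)^{16}=1$). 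So $[\mathbb{Q}_{2;>},B_1-B_1^\ast]=[\mathbb{Q}_{2;>}^+,B_1]+\mathrm{h.c.}$ and only a single commutator requires analysis.

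To compute $[\mathbb{Q}_{2;>}^+,B_1]$, I would move the four annihilators of $B_1$ through the four creators of $\mathbb{Q}_{2;>}^+$, collecting anti-commutators. Spin conservation forces paired operators to carry equal spin, and the Fourier-support identities $uv=u^>v=0$ eliminate all cross-kernel pairings, leaving only the straight contractions with values $u^>(z;x)$ and $v(z;x)$. This produces a finite family of contracted terms indexed by the number of contractions. The leading scalar $2\rho_\uparrow\rho_\downarrow\int V(x-y)\varphi(x-y)\,dxdy$ is identified by combining the genuine full contraction with the scalar sub-pieces extracted from further normal-ordering the residual bilinears of form $a^\ast_\sigma(v_x)a_\sigma(v_z)$ (which contribute $v(z;x)$) and using the diagonal value $v(x;x)=\rho_\sigma$; the $\varphi(z-z')$ factor then collapses against the pinned $z=x$, $z'=y$, and the hermitian conjugate supplies the overall factor of $2$. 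The same identification carries over to $\mathbb{Q}_2$ via $uu^>=u^>$.

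All operator remainders are absorbed into $\mathcal{E}_{\mathbb{Q}_{2;>}}$ (resp.\ $\mathcal{E}_{\mathbb{Q}_2}$) and estimated following the template of Propositions~\ref{pro: H0}, \ref{pro: Q4}: Cauchy--Schwarz combined with the operator-norm bounds $\|b_\sigma(\varphi_z)\|\leq C\rho^{1/3}$ and derivatives from Lemma~\ref{lem: bound b phi}, the trivial $\|a_\sigma( v_\cdot)\|\leq C\rho^{1/2}$, the identity $\int\|a_\sigma(u^>_x)\cdot\|^2\,dx\leq\langle\cdot,\mathcal{N}_>\cdot\rangle$ combined with \eqref{eq: N>} to trade $\mathcal{N}_>$ for $\rho^{-2/3}\mathbb{H}_0$, and the $L^1/L^2$ norms of $\varphi,\nabla\varphi,\Delta\varphi$ from Lemma~\ref{lem: bound phi}. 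Mixed $u$/$v$ quartic residues are split via Cauchy--Schwarz into $\|\mathbb{Q}_4^{1/2}\cdot\|\,\|\mathbb{H}_0^{1/2}\cdot\|$ with prefactor $C\rho^{1/2}$, matching the advertised error term. Cut-off mismatches between $u$ and $u^>$ are handled by writing $u=u^>+\alpha^<$ and using $\|\alpha^<\|_1\leq C$, $\|\alpha^<\|_\infty\leq C\rho$ (Remark~\ref{rem: comparisono Gia1 cut-off}), producing only $L^3\rho^{7/3}$ scalar tails. The main obstacle is the combinatorial bookkeeping: verifying that the several scalar sub-pieces combine exactly to $2\rho_\uparrow\rho_\downarrow\int V\varphi$ and that each residual admits one of the six stated error bounds uniformly in $\lambda\in[0,1]$; once this step is fixed, the individual estimates reduce to routine applications of the lemmas in Section~\ref{sec: useful bounds}.
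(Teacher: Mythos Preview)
Your proposal is correct and follows essentially the same approach as the paper: compute $[\mathbb{Q}_{2;>}^+,B_1]+\mathrm{h.c.}$ explicitly, extract the scalar $2\rho_\uparrow\rho_\downarrow\int V\varphi$ from the double $u^>$-contraction after normal-ordering the residual $v$-bilinears (using $(u^>_\sigma)^2=\delta-\nu_\sigma$ to pin $z=x$, $z'=y$), and bound all operator remainders via Cauchy--Schwarz against $\mathbb{Q}_4$, $\mathbb{H}_0$, $\mathcal{N}$, $\mathcal{N}_>$ with the tools of Section~\ref{sec: useful bounds}. One small notational slip: the decomposition the paper uses for trading $u^>$ against $u$ in the $\mathbb{Q}_4$-pairing is $u_\sigma=u^>_\sigma+u^<_\sigma$ with $\hat u^<_\sigma$ as in \eqref{eq: definition u<}, not $u=u^>+\alpha^<$ (the function $\alpha^<$ from Remark~\ref{rem: comparisono Gia1 cut-off} satisfies $\hat\alpha^<=1$ also inside the Fermi ball); this does not affect the argument since both have $\|\cdot\|_\infty\leq C\rho$ and $\|\cdot\|_1\leq C$.
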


\begin{proof} 
We discuss the proof for the operator $\mathbb{Q}_{2;>}$, the one for $\mathbb{Q}_2$ is simpler and it uses the same ideas. As in the previous propositions, we use the notation $\xi_\lambda := T^\ast_{1;\lambda}R^\ast \psi$. We compute
  \begin{equation} 
    \partial_\lambda T_{1;\lambda} \mathbb{Q}_{2;>}T_{1;\lambda}^\ast = -T_{1;\lambda}[\mathbb{Q}_{2;>}, B_1]T^\ast_{1;\lambda} + \mathrm{h.c.}
  \end{equation}
  Following the calculations in \cite[Proposition 5.5]{Gia1}, we get 
  \begin{multline} 
    [\mathbb{Q}_{2;>},B_1]
    \\ = \frac{1}{2}\sum_{\sigma\neq \sigma^\prime} \int dxdydzdz^\prime\, V(x-y)\varphi(z-z^\prime)[a^\ast_{\sigma}(u_x^>)a^\ast_{\sigma^\prime}(u_y^>)a^\ast_{\sigma^\prime}( v_y)a^\ast_\sigma( v_x), a_\uparrow(u^>_z)a_\uparrow( v_z)a_\downarrow(u^>_{z^\prime})a_\downarrow( v_{z^\prime})],
  \end{multline}
  with
  \begin{align} \label{eq: comm Q4}
    &[a^\ast_{\sigma}(u_x^>)a^\ast_{\sigma^\prime}(u_y^>)a^\ast_{\sigma^\prime}( v_y)a^\ast_\sigma( v_x), a_\uparrow(u^>_z)a_\uparrow( v_z)a_\downarrow(u^>_{z^\prime})a_\downarrow( v_{z^\prime})]
    \\
    &= -a^\ast_{\sigma}(u_x^>)a^\ast_{\sigma^\prime}(u_y^>)\big( \delta_{\sigma,\uparrow}v_\sigma(x;z)a^\ast_{\sigma^\prime}( v_y)a_\downarrow( v_{z^\prime}) - \delta_{\sigma, \downarrow} v_\sigma(x;z^\prime)a^\ast_{\sigma^\prime}( v_y)a_\uparrow( v_z) - \delta_{\sigma^\prime, \uparrow} v_{\sigma^\prime}(y;z) a^\ast_{\sigma}( v_x) a_\downarrow( v_{z^\prime}) \nonumber
    \\
    &+ \delta_{\sigma^\prime, \downarrow} v_{\sigma^\prime}(y;z^\prime)a^\ast_\sigma( v_x)a_\uparrow( v_{z}) + \delta_{\sigma^\prime, \uparrow}\delta_{\sigma,\downarrow} v_{\sigma^\prime}(y;z) v_{\sigma}(x;z^\prime) - \delta_{\sigma^\prime,\downarrow}\delta_{\sigma,\uparrow} v_{\sigma^\prime}(y;z^\prime) v_{\sigma}(x;z)\big)a_\uparrow(u^>_z)a_\downarrow(u^>_{z^\prime})\nonumber
    \\
    &\quad - a_\uparrow(v_z) a_\downarrow(v_{z^\prime}) \big( \delta_{\sigma, \uparrow}\delta_{\sigma^\prime, \downarrow}(u^>_\downarrow)^2(z^\prime;y)(u^>_\uparrow)^2(z;x)  - \delta_{\sigma, \downarrow}\delta_{\sigma^\prime, \uparrow}(u^>_\uparrow)^2(z;y)(u^>_\downarrow)^2(z^\prime;x) - (u^>_\uparrow)^2(z;x) a^\ast_\downarrow(u^>_y)a_\downarrow(u^>_{z^\prime}) \nonumber
    \\
    &+ (u^>_\downarrow)^2(z^\prime;x) a^\ast_\uparrow(u^>_y) a_\uparrow(u^>_z)+ (u^>_\uparrow)^2(z;y)a^\ast_\downarrow(u^>_x)a_\downarrow(u^>_{z^\prime}) -(u^>_\downarrow)^2(z^\prime;y)a^\ast_\uparrow(u^>_x)a_\uparrow(u^>_z)\big)
a^\ast_{\sigma^\prime}( v_y)a^\ast_\sigma(v_x).\nonumber
  \end{align}
From the first two lines in the right hand side above, we find two types of error terms. The first one is 
\begin{equation} \label{eq: term I Q4}
  \mathrm{I}_a:= \int dxdydzdz^\prime\, V(x-y)\varphi(z-z^\prime)v_\uparrow(x;z) \langle \xi_\lambda, a_\uparrow^\ast(u_x^>)a_\downarrow^\ast(u_y^>) a^\ast_\downarrow( v_y)a_\downarrow( v_{z^\prime}) a_\downarrow(u_{z^\prime}^>)a_\uparrow(u_{z}^>)\xi_\lambda\rangle.
\end{equation}
Following ideas similar to those introduced in \cite[Proposition 6.1]{FGHP} and in \cite[Proposition 5.5]{Gia1}, we first replace $u^>_x$, $u^>_y$ by $u_x$ and $u_y$, respectively. More precisely, we decompose $u_\sigma = u^>_\sigma + u^<_\sigma$, where $u^<_\sigma$ is the periodization of a smooth function on $\mathbb{R}^3$ satisfying 
\begin{equation}\label{eq: definition u<}
  \mathcal{F}(\hat{u}^<_\sigma)(k) = \begin{cases} 1 &\mbox{if}\,\,\, k_F^\sigma < |k| < 2k_F^\sigma, \\ 0&\mbox{if}\,\,\, |k| > 3k_F^\sigma.\end{cases}
\end{equation}
We can then rewrite $\mathrm{I}_a$ as: 
\begin{equation}
  \mathrm{I}_a = \int dxdydz\, V(x-y)v_\uparrow(x;z) \langle \xi_\lambda, a_\uparrow^\ast(u_x)a_\downarrow^\ast(u_y) a^\ast_\downarrow( v_y)b_\downarrow(\varphi_z)a_\uparrow(u_{z})\xi_\lambda\rangle + \mathrm{I}_{a;2} =: \mathrm{I}_{a;1} + {\mathrm{I}}_{a;2},
\end{equation}
where $\mathrm{I}_{a;2}$ is an error term. We start by estimating $\mathrm{I}_{a;1}$. We can write 
\[
  \mathrm{I}_{a;1} = \frac{1}{L^3}\sum_k \hat{v}_\uparrow(k) \left\langle\left(\int dxdy\, e^{-ik\cdot x} V(x-y) a_\downarrow( v_y) a_\downarrow(u_y) a_\uparrow(u_x)\xi_\lambda\right), \left(\int dz\, e^{-ik\cdot z}\, b_\downarrow(\varphi_z)a_\uparrow(u_z^>)\xi_\lambda\right) \right\rangle.
\]
Therefore, by Cauchy-Schwarz's inequality and using that $|\hat{v}_\sigma(k)|\leq 1$, we get 
\[
  |\mathrm{I}_{a;1}| \leq\sqrt{\frac{1}{L^3}\sum_k\left\|\int dxdy\, e^{-ik\cdot x} V(x-y) a_\downarrow( v_y) a_\downarrow(u_y) a_\uparrow(u_x)\xi_\lambda\right\|^2}\sqrt{\frac{1}{L^3}\sum_k\left\| \int dz\, e^{-ik\cdot z}\, b_\downarrow(\varphi_z)a_\uparrow(u_z^>)\xi_\lambda\right\|^2}.
\]
The first term above can be estimated as follows by Cauchy-Schwarz again:
\[
  \frac{1}{L^3}\sum_k\left\|\int dxdy\, e^{-ik\cdot x} V(x-y) a_\downarrow( v_y) a_\downarrow(u_y) a_\uparrow(u_x)\xi_\lambda\right\|^2 \leq C\rho \|V\|_1 \langle\xi_\lambda, \mathbb{Q}_4 \xi_\lambda\rangle.
\]
Moreover, using Lemma \ref{lem: bound b phi} together with \eqref{eq: N>} and \eqref{eq: u> wrt N>}, we get
\begin{equation}\label{eq: CS Ia int dz Q2}
  \frac{1}{L^3}\sum_k\left\| \int dz\, e^{-ik\cdot z}\, b_\downarrow(\varphi_z)a_\uparrow(u_z^>)\xi_\lambda\right\|^2 = \int dz \|b_\downarrow(\varphi_z)a_\uparrow(u^>_z)\xi_\lambda\|^2 \leq C\rho^{\frac{2}{3}}\langle \xi_\lambda, \mathcal{N}_> \xi_\lambda\rangle \leq C\langle \xi_\lambda, \mathbb{H}_0 \xi_\lambda\rangle.
\end{equation}
All together, we find that 
\[
  |\mathrm{I}_{a;1} | \leq C\|V\|_1^{\frac{1}{2}}\rho^{\frac{1}{2}}\|\mathbb{Q}_4^{\frac{1}{2}}\xi_\lambda\|\|\mathbb{H}_0^{\frac{1}{2}}\xi_\lambda\| \leq  C\rho^{\frac{1}{2}}\|\mathbb{Q}_4^{\frac{1}{2}}\xi_\lambda\|\|\mathbb{H}_0^{\frac{1}{2}}\xi_\lambda\|.
\]
We now consider $\mathrm{I}_{a;2}$: there are two types of terms, with one or two $a_\sigma(u^<_{\cdot})$. Both can be estimated similarly as $\mathrm{I}_{a;1}$. More precisely, proceeding as above, we can write 
\begin{align*}
  &\left| \int dxdydzdz^\prime\, V(x-y)\varphi(z-z^\prime)v_\uparrow(x;z)\langle \xi_\lambda, a^\ast_\uparrow(u^<_x)a^\ast_{\downarrow}(u_y) a^\ast_{\downarrow}( v_y)a_\downarrow ( v_{z^\prime})a_\downarrow(u^>_{z^\prime})a_\uparrow(u^>_{z})\xi_\lambda\rangle\right|
  \\
  &\qquad \leq \frac{1}{L^3}\sum_k\left\|\int dxdy\, e^{-ik\cdot x} V(x-y)  a_\downarrow( v_y) a_\downarrow(u_y) a_\uparrow(u^<_x)\xi_\lambda\right\|\left\| \int dz\, e^{-ik\cdot z}\, b_\downarrow(\varphi_z)a_\uparrow(u_z^>)\xi_\lambda\right\|
  \\
  &\qquad \leq C\rho \|\mathcal{N}^{\frac{1}{2}}\xi_\lambda\|\|\mathbb{H}_0^{\frac{1}{2}}\xi_\lambda\|,
\end{align*}
where we also used 
\begin{equation}\label{eq: east au<}
  \|a_\sigma(u^<_\cdot)\|\leq C\rho^{1/2}, 
\end{equation}
due to the support properties of $\hat{u}^{<}_\sigma$. All the other terms in $\mathrm{I}_{a;2}$ can be estimated similarly, we omit the details. We find that 
\[
  |\mathrm{I}_a| \leq C\|V\|_1^{\frac{1}{2}}\rho^{\frac{1}{2}}\|\mathbb{Q}_4^{\frac{1}{2}}\xi_\lambda\|\|\mathbb{H}_0^{\frac{1}{2}}\xi_\lambda\| + C \|V\|_1\rho \|\mathcal{N}^{\frac{1}{2}}\xi_\lambda\|\|\mathbb{H}_0^{\frac{1}{2}}\xi_\lambda\| \leq C\rho^{\frac{1}{2}}\|\mathbb{Q}_4^{\frac{1}{2}}\xi_\lambda\|\|\mathbb{H}_0^{\frac{1}{2}}\xi_\lambda\| + C\rho \|\mathcal{N}^{\frac{1}{2}}\xi_\lambda\|\|\mathbb{H}_0^{\frac{1}{2}}\xi_\lambda\|.
\]
The other type of error term coming from the first two lines in \eqref{eq: comm Q4} is 
\begin{equation} \label{eq: term Ib Q4}
  \mathrm{I}_b:= \int dxdydzdz^\prime\, V(x-y)\varphi(z-z^\prime)v_\uparrow(x;z)v_\downarrow(y;z^\prime)\langle \xi_\lambda, a^\ast_\uparrow(u^>_x)a^\ast_{\downarrow}(u^>_y) a_\uparrow(u^>_{z})a_\downarrow(u^>_{z^\prime})\xi_\lambda\rangle.
\end{equation}
To bound this error term, we replace $a_\uparrow(u^>_x)a_\downarrow(u^>_y)$ by $a_\uparrow(u_x)a_\downarrow(u_y)$, similarly as above. We write 
\[
  \mathrm{I}_{b} = \int dxdydzdz^\prime\, V(x-y)\varphi(z-z^\prime)v_\uparrow(x;z)v_\downarrow(y;z^\prime)\langle \xi_\lambda, a^\ast_\uparrow(u_x)a^\ast_{\downarrow}(u_y) a_\uparrow(u^>_{z})a_\downarrow(u^>_{z^\prime})\xi_\lambda\rangle + \mathrm{I}_{b;2} =: \mathrm{I}_{b;1} + \mathrm{I}_{b;2}.
\]
We first estimate $\mathrm{I}_{b;1}$. We now proceed similarly as for the term in \eqref{eq: term Ia kin err} in Proposition \ref{pro: H0}. We write
\[
  \mathrm{I}_{b;1} = \int dxdydz^\prime\, V(x-y)v_\downarrow(y;z^\prime)\left\langle a_\uparrow(u_x)a_{\downarrow}(u_y)\xi_\lambda, \left(\int dz\, \varphi(z-z^\prime)v_\uparrow(x;z) a_\uparrow(u^>_{z})\right) a_\downarrow(u^>_{z^\prime})\xi_\lambda\right\rangle.
\]
By Cauchy-Schwarz's inequality and using \eqref{eq: conv L2 norm}, we find that 
\begin{multline*}
  |\mathrm{I}_{b;1}| \leq \left(\int dxdydz^\prime V(x-y)|v_\downarrow(y;z^\prime)|^2 \|a_\uparrow(u_x)a_{\downarrow}(u_y)\xi_\lambda\|^2\right)^{\frac{1}{2}}\times 
  \\
  \times \left(\int dxdydz^\prime V(x-y) \left\|\int dz\, \varphi(z-z^\prime)v_\uparrow(x;z) a_\uparrow(u^>_{z})\right\|^2\|a_\downarrow(u^>_{z^\prime})\xi_\lambda\|^2 \right)^{\frac{1}{2}}.
\end{multline*}
Thus
\[
  |\mathrm{I}_{b;1}| \leq C \|V\|_1^{\frac{1}{2}}\|\varphi\|_2\|v_\uparrow\|_2\|v_\downarrow\|_2 \|\mathbb{Q}_4^{\frac{1}{2}}\xi_\lambda\|\|\mathcal{N}_>^{\frac{1}{2}}\xi_\lambda\| \leq C\rho^{1-\frac{1}{6}}\|\mathbb{Q}_4^{\frac{1}{2}}\xi_\lambda\|\|\mathcal{N}_>^{\frac{1}{2}}\xi_\lambda\|\leq  C\rho^{\frac{1}{2}}\|\mathbb{Q}_4^{\frac{1}{2}}\xi_\lambda\|\|\mathbb{H}_0^{\frac{1}{2}} \xi_\lambda\|,
\]
using also \eqref{eq: u> wrt N>} and \eqref{eq: N>}. To estimate $\mathrm{I}_{b;2}$ we can proceed similarly. In $\mathrm{I}_{b;2}$ there are terms with one or two $a_\sigma(u^<_\cdot)$. Proceeding as for the term $\mathrm{I}_{b;1}$ and using also \eqref{eq: east au<} together with 
\[
  \int dx\, \|a_\sigma(u_x)\xi_\lambda\|^2\leq C\langle \xi_\lambda, \mathcal{N}\xi_\lambda\rangle, \qquad \int dx\, \|a_\sigma(u^<_x)\xi_\lambda\|^2\leq C\langle \xi_\lambda, \mathcal{N}\xi_\lambda\rangle, 
\]
we get that 
\[
  |\mathrm{I}_{b;2}| \leq C\rho\|\mathcal{N}^{\frac{1}{2}}\xi_\lambda\|\|\mathbb{H}_0^{\frac{1}{2}}\xi_\lambda\|,
\]
which implies that 
\[
  |\mathrm{I}_b| \leq C\rho^{\frac{1}{2}}\|\mathbb{Q}_4^{\frac{1}{2}}\xi_\lambda\|\|\mathbb{H}_0^{\frac{1}{2}}\xi_\lambda\| + C\rho\|\mathcal{N}^{\frac{1}{2}}\xi_\lambda\|\|\mathbb{H}_0^{\frac{1}{2}}\xi_\lambda\|.
\]
We now consider the error terms coming from the last two lines in the right hand side in \eqref{eq: comm Q4}. We have to estimate two types of error terms. The first type is, for instance,
\begin{equation}\label{eq: IIb Q2}
  \mathrm{II}_a = \int dxdydzdz^\prime\, V(x-y)\varphi(z-z^\prime)(u^>_\uparrow)^2(z;x) \langle \xi_\lambda, a^\ast_\downarrow(u^>_y)a_\downarrow(u^>_{z^\prime}) a_\uparrow( v_z)a_\downarrow( v_{z^\prime}) a^\ast_{\downarrow}( v_y)a^\ast_\uparrow( v_x)\xi_\lambda\rangle.
\end{equation}
Similarly as in previous propositions, we write $(u^>_\sigma)^2 = \delta - \nu_\sigma$. We have
\begin{eqnarray*}
  |\mathrm{II}_a| &\leq& \int dxdydz^\prime\, |V(x-y)||\varphi(x-z^\prime)| |\langle \xi_\lambda, a^\ast_\downarrow(u^>_y)a_\downarrow(u^>_{z^\prime}) a_\uparrow( v_x)a_\downarrow( v_{z^\prime}) a^\ast_{\downarrow}( v_y)a^\ast_\uparrow( v_x)\xi_\lambda\rangle| 
  \\
  && + \int dxdydzdz^\prime\, |V(x-y)||\varphi(z-z^\prime)||\nu_\uparrow(z;x)|| \langle \xi_\lambda, a^\ast_\downarrow(u^>_y)a_\downarrow(u^>_{z^\prime}) a_\uparrow( v_z)a_\downarrow( v_{z^\prime}) a^\ast_{\downarrow}( v_y)a^\ast_\uparrow( v_x)\xi_\lambda\rangle|
  \\
  &\leq& C\rho^2\int dxdydz^\prime\, |V(x-y)||\varphi(z-z^\prime)| (\|a_\downarrow(u^>_y)\xi_\lambda\|^2 + \|a_\downarrow(u^>_{z^\prime})\xi_\lambda\|^2)
  \\
  && + C\rho^2\int dxdydz^\prime\, |V(x-y)||\varphi(z-z^\prime)| |\nu_\uparrow(z;x)| (\|a_\downarrow(u^>_y)\xi_\lambda\|^2 + \|a_\downarrow(u^>_{z^\prime})\xi_\lambda\|^2)
  \\
  &\leq& C\rho^{2}\|V\|_1\|\varphi\|_1 \langle \xi_\lambda,\mathcal{N}_>\xi_\lambda\rangle +  C\rho^{2}\|V\|_1\|\varphi\|_1 \|\nu_\uparrow\|_1\langle \xi_\lambda,\mathcal{N}_>\xi_\lambda\rangle \leq C\rho^{2}\|\varphi\|_1 \langle \xi_\lambda,\mathcal{N}_>\xi_\lambda\rangle ,
\end{eqnarray*}
where we used that $\|a_\uparrow( v_z)a_\downarrow( v_{z^\prime})a^\ast_{\downarrow}( v_y)a^\ast_\uparrow( v_x)\|\leq C\rho^2$ and the bound for $\|\nu_\sigma\|_1$ in \eqref{eq: bounds alpha< nu}. Therefore, using also \eqref{eq: N>}, we find that
\[
  |\mathrm{II}_a|\leq C\rho^{\frac{2}{3}}\langle \xi_\lambda,\mathbb{H}_0 \xi_\lambda\rangle.
\]
Note that in the estimate for the same error term for $\mathbb{Q}_{2}$, we would have $a_\downarrow(u_y)$ in place of $a_\downarrow(u_y^>)$. As a consequence, for $\mathbb{Q}_{2}$, the estimate for $\mathrm{II}_a$ is $C\rho\|\mathcal{N}^{\frac{1}{2}}\xi_\lambda\|\|\mathbb{H}_0^{\frac{1}{2}}\xi_\lambda\|$. The last four terms in the right hand side of \eqref{eq: comm Q4} can be all estimated as $\mathrm{II}_a$. We now consider the first two terms in the third line of the right hand side of \eqref{eq: comm Q4}. After a change of variable and using that $V(x-y) = V(y-x)$, we can rewrite $\mathrm{II}_b$ as
\begin{equation}\label{eq: term IIa Q2}
  \mathrm{II}_b = -2\int dxdydzdz^\prime\,  V(x-y)\varphi(z-z^\prime)(u^>_\downarrow)^2(z^\prime;y)(u^>_\uparrow)^2(z;x) \langle \xi_\lambda, a_\uparrow( v_z)a_\downarrow( v_{z^\prime})a^\ast_{\downarrow}( v_y)a^\ast_\uparrow( v_x)\xi_\lambda\rangle.
\end{equation}
To extract the main contribution we partially put $\mathrm{II}_a$ in normal order. Therefore, we get 
\begin{eqnarray*}
  \mathrm{II}_b &=&  2\int dxdydzdz^\prime\,  V(x-y)\varphi(z-z^\prime)(u^>_\downarrow)^2(z^\prime;y)(u^>_\uparrow)^2(z;x) \langle \xi_\lambda, a^\ast_{\downarrow}( v_y)a_\uparrow( v_z)a^\ast_\uparrow( v_x)a_\downarrow( v_{z^\prime})\xi_\lambda\rangle
  \\
  && 2\int dxdydzdz^\prime\,  V(x-y)\varphi(z-z^\prime)(u^>_\downarrow)^2(z^\prime;y)(u^>_\uparrow)^2(z;x) v_\downarrow(z^\prime;y)\langle \xi_\lambda, a^\ast_\uparrow( v_x) a_\uparrow( v_z)\xi_\lambda\rangle
  \\
  && - 2\int dxdydzdz^\prime\,  V(x-y)\varphi(z-z^\prime)(u^>_\downarrow)^2(z^\prime;y)(u^>_\uparrow)^2(z;x) v_\downarrow(z^\prime;y) v_\uparrow(z;x)
\\
&=:& \mathrm{II}_{b;1} + \mathrm{II}_{b;2} + \mathrm{I}_{\mathrm{main}}.
\end{eqnarray*}
From the term $\mathrm{I}_{\mathrm{main}}$, we will extract the main constant; the other terms instead are errors which we estimate below. We start by considering $\mathrm{II}_{b;1}$. To estimate it, similarly as above, we write $(u^>_\sigma)^2 = \delta - \nu_\sigma$, with $\nu_\sigma$ as in \eqref{eq: def nu}. Correspondingly, we have $\mathrm{II}_{b;1} = \mathrm{II}_{b;1;0} + \mathrm{II}_{b;1;1} + \mathrm{II}_{b;1;2}$, where $\mathrm{II}_{b;1;\ast}$ is such that there are $\ast \in \{0,1,2\}$ $\delta$ in the term. We start by estimating $\mathrm{II}_{b;1;2}$. Using that $\|a_\sigma( v_x)a^\ast_\sigma(v_x)\|\leq C\rho$ and Lemma \ref{lem: bound phi}, we can then estimate 
\begin{multline*}
  |\mathrm{II}_{b;1;2}| \leq 2\int dxdy\, |V(x-y)||\varphi(x-y)|| \langle \xi_\lambda, a^\ast_{\downarrow}( v_y)a_\uparrow( v_x)a^\ast_\uparrow( v_x)a_\downarrow( v_{y})\xi_\lambda\rangle| 
  \\
  \leq C\rho \int dxdy\, |V(x-y)||\varphi(x-y)||\| a_{\downarrow}( v_y)\xi_\lambda\|^2\leq C\rho\|V\|_1\|\varphi\|_\infty \langle\xi_\lambda,\mathcal{N}\xi_\lambda\rangle \leq C\rho\langle\xi_\lambda,\mathcal{N}\xi_\lambda\rangle.
\end{multline*}
For the term $\mathrm{II}_{b;1;1}$, we have
\begin{multline}
  |\mathrm{II}_{b;1;1}|\leq C\rho \int dxdydz^\prime |V(x-y)||\varphi(x-z^\prime)||\nu_\downarrow(z^\prime;y)(\| a_{\downarrow}( v_y)\xi_\lambda\|^2 +\|a_\downarrow( v_{z^\prime})\xi_\lambda\|^2) 
  \\
  \leq C\rho\|V\|_1 \|\varphi\|_1 \|\nu_\downarrow\|_\infty\langle \xi_\lambda,\mathcal{N}\xi_\lambda\rangle \leq C\rho^{\frac{4}{3}}\langle \xi_\lambda,\mathcal{N}\xi_\lambda\rangle,
\end{multline}
where we used again Lemma \ref{lem: bound phi} together with the bound for $\|\nu_\sigma\|_\infty$ in \eqref{eq: bounds alpha< nu}. Proceeding in a similar way and using also that $\|\nu_\sigma\|_1 \leq C$ (see \eqref{eq: bounds alpha< nu}), we get that 
\[
  |\mathrm{II}_{b;1;0}| \leq C\rho \|V\|_1\|\varphi\|_1  \|\nu_\uparrow\|_1 \|\nu_\downarrow\|_\infty \langle \xi_\lambda,\mathcal{N}\xi_\lambda\rangle \leq C\rho^{\frac{4}{3}}\langle \xi_\lambda,\mathcal{N}\xi_\lambda\rangle.
\]
Therefore, we find $|\mathrm{II}_{b;1}| \leq C\rho\langle \xi_\lambda,\mathcal{N}\xi_\lambda\rangle$.
The term $\mathrm{II}_{b;2}$ can be treated in a similar way writing $(u_\sigma^>)^2 = \delta - \nu_\sigma$ and using that $v_\uparrow(x;x) = \rho$, $\|v_\sigma\|_\infty \leq C\rho$ (see \eqref{eq: bounds av}) in place of $\|a_\uparrow(v_x)a_\uparrow^\ast (v_x)\|, \|a_\uparrow(v_x)a^\ast_\uparrow(v_z)\|\leq C\rho$, we omit the details. We have 
\[
  |\mathrm{II}_{b;2}| \leq C\rho\langle \xi_\lambda,\mathcal{N}\xi_\lambda\rangle.
\]
To conclude the proof, we now consider the constant term, which is given by 
\begin{equation} \label{eq: I main Q4}
  \mathrm{I}_{\mathrm{main}}  = - 2 \int dxdydzdz^\prime V(x-y) \varphi(z-z^\prime) (u^>_\uparrow)^2(z;x)(u^>_\downarrow)^2(z^\prime;y)v_\uparrow(x;z)v_\downarrow(y;z^\prime).
\end{equation}
As above, we write $(u^>_\sigma)^2 = \delta - \nu_\sigma$. We then have $\mathrm{I}_{\mathrm{main}} = \mathrm{I}_{\mathrm{main}; \ast}$ with $\ast\in \{0,1,2\}$ denoting the number of $\delta$ in the term. Using again the bounds in \eqref{eq: bounds alpha< nu}, we can prove that 
\[
  |\mathrm{I}_{\mathrm{main}; 1} + \mathrm{I}_{\mathrm{main}; 0} | \leq C\rho^2\int dxdydz^\prime |V(x-y)| |\varphi(z-z^\prime)| |\nu_\downarrow(z^\prime;y)|\leq CL^3\rho^2\|V\|_1 \|\varphi\|_1 \|\nu_\downarrow\|_\infty \leq CL^3\rho^{\frac{7}{3}},
\]
which implies
\begin{align*}
  \mathrm{I}_{\mathrm{main}} &=  -2 \int dxdydzdz^\prime V(x-y) \varphi(z-z^\prime)v_\uparrow(x;x)v_\downarrow(y;y) + \mathcal{E}
  \\
  & = -2\rho_\uparrow\rho_\downarrow \int dxdy\,  V(x-y) \varphi(x-y) + \mathcal{E}, \qquad\qquad |\mathcal{E}|\leq CL^3\rho^{\frac{7}{3}}.
\end{align*}
Note that to deal with $\mathrm{II}_a$ and $\mathrm{II}_b$ for $\mathbb{Q}_2$, we can proceed similarly as above with $(u^>_\sigma)^2$ replaced by $u^>_\sigma$ (see Remark \ref{rem: comparisono Gia1 cut-off}).
Putting all the estimates together, we conclude the proof.
\end{proof}

\subsection{Scattering equation cancellation}\label{sec: scattering}
In the conjugation under $T_1$, we need to use the scattering equation cancellation which we discuss in this section. To do that, we need to regularize both $\mathbb{T}_2$ and $\mathbb{Q}_{2}$, to combine these operators with $\mathbb{T}_1$. We then define
\begin{equation}\label{eq: def T2r}
  \mathbb{T}_2^> = \int dxdy\, V(x-y)\varphi(x-y) a_\uparrow(u_x^>)a_\uparrow( v_x)a_\downarrow(u_y^>) a_\downarrow( v_y)+ \mathrm{h.c.},
\end{equation}
and 
\begin{equation}\label{eq: def T2<}
  \mathbb{T}_2^< = \int dxdy\, V(x-y)\varphi(x-y) a_\uparrow(u_x^<)a_\uparrow( v_x)a_\downarrow(u_y^<) a_\downarrow( v_y)+ \mathrm{h.c.},
\end{equation}
with $u^<_\sigma$ as in \eqref{eq: definition u<}, i.e., $u^<_\sigma$ is the periodization of a smooth function $\mathbb{R}^3 \rightarrow \mathbb{R}$ and such that 
\begin{equation}\label{eq: def u<}
\mathcal{F}({u}^<_\sigma)(k) = \begin{cases} 1 &\mbox{if}\,\,\, k_F^\sigma\leq |k| \leq 2k_F^\sigma, \\ 0 &\mbox{if}\,\,\,  |k| \geq 3k_F^\sigma.\end{cases}
\end{equation}
We also recall that 
\begin{equation}
  \mathbb{Q}_{2;>} = \int dxdy\, V(x-y) a_\uparrow(u_x^>)a_\uparrow ( v_x)a_\downarrow (u_y^>)  a_\downarrow( v_y) + \mathrm{h.c.},
\end{equation}
and we define 
\begin{equation}\label{eq: Q2<}
   \mathbb{Q}_{2;<} = \int dxdy\, V(x-y) a_\uparrow(u_x^<)a_\uparrow ( v_x)a_\downarrow (u_y^<)  a_\downarrow( v_y) + \mathrm{h.c.}
\end{equation}
\begin{proposition}\label{pro: reg} Let $\lambda\in [0,1]$. Let $\psi$ be an approximate ground state as in Definition \eqref{def: approx gs}. Under the assumptions of Theorem \ref{thm: optimal lw bd}, it holds that  
\begin{equation}\label{eq: T2reg up bd}
  \mathbb{T}_2  - \mathbb{T}_2^> = \mathbb{T}_2^< + \mathfrak{E}_{\mathbb{T}_2},\qquad \mathbb{Q}_2 - \mathbb{Q}_{2;>} = \mathbb{Q}_{2;<} + \mathfrak{E}_{\mathbb{Q}_2},
\end{equation}
with 
\begin{equation}\label{eq: error T2}
  |\langle T^\ast_{1;\lambda}R^\ast\psi, \mathfrak{E}_{\mathbb{T}_2}T^\ast_{1;\lambda}R^\ast\psi\rangle| \leq CL^{\frac{3}{2}}\rho^{\frac{3}{2}}\|\mathcal{N}_>^{\frac{1}{2}}\xi_\lambda\| \leq  CL^3\rho^{\frac{7}{3}} + \delta\langle T^\ast_{1}R^\ast \psi, \mathbb{H}_0 T^\ast_{1}R^\ast \psi\rangle,
\end{equation}
and 
\begin{equation}\label{eq: error Q2}
  |\langle T^\ast_{1;\lambda}R^\ast\psi, \mathfrak{E}_{\mathbb{Q}_2}T^\ast_{1;\lambda}R^\ast\psi\rangle| \leq CL^{\frac{3}{2}}\rho^{\frac{3}{2}}\|\mathcal{N}_>^{\frac{1}{2}}\xi_\lambda\|\leq   CL^3\rho^{\frac{7}{3}} + \delta\langle T^\ast_{1}R^\ast \psi, \mathbb{H}_0 T^\ast_{1}R^\ast \psi\rangle,
\end{equation}
for any $0<\delta<1$.
\end{proposition}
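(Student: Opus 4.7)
The plan is to exploit a partition-of-unity on momenta outside the Fermi ball: the way $\hat{\eta}^>_\sigma$ in \eqref{eq: def u>} and $\hat{u}^<_\sigma$ in \eqref{eq: def u<} are chosen (smoothly matching on $2k_F^\sigma \leq |k| \leq 3k_F^\sigma$), one has $\hat{u}_\sigma = \hat{u}^>_\sigma + \hat{u}^<_\sigma$, hence $a_\sigma(u_x) = a_\sigma(u^>_x)+a_\sigma(u^<_x)$ and similarly for the adjoint. Substituting this decomposition into the product of four operators defining $\mathbb{T}_2$ (resp.\ $\mathbb{Q}_2$) and expanding produces four pieces: a pure $>$ piece (matching $\mathbb{T}_2^>$, resp.\ $\mathbb{Q}_{2;>}$), a pure $<$ piece (matching $\mathbb{T}_2^<$, resp.\ $\mathbb{Q}_{2;<}$), and two mixed terms containing exactly one factor of $u^>$ and one factor of $u^<$. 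These mixed terms, together with their h.c., are by definition $\mathfrak{E}_{\mathbb{T}_2}$ and $\mathfrak{E}_{\mathbb{Q}_2}$.

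The real content is bounding these cross terms. Consider for instance the representative piece of $\mathfrak{E}_{\mathbb{T}_2}$,
\[
\mathfrak{E}_{\mathbb{T}_2;1} = \int dxdy\, V(x-y)\varphi(x-y)\, a_\uparrow(u^>_x)a_\uparrow(v_x)a_\downarrow(u^<_y)a_\downarrow(v_y) + \mathrm{h.c.}
\]
I use $\|a_\sigma(v_\cdot)\|\leq C\rho^{1/2}$ and $\|a_\sigma(u^<_\cdot)\|\leq C\rho^{1/2}$ (the latter because $\hat{u}^<_\sigma$ is supported on a shell of width $\sim k_F^\sigma$, so $\|u^<_\sigma(\cdot;y)\|_2^2 = L^{-3}\sum_k(\hat{u}^<_\sigma(k))^2 \leq C\rho$) to peel off three factors of $\rho^{1/2}$. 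Then, using the uniform bound $|\varphi|\leq C$ from Lemma~\ref{lem: bound phi}(i), integrating in $y$ first to absorb $\|V\|_1\leq C$, and applying Cauchy--Schwarz in $x$ gives
\[
|\langle\xi_\lambda,\mathfrak{E}_{\mathbb{T}_2;1}\xi_\lambda\rangle| \;\leq\; C\rho^{3/2}\!\int\! V(x-y)|\varphi(x-y)|\,\|a_\uparrow(u^>_x)\xi_\lambda\|\,dxdy \;\leq\; CL^{3/2}\rho^{3/2}\Big(\!\int\!\|a_\uparrow(u^>_x)\xi_\lambda\|^2 dx\Big)^{1/2},
\]
and \eqref{eq: u> wrt N>} delivers the bound $CL^{3/2}\rho^{3/2}\|\mathcal{N}_>^{1/2}\xi_\lambda\|$. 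The second cross term (with $u^<_x,u^>_y$) is identical by symmetry. For $\mathbb{Q}_2$ the four-creation structure is treated the same way: write the matrix element as $\langle a(v_x)a(v_y)a(u^<_y)a(u^>_x)\xi_\lambda,\xi_\lambda\rangle$, absorb three factors of $\rho^{1/2}$ from the two $a(v)$'s and the $a(u^<)$, and then Cauchy--Schwarz on $x$ again converts the remaining $a(u^>_x)$ into $\|\mathcal{N}_>^{1/2}\xi_\lambda\|$.

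Finally, to pass from $\|\mathcal{N}_>^{1/2}\xi_\lambda\|$ to the second form of the estimate, I invoke \eqref{eq: N> lambda 1} to obtain $\|\mathcal{N}_>^{1/2}\xi_\lambda\|\leq C\rho^{-1/3}\|\mathbb{H}_0^{1/2}\xi_1\| + CL^{3/2}\rho^{5/6}$, plug in, and apply the inequality $ab\leq (4\delta)^{-1}a^2+\delta b^2$ to split off a $\delta\,\langle T_1^\ast R^\ast\psi,\mathbb{H}_0 T_1^\ast R^\ast\psi\rangle$ piece against a $CL^3\rho^{7/3}$ remainder, exactly as stated in \eqref{eq: error T2}--\eqref{eq: error Q2}.

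The main obstacle is essentially bookkeeping rather than analysis: one has to be careful that the smooth cutoffs $\hat{\eta}^>_\sigma$ and $\hat{u}^<_\sigma$ are \emph{chosen} so as to provide an exact partition of unity on $\{|k|>k_F^\sigma\}$ (not merely that they are each $O(1)$), since this is what makes the mixed terms the \emph{only} correction and prevents a spurious bulk discrepancy between $\mathbb{T}_2$ and $\mathbb{T}_2^>+\mathbb{T}_2^<$. Once that is in place, the estimate is a clean three-$\rho^{1/2}$-plus-one-$\mathcal{N}_>^{1/2}$ Cauchy--Schwarz, followed by the standard conversion via \eqref{eq: N> lambda 1}.
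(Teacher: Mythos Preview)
Your proposal is correct and follows essentially the same approach as the paper's proof: both decompose $a_\sigma(u_x) = a_\sigma(u^>_x) + a_\sigma(u^<_x)$, identify the two mixed cross terms as $\mathfrak{E}_{\mathbb{T}_2}$ (resp.\ $\mathfrak{E}_{\mathbb{Q}_2}$), and bound each by extracting three factors of $\rho^{1/2}$ from $\|a_\sigma(v_\cdot)\|$ and $\|a_\sigma(u^<_\cdot)\|$, then applying Cauchy--Schwarz together with $\|\varphi\|_\infty\leq C$, $\|V\|_1\leq C$, \eqref{eq: u> wrt N>}, and finally \eqref{eq: N> lambda 1}. Your remark about the partition of unity being exact is on point and is implicitly used in the paper as well.
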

\begin{remark}
The main difference with respect to \cite{Gia1} lies in the fact that, in order to obtain a lower bound, we do not estimate $\mathbb{T}_2^<$ and $\mathbb{Q}_{2;<}$, but rather introduce a new unitary transformation to handle these terms (see Section \ref{sec: T2 conjugation}).
\end{remark}
\begin{proof}
  We start by writing 
  \[
    \hat{u}_\sigma(k) = \hat{u}^>_\sigma(k) + \hat{u}^<_\sigma(k), 
   \]
   with $\hat{u}^>_\sigma$, $\hat{u}^<_\sigma$ as in \eqref{eq: def u>} and \eqref{eq: definition u<}, respectively. Therefore, 
\begin{align*}
  \mathbb{T}_2 - \mathbb{T}_2^> &= \int dxdy\, V(x-y)\varphi(x-y) a_\uparrow(u_x^<)a_\uparrow( v_x)a_\downarrow(u_y^<) a_\downarrow( v_y)+ \mathrm{h.c.}
  \\
  & \quad +\int dxdy\, V(x-y)\varphi(x-y) a_\uparrow(u_x^>)a_\uparrow( v_x)a_\downarrow(u_y^<) a_\downarrow( v_y)+ \mathrm{h.c.}
  \\
  & \quad +\int dxdy\, V(x-y)\varphi(x-y) a_\uparrow(u_x^<)a_\uparrow( v_x)a_\downarrow(u_y^>) a_\downarrow( v_y)+ \mathrm{h.c.}= \mathrm{\mathbb{T}}_2^< +\mathcal{E}_{\mathbb{T}_2},
\end{align*}
with $\mathbb{T}_2^<$ as in \eqref{eq: def T2<}.
Both the last two operators can be estimated in the same way. As in the previous proofs, we write $\xi_\lambda$ in place of $T^{\ast}_{1;\lambda}R^\ast\psi$. We consider for instance the last term in the right hand side above. Using that $\|a_\sigma( v_\cdot)\|, \|a_\sigma(u^<_\cdot)\|\leq C\rho^{1/2}$ (see \eqref{eq: bounds av} and \eqref{eq: east au<}), with the aid of Cauchy-Schwarz, we get
\begin{multline*}
  \left|\int dxdy\, V(x-y)\varphi(x-y)\langle \xi_\lambda, a_\uparrow(u_x^<)a_\uparrow( v_x)a_\downarrow(u_y^>) a_\downarrow( v_y)\xi_\lambda\rangle \right|
  \\
  \leq C\rho^{\frac{3}{2}}\int dxdy\, V(x-y)\varphi(x-y)\|a_\downarrow(u_y^>)\xi_\lambda\|\leq CL^{\frac{3}{2}}\rho^{\frac{3}{2}}\|V\|_1\|\varphi\|_\infty \|\mathcal{N}_>^{\frac{1}{2}}\xi_\lambda\|.
\end{multline*}
Therefore, From Lemma \ref{lem: bound phi} together with \eqref{eq: N> lambda 1}, we find that 
\[
  |\langle \xi_\lambda, \mathcal{E}_{\mathbb{T}_2} \xi_\lambda\rangle \leq CL^{\frac{3}{2}}\|V\|_1\rho^{\frac{3}{2}}\|\mathcal{N}_>^{\frac{1}{2}}\xi_\lambda\| \leq  C L^{\frac{3}{2}} \|V\|_1\rho^{\frac{3}{2} - \frac{1}{3}}\|\mathbb{H}_0^{\frac{1}{2}}\xi_1\|  \leq CL^3\rho^{\frac{7}{3}} + \delta\langle \xi_1, \mathbb{H}_0 \xi_1\rangle.
\]
The proof for $\mathbb{Q}_2 - \mathbb{Q}_{2;>}$ can be done in the same way, we omit the details.
\end{proof}
\begin{proposition}[Scattering equation cancellation]\label{pro: scatt canc} Let $\lambda\in [0,1]$. Let $\psi$ be an approximate ground state as in Definition \eqref{def: approx gs}. Under the assumptions of Theorem \ref{thm: optimal lw bd}, it holds that
\begin{equation}\label{eq: est err scatt eq canc}
  |\langle T^\ast_{1;\lambda}R^\ast \psi, (-\mathbb{T}_1 - \mathbb{T}_2^> + \mathbb{Q}_{2;>})T^\ast_{1;\lambda}R^\ast \psi \rangle|\leq CL^{\frac{3}{2}}\rho^{\frac{3}{2}}\|\mathcal{N}_>^{\frac{1}{2}} T^\ast_{1;\lambda}R^\ast \psi\| \leq  CL^3 \rho^{\frac{7}{3}} + \delta\langle T^\ast_{1}R^\ast \psi, \mathbb{H}_0 T^\ast_{1}R^\ast \psi\rangle,
\end{equation}
for any $0< \delta <1$.
\end{proposition}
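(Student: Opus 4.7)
The plan is to exhibit the scattering-equation cancellation directly: the three operators share a common quadratic fermionic structure and only differ in their scalar kernels, whose sum reduces, via the zero-energy scattering equation, to the small localization error $\mathcal{E}_{\{\varphi_0,\chi_{\sqrt[3]{\rho}}\}}$. Estimating the resulting operator is then straightforward using Lemma \ref{lem: bound b phi} and the propagation bound \eqref{eq: N> lambda 1}. I will write $\xi_\lambda := T^\ast_{1;\lambda}R^\ast\psi$ throughout.

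First I would combine the three operators under one integral. By \eqref{eq: def T1 kinetic}, \eqref{eq: def T2r}, and the definition of $\mathbb{Q}_{2;>}$, each operator is of the form $\int dxdy\, k(x-y)\, a_\uparrow(u_x^>)a_\uparrow(v_x)a_\downarrow(u_y^>)a_\downarrow(v_y) + \mathrm{h.c.}$, with respective kernels proportional to $-2\Delta\varphi$, $\pm V\varphi$, and $V$. Their sum has kernel (up to the overall sign dictated by the conventions in \eqref{eq: def T1 kinetic}, \eqref{eq: def T2r}) $K := V(1-\varphi) + 2\Delta\varphi$. Periodizing the scattering equation \eqref{eq: scatter eq phi0} over $\Lambda$ using \eqref{eq: def phi}, \eqref{eq: period V} and \eqref{eq: period error scattering} gives the pointwise identity $2\Delta\varphi + V(1-\varphi) = \mathcal{E}_{\{\varphi_0,\chi_{\sqrt[3]{\rho}}\}}$ on $\Lambda$, so $K = \mathcal{E}_{\{\varphi_0,\chi_{\sqrt[3]{\rho}}\}}$. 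Using the notation from \eqref{eq: def op b g},
\[
  \mathbb{T}_1 + \mathbb{T}_2^> + \mathbb{Q}_{2;>} = \int dx\, a_\uparrow(u_x^>)\, a_\uparrow(v_x)\, b_\downarrow\bigl((\mathcal{E}_{\{\varphi_0,\chi_{\sqrt[3]{\rho}}\}})_x\bigr) + \mathrm{h.c.}
\]

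Next I would take the expectation against $\xi_\lambda$ and apply Cauchy--Schwarz in $x$, separating the two spin channels:
\[
  |\langle \xi_\lambda, \mathbb{T}_1 + \mathbb{T}_2^> + \mathbb{Q}_{2;>}\,\xi_\lambda\rangle| \;\le\; 2\Bigl(\int dx\,\|a_\uparrow(v_x)a_\uparrow(u_x^>)\xi_\lambda\|^2\Bigr)^{1/2}\Bigl(\int dx\,\|b_\downarrow((\mathcal{E}_{\{\varphi_0,\chi_{\sqrt[3]{\rho}}\}})_x)\xi_\lambda\|^2\Bigr)^{1/2}.
\]
The first factor is bounded using $\|a_\uparrow(v_x)\|\le \|v_\uparrow(\cdot;x)\|_2 \le C\rho^{1/2}$ together with \eqref{eq: u> wrt N>}, giving $\le C\rho^{1/2}\|\mathcal{N}_>^{1/2}\xi_\lambda\|$. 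For the second factor I use the operator norm bound \eqref{eq: bound b err scatt}, $\|b_\downarrow((\mathcal{E}_{\{\varphi_0,\chi_{\sqrt[3]{\rho}}\}})_x)\|\le C\rho$, which gives $\le CL^{3/2}\rho$. Combining yields the first inequality in \eqref{eq: est err scatt eq canc}, namely $CL^{3/2}\rho^{3/2}\|\mathcal{N}_>^{1/2}\xi_\lambda\|$.

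Finally, I would convert $\mathcal{N}_>$ into $\mathbb{H}_0$ via \eqref{eq: N> lambda 1}, obtaining $\|\mathcal{N}_>^{1/2}\xi_\lambda\|\le C\rho^{-1/3}\|\mathbb{H}_0^{1/2}\xi_1\| + CL^{3/2}\rho^{5/6}$, and then apply a weighted Young's inequality on $L^{3/2}\rho^{7/6}\|\mathbb{H}_0^{1/2}\xi_1\|$ to absorb one factor into $\delta \langle\xi_1,\mathbb{H}_0\xi_1\rangle$ at the cost of $CL^3\rho^{7/3}$. This yields the second inequality in \eqref{eq: est err scatt eq canc}. The genuinely conceptual step, and the only nontrivial one, is the identification of the combined kernel with the localization error via the \emph{periodic} scattering equation; this requires that the interchange of periodization with $\Delta$ and with the pointwise product $V\varphi$ be legitimate, which holds for $L$ large enough relative to the cutoff scale $\rho^{-1/3}$ of $\chi_{\sqrt[3]{\rho}}$. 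All remaining steps are essentially applications of the a priori tools collected in Section \ref{sec: useful bounds}.
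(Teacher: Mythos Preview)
Your proposal is correct and follows essentially the same approach as the paper's proof: you identify the combined kernel as $\mathcal{E}_{\{\varphi_0,\chi_{\sqrt[3]{\rho}}\}}$ via the periodized scattering equation, then apply \eqref{eq: bound b err scatt} together with $\|a_\sigma(v_\cdot)\|\le C\rho^{1/2}$ and Cauchy--Schwarz, and finally convert $\mathcal{N}_>$ into $\mathbb{H}_0$ through \eqref{eq: N> lambda 1}. The only cosmetic difference is that the paper groups the $b$-operator on the $\uparrow$ spin and integrates over the $\downarrow$ variable, whereas you do the reverse; the argument is symmetric in this respect.
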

\begin{proof} The proof is the same as the one in \cite[Proposition 5.8]{Gia1}. We recall here the main steps. We write $\xi_\lambda:= T^\ast_{1;\lambda}R^\ast\psi$. Similarly as in \cite[Proposition 5.8]{Gia1}, taking $L$ large enough, we find 
\[
  \langle\xi_\lambda, (-\mathbb{T}_1 - \mathbb{T}_2^> + \mathbb{Q}_{2;>})\xi_\lambda \rangle 
  = \int dy \langle \xi_\lambda, b_\uparrow((\mathcal{E}_{\varphi_0, \chi_{\sqrt[3]\rho}})_y) a_\downarrow(u^>_y)a_\downarrow( v_y)\xi_\lambda\rangle + \mathrm{c.c.}
\]
where $\mathcal{E}_{\varphi_0, \chi_{\sqrt[3]\rho}}(x-y)$ is as in \eqref{eq: period error scattering} and where we used the notation introduced in \eqref{eq: def op b g}.
Using the bounds in \eqref{eq: bounds av} and \eqref{eq: bound b err scatt}, by Cauchy-Schwarz, we find for any $0<\delta<1$
\begin{align*}
  | \langle\xi_\lambda, (-\mathbb{T}_1 - \mathbb{T}_2^> + \mathbb{Q}_{2;>})\xi_\lambda \rangle| &\leq \int dy\, \|b_\uparrow((\mathcal{E}_{\varphi_0, \chi_{\sqrt[3]\rho}})_y)\|\|a_\downarrow( v_y)\|\|a_\downarrow(u^>_y)\xi_\lambda\| \leq  CL^{\frac{3}{2}}\rho^{\frac{3}{2}}\|\mathcal{N}_>^{\frac{1}{2}}\xi_\lambda\|
  \\
  &\leq CL^{\frac{3}{2}}\rho^{\frac{3}{2} -\frac{1}{3}}\|\mathbb{H}_0^{\frac{1}{2}}\xi_1\| + CL^3\rho^{\frac{7}{3}} \leq CL^3\rho^{\frac{7}{3}} + \delta\langle \xi_1 , \mathbb{H}_0 + \xi_1 \rangle,
\end{align*}
where we used also \eqref{eq: N> lambda 1}.
\end{proof}
We now consider the conjugation of the operator $\mathbb{T}_2^<$, defined in \eqref{eq: def T2<}, by $T_1$. This step is required before conjugating the operator by the transformation $T_2$ in Section \ref{sec: T2 unitary}.
\begin{proposition}\label{pro: T2<}
Let $\lambda\in [0,1]$. Let $\psi$ be an approximate ground state as in Definition \eqref{def: approx gs}. Under the  assumptions of Theorem \ref{thm: optimal lw bd}, it holds that  
 \begin{equation}\label{eq: prop T2<} 
     \left| \partial_\lambda\langle T^\ast_{1;\lambda}R^\ast \psi, \mathbb{T}_{2}^<T^\ast_{1;\lambda}R^\ast \psi\rangle \right|\leq CL^3\rho^{\frac{7}{3}} + C\rho\|\mathcal{N}^{\frac{1}{2}}T^\ast_{1;\lambda}R^\ast \psi\|\|\mathbb{H}_0^{\frac{1}{2}}T^\ast_{1;\lambda}R^\ast \psi\| + C\rho^{\frac{4}{3}}\langle T^\ast_{1;\lambda}R^\ast \psi, \mathcal{N}T^\ast_{1;\lambda}R^\ast \psi\rangle.
\end{equation}
\end{proposition}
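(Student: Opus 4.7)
The strategy parallels the proof of Proposition \ref{pro: Q2}. Writing $\xi_\lambda := T_{1;\lambda}^\ast R^\ast \psi$, I compute
\[
\partial_\lambda \langle \xi_\lambda, \mathbb{T}_2^< \xi_\lambda\rangle = -\langle \xi_\lambda, [\mathbb{T}_2^<, B_1]\xi_\lambda\rangle + \mathrm{h.c.}
\]
Decomposing $\mathbb{T}_2^< = X + X^\ast$ with $X = -\int V\varphi\, a_\uparrow(u^<_x)a_\uparrow(v_x)a_\downarrow(u^<_y)a_\downarrow(v_y)$, and noting that $X$ and $B_1$ are both products of four annihilation operators, one has $[X, B_1] = 0$. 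Hence only $[X^\ast, B_1]$ and its Hermitian conjugate contribute. Expanding this commutator via Wick contractions, the crucial simplification is that by the disjoint momentum supports of $\hat u^>_\sigma, \hat v_\sigma$, and $\hat u^<_\sigma$, the anticommutators $\{a_\sigma(u^>_z), a^\ast_\sigma(v_x)\}$ and $\{a_\sigma(v_z), a^\ast_\sigma(u^<_x)\}$ vanish identically. The only surviving single contractions are $\{a_\sigma(v_z), a^\ast_\sigma(v_x)\} = v_\sigma(z;x)$ and $\{a_\sigma(u^>_z), a^\ast_\sigma(u^<_x)\} = A_\sigma(z;x)$, where $\hat A_\sigma(k) = \hat u^>_\sigma(k)\hat u^<_\sigma(k)$ is supported in the shell $\{|k|\in [2k_F^\sigma, 3k_F^\sigma]\}$; the smoothness of the cutoffs ensures $\|A_\sigma\|_1 \leq C$ and $\|A_\sigma\|_\infty \leq C\rho$ (as in \cite[Proposition 4.1]{Gia1}).

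I would organize the resulting terms by the number of contractions. The fully-contracted term reduces (up to sign) to
\[
\mathrm{I}_{\mathrm{main}} = \int dxdydzdz'\, V(x-y)\varphi(x-y)\varphi(z-z')(A_\uparrow v_\uparrow)(z;x)(A_\downarrow v_\downarrow)(z';y).
\]
Crucially, unlike the analogous term for $\mathbb{Q}_2$ in Proposition \ref{pro: Q2} (where $(u^>_\sigma)^2 = \delta - \nu_\sigma$ generated a local $\delta$-contribution), here $A_\sigma$ has no $\delta$-part: its Fourier transform is concentrated at momenta $|k|\sim k_F$. This forces a momentum-space estimate. After using translation invariance to extract one factor of $L^3$, one applies Cauchy--Schwarz in Fourier, bounding $\|\widehat{A_\sigma v_\sigma}\|_\infty \leq \|A_\sigma v_\sigma\|_1 \leq \|A_\sigma\|_1 \|v_\sigma\|_\infty \leq C\rho$ and noting that $\widehat{A_\sigma v_\sigma}$ is supported at momenta $|s|\leq Ck_F$. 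Hence
\[
|\mathrm{I}_{\mathrm{main}}| \leq C L^3 \|V\varphi\|_1 \cdot \rho^2 \cdot \frac{1}{L^3}\sum_{|s|\leq Ck_F}|\hat\varphi(s)|,
\]
and $\sum_{|s|\leq Ck_F}|\hat\varphi(s)| \leq (CL^3\rho)^{1/2}(L^3\|\varphi\|_2^2)^{1/2} \leq CL^3\rho^{1/3}$ by $\|\varphi\|_2 \leq C\rho^{-1/6}$ from Lemma \ref{lem: bound phi}, giving $|\mathrm{I}_{\mathrm{main}}| \leq CL^3\rho^{7/3}$. The three-contraction terms (which carry one uncontracted bilinear such as $a^\ast_\sigma(u^<_\cdot)a_\sigma(u^<_\cdot)$) are treated by the same argument combined with $\|a_\sigma(u^<_\cdot)\|, \|a_\sigma(v_\cdot)\| \leq C\rho^{1/2}$.

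The terms with two or fewer contractions are bounded by mimicking the analysis of $\mathrm{I}_a, \mathrm{I}_b, \mathrm{II}_a, \mathrm{II}_b$ in the proof of Proposition \ref{pro: Q2}. Those whose remaining operator content involves only $a_\sigma(u^<_\cdot)$ and $a_\sigma(v_\cdot)$ are estimated via Cauchy--Schwarz, $\|V\varphi\|_1 \leq C$, and $\|a_\sigma(u^<_\cdot)\|, \|a_\sigma(v_\cdot)\| \leq C\rho^{1/2}$, producing the $C\rho^{4/3}\langle\xi_\lambda,\mathcal{N}\xi_\lambda\rangle$ contribution. Those retaining one (or more) $a_\sigma(u^>_\cdot)$ operator from $B_1$ are bounded by a momentum-space Cauchy--Schwarz like that used for $\mathrm{I}_a$ of Proposition \ref{pro: Q2}, combined with $\|a_\sigma(u^>_\cdot)\psi\|^2 \leq C\langle\psi, \mathcal{N}_>\psi\rangle \leq C\rho^{-2/3}\langle\psi,\mathbb{H}_0\psi\rangle$ from Proposition \ref{pro: N}, yielding the $C\rho\|\mathcal{N}^{1/2}\xi_\lambda\|\|\mathbb{H}_0^{1/2}\xi_\lambda\|$ contribution. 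The principal obstacle is the momentum-space analysis of $\mathrm{I}_{\mathrm{main}}$: the absence of a $\delta$-component in $A_\sigma$ is precisely what permits the extra factor $\rho^{1/3}$ to be gained over the naive $L^3\rho^2$ bound, and without this the optimal lower bound would be out of reach.
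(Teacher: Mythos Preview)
Your proposal is correct and follows the same overall architecture as the paper: compute $[\mathbb{T}_2^<, B_1]$, observe that only the mixed contractions $v_\sigma$ and $A_\sigma = u^{<,>}_\sigma$ survive (since $\hat u^>_\sigma \hat v_\sigma = 0$ and $\hat v_\sigma \hat u^<_\sigma = 0$), and classify the resulting terms by their remaining operator content. Your estimates of the single- and double-contraction terms, yielding $C\rho\|\mathcal{N}^{1/2}\xi_\lambda\|\|\mathbb{H}_0^{1/2}\xi_\lambda\|$ and $C\rho^{4/3}\langle\xi_\lambda,\mathcal{N}\xi_\lambda\rangle$ respectively, match the paper's treatment of the terms it calls $\mathrm{I}_a,\mathrm{I}_b,\mathrm{II}_a,\mathrm{II}_b$.

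The one genuine difference is your handling of the fully contracted constant $\mathrm{I}_{\mathrm{main}}$. You pass to Fourier space, exploit the support of $\widehat{A_\sigma v_\sigma}$ in $\{|p|\lesssim k_F\}$, and then apply Cauchy--Schwarz to $\sum_{|p|\lesssim k_F}|\hat\varphi(p)|$ together with $\|\varphi\|_2\leq C\rho^{-1/6}$. This is correct, but more elaborate than necessary. The paper stays entirely in configuration space: bounding $|v_\uparrow||v_\downarrow|\leq C\rho^2$ pointwise and then using
\[
\int dz\,dz'\,|\varphi(z-z')|\,|u^{<,>}_\uparrow(z;x)|\,|u^{<,>}_\downarrow(z';y)|\leq \|\varphi\|_1\|u^{<,>}_\uparrow\|_2\|u^{<,>}_\downarrow\|_2\leq C\rho^{-2/3}\cdot\rho^{1/2}\cdot\rho^{1/2}=C\rho^{1/3},
\]
which immediately gives $|\mathrm{I}_{\mathrm{main}}|\leq CL^3\rho^{7/3}$ after integrating $V\varphi$ in $x,y$. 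So what you identify as ``the principal obstacle'' is in fact a one-line configuration-space estimate; the key input is simply $\|u^{<,>}_\sigma\|_2\leq C\rho^{1/2}$, which encodes the smallness of the overlap shell. Your Fourier argument buys nothing extra here, though it is a perfectly valid alternative.
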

\begin{proof}
As in the previous proofs, we will write $\xi_\lambda$ in place of $T^\ast_{1;\lambda}R^\ast\psi$. Moreover, to have a shorter notation, we write $\widetilde{V}$ to denote $V\varphi$; we will use several times that $\|\widetilde{V}\|_1 \leq \|V\|_1\|\varphi\|_\infty \leq C$. We now compute 
\begin{equation}\label{eq: deriv lambda T2<}
  \partial_\lambda\langle \xi_\lambda, \mathbb{T}^<_2 \xi_\lambda\rangle = -\langle \xi_\lambda, [\mathbb{T}_{2}^<, B_1]\xi_\lambda\rangle + \mathrm{c.c.}
\end{equation}
The structure of the error terms is the same as those in Proposition \ref{pro: Q2}. We omit some details and we directly estimate them. The first error term corresponds to \eqref{eq: term I Q4} and it is 
\[
  \mathrm{I}_a =\int dxdydz\, \widetilde{V}(x-y)\varphi(z-z^\prime)v_\uparrow(x;z) \langle \xi_\lambda, a_\uparrow^\ast(u_x^<)a_\downarrow^\ast(u_y^<) a^\ast_\downarrow( v_y)b_\downarrow(\varphi_z) a_\uparrow(u_{z}^>)\xi_\lambda\rangle.
\]
Proceeding similarly as for the term \eqref{eq: term I Q4} in Proposition \ref{pro: Q2}, we can then write 
\[
  \mathrm{I}_{a} \leq \sqrt{\frac{1}{L^3}\sum_k\left\|\int dxdy\, e^{-ik\cdot x} \widetilde{V}(x-y) a_\downarrow( v_y) a_\downarrow(u_y^<) a_\uparrow(u_x^<)\xi_\lambda\right\|^2}\sqrt{\frac{1}{L^3}\sum_k\left\| \int dz\, e^{-ik\cdot z}\, b_\downarrow(\varphi_z)a_\uparrow(u_z^>)\xi_\lambda\right\|^2}.
\]
Using now \eqref{eq: CS Ia int dz Q2} together with $\|a_\sigma(u^<_\cdot)\|\leq C\rho^{1/2}$ and
\begin{equation}\label{eq: int vv 6as T2<}
\frac{1}{L^3}\sum_k\left\|\int dxdy\, e^{-ik\cdot x} \widetilde{V}(x-y) a_\downarrow( v_y) a_\downarrow(u_y^<) a_\uparrow(u_x^<)\xi_\lambda\right\|^2
\leq C\rho^2 \|\widetilde{V}\|_1^2 \langle\xi_\lambda, \mathcal{N} \xi_\lambda\rangle,
\end{equation}
we get 
\[
  |\mathrm{I}_a| \leq C\|V\|_1\rho\|\mathcal{N}^{\frac{1}{2}}\xi_\lambda\|\|\mathbb{H}_0^{\frac{1}{2}}\xi_\lambda\| \leq C\rho\|\mathcal{N}^{\frac{1}{2}}\xi_\lambda\|\|\mathbb{H}_0^{\frac{1}{2}}\xi_\lambda\|.
\]
Another possible error term is the one corresponding to \eqref{eq: term Ib Q4}, i.e., 
\begin{equation} 
  \mathrm{I}_b = \int dxdydzdz^\prime\, \widetilde{V}(x-y)\varphi(z-z^\prime)v_\uparrow(x;z)v_\downarrow(y;z^\prime)\langle \xi_\lambda, a^\ast_\uparrow(u^<_x)a^\ast_{\downarrow}(u^<_y) a_\downarrow(u^>_{z^\prime})a_\uparrow(u^>_{z})\xi_\lambda\rangle.
\end{equation}
We can proceed as for \eqref{eq: term Ib Q4} in Proposition \ref{pro: Q2}. By Cauchy-Schwarz inequality and \eqref{eq: conv L2 norm}, we find that 
\begin{multline*}
  |\mathrm{I}_{b}| \leq \left(\int dxdydz^\prime\, \widetilde{V}(x-y)|v_\downarrow(y;z^\prime)|^2 \|a_\uparrow(u_x^<)a_{\downarrow}(u_y^<)\xi_\lambda\|^2\right)^{\frac{1}{2}}\times 
  \\
  \times \left(\int dxdydz^\prime\, \widetilde{V}(x-y) \left\|\int dz\, \varphi(z-z^\prime)v_\uparrow(x;z) a_\uparrow(u^>_{z})\right\|^2\|a_\downarrow(u^>_{z^\prime})\xi_\lambda\|^2 \right)^{\frac{1}{2}}.
\end{multline*}
Thus,
\[
  |\mathrm{I}_{b}| \leq C \|\widetilde{V}\|_1\|\varphi\|_2\|v_\uparrow\|_2\|v_\downarrow\|_2 \|u_\uparrow^<\|_2\|\mathcal{N}^{\frac{1}{2}}\xi_\lambda\|\|\mathcal{N}_>^{\frac{1}{2}}\xi_\lambda\| \leq C\rho^{1+\frac{1}{2}-\frac{1}{6}}\|\mathcal{N}^{\frac{1}{2}}\xi_\lambda\|\|\mathcal{N}_>^{\frac{1}{2}}\xi_\lambda\|\leq  C\rho\|\mathcal{N}^{\frac{1}{2}}\xi_\lambda\|\|\mathbb{H}_0^{\frac{1}{2}} \xi_\lambda\|,
\]
using also \eqref{eq: u> wrt N>} and \eqref{eq: N>}. The next error term we consider, corresponds to \eqref{eq: IIb Q2} in Proposition \ref{pro: Q2}. Explicitly:
\begin{eqnarray*}
  \mathrm{II}_a &=& \int dxdydzdz^\prime\, \widetilde{V}(x-y)\varphi(z-z^\prime)u^{<,>}_\uparrow(z;x) \langle \xi_\lambda, a^\ast_\downarrow(u^<_y)a_\downarrow(u^>_{z^\prime}) a_\uparrow( v_z)a_\downarrow( v_{z^\prime}) a^\ast_{\downarrow}( v_y)a^\ast_\uparrow( v_x)\xi_\lambda\rangle,
\end{eqnarray*}
where we wrote $u^{<,>}_\sigma$ to denote: 
\[
  u^{<,>}_\sigma(x;y) = \frac{1}{L^3}\sum_k \hat{u}^<_\sigma(k)\hat{u}^>_\sigma(k) e^{ik\cdot(x-y)}.
\]
To estimate $\mathrm{II}_a$, it is convenient to rewrite it in normal order. We have
\begin{eqnarray*}
\mathrm{II}_a &=& \int dxdydzdz^\prime\, \widetilde{V}(x-y)\varphi(z-z^\prime)u^{<,>}_\uparrow(z;x) \bigg\{\langle \xi_\lambda, a^\ast_\downarrow(u^<_y) a^\ast_{\downarrow}( v_y)a^\ast_\uparrow( v_x)a_\downarrow(u^>_{z^\prime}) a_\uparrow( v_z)a_\downarrow( v_{z^\prime}) \xi_\lambda\rangle
\\
&&\qquad  + v_\downarrow(z^\prime;y)v_\uparrow(z;x)\langle \xi_\lambda, a^\ast_\downarrow(u^<_y)a_\downarrow(u^>_{z^\prime})  \xi_\lambda\rangle - v_\downarrow(z^\prime;y)\langle \xi_\lambda, a^\ast_\downarrow(u^<_y)a_\downarrow(u^>_{z^\prime})a^\ast_\uparrow(v_x)a_\uparrow(v_z)  \xi_\lambda\rangle
\\
&&\qquad + v_\uparrow(z;x)\langle \xi_\lambda, a^\ast_\downarrow(u^<_y)a^\ast_\downarrow(v_y)a_\downarrow(u^>_{z^\prime})a_\downarrow(v_{z^\prime})  \xi_\lambda\rangle\bigg\} = \mathrm{II}_{a;1} + \mathrm{II}_{a;2} + \mathrm{II}_{a;3} + \mathrm{II}_{a;4}. 
\end{eqnarray*}
We now estimate all of them. Similarly as for the term $\mathrm{I}_a$ above, we can use that $|\hat{u}^{<}_\sigma(k) \hat{u}^>_\sigma(k)| \leq 1$ together with Cauchy-Schwarz's inequality and write 
\begin{multline}
  |\mathrm{II}_{a;1}| 
  \leq \sqrt{\frac{1}{L^3}\sum_k \left \| \int dxdy\, e^{ik\cdot y} \widetilde{V}(x-y) a_\downarrow(u^<_y) a_\uparrow( v_x) a_{\downarrow}( v_y)\xi_\lambda \right\|^2}\times 
  \\
  \times\sqrt{\frac{1}{L^3}\sum_k\left\|\int dzdz^\prime e^{ik\cdot z}\varphi(z-z^\prime)a_\downarrow(u^>_{z^\prime})a_\uparrow( v_z)a_\downarrow( v_{z^\prime})\xi_\lambda\right\|^2}.
\end{multline}
Using then the first bound in \eqref{eq: bounds av} and proceeding similarly as in \eqref{eq: int vv 6as T2<} for both the terms in the right hand side, we get 
\[
  |\mathrm{II}_{a;1}| \leq C\rho^2 \|V\|_1\|\varphi\|_1 \|\mathcal{N}^{\frac{1}{2}}\xi_\lambda\|\|\mathcal{N}_>^{\frac{1}{2}}\xi_\lambda\|\leq C\rho^2 \|\varphi\|_1 \|\mathcal{N}^{\frac{1}{2}}\xi_\lambda\|\|\mathcal{N}_>^{\frac{1}{2}}\xi_\lambda\|\leq C\rho\|\mathcal{N}^{\frac{1}{2}}\xi_\lambda\|\|\mathbb{H}_0^{\frac{1}{2}}\xi_\lambda\|,
\]
where we also used Lemma \ref{lem: bound phi} and \eqref{eq: N>}.
We now estimate $\mathrm{II}_{b;2}$. Using the bounds in \eqref{eq: bounds av}, with the aid of Cauchy-Schwarz inequality, we get
\begin{eqnarray*}
  |\mathrm{II}_{a;2}| &\leq& C\rho \left(\int dxdydzdz^\prime, \widetilde{V}(x-y) |\varphi(z-z^\prime)||u^{<,>}_\uparrow(z;x)|| v_\downarrow(z^\prime;y)|\|a_\downarrow(u^<_y)\xi_\lambda\|^2\right)^{\frac{1}{2}}\times 
  \\
  && \times \left( \int dxdydzdz^\prime, \widetilde{V}(x-y) |\varphi(z-z^\prime)||u^{<,>}_\uparrow(z;x)|| v_\downarrow(z^\prime;y)| \|a_\downarrow(u^>_{z^\prime})\xi_\lambda\|^2\right)^{\frac{1}{2}}
  \\
  &\leq& C\rho \|\widetilde{V}\|_1 \|\varphi\|_1 \|u^{<,>}_\uparrow\|_2 \|v_\downarrow\|_2 \|\mathcal{N}^{\frac{1}{2}}\xi_\lambda\| \|\mathcal{N}_>^{\frac{1}{2}} \xi_\lambda\| \leq C\rho^{\frac{4}{3}}\|\mathcal{N}^{\frac{1}{2}}\xi_\lambda\| \|\mathcal{N}_>^{\frac{1}{2}} \xi_\lambda\| \leq C\rho\|\mathcal{N}^{\frac{1}{2}}\xi_\lambda\| \|\mathbb{H}_0^{\frac{1}{2}} \xi_\lambda\|,
\end{eqnarray*}
where we used $\|u^{<,>}_\uparrow\|_2 \leq C\rho^{1/2}$, Lemma \ref{lem: bound phi} and 
\begin{equation}\label{eq: CS Vuv}
  \int dxdy\, f(x-y)||u^{<,>}_\uparrow(z;x)|| v_\downarrow(z^\prime;y)|\leq \|\widetilde{V}\|_1\|u^{<,>}_\uparrow\|_2 \|v_\downarrow\|_2,
\end{equation}
with $f= \widetilde{V}$ or $f=\varphi$.
The estimate for $\mathrm{II}_{a;3}$ can be done in the same way, using $\|a_\sigma(v_\cdot)a_\sigma^\ast(v_\cdot)\|\leq C\rho$ in place of $\|v_\sigma\|_\infty\leq C\rho$. We omit the details. For the term $\mathrm{II}_{a;4}$, we can use also that 
\[
  \left|\int dx \, u^{<,>}_\uparrow(z;x) v_\uparrow(z;x)\right| = \left|\int dz\, u^{<,>}_\uparrow(z;x) v_\uparrow(z;x)\right|\leq \|u^{<,>}_\uparrow\|_2 \|v_\uparrow\|_2,
\]
and conclude that 
\[
  |\mathrm{II}_{a;4}| \leq C\rho \|\mathcal{N}^{\frac{1}{2}}\xi_\lambda\|\|\mathbb{H}_0^{\frac{1}{2}}\xi_\lambda\|,
\]
which implies  
\[
  |\mathrm{II}_a| \leq C\rho \|\mathcal{N}^{\frac{1}{2}}\xi_\lambda\|\|\mathbb{H}_0^{\frac{1}{2}}\xi_\lambda\|.
\]
We now consider the error term which corresponds to \eqref{eq: term IIa Q2}: 
\[
  \mathrm{II}_b = -2\int dxdydzdz^\prime\,  V(x-y)\varphi(z-z^\prime)u^{<,>}_\downarrow(z^\prime;y)u^{<,>}_\uparrow(z;x) \langle \xi_\lambda, a_\uparrow( v_z)a_\downarrow( v_{z^\prime})a^\ast_{\downarrow}( v_y)a^\ast_\uparrow( v_x)\xi_\lambda\rangle.
\]
We can then put $\mathrm{II}_a$ partially in normal order and get, as for $\eqref{eq: term IIa Q2}$: 
\begin{eqnarray*}
  \mathrm{II}_b &=& 2\int dxdydzdz^\prime\,  V(x-y)\varphi(z-z^\prime)u^{<,>}_\downarrow(z^\prime;y)u^{<,>}_\uparrow(z;x) \bigg\{\langle \xi_\lambda, a^\ast_{\downarrow}( v_y)a_\uparrow( v_z)a^\ast_\uparrow( v_x)a_\downarrow( v_{z^\prime})\xi_\lambda\rangle
  \\
  && \qquad + v_\downarrow(z^\prime;y)\langle \xi_\lambda, a^\ast_\uparrow( v_x) a_\uparrow( v_z)\xi_\lambda\rangle
  -v_\downarrow(z^\prime;y) v_\uparrow(z;x)\bigg\}=: \mathrm{II}_{b;1} + \mathrm{II}_{b;2} + \mathrm{I}_{\mathrm{main}}.
\end{eqnarray*}
We  start to estimate $\mathrm{II}_b$ by considering $\mathrm{II}_{b;1}$. In the following, we are going to use that $\hat{u}^{<,>}$ is supported only for momenta in $2k_F^\sigma < |k| < 3k_F^\sigma$. We have 
\begin{eqnarray*}
  |\mathrm{II}_{b;1}| &\leq&  C\rho\int dxdydzdz^\prime\, |\widetilde{V}(x-y)||\varphi(z-z^\prime)||u^{<,>}_\downarrow(z^\prime;y)||u^{<,>}_\uparrow(z;x)|(\|a_{\downarrow}( v_y)\xi_\lambda\|^2 + \|a_\downarrow( v_{z^\prime})\xi_\lambda\|^2)
  \\
  &\leq& C\rho\|u^{<,>}_\uparrow\|_2 \|u^{<,>}_\downarrow\|_2 \|\varphi\|_1\|\widetilde{V}\|_1\langle \xi_\lambda,\mathcal{N}\xi_\lambda\rangle \leq C\|V\|_1\rho^{\frac{4}{3}}\langle \xi_\lambda,\mathcal{N}\xi_\lambda\rangle \leq C\rho^{\frac{4}{3}}\langle \xi_\lambda,\mathcal{N}\xi_\lambda\rangle,
\end{eqnarray*}
where we also used the first bound in \eqref{eq: bounds av}, Lemma \ref{lem: bound phi}, together with analogous estimates to the one in \eqref{eq: CS Vuv}.
Next, we consider $\mathrm{II}_{b;2}$. Using again $\|v_\downarrow\|_\infty \leq C\rho$ in place of $\|a_\uparrow( v_z)a^\ast_\uparrow( v_x)\|\leq C\rho$, we can proceed as for $\mathrm{II}_{b;1}$ and get:
\[
  |\mathrm{II}_{b;2}|  \leq  C\rho^{\frac{4}{3}}\langle \xi_\lambda,\mathcal{N}\xi_\lambda\rangle.
\]
To conclude the proof, we need to estimate the constant term:
\[
  \mathrm{I}_{\mathrm{main}}  = -2 \int dxdydzdz^\prime \, \widetilde{V}(x-y) \varphi(z-z^\prime) u^{<,>}_\uparrow(z;x)u^{<,>}_\downarrow(z^\prime;y)v_\uparrow(x;z)v_\downarrow(y;z^\prime).
\]
Using again the bounds in \eqref{eq: CS Vuv} and \eqref{eq: bounds av}, we get 
\[
  |\mathrm{I}_{\mathrm{main}}| \leq CL^3\rho^2 \|\widetilde{V}\|_1\|\varphi\|_1 \|u^{<,>}_\uparrow\|_2\|u^{<,>}_\downarrow\|_2 \leq C\|V\|_1L^3 \rho^{\frac{7}{3}} \leq CL^3\rho^{\frac{7}{3}}.
\]
Putting all the estimates together, we can conclude the proof.
\end{proof}
\subsection{Propagation of the estimates: conclusions}
We are now in a position to establish propagation estimates for the operators $\mathbb{H}_0$ and $\mathbb{Q}_4$, which will be used to control the error terms appearing in Propositions \ref{pro: H0}, \ref{pro: Q4} and \ref{pro: Q2} and to prove the bound \eqref{eq: N> lambda 1} in Proposition \ref{pro: N}.
\begin{proposition}\label{pro: propagation est}
  Let $\lambda\in [0,1]$. Let $\psi$ be an approximate ground state as in Definition \ref{def: approx gs}. Under the  assumptions of Theorem \ref{thm: optimal lw bd}, it holds that
  \begin{equation}\label{eq: propagation H0 and Q1}
   \langle T^\ast_{1;\lambda}R^\ast \psi,\mathbb{H}_0T^\ast_{1;\lambda}R^\ast \psi\rangle \leq CL^3\rho^2, \qquad  \langle T^\ast_{1;\lambda}R^\ast \psi,\mathbb{Q}_{4}T^\ast_{1;\lambda}R^\ast \psi\rangle \leq CL^3\rho^2.
  \end{equation}
\end{proposition}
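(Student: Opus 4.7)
The plan is to run a Grönwall-type closure of the propagation estimates established in Propositions \ref{pro: H0}--\ref{pro: T2<}. I set $\xi_\lambda := T^\ast_{1;\lambda}R^\ast \psi$, $f(\lambda) := \langle \xi_\lambda, \mathbb{H}_0 \xi_\lambda\rangle$, $g(\lambda) := \langle \xi_\lambda, \mathbb{Q}_4\xi_\lambda\rangle$ and $M := \sup_{\lambda \in [0,1]}(f(\lambda) + g(\lambda))$. Lemma \ref{lem: a priori est} gives $f(0)+g(0) \le CL^3\rho^2$, and Propositions \ref{pro: H0} and \ref{pro: Q4} combined with Duhamel's formula yield
\[
  f(\lambda) + g(\lambda) = f(0) + g(0) + \int_0^\lambda\! d\lambda' \langle \xi_{\lambda'}, (\mathbb{T}_1 + \mathbb{T}_2)\xi_{\lambda'}\rangle + \int_0^\lambda\!d\lambda' \langle \xi_{\lambda'}, (\mathcal{E}_{\mathbb{H}_0} + \mathcal{E}_{\mathbb{Q}_4})\xi_{\lambda'}\rangle.
\]

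The central step is to reorganize $\mathbb{T}_1 + \mathbb{T}_2$. Using $\mathbb{T}_2 = \mathbb{T}_2^> + \mathbb{T}_2^< + \mathfrak{E}_{\mathbb{T}_2}$ and $\mathbb{Q}_{2;>} = \mathbb{Q}_2 - \mathbb{Q}_{2;<} - \mathfrak{E}_{\mathbb{Q}_2}$ from Proposition \ref{pro: reg}, I write
\[
  \mathbb{T}_1 + \mathbb{T}_2 = -\mathbb{Q}_2 + (\mathbb{T}_2^< + \mathbb{Q}_{2;<}) + (\mathbb{T}_1 + \mathbb{T}_2^> + \mathbb{Q}_{2;>}) + \mathfrak{E}_{\mathbb{T}_2} + \mathfrak{E}_{\mathbb{Q}_2}.
\]
The scattering cancellation (Proposition \ref{pro: scatt canc}) renders the third parenthesis of size $O(L^3\rho^{7/3}) + \delta f(1)$, and Proposition \ref{pro: reg} bounds $\mathfrak{E}_{\mathbb{T}_2}, \mathfrak{E}_{\mathbb{Q}_2}$ similarly. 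The combination $\mathbb{T}_2^< + \mathbb{Q}_{2;<} = \int V_\varphi(x-y)\, a_\uparrow(u^<_x)a_\uparrow(v_x)a_\downarrow(u^<_y)a_\downarrow(v_y) + \mathrm{h.c.}$, with $V_\varphi = V(1-\varphi)$, involves only operators obeying $\|a_\sigma(u^<_\cdot)\|, \|a_\sigma(v_\cdot)\| \leq C\rho^{1/2}$, and $\|V_\varphi\|_1 \leq C$, so its expectation is bounded directly by $CL^3\rho^2$.

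For the leading $-\mathbb{Q}_2$ contribution, Cauchy--Schwarz together with $\|a^\ast_\sigma(v_\cdot) a^\ast_{\sigma'}(v_\cdot)\xi\| \leq C\rho\|\xi\|$ (from $\|v_x\|^2 = \rho_\sigma$ and the fermionic CAR) and $\int V(x-y)\|a_\downarrow(u_y) a_\uparrow(u_x)\xi_\lambda\|^2 dxdy = g(\lambda)$ yields $|\langle \xi_\lambda, \mathbb{Q}_2\xi_\lambda\rangle| \leq CL^{3/2}\rho\, g(\lambda)^{1/2} \leq \tfrac14 g(\lambda) + CL^3\rho^2$ after AM--GM. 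The errors $\mathcal{E}_{\mathbb{H}_0}, \mathcal{E}_{\mathbb{Q}_4}$ are controlled using the bounds in Propositions \ref{pro: H0} and \ref{pro: Q4}, together with the non-optimal estimate $\langle \xi_\lambda, \mathcal{N}\xi_\lambda\rangle \leq CL^3\rho^{7/6}$ from Lemma \ref{lem: a priori est} and Proposition \ref{pro: N}; AM--GM absorbs the $g(\lambda)^{1/2}$-type factors and leaves $CL^3\rho^{7/3} + (\delta + C\rho^{1/2})M + \tfrac14 g(\lambda)$. Collecting everything and taking the supremum over $\lambda\in[0,1]$ produces $M \leq CL^3\rho^2 + (\tfrac12 + \delta + C\rho^{1/2})M$, from which $M \leq C'L^3\rho^2$ follows by choosing $\delta$ and then $\rho$ small enough.

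The main obstacle is the rearrangement step above: the residual operator $\mathbb{Q}_{2;>}$ surviving after the scattering cancellation is not directly dominated by $\mathbb{Q}_4$, since the $u^>$-operators lack good operator-norm bounds in the absence of a UV cutoff. The idea is to trade $\mathbb{Q}_{2;>}$ for the normal-ordered $\mathbb{Q}_2$ via Proposition \ref{pro: reg}, absorbing the near-Fermi-surface remainder into $\mathbb{T}_2^< + \mathbb{Q}_{2;<}$, which is harmless thanks to the operator bound $\|a_\sigma(u^<_\cdot)\| \leq C\rho^{1/2}$.
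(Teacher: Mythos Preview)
Your proposal is correct and follows essentially the same strategy as the paper: decompose $\mathbb{T}_1+\mathbb{T}_2$ via Proposition~\ref{pro: reg} into the scattering combination $\mathbb{T}_1+\mathbb{T}_2^>+\mathbb{Q}_{2;>}$, the near--Fermi-surface remainder, and $-\mathbb{Q}_2$, then close with the a priori bounds from Lemma~\ref{lem: a priori est}. The only technical differences are that the paper applies Gr\"onwall's lemma to the differential inequality $\partial_\lambda(f+g)\le CL^3\rho^2+C(f+g)$ rather than your supremum/bootstrap argument, and that the paper bounds $\mathbb{T}_2^<$ and $\mathbb{Q}_{2;<}$ separately via $CL^{3/2}\rho^{3/2}\|\mathcal{N}^{1/2}\xi_\lambda\|\le CL^3\rho^{2+1/12}$, whereas your observation that $\mathbb{T}_2^<+\mathbb{Q}_{2;<}=\int V_\varphi\, a_\uparrow(u^<)a_\uparrow(v)a_\downarrow(u^<)a_\downarrow(v)+\mathrm{h.c.}$ can be bounded by $CL^3\rho^2$ using only the operator norms $\|a_\sigma(u^<_\cdot)\|,\|a_\sigma(v_\cdot)\|\le C\rho^{1/2}$ is slightly cleaner.
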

\begin{proof} 
The proof is similar to the one in \cite[Proposition 5.9]{Gia1} and \cite[Proposition 5.8]{FGHP}, we recall the main steps.  We write $\xi_\lambda$ in place of $T^\ast_{1;\lambda}R^\ast\psi$.
  From Proposition \ref{pro: H0} and Proposition \ref{pro: Q4},  we have
  \begin{eqnarray}
    \partial_\lambda\langle\xi_\lambda, (\mathbb{H}_0 + \mathbb{Q}_{4})\xi_\lambda\rangle  &=&  \langle\xi_\lambda, (\mathbb{T}_1 + \mathbb{T}_2)\xi_\lambda\rangle + \langle \xi_\lambda, \mathcal{E}_{\mathbb{H}_0}\xi_\lambda\rangle + \langle \xi_\lambda, \mathcal{E}_{\mathbb{Q}_{4}}\xi_\lambda\rangle
    \\
    &=&  \langle\xi_\lambda, (\mathbb{T}_1 + \mathbb{T}_2^>)\xi_\lambda\rangle - \langle\xi_\lambda (\mathbb{T}_2^> - \mathbb{T}_2)\xi_\lambda\rangle + \langle \xi_\lambda, \mathcal{E}_{\mathbb{H}_0}\xi_\lambda\rangle + \langle \xi_\lambda, \mathcal{E}_{\mathbb{Q}_{4}}\xi_\lambda\rangle.\nonumber
      \end{eqnarray}
We now estimate the four quantities in the right hand side above. From Proposition \ref{pro: scatt canc}, we know that 
\[
  |\langle \xi_\lambda, (\mathbb{T}_1 + \mathbb{T}_2^>)\xi_\lambda\rangle | \leq C|\langle \xi_\lambda, \mathbb{Q}_{2;>}\xi_\lambda\rangle | + CL^{\frac{3}{2}}\rho^{\frac{3}{2}}\|\mathcal{N}_>^{\frac{1}{2}}\xi_\lambda\|\leq C|\langle \xi_\lambda, \mathbb{Q}_{2;>}\xi_\lambda\rangle | + C\langle \xi_\lambda,\mathbb{H}_0 \xi_\lambda\rangle + CL^3\rho^{\frac{7}{3}},
\]
where we also used \eqref{eq: N>} and Cauchy-Schwarz. Moreover, from Proposition \ref{pro: reg}, we get
\begin{eqnarray}
        |\langle\xi_\lambda, \mathbb{Q}_{2;>}\xi_\lambda\rangle| &\leq& |\langle\xi_\lambda, (\mathbb{Q}_{2;>} - \mathbb{Q}_2)\xi_\lambda\rangle| + |\langle\xi_\lambda, \mathbb{Q}_2\xi_\lambda\rangle|\leq |\langle \xi_\lambda, \mathbb{Q}_{2;<}\xi_\lambda\rangle| + CL^{\frac{3}{2}}\rho^{\frac{3}{2}}\|\mathcal{N}_>^{\frac{1}{2}}\xi_\lambda\| + C\langle \xi_\lambda, \mathbb{Q}_2 \xi_\lambda\rangle  \nonumber
        \\
        &\leq& |\langle \xi_\lambda, \mathbb{Q}_{2;<}\xi_\lambda\rangle| + |\langle \xi_\lambda, \mathbb{Q}_{2}\xi_\lambda\rangle| + CL^3\rho^{\frac{7}{3}} +  C\langle \xi_\lambda, \mathbb{H}_0 \xi_\lambda\rangle,\nonumber
        \end{eqnarray} 
where in the last estimate we used \eqref{eq: N>} and Cauchy-Schwarz. It is then easy to see by Cauchy-Schwarz's inequality and using that $\|a_\sigma( v_\cdot)\|, \|a_\sigma(u^<_\cdot)\|\leq C\rho^{1/2}$, that 
\[
  |\langle\xi_\lambda, \mathbb{Q}_{2;<} \xi_\lambda\rangle |\leq C \int dxdy\, |V(x-y)||\langle a_\uparrow(u_x^<)a_\uparrow( v_x)a_\downarrow(u_y^<) a_\downarrow( v_y)\xi_\lambda\rangle \leq CL^{\frac{3}{2}}\rho^{\frac{3}{2}} \|V\|_1  \|\mathcal{N}^{\frac{1}{2}}\xi_\lambda\|\leq CL^3\rho^{2+\frac{1}{12}},
\]
\[
  |\langle\xi_\lambda, \mathbb{Q}_{2} \xi_\lambda\rangle |\leq C \int dxdy\, |V(x-y)||\langle a_\uparrow(u_x)a_\uparrow( v_x)a_\downarrow(u_y) a_\downarrow( v_y)\xi_\lambda\rangle \leq CL^{\frac{3}{2}}\rho \|\mathbb{Q}_4^{\frac{1}{2}}\xi_\lambda\|\leq CL^3\rho^2 + \langle \xi_\lambda, \mathbb{Q}_4 \xi_\lambda\rangle.
\]
Note the in the first estimate above, we used the non-optimal estimate $\langle \xi_\lambda,\mathcal{N}\xi_\lambda\rangle \leq CL^3\rho^{7/6}$ (see \eqref{eq: est N}).
All together, we then find that 
\[
  |\langle \xi_\lambda, (\mathbb{T}_1 + \mathbb{T}_2^>)\xi_\lambda\rangle | \leq CL^3\rho^{2} + \langle \xi_\lambda,(\mathbb{H}_0 + \mathbb{Q}_4) \xi_\lambda\rangle.
\]
    From Proposition \ref{pro: reg} togehter with \eqref{eq: N>}, we get
    \begin{equation}
        |\langle \xi_\lambda, (\mathbb{T}_2^>  - \mathbb{T}_2)\xi_\lambda\rangle | \leq  |\langle\xi_\lambda, \mathbb{T}_2^< \xi_\lambda\rangle | + CL^{\frac{3}{2}}\rho^{\frac{3}{2}}\|\mathcal{N}^{\frac{1}{2}}_>\xi_\lambda\| \leq CL^3\rho^{2+\frac{1}{12}} + C\langle\xi_\lambda, \mathbb{H}_0 \xi_\lambda\rangle,
\end{equation} 
where we also estimated $|\langle\xi_\lambda, \mathbb{T}_2^< \xi_\lambda\rangle |\leq C L^3\rho^{2+1/12}$ proceeding similarly as for $\mathbb{Q}_{2;<}$ and using that $\|\varphi\|_\infty \leq C$ (see Lemma \ref{lem: bound phi}).
The term $\langle \xi_\lambda, \mathcal{E}_{\mathbb{H}_0}\xi_\lambda\rangle$ can be estimated as in \cite[Eq. (5.127)]{Gia1}, we omit the details:
\[
  |\langle \xi_\lambda, \mathcal{E}_{\mathbb{H}_0}\xi_\lambda\rangle|\leq CL^3\rho^2 + C\langle \xi_\lambda,\mathbb{H}_0 \xi_\lambda\rangle.
\]
Moreover, from Proposition \ref{pro: Q4}, we get
\[
        |\langle\xi_\lambda, \mathcal{E}_{\mathbb{Q}_{4}}\xi_\lambda\rangle| \leq CL^{\frac{3}{2}}\rho^{\frac{4}{3}}\|\mathbb{Q}^{\frac{1}{2}}_{4}\xi_\lambda\|\leq CL^3\rho^{\frac{8}{3}} + C\langle \xi_\lambda,  \mathbb{Q}_{4}\xi_\lambda\rangle. 
\]
Combining all the bounds together, we have
\begin{equation} 
\partial_\lambda\langle \xi_\lambda, (\mathbb{H}_0 + \mathbb{Q}_{4})\xi_\lambda\rangle \leq CL^3\rho^{2} + \langle\xi_\lambda, (\mathbb{H}_0 +\mathbb{Q}_{4})\xi_\lambda\rangle.
\end{equation}
By Gr\"onwall's Lemma and the a priori bounds in Lemma \ref{lem: a priori est}, we conclude the proof. 
\end{proof}
\begin{remark}[Propagation estimates for $\mathbb{Q}_4^{\sigma, \sigma^\prime}$]\label{rem: prop Q4sigmasigma'} Let $\mathbb{Q}_4^{\sigma,\sigma^\prime}$ be the operator defined in Proposition \ref{pro: fermionic transf}. Proceeding as in Proposition \ref{pro: propagation est} (this time considering $\mathbb{H}_0 + \mathbb{Q}_4^{\sigma,\sigma^\prime}$ in place of  $\mathbb{H}_0 + \mathbb{Q}_4$) and using Proposition \ref{pro: Q4sigmasigma'} together with Lemma \ref{lem: a priori est}, we obtain that for any $\lambda\in [0,1]$, 
\[
|\langle T^\ast_{1;\lambda} R^\ast \psi, \mathbb{Q}_{4}^{\sigma, \sigma^\prime}T^\ast_{1;\lambda} R^\ast \psi\rangle| \leq CL^3\rho^{2},
\]
for any approximate ground state $\psi$.
\end{remark}
\subsection{Estimate of the cubic term $\mathbb{Q}_3$}\label{sec: Q3}
\textcolor{black}{The main goal of this section is to show that the cubic term $\mathbb{Q}_3$ defined in \eqref{eq: def Qi} is of order $\mathcal{O}(L^3\rho^{7/3})$, up to some positive contributions. The analysis closely follows that of \cite[Section 7.2]{Gia1}. We nevertheless repeat the argument here, since we employ a  different cut-off for $\hat{u}^>_\sigma$ and do not introduce any cut-off for $\hat{v}_\sigma$, see Remark \ref{rem: comparisono Gia1 cut-off}. In the next proposition we estimate $|\langle R^\ast\psi, \mathbb{Q}_3 R^\ast\psi\rangle |$.}
\begin{proposition}[Estimate for $\mathbb{Q}_3$] \label{pro: Q3 final} Let $\psi$ be an approximate ground state as in  Definition \ref{def: approx gs}. Under the  assumptions of Theorem \ref{thm: optimal lw bd}, it holds that 
\begin{eqnarray*}
  |\langle R^\ast\psi, \mathbb{Q}_3 R^\ast\psi\rangle | &\leq&  CL^3 \rho^{\frac{7}{3}} + \delta\langle T^\ast_1 R^\ast \psi, (\mathbb{H}_0 +\mathbb{Q}_4)T^\ast_1 R^\ast \psi\rangle + C\rho\langle R^\ast\psi, \mathcal{N} R^\ast \psi\rangle +  C\rho\langle T^\ast_1 R^\ast\psi, \mathcal{N} T^\ast_1 R^\ast \psi\rangle
  \\
  && +C\int_0^1 d\lambda\,  \left(\rho\langle T^\ast_{1;\lambda}R^\ast \psi,\mathcal{N}T^\ast_{1;\lambda} R^\ast \psi \rangle + \rho^{\frac{1}{2}}\|\mathbb{Q}_4^{\frac{1}{2}}T^\ast_{1;\lambda}R^\ast \psi\|\|\mathbb{H}_0^{\frac{1}{2}}T^\ast_{1;\lambda}R^\ast \psi\|\right),
\end{eqnarray*} 
for any $0<\delta<1$.
\end{proposition}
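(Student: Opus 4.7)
The plan is to follow the strategy of \cite[Section 7.2]{Gia1} and split $\mathbb{Q}_3=\mathbb{Q}_{3;1}^< + \widetilde{\mathbb{Q}}_3$ via the momentum cutoff $\hat{\mathfrak{u}}^{<,>}$ introduced in \eqref{eq: def mathfrak u<>}, then estimate each piece by a different technique. The key observation is that every non-$v$ annihilation operator appearing in $\widetilde{\mathbb{Q}}_3$ is either at momenta $|k|\geq\tfrac{3}{2}k_F^{\sigma^\prime}$, which produces the gap $\bigl||k|^2-(k_F^{\sigma^\prime})^2\bigr|\geq c\rho^{2/3}$ and hence access to $\mathbb{H}_0$ through \eqref{eq: N>}, or of the form $a_{\sigma^\prime}(v_y)$, which is bounded by $\rho^{1/2}$; by contrast, the annihilation $a_{\sigma^\prime}(\mathfrak{u}^<_y)$ inside $\mathbb{Q}_{3;1}^<$ sits in the collar $k_F^{\sigma^\prime}<|k|<\tfrac{3}{2}k_F^{\sigma^\prime}$ where $\mathbb{H}_0$ has no usable gap, so a conjugation by $T_1$ is required.

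For $\widetilde{\mathbb{Q}}_3$ I would apply Cauchy--Schwarz pairing the three creations $a^\ast_\sigma(u_x)a^\ast_{\sigma^\prime}(u_y)a^\ast_\sigma(v_x)$ (bounded as $C\rho^{1/2}\|\mathbb{Q}_4^{1/2}R^\ast\psi\|$ after extracting $\|a_\sigma(v_x)\|\leq C\rho^{1/2}$ and recognising $\mathbb{Q}_4$ in the $V$-weighted pair) against the remaining annihilation operator. For the high-momentum first piece the bound \eqref{eq: N>} yields $C\rho^{1/6}\|\mathbb{Q}_4^{1/2}R^\ast\psi\|\|\mathbb{H}_0^{1/2}R^\ast\psi\|$; for the second piece, $\|a_{\sigma^\prime}(v_y)\|\leq C\rho^{1/2}$ gives instead $C\rho\|\mathbb{Q}_4^{1/2}R^\ast\psi\|\|\mathcal{N}^{1/2}R^\ast\psi\|$. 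A weighted Cauchy--Schwarz with parameter $\delta$ and the a priori bounds of Lemma~\ref{lem: a priori est} absorb both into $CL^3\rho^{7/3}+\delta\langle R^\ast\psi,\mathbb{Q}_4 R^\ast\psi\rangle+C\rho\langle R^\ast\psi,\mathcal{N}R^\ast\psi\rangle$; the $\mathbb{Q}_4$ expectation is then transferred to $T_1^\ast R^\ast\psi$ by Duhamel and Proposition~\ref{pro: Q4}, producing the $\delta$-term on $T_1^\ast R^\ast\psi$ together with an integral remainder of the form appearing in the statement.

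For $\mathbb{Q}_{3;1}^<$ the $\mathbb{H}_0$-gap is absent, so the plan is to write, by the fundamental theorem of calculus along $\lambda\mapsto T_{1;\lambda}^\ast R^\ast\psi$,
\begin{equation*}
\langle R^\ast\psi,\mathbb{Q}_{3;1}^< R^\ast\psi\rangle=\langle T_1^\ast R^\ast\psi,\mathbb{Q}_{3;1}^< T_1^\ast R^\ast\psi\rangle+\int_0^1 d\lambda\,\langle T_{1;\lambda}^\ast R^\ast\psi,[\mathbb{Q}_{3;1}^<,B_1-B_1^\ast]T_{1;\lambda}^\ast R^\ast\psi\rangle.
\end{equation*}
The boundary term at $\lambda=1$ is handled by the same Cauchy--Schwarz as above, now evaluated at $T_1^\ast R^\ast\psi$: the collar-width bound $\|\mathfrak{u}^<_{\sigma^\prime}\|_2\leq C\rho^{1/2}$ on the singleton side, together with the propagation estimates of Proposition~\ref{pro: propagation est}, produces precisely the $\delta\langle T_1^\ast R^\ast\psi,(\mathbb{H}_0+\mathbb{Q}_4)T_1^\ast R^\ast\psi\rangle+C\rho\langle T_1^\ast R^\ast\psi,\mathcal{N}T_1^\ast R^\ast\psi\rangle+CL^3\rho^{7/3}$ contributions.

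The hard part will be the integral term, where the commutator $[\mathbb{Q}_{3;1}^<,B_1-B_1^\ast]$ has to be computed and bounded. As in Propositions~\ref{pro: H0}--\ref{pro: Q2}, anticommuting one fermionic pair between the four operators of $\mathbb{Q}_{3;1}^<$ and the four operators of $B_1$ (or its adjoint) produces several families of operators carrying kernels built from $\varphi$, $V$, $V\varphi$ and the projectors $u^>_\sigma$, $u^<_\sigma$, $\mathfrak{u}^<_\sigma$, $v_\sigma$. Writing $\xi_\lambda:=T_{1;\lambda}^\ast R^\ast\psi$, each family I would estimate by Cauchy--Schwarz exposing $\|\mathbb{Q}_4^{1/2}\xi_\lambda\|$, $\|\mathbb{H}_0^{1/2}\xi_\lambda\|$ or $\|\mathcal{N}^{1/2}\xi_\lambda\|$, together with the operator-norm bounds of Lemma~\ref{lem: bound b phi}, the kernel bounds $\|\varphi\|_1\leq C\rho^{-2/3}$, $\|\varphi\|_2\leq C\rho^{-1/6}$, $\|\varphi\|_\infty\leq C$ from Lemma~\ref{lem: bound phi}, and the collar/projector estimates $\|\mathfrak{u}^<_\sigma\|_2\leq C\rho^{1/2}$ and $\|u^<_\sigma\|_1\leq C$ in the spirit of Remark~\ref{rem: comparisono Gia1 cut-off}. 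The principal obstacle is the combinatorial bookkeeping: each contraction redistributes momenta in a way that could in principle spoil either the collar localization or the $\mathbb{H}_0$-gap, and the cutoffs must be tracked carefully to reproduce the $\rho\langle\xi_\lambda,\mathcal{N}\xi_\lambda\rangle$ and $\rho^{1/2}\|\mathbb{Q}_4^{1/2}\xi_\lambda\|\|\mathbb{H}_0^{1/2}\xi_\lambda\|$ structure of the integrand in the statement.
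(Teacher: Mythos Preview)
Your overall strategy matches the paper's: split via $\hat{\mathfrak{u}}^{<,>}$, handle $\widetilde{\mathbb{Q}}_3$ by direct Cauchy--Schwarz and $\mathbb{Q}_{3;1}^<$ by the Duhamel identity, then estimate the commutator $[\mathbb{Q}_{3;1}^<,B_1-B_1^\ast]$ term by term using the $b$-operator bounds and the decomposition $u^>_\sigma=\delta-\alpha^<_\sigma$. The treatment of the boundary term at $\lambda=1$ and the commutator sketch are essentially what the paper does.

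There is, however, a genuine gap in your $\widetilde{\mathbb{Q}}_3$ step. After arriving at $\delta\langle R^\ast\psi,\mathbb{Q}_4 R^\ast\psi\rangle$, you propose to transfer it to $T_1^\ast R^\ast\psi$ via Duhamel and Proposition~\ref{pro: Q4}. But that proposition reads $\partial_\lambda T_{1;\lambda}\mathbb{Q}_4 T_{1;\lambda}^\ast = T_{1;\lambda}(\mathbb{T}_2+\mathcal{E}_{\mathbb{Q}_4})T_{1;\lambda}^\ast$, and while $\mathcal{E}_{\mathbb{Q}_4}$ is harmless, the leading piece $\mathbb{T}_2$ obeys only $|\langle\xi_\lambda,\mathbb{T}_2\xi_\lambda\rangle|\leq CL^{3/2}\rho\,\|\mathbb{Q}_4^{1/2}\xi_\lambda\|=O(L^3\rho^2)$. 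The resulting $\delta\int_0^1\langle\xi_\lambda,\mathbb{T}_2\xi_\lambda\rangle\,d\lambda=O(\delta L^3\rho^2)$ is neither $O(L^3\rho^{7/3})$ nor of the form $C\int(\rho\langle\mathcal{N}\rangle+\rho^{1/2}\|\mathbb{Q}_4^{1/2}\|\|\mathbb{H}_0^{1/2}\|)$ with $\delta$-independent $C$; it simply does not fit into the statement.

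The paper (via \cite[Proposition 7.2]{Gia1}) avoids this transfer altogether. For the high-momentum piece one bounds $\int dy\,\|a_{\sigma'}(\mathfrak{u}^>_y)R^\ast\psi\|^2$ not by \eqref{eq: N>} (which lands on $\mathbb{H}_0$ at $R^\ast\psi$) but by the propagation bound \eqref{eq: N> lambda 1}, which controls such $\mathcal{N}_>$-type quantities at \emph{any} $\lambda$ directly by $\rho^{-2/3}\langle T_1^\ast R^\ast\psi,\mathbb{H}_0 T_1^\ast R^\ast\psi\rangle+CL^3\rho^{5/3}$. This places $\mathbb{H}_0$ at $\lambda=1$ from the outset, so weighted Cauchy--Schwarz yields $\delta\langle T_1^\ast R^\ast\psi,\mathbb{H}_0 T_1^\ast R^\ast\psi\rangle+C_\delta\rho^{1/3}\langle R^\ast\psi,\mathbb{Q}_4 R^\ast\psi\rangle$, and the second term is $O(L^3\rho^{7/3})$ by the a priori bound, with no Duhamel on $\mathbb{Q}_4$ needed. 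A secondary remark: the second piece of $\widetilde{\mathbb{Q}}_3$ has creations $a_\sigma^\ast(u_x)a_{\sigma'}^\ast(v_y)a_\sigma^\ast(v_x)$, so there is no $V$-weighted $uu$ pair and hence no $\mathbb{Q}_4$; the correct bound there is simply $C\rho\langle R^\ast\psi,\mathcal{N}R^\ast\psi\rangle$.
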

\begin{proof} 
\textcolor{black}{Before starting the proof, recall that
\begin{equation}\label{eq: def Q3}
  \mathbb{Q}_3 = -\sum_{\sigma\neq\sigma^\prime} \int\, dxdy\, V(x-y)\left(a^\ast_\sigma(u_x) a^\ast_{\sigma^\prime}(u_y) a^\ast_{\sigma}( v_x)a_{\sigma^\prime}(u_y) - a^\ast_\sigma(u_x) a^\ast_{\sigma^\prime}( v_y) a^\ast_{\sigma}( v_x)a_{\sigma^\prime}( v_y)\right) + \mathrm{h.c.} \nonumber
\end{equation}
We split $\mathbb{Q}_{3}$ in two parts by writing $a_{\sigma^\prime}(u_y) = a_{\sigma^\prime}(\mathfrak{u}^{<}_y) + a_{\sigma^\prime}(\mathfrak{u}_y^{>})$, with $\hat{\mathfrak{u}}^{<}_\sigma(k)$ and $\hat{\mathfrak{u}}^{>}_\sigma(k)$ such that 
\begin{equation}\label{eq: def mathfrak u<>}
  \hat{\mathfrak{u}}^<_\sigma(k) = \begin{cases} 1 &k\notin\mathcal{B}_F^\sigma, \, |k| < (3/2)k_F^\sigma, \\ 0 &|k| \geq (3/2)k_F^\sigma. \end{cases}, \qquad \hat{\mathfrak{u}}^{>}_\sigma(k) = \begin{cases} 0 &k\notin\mathcal{B}_F^\sigma, \, |k| < (3/2)k_F^\sigma, \\ 1 &|k| \geq (3/2)k_F^\sigma.\end{cases}
\end{equation}
Thus, we write $\mathbb{Q}_3 = {\mathbb{Q}}_{3;1}^< + \widetilde{\mathbb{Q}}_{3}$, with
\begin{equation}\label{eq: Q 31 <}
  \mathbb{Q}_{3;1}^< = -\sum_{\sigma\neq\sigma^\prime}\int\, dxdy\, V(x-y) \langle R^\ast \psi, a^\ast_\sigma(u_x) a^\ast_{\sigma^\prime}(u_y) a^\ast_\sigma( v_x) a_{\sigma^\prime}(\mathfrak{u}_y^{<})R^\ast\psi\rangle.
\end{equation}}
 The estimates for $\widetilde{\mathbb{Q}}_{3}$ can be carried out as in \cite[Proposition 7.2]{Gia1}. We therefore omit the details and directly state the final resulting bound, which holds for any $0<\delta<1$:  
\begin{equation}\label{eq: est Q32 & Q31>}
  |\langle R^\ast\psi, \widetilde{\mathbb{Q}}_{3} R^\ast\psi\rangle \leq CL^3\rho^{\frac{7}{3}} + C\rho\langle R^\ast\psi,\mathcal{N}R^\ast\psi\rangle + \delta\langle T^\ast_1 R^\ast \psi,\mathbb{H}_0 T^\ast_1 R^\ast\psi\rangle.
\end{equation}
To estimate $\langle R^\ast \psi, \mathbb{Q}_{3;1}^< R^\ast \psi\rangle$, we need a more refined analysis than the one for $\widetilde{\mathbb{Q}}_{3}$, similarly as in \cite[Proposition 7.2]{Gia1}. By Duhamel's formula, and using the notation $\xi_\lambda = T^\ast_{1;\lambda}R^\ast \psi$, we have
  \begin{equation}\label{eq: Duhamel Q31<}
    \langle R^\ast\psi, \mathbb{Q}_{3;1}^< R^\ast \psi\rangle = \langle T^\ast_1 R^\ast\psi, \mathbb{Q}_{3;1}^< T^\ast R^\ast\psi\rangle - \int_0^1\, d\lambda\, \partial_\lambda\langle \xi_\lambda, \mathbb{Q}_{3;1}^< \xi_\lambda\rangle.
  \end{equation}
  The first contribution in the right side above can be estimated by Cauchy-Schwarz, as in \cite[Proposition 7.2]{Gia1}. More precisely, for any $0<\delta<1$, we have that
  \begin{eqnarray}
    \langle T^\ast_1 R^\ast\psi, \mathbb{Q}_{3;1}^< T^\ast R^\ast\psi\rangle \leq C\rho^{\frac{1}{2}}\|\mathbb{Q}_{4}^{\frac{1}{2}}T^\ast_1 R^\ast \psi\|\|\mathcal{N}^{\frac{1}{2}}T^\ast_1 R^\ast \psi\| \nonumber
    \leq \delta\langle T^\ast_1 R^\ast \psi, \mathbb{Q}_{4} T^\ast_1 R^\ast \psi \rangle + C\rho\langle T^\ast_1 R^\ast \psi,\mathcal{N} T^\ast_1 R^\ast \psi\rangle.
  \end{eqnarray}
  Since $\partial_\lambda\langle \xi_\lambda, \mathbb{Q}_{3;1}^< \xi_\lambda\rangle = -\langle \xi_\lambda, [\mathbb{Q}_{3;1}^< , B_1 - B_1^\ast]\xi_\lambda\rangle$,  we now compute:
  \begin{equation}\label{eq: two parts comm Q3}
    [\mathbb{Q}_{3;1}^<, B_1-B^\ast_1]   = -\int\, dxdydzdz^\prime\, V(x-y)\varphi(z-z^\prime) [a^\ast_\sigma(u_x) a^\ast_{\sigma^\prime}(u_y) a^\ast_\sigma( v_x) a_{\sigma^\prime}(\mathfrak{u}_y^<), a_\uparrow(u^>_z)a_\uparrow( v_z)a_\downarrow(u^>_{z^\prime})a_\downarrow( v_{z^\prime})],
  \end{equation} 
 We have
  \begin{align}\label{eq: 1 part comm Q3}
    &[a^\ast_\sigma(u_x) a^\ast_{\sigma^\prime}(u_y) a^\ast_\sigma( v_x) a_{\sigma^\prime}(\mathfrak{u}_y^<), a_\uparrow(u^>_z)a_\uparrow( v_z)a_\downarrow(u^>_{z^\prime})a_\downarrow( v_{z^\prime})] 
    \\
    &= -\bigg( \delta_{\sigma, \uparrow}\delta_{\sigma^\prime,\downarrow} u^>_\uparrow(z;x)u^>_\downarrow(z^\prime;y) - \delta_{\sigma, \downarrow}\delta_{\sigma^\prime, \uparrow} u^>_\downarrow(z^\prime, x) u^>_\uparrow(z;y)  -  \delta_{\sigma, \uparrow} u^>_\uparrow(z;x)a^\ast_\downarrow(u_y)a_\downarrow(u^>_{z^\prime}) \nonumber
    \\ 
    &\quad + \delta_{\sigma, \downarrow} u^>_\downarrow(z^\prime;x) a^\ast_\uparrow(u_y)a_\uparrow(u^>_z) + \delta_{\sigma^\prime, \uparrow}u^>_\uparrow(z;y) a_\downarrow^\ast(u_x)a_\downarrow(u^>_{z^\prime}) - \delta_{\sigma^\prime, \downarrow} u^>_\downarrow(z^\prime;y) a_\uparrow^\ast(u_x)a_\uparrow(u^>_{z})\bigg)a_\uparrow( v_z)a_\downarrow( v_{z^\prime})\times \nonumber
    \\
    &\quad \times a^\ast_\sigma( v_x) a_{\sigma^\prime}(\mathfrak{u}_y^<)- a^\ast_\sigma(u_x) a^\ast_{\sigma^\prime}(u_y)a_\uparrow(u^>_z)a_\downarrow(u^>_{z^\prime})\bigg(\delta_{\sigma,\uparrow} v_\uparrow(x;z) a_\downarrow( v_{z^\prime}) - \delta_{\sigma, \downarrow}v_\downarrow(x;z^\prime)a_\uparrow( v_z)\bigg)a_{\sigma^\prime}(\mathfrak{u}_y^<).\nonumber
  \end{align}
  Note that in the computation above, we used that some term are vanishing due to the fact that the support of $\hat{\mathfrak{u}}^<_\sigma$ and $\hat{u}^>_\sigma$ are disjoint (see \cite[Proposition 7.1]{Gia1} for more details). We now estimate all the error terms. From the first two lines in the right hand side of \eqref{eq: 1 part comm Q3}, we find two different types of error terms.  The first type of error is for instance 
\[
  \mathrm{I}_a = \int\, dxdydz\, V(x-y)u^>_\uparrow(z;x)\langle \xi_\lambda, a^\ast_\downarrow(u_y)b_\downarrow(\varphi_{z})a_\uparrow( v_z)a^\ast_\uparrow( v_x)a_\downarrow(\mathfrak{u}^<_y)\xi_\lambda\rangle,
 \] 
 where we used the definition of the operator $b(\varphi_\cdot)$ in \eqref{eq: def op b g}. 
 Writing $u^>_\uparrow = \delta - \alpha_\uparrow^<$, we write $\mathrm{I}_a = \mathrm{I}_{a;1} + \mathrm{I}_{a;2}$, respectively. Using the first bound in \eqref{eq: bounds av} and Lemma \ref{lem: bound b phi}, we find
\[
  |\mathrm{I}_{a;1}| \leq \int\, dxdy\, |V(x-y)|\|b_\downarrow(\varphi_{x})\|\|a_\uparrow( v_x)\|\|a^\ast_\uparrow( v_x)\|a_\downarrow(u_y)\xi_\lambda\|\|a_\downarrow(\mathfrak{u}^<_y)\xi_\lambda\| \leq  C\|V\|_1\rho^{\frac{4}{3}}\langle \xi_\lambda, \mathcal{N}\xi_\lambda\rangle \leq C\rho^{\frac{4}{3}}\langle \xi_\lambda,\mathcal{N}\xi_\lambda\rangle.
\]
The estimate for the term $\mathrm{I}_{a;2}$ can be done similarly, using also \eqref{eq: bounds alpha< nu}, i.e., 
\[
  |\mathrm{I}_{a;2}| \leq  \int\, dxdydz\, V(x-y)|\alpha^<_\uparrow(z;x)| \|a^\ast_\downarrow(u_y) \xi_\lambda\|\|b_\downarrow(\varphi_{z})a_\uparrow( v_z)a^\ast_\uparrow( v_x)a_\downarrow(\mathfrak{u}^<_y)\xi_\lambda\|\leq C\|V\|_1\|\alpha^<_\uparrow\|_1\rho^{\frac{4}{3}}\langle \xi_\lambda, \mathcal{N}\xi_\lambda\rangle.
\]
 We therefore get that 
\[
  |\mathrm{I}_a| \leq C\rho^{\frac{4}{3}}\langle \xi_\lambda, \mathcal{N}\xi_\lambda\rangle.
\]
The other type of error is coming from the first two terms in the right hand side of \eqref{eq: 1 part comm Q3}. We consider the first one and we rewrite it in normal order: 
  \begin{eqnarray}
    \mathrm{II}
    &=& \int\, dxdydzdz^\prime\, V(x-y)\varphi(z-z^\prime) u^>_\uparrow(z;x)u^>_\downarrow(z^\prime;y)\langle \xi_\lambda, a^\ast_\uparrow( v_x)a_\uparrow( v_z)a_\downarrow( v_{z^\prime})a_\downarrow(\mathfrak{u}^<_y)\xi_\lambda\rangle\nonumber
    \\
    && -\int\, dxdydzdz^\prime\, V(x-y)\varphi(z-z^\prime) u^>_\uparrow(z;x)u^>_\downarrow(z^\prime;y)\omega_\uparrow(x;z) \langle \xi_\lambda, a_\downarrow( v_{z^\prime}) a_\downarrow(\mathfrak{u}^<_y)\xi_\lambda\rangle\equiv \mathrm{II}_{a} + \mathrm{II}_{b}\nonumber.
  \end{eqnarray}
We first notice that  $\mathrm{II}_b$ is vanishing, this is due to the fact that $a_\downarrow( v_{z^\prime})$ and $a_\downarrow(\mathfrak{u}^<_y)$ are supported for momenta in disjoint sets. For more details, we refer to \cite[Eqs. (7.25)--(7.26)]{Gia1}.
  To estimate $\mathrm{II}_{a}$, it is convenient to write each $u^>_\sigma = \delta - \alpha^<_\sigma$.  The estimate of $\mathrm{II}_a$ can be done as the estimate of the term \eqref{eq: term IIa Q2} in Proposition \ref{pro: Q2} using also the first bound in \eqref{eq: bounds av}. We omit the details and we directly write the final bound
\[
    |\mathrm{II}_{a}| \leq C\|V\|_1\rho \langle\xi_\lambda,\mathcal{N}\xi_\lambda\rangle \leq C\rho \langle\xi_\lambda,\mathcal{N}\xi_\lambda\rangle.
\]
The last two terms in the right hand side of \eqref{eq: 1 part comm Q3} can be estimated in the same way. We consider for instance
\begin{eqnarray*}
  \mathrm{III} =  \int\, dxdydzdz^\prime\, V(x-y)\varphi(z-z^\prime)v_\uparrow(z;x)\langle \xi_\lambda, a^\ast_\uparrow(u_x)a^\ast_\downarrow(u_y) a_\downarrow(\mathfrak{u}^<_y) a_\uparrow(u^>_z)a_\downarrow(u^>_{z^\prime}) a_\downarrow( v_{z^\prime}) \xi_\lambda\rangle.
\end{eqnarray*}
It is convenient to rewrite $\mathrm{III}$ as  
\[
  \mathrm{III} = \frac{1}{L^3}\sum_k\hat{v}_\uparrow(k)\left\langle \left(\int dxdy\, e^{ik\cdot x} V(x-y) a^\ast_\uparrow(\mathfrak{u}^<_y)a_\downarrow(u_y)a_\uparrow(u_x)\xi_\lambda\right), \left( \int dx \, e^{ik\cdot z} b_\downarrow(\varphi_z)a_\uparrow(u^>_z)\xi_\lambda\right)\right\rangle.
\]
By Cauchy-Schwarz's inequality and $|\hat{v}_\uparrow(k)| \leq 1$, we then get 
\begin{equation}\label{eq: CS k III Q3}
  |\mathrm{III}| \leq \sqrt{\frac{1}{L^3}\sum_k \left\| \int dxdy\, e^{ik\cdot x} V(x-y) a^\ast_\uparrow(\mathfrak{u}^<_y)a_\downarrow(u_y)a_\uparrow(u_x)\xi_\lambda \right\|^2}\sqrt{\frac{1}{L^3}\sum_k \left\| \int dz \, e^{ik\cdot z} b_\downarrow(\varphi_z)a_\uparrow(u^>_z)\xi_\lambda\right\|^2}.
\end{equation}
By Cauchy-Schwarz again and using also $\|a^\ast_\sigma(\mathfrak{u}^<_\cdot)\|\leq C\rho^{1/2}$, we get
\[
  \frac{1}{L^3}\sum_k \left\| \int dxdy\, e^{ik\cdot x} V(x-y) a^\ast_\uparrow(\mathfrak{u}^<_y)a_\downarrow(u_y)a_\uparrow(u_x)\xi_\lambda \right\|^2\leq C\|V\|_1\rho\langle \xi_\lambda\mathbb{Q}_4 \xi_\lambda\rangle.
\]
Moreover, for the other term in the right hand side of \eqref{eq: CS k III Q3}, we can use Lemma \ref{lem: bound b phi} to write
\[
  \frac{1}{L^3}\sum_k \left\| \int dz \, e^{ik\cdot z} b_\downarrow(\varphi_z)a_\uparrow(u^>_z)\xi_\lambda\right\|^2 = \int dz, \|b_\downarrow(\varphi_z)a_\uparrow(u^>_z)\xi_\lambda\|^2 \leq C\rho^{\frac{2}{3}} \langle \xi_\lambda, \mathcal{N}_> \xi_\lambda\rangle.
\]
Combining the estimates together, we get
\[
  |\mathrm{III}| \leq C\|V\|_1^{\frac{1}{2}}\rho^{\frac{1}{2} + \frac{1}{3}}\|\mathbb{Q}_4^{\frac{1}{2}}\xi_\lambda\|\|\mathcal{N}_>^{\frac{1}{2}}\xi_\lambda\| \leq  C\rho^{\frac{1}{2} + \frac{1}{3}}\|\mathbb{Q}_4^{\frac{1}{2}}\xi_\lambda\|\|\mathcal{N}_>^{\frac{1}{2}}\xi_\lambda\| \leq C\rho^{\frac{1}{2}}\|\mathbb{Q}_4^{\frac{1}{2}}\xi_\lambda\|\|\mathbb{H}_0^{\frac{1}{2}}\xi_\lambda\|,
\]
where in the last estimate we used \eqref{eq: N>}. 
Putting all the estimates together, we conclude the proof.
\end{proof}
\subsection{Conjugation under the transformation $T_1$: conclusions}\label{sec: lw bd T1}
Before analyzing the correlation energy via conjugation with the unitary transformation $T_1$, we collect some useful estimates. In the following, $\psi$ denotes any approximate ground state in the sense of Definition \ref{def: approx gs}.
Using the propagation estimates proved in Proposition \ref{pro: propagation est} and proceeding as in \cite[Proposition 5.9]{FGHP}, we get that  for any $\lambda\in [0,1]$,
\begin{equation}\label{eq: improv N lw bd}
  \langle T^\ast_{1;\lambda}R^\ast \psi, \mathcal{N}T^\ast_{1;\lambda}R^\ast \psi\rangle \leq CL^{\frac{3}{2}}\rho^{\frac{1}{6}}\|\mathbb{H}_0^{\frac{1}{2}}T^\ast_{1}R^\ast \psi\| + CL^3\rho^{\frac{5}{3}}. 
\end{equation}
As a consequence, for any $0<\delta<1$ and for all $\lambda\in [0,1]$, we can write 
\begin{equation}\label{eq: rhoN}
  \rho\langle T^\ast_{1;\lambda}R^\ast \psi, \mathcal{N}T^\ast_{1;\lambda}R^\ast \psi\rangle\leq \delta\langle T^\ast_{1}R^\ast \psi, \mathbb{H}_0 T^\ast_{1}R^\ast \psi\rangle + CL^3\rho^{\frac{7}{3}}.
\end{equation}
We start by using Proposition \ref{pro: fermionic transf} and \eqref{eq: est Q1} to write
\begin{equation}\label{eq: corr energy R conj}
  E_L(N_\uparrow, N_\downarrow)\geq E_{\mathrm{FFG}} + \langle R^\ast\psi, (\mathbb{H}_0 + \mathbb{Q}_4 + \mathbb{Q}_2)R^\ast \psi\rangle + \langle R^\ast\psi, \mathbb{Q}_3 R^\ast \psi\rangle + \mathcal{E}_1(\psi),
\end{equation}
with 
\begin{equation}
  |\mathcal{E}_1(\psi)| \leq C\rho\langle R^\ast \psi, \mathcal{N} R^\ast \psi\rangle \leq CL^3\rho^{\frac{7}{3}} + \delta\langle T^\ast_1 R^\ast\psi, \mathbb{H}_0 T^\ast_1 R^\ast \psi \rangle \qquad \forall \delta \in (0,1),
\end{equation}
where we also used \eqref{eq: rhoN} for $\lambda = 0$.
Moreover, from Proposition \ref{pro: Q3 final} and using \eqref{eq: rhoN} together with the the estimates in Proposition \ref{pro: propagation est}, we get, for any $0<\delta<1$, 
\[
   |\langle R^\ast\psi, \mathbb{Q}_3 R^\ast\psi\rangle |\leq CL^3\rho^{\frac{7}{3}} + \delta \langle T^\ast_1 R^\ast \psi, (\mathbb{H}_ 0 + \mathbb{Q}_4)T^\ast_1 R^\ast \psi\rangle.
\]
We now extract the constant term from $\langle R^\ast\psi,  (\mathbb{H}_0 + \mathbb{Q}_4 + \mathbb{Q}_2)R^\ast\psi\rangle$. By Duhamel's formula, we can write 
\begin{eqnarray*}
\langle R^\ast \psi, (\mathbb{H}_0 + \mathbb{Q}_4) R^\ast\psi \rangle &=& \langle T^\ast_1 R^\ast\psi, (\mathbb{H}_0 + \mathbb{Q}_4)T^\ast_1 R^\ast \psi\rangle - \int_0^1 d\lambda\, \partial_\lambda\langle T^{\ast}_{1;\lambda}R^\ast \psi, (\mathbb{H}_0 + \mathbb{Q}_4)T^{\ast}_{1;\lambda}R^\ast \psi\rangle 
\\
&=& \langle T^\ast_1R^\ast \psi, (\mathbb{H}_0 + \mathbb{Q}_4)T^\ast_1 R^\ast \psi\rangle - \int_0^1 d\lambda\,\langle T^{\ast}_{1;\lambda}R^\ast \psi, (\mathbb{T}_1 + \mathbb{T}_2)T^\ast_{1;\lambda}R^\ast\psi\rangle + \mathcal{E}_2, \nonumber
\end{eqnarray*}
where $\mathcal{E}_2$ can be bounded using Proposition \ref{pro: H0} and Proposition \ref{pro: Q4} together with the propagation bounds proved in Proposition \ref{pro: propagation est} and the bound in Proposition \eqref{pro: N}:
\begin{equation}
  |\mathcal{E}_2| \leq CL^3\rho^{\frac{7}{3}} + \delta\langle T^\ast_1 R^\ast \psi, \mathbb{H}_0 T^\ast_1 R^\ast \psi\rangle,
\end{equation}
for any $0<\delta<1$. Moreover, using Proposition \ref{pro: reg} together with Proposition \ref{pro: scatt canc}, we can write 
\[
  \langle T^\ast_{1;\lambda}R^\ast \psi, (-\mathbb{T}_1 - \mathbb{T}_2)T^\ast_{1;\lambda}R^\ast \psi\rangle =  -\langle T^\ast_{1;\lambda}R^\ast \psi, \mathbb{Q}_{2;>} T^\ast_{1;\lambda}R^\ast \psi \rangle - \langle T^\ast_{1;\lambda}R^\ast \psi, \mathbb{T}_2^< T^\ast_{1;\lambda}R^\ast \psi\rangle + \mathcal{E}_3,
\]
where $\mathcal{E}_3$ is estimated as in \eqref{eq: error T2} and \eqref{eq: est err scatt eq canc}, i.e., 
\[
  |\mathcal{E}_3| = |\langle T^\ast_{1;\lambda} R^\ast \psi, \mathfrak{E}_{\mathbb{T}_2}T^\ast_{1;\lambda}R^\ast \psi\rangle + \langle T^\ast_{1;\lambda} R^\ast \psi, (-\mathbb{T}_1 - \mathbb{T}_2^> + \mathbb{Q}_{2;>})T^\ast_{1;\lambda}R^\ast \psi\rangle| \leq CL^3\rho^{\frac{7}{3}} + \delta\langle T^\ast_{1}R^\ast \psi, \mathbb{H}_0 T^\ast_{1}R^\ast \psi\rangle,
\]
for any $0<\delta<1$. All together we find that 
\begin{eqnarray*}
  \langle R^\ast \psi, (\mathbb{H}_0 + \mathbb{Q}_4 + \mathbb{Q}_2) R^\ast\psi \rangle\hspace{-0.2cm} &=& \hspace{-0.2cm}\langle T^\ast_1 R^\ast \psi, (\mathbb{H}_0 + \mathbb{Q}_4) T^\ast_1 R^\ast \psi\rangle + \langle R^\ast\psi, \mathbb{Q}_{2} R^\ast\psi \rangle
  \\
  && -  \int_0^1 d\lambda \langle T^{\ast}_{1;\lambda}R^\ast \psi, \mathbb{Q}_{2;>} T^\ast_{1;\lambda}R^\ast \psi \rangle - \int_0^{1}d\lambda\, \langle T^{\ast}_{1;\lambda}R^\ast\psi, \mathbb{T}_2^< T^\ast_{1;\lambda}R^\ast \psi\rangle + \mathcal{E},
  \end{eqnarray*}
  with 
  \[
  |\mathcal{E}| \leq CL^3\rho^{\frac{7}{3}} + \delta\langle T^\ast_1 R^\ast \psi, (\mathbb{H}_0 + \mathbb{Q}_4) T^\ast_1 R^\ast \psi\rangle, 
  \]
for any $0<\delta<1$. We now use Duhamel's formula again to write 
  \begin{multline*}
\langle R^\ast\psi, \mathbb{Q}_{2} R^\ast\psi \rangle -  \int_0^1 d\lambda \langle T^{\ast}_{1;\lambda}R^\ast \psi, \mathbb{Q}_{2;>} T^\ast_{1;\lambda}R^\ast \psi \rangle = \langle T^\ast_1 R^\ast \psi , (\mathbb{Q}_2 - \mathbb{Q}_{2;>}) T^\ast_1 R^\ast \psi\rangle 
\\
- \int_0^1 d\lambda\, \partial_\lambda\langle T^\ast_{1;\lambda}R^\ast \psi, \mathbb{Q}_2 T^\ast_{1;\lambda}R^\ast \psi\rangle - \int_0^1\int_1^\lambda\, d\lambda^\prime \partial_{\lambda^\prime}\langle T^\ast_{1;\lambda^\prime}R^\ast \psi, \mathbb{Q}_{2;>} T^\ast_{1;\lambda^\prime}R^\ast \psi\rangle.
  \end{multline*}
  Therefore, from Proposition \ref{pro: scatt canc}, see \eqref{eq: error Q2}, we get 
  \[
    \langle T^\ast_1 R^\ast \psi , (\mathbb{Q}_2 - \mathbb{Q}_{2;>}) T^\ast_1 R^\ast \psi\rangle = \langle T^\ast_1 R^\ast \psi , \mathbb{Q}_{2;<} T^\ast_1 R^\ast \psi\rangle + \langle T^\ast_1 R^\ast \psi , \mathfrak{E}_{\mathbb{Q}_2} T^\ast_1 R^\ast \psi\rangle,
  \]
  with 
  \[
    |\langle T^\ast_1 R^\ast \psi , \mathfrak{E}_{\mathbb{Q}_2} T^\ast_1 R^\ast \psi\rangle| \leq CL^3 \rho^{\frac{7}{3}} + \delta \langle T^\ast_1 R^\ast \psi ,\mathbb{H}_0 T^\ast_1 R^\ast \psi \rangle,
  \]
  for any $0<\delta<1$. And, using Proposition \ref{pro: Q2} we can write 
  \begin{multline*}
    - \int_0^1 d\lambda\, \partial_\lambda\langle T^\ast_{1;\lambda}R^\ast \psi, \mathbb{Q}_2 T^\ast_{1;\lambda}R^\ast \psi\rangle - \int_0^1 d\lambda\int_1^\lambda\, d\lambda^\prime \partial_{\lambda^\prime}\langle T^\ast_{1;\lambda^\prime}R^\ast \psi, \mathbb{Q}_{2;>} T^\ast_{1;\lambda^\prime}R^\ast \psi\rangle
    \\
     = -\rho_\uparrow\rho_\downarrow\int dxdy\, V(x-y)\varphi(x-y) + \mathcal{E}_4,
  \end{multline*}
  with 
  \[
    |\mathcal{E}_4| \leq CL^3\rho^{\frac{7}{3}} + \delta\langle T^\ast_1 R^\ast\psi,\mathbb{H}_0 T^\ast_1 R^\ast\psi\rangle,
  \]
  for any $0<\delta<1$, using also the propagation of the estimates proved in Proposition \ref{pro: propagation est} and \eqref{eq: rhoN}. 
To conclude, we use Duhamel's formula and Proposition \ref{pro: T2<} to write 
\begin{eqnarray*}
 - \int_0^{1}d\lambda\, \langle T^{\ast}_{1;\lambda}R^\ast\psi, \mathbb{T}_2^< T^\ast_{1;\lambda}R^\ast \psi\rangle &=& -\langle T^{\ast}_{1}R^\ast\psi, \mathbb{T}_2^< T^\ast_{1}R^\ast \psi\rangle + \int_0^{1}d\lambda\int_\lambda^1 d\lambda^\prime\, \partial_{\lambda^\prime}\langle T^{\ast}_{1;\lambda^\prime}R^\ast\psi, \mathbb{T}_2^< T^\ast_{1;\lambda^\prime}R^\ast \psi\rangle
  \\
  &=& -\langle T^{\ast}_{1}R^\ast\psi, \mathbb{T}_2^< T^\ast_{1}R^\ast \psi\rangle + \mathcal{E}_5,
\end{eqnarray*}
with $|\mathcal{E}_5| \leq CL^3\rho^{\frac{7}{3}}$,
where we used the propagation estimates proved in Proposition \ref{pro: propagation est} and the bound in Proposition \ref{pro: N}.
Combining all the estimates together, we get that there exists $0<\widetilde{\delta}<1$ such that 
\begin{eqnarray}\label{eq: final lw bt T1}
  E_{L}(N_\uparrow, N_\downarrow) &\geq& E_{\mathrm{FFG}} + (1-\widetilde{\delta})\langle T^\ast_1 R^\ast \psi, (\mathbb{H}_0 + \mathbb{Q}_4) T^\ast_1 R^\ast \psi \rangle - \rho_\uparrow\rho_\downarrow \int dxdy V(x-y)\varphi(x-y) \nonumber
  \\
  && + \langle T^\ast_1 R^\ast \psi, (\mathbb{Q}_{2;<} - \mathbb{T}_2^<) T^\ast_1 R^\ast \psi\rangle + {\mathcal{E}}_{\mathrm{T_1}},
\end{eqnarray}
with
\begin{equation}
  |{\mathcal{E}}_{\mathrm{T_1}}| \leq CL^3 \rho^{\frac{7}{3}}.
\end{equation}
Using the assumptions on the interaction potential, together with the expression of the free Fermi gas energy in \eqref{eq: FFG energy} and taking $L$ large enough, similarly as in \cite[Section 6]{Gia1} we obtain that
\begin{eqnarray}\label{eq: final lw bd T1 unit volume}
  E_{L}(N_\uparrow, N_\downarrow) &\geq& (1-\widetilde{\delta})\langle T^\ast_1 R^\ast \psi, (\mathbb{H}_0 + \mathbb{Q}_4) T^\ast_1 R^\ast \psi \rangle + \langle T^\ast_1 R^\ast \psi, (\mathbb{Q}_{2;<} - \mathbb{T}_2^<) T^\ast_1 R^\ast \psi\rangle\nonumber
  \\
  && + \frac{3}{5}(6\pi^2)^{\frac{2}{3}} (\rho_\uparrow^{\frac{5}{3}} + \rho_\downarrow^{\frac{5}{3}})L^3 + 8\pi a\rho_\uparrow\rho_\downarrow L^3  + \mathcal{E}_{\mathrm{T}_1},\qquad |\mathcal{E}_{\mathrm{T}_1}|\leq CL^3\rho^{\frac{7}{3}}.
\end{eqnarray}
\subsection{Optimal upper bound}\label{sec: up bd}
Proceeding as in Section \ref{sec: lw bd T1} and taking as trial state $\psi_{\mathrm{trial}} = RT_1\Omega$, one can prove the following optimal upper bound:
\begin{equation}\label{eq: optimal upper bound sect 4}
  e(\rho_\uparrow, \rho_\downarrow) \leq \frac{3}{5}(6\pi^2)^{\frac{2}{3}} (\rho_\uparrow^{\frac{5}{3}} + \rho_\downarrow^{\frac{5}{3}}) + 8\pi a\rho_\uparrow\rho_\downarrow  + C\rho^{\frac{7}{3}}.
\end{equation}
The trial state $\psi_{\mathrm{trial}}$ has exactly $N_\uparrow$ particles with spin $\uparrow$ and $N_\downarrow$ particles with spin $\downarrow$ (see \cite[Section 7]{FGHP},\cite[Section 6]{Gia1}).

The proof follows the same strategy in \cite{Gia1}, see in particular \cite[Section 6]{Gia1}. The difference with respect to the proof of \eqref{eq: optimal upper bound sect 4}  in \cite{Gia1} concerns the definition of the momenta in $T_1$: more precisely, one should compare $\hat{v}^r_\sigma$ and $\hat{u}^r_\sigma$ in \cite[Eq. (4.1)]{Gia1} with $\hat{v}_\sigma$ and $\hat{u}^>_\sigma$ in \eqref{eq: def u,v} and \eqref{eq: def u>}, respectively. Nevertheless, the resulting error terms can be controlled exactly as in the previous propositions. For completeness, we briefly outline the main steps of the argument.

Since we are dealing with an upper bound, we cannot neglect the interaction between particles with equal spin. For this, we refer to \cite[Proposition 3.1]{FGHP}, which provides the adaptation of the conjugation under the particle-hole transformation introduced in Section \ref{sec: particle-hole}. Proceeding as for the lower bound (see \eqref{eq: corr energy R conj}) and using \cite[Proposition 3.1]{FGHP}, see also \cite[Section 6]{Gia1}, we obtain
\begin{equation}\label{eq: starting up bd}
  E_L(N_\uparrow, N_\downarrow) \leq E_{\mathrm{FFG}} + \langle R^\ast\psi_{\mathrm{trial}}, (\mathbb{H}_0 + \mathbb{Q}_2 + \mathbb{Q}_4^{\sigma,\sigma^\prime}) R^\ast \psi_{\mathrm{trial}}\rangle + \mathcal{E}\leq  E_{\mathrm{FFG}} + \langle T_1\Omega, (\mathbb{H}_0 + \mathbb{Q}_2 + \mathbb{Q}_4^{\sigma,\sigma^\prime})T_1\Omega\rangle + \mathcal{E},
\end{equation}
where $|\mathcal{E}| \leq C\rho \langle T_1\Omega \psi, \mathcal{N}T_1\Omega\rangle$, and $\mathbb{Q}_{4}^{\sigma,\sigma^\prime}$ as in Proposition \ref{pro: fermionic transf}.
Although $\mathbb{Q}_4^{\sigma,\sigma^\prime}$ includes the interaction between particles with equal spin, such contribution can be neglected in $\mathbb{Q}_2$, owing to the particular form of the trial state (see \cite[Section 7]{FGHP} and \cite[Section 6]{Gia1}). Moreover, from Proposition \ref{pro: N}, and the definition of the trial state, we directly get that for any $\lambda\in [0,1]$,
\begin{equation}\label{eq: numb op up bd}
  \langle T_{1;\lambda} \Omega, \mathcal{N} T_{1;\lambda}\Omega \rangle \leq CL^3\rho^{\frac{5}{3}}.
\end{equation}
Hence the error term in \eqref{eq: starting up bd} satisfies $|\mathcal{E}|\leq CL^3\rho^{\frac{8}{3}}$. Proceeding as in the previous section (see also \cite[Section 6]{Gia1}) and using \eqref{eq: numb op up bd} together with the propagation estimates from Proposition \ref{pro: propagation est}  and Remark \ref{rem: prop Q4sigmasigma'}, we finally obtain
\begin{equation}
  e(\rho_\uparrow, \rho_\downarrow) \leq \frac{3}{5}(6\pi^2)^{\frac{2}{3}} (\rho_\uparrow^{\frac{5}{3}} + \rho_\downarrow^{\frac{5}{3}}) + 8\pi a\rho_\uparrow\rho_\downarrow  + C\rho^{\frac{7}{3}}.
\end{equation}
This extends the optimal upper bound to any interaction potential $V$ satisfying Assumption \ref{asu: potential V}.

\subsection{Reduction to the new effective correlation energy}\label{sec: reduction new eff corr en}
In this section we isolate the new effective correlation energy that we want to conjugate under the unitary transformation $T_2$ defined in \eqref{eq: def T2}. To start, we define
\[
  \widetilde{\mathbb{Q}} := \mathbb{Q}_{2;<} - \mathbb{T}_2^<,
\]
with $\mathbb{Q}_{2;<}$ and  $\mathbb{T}_2^<$ as in \eqref{eq: Q2<} and \eqref{eq: def T2<}, respectively. We define 
\[
  \hat{u}^{\ll}_{\epsilon,\sigma}(k) := \hat{u}^<_\sigma(k) \chi_{\epsilon,\sigma}^\ll(k), \qquad \hat{u}^{\gg}_{\epsilon,\sigma}(k) := \hat{u}^<_\sigma(k) \chi_{\epsilon,\sigma}^\gg(k),
\]
with $\hat{u}^<_\sigma$ as in \eqref{eq: def u<} and
\[
 \hat{\chi}^{\ll}_{\epsilon,\sigma}(k) := \begin{cases} 1 &\mbox{if}\,\,\, k_F^\sigma \leq |k| \leq k_F^\sigma + (k_F^\sigma)^{1+\epsilon}, 
 \\
 0 &\mbox{if}\,\,\, |k| > k_F^\sigma + (k_F^\sigma)^{1+\epsilon},\end{cases} 
 \qquad 
 \hat{\chi }^{\gg}_{\epsilon,\sigma}(k) := \begin{cases} 0 &\mbox{if}\,\,\, k_F^\sigma \leq |k| \leq k_F^\sigma + (k_F^\sigma)^{1+\epsilon}, 
 \\
 1 &\mbox{if}\,\,\,  k_F^\sigma + (k_F^\sigma)^{1+\epsilon} < |k| \leq 3k_F^\sigma
 \end{cases}
\]
Moreover, we set
\begin{equation}\label{eq: Q1/2>}
  \widetilde{\mathbb{Q}}_{\epsilon,>} := \int dxdy\, V (1- \varphi)(x-y) a_\uparrow(u_{\epsilon,x}^\gg)a_\uparrow( v_x)a_\downarrow(u_{\epsilon,y}^\gg) a_\downarrow( v_y)+ \mathrm{h.c.}, 
\end{equation}
and
\[
  \widetilde{\mathbb{Q}}_< := \widetilde{\mathbb{Q}} - \widetilde{\mathbb{Q}}_{\epsilon,>},
\]
with $u^{\gg}_{\epsilon,x}(\cdot):= u^\gg_{\epsilon}(\cdot - x)$ and $u^{\ll}_{\epsilon,x}(\cdot):= u^\ll_{\epsilon}(\cdot - x)$.
We also write $V_\varphi := V(1-\varphi)$. Note that $\|V_\varphi\|_1 \leq C\|V\|_1$, using that $\|\varphi\|_\infty \leq C$ (see Lemma \ref{lem: bound phi}). Next, we want to show that $\langle T^\ast_1 R^\ast \psi, \widetilde{\mathbb{Q}}_< T^\ast_1 R^\ast \psi\rangle$ is $\mathcal{O}(L^3\rho^{7/3})$. We start by noticing that in each operator in $\widetilde{\mathbb{Q}}_<$, there is at least one $a_\sigma(u^\ll_\cdot)$. Since $\|a_\sigma( v_\cdot)\|\leq C\rho^{1/2}$, $\|a_\sigma(u^\ll_\cdot)\|\leq C\rho^{1/2 + \epsilon/6}$ and 
\begin{equation}\label{eq: N>> H0}
  \int dx\, \|a_\sigma(u^{\gg}_{\epsilon,x})T^\ast_1 R^\ast \psi\|^2 \leq C\rho^{-\frac{2}{3} - \frac{\epsilon}{3}}\langle T^\ast_1 R^\ast \psi, \mathbb{H}_0 T^\ast_1 R^\ast \psi\rangle,
\end{equation}
by Cauchy-Schwarz inequality, we find that for any $0<\delta<1$,
\begin{align*}
  &|\langle T_1^\ast R^\ast \psi,  \widetilde{\mathbb{Q}}_< T_1^\ast R^\ast \psi\rangle| \leq C\rho^{1 + \frac{\epsilon}{6}}\int dxdy \, V_\varphi(x-y)\left(\|a_\sigma( v_x) a_\sigma(u^\gg_{\epsilon,x})T^\ast_1 R^\ast \psi  \| + \|a_\sigma( v_x)a_\sigma(u^\ll_{\epsilon,x})T^\ast_1 R^\ast \psi  \|\right)
  \\
  &\leq C\rho^{\frac{3}{2} + \frac{\epsilon}{6}}\int dxdy \, V_\varphi(x-y)\|a_\sigma(u^\gg_{\epsilon,x})T^\ast_1 R^\ast \psi \rangle \| + C\rho^{\frac{3}{2} + \frac{\epsilon}{3}}\int dxdy\, V_\varphi(x-y)\|a_\sigma( v_x)T^\ast_1 R^\ast \psi  \|
  \\
  &\leq CL^{\frac{3}{2}}\|V_\varphi\|_1\rho^{\frac{3}{2} + \frac{\epsilon}{6} -\frac{1}{3} -\frac{\epsilon}{6}}\|\mathbb{H}_0^{\frac{1}{2}}T^\ast_1 R^\ast \psi\| + CL^{\frac{3}{2}}\rho^{\frac{3}{2} + \frac{\epsilon}{3}}\|V_\varphi\|_1\|\mathcal{N}^{\frac{1}{2}}T^\ast_1 R^\ast \psi\|
  \\
  &\leq CL^3 \rho^{\frac{7}{3}} + CL^3\rho^{2 + \frac{1}{9} + \frac{4\epsilon}{9}} + \delta \langle T^\ast_1 R^\ast \psi, \mathbb{H}_0 T^\ast_1 R^\ast\psi\rangle,
\end{align*}
where in the last inequality, we used Cauchy-Schwarz together with the estimate \eqref{eq: improv N lw bd} and Young's inequality with exponents $p=3/4$, $q=1/4$. Therefore, taking $\epsilon = 1/2$, we can conclude that for any $0<\delta<1$, 
\[
   |\langle T_1^\ast R^\ast \psi,  \widetilde{\mathbb{Q}}_< T_1^\ast R^\ast \psi\rangle|\leq CL^3\rho^{\frac{7}{3}} + \delta \langle T^\ast_1 R^\ast \psi, \mathbb{H}_0 T^\ast_1 R^\ast\psi\rangle.
\]
We now proceed similarly for the operators $a_\sigma( v_\cdot)$. More precisely, we define 
\[
  \hat{v}^\ll_\sigma(k) := \hat{v}_\sigma(k)\eta_\sigma^\ll(k), \qquad \hat{v}^\gg_\sigma(k) := \hat{v}_\sigma(k)\eta_\sigma^\gg(k),
\]
with $\hat{v}_\sigma$ defined as in \eqref{eq: def u,v} and 
\[
  \hat{\eta}^{\ll}_{\sigma}(k) := \begin{cases} 1 &\mbox{if}\,\,\, k_F^\sigma - (k_F^\sigma)^{1+\frac{1}{2}}  \leq |k| \leq k_F^\sigma , 
 \\
 0 &\mbox{if}\,\,\,  |k| \leq k_F^\sigma - (k_F^\sigma)^{1+\frac{1}{2}},
 \end{cases} \qquad \hat{\eta}^{\gg}_{\sigma}(k) := \begin{cases} 0 &\mbox{if}\,\,\, k_F^\sigma - (k_F^\sigma)^{1+\frac{1}{2}}  \leq |k| \leq k_F^\sigma,
 \\
 1 &\mbox{if}\,\,\,  |k| \leq k_F^\sigma - (k_F^\sigma)^{1+\frac{1}{2}}.
 \end{cases}
\]
We first observe that for the terms in $\widetilde{\mathbb{Q}}_{1/2, >}$ containing one operator $a_\sigma( v^\ll_\cdot)$, the expectation value is already of the order $\mathcal{O}(L^3\rho^{7/3})$. More precisely, proceeding as above and using the bounds 
\[
\|a_\sigma( v^\ll_\cdot)\|\leq C\rho^{1/2 + \epsilon/6}, \qquad \|a_\sigma( v^\gg_\cdot)\|\leq C\rho^{1/2}, \qquad \|a_\sigma( u^\gg_{1/2,\cdot})\|\leq C\rho^{1/2},
\] 
we obtain 
\begin{align*}
&\left|\int dxdy\, V (1- \varphi)(x-y) \langle T^\ast_1 R^\ast \psi, a_\uparrow(u_{1/2,x}^\gg)a_\uparrow( v_x^\ll)a_\downarrow(u_{1/2,y}^\gg) a_\downarrow( v_y^\gg)T^\ast_1 R^\ast \psi\rangle\right| 
\\
&\leq C\rho^{\frac{3}{2} +\frac{\epsilon}{6}}\int dxdy \, |V_\varphi(x-y)|\|a_\uparrow(u^\gg_x)T^\ast_1 R^\ast \psi\|\leq C\rho^{\frac{3}{2} +\frac{\epsilon}{6} -\frac{1}{3} -\frac{\epsilon}{6}}\|\mathbb{H}_0^{\frac{1}{2}}T^\ast_1 R^\ast \psi\|\leq CL^3\rho^{\frac{7}{3}} + \delta\langle T^\ast_1 R^\ast \psi,\mathbb{H}_0 T^\ast_1 R^\ast \psi\rangle.
\end{align*}
The term in $\widetilde{\mathbb{Q}}_{1/2, >}$ containing $a_\uparrow( v^\gg_x)a_\downarrow( v^\ll_y)$ can be estimated in the same way, while the contribution with $a_\uparrow( v^\ll_x)a_\uparrow( v^\ll_x)$ yields an even smaller error.
Next, we define $\hat{u}_\sigma^\gg := \hat{u}^<_\sigma(k)\chi_\sigma^{\gg}(k)$ with
\[
  \hat{\chi }^{\gg}_{\sigma}(k) := \begin{cases} 0 &\mbox{if}\,\,\, k_F^\sigma \leq |k| \leq k_F^\sigma + (k_F^\sigma)^{1+\frac{1}{2}}, 
 \\
 1 &\mbox{if}\,\,\,  k_F^\sigma + (k_F^\sigma)^{1+\frac{1}{2}} < |k| \leq 3k_F^\sigma
 \end{cases}
\]
and set
\begin{equation}\label{eq: def tildeQ>}
  \widetilde{\mathbb{Q}}_{>} = \int dxdy\, V (1- \varphi)(x-y) a_\uparrow(u_{x}^\gg)a_\uparrow( v_x^\gg)a_\downarrow(u_{y}^\gg) a_\downarrow( v_y^\gg)+ \mathrm{h.c.}
\end{equation}
From \eqref{eq: final lw bd T1 unit volume}, for any $0<\delta<1$, we then obtain for $L$ large enough
\begin{equation}\label{eq: final T1}
E_{L}(N_\uparrow, N_\downarrow) \geq \frac{1}{2}\langle T^\ast_1 R^\ast \psi, (\mathbb{H}_0  + 2\widetilde{\mathbb{Q}}_>) T^\ast_1 R^\ast \psi \rangle + \frac{3}{5}(6\pi^2)^{\frac{2}{3}} (\rho_\uparrow^{\frac{5}{3}} + \rho_\downarrow^{\frac{5}{3}})L^3 + 8\pi a\rho_\uparrow\rho_\downarrow L^3  + \mathcal{E}_{\mathrm{T}_1},
\end{equation}
with $|\mathcal{E}_{\mathrm{T}_1}| \leq CL^3\rho^{7/3}$. Here we have fixed the prefactor of $\mathbb{H}_0$ to be $1/2$, and used the positivity of  $\mathbb{Q}_4$ in \eqref{eq: final lw bd T1 unit volume}.
Finally, we define the new effective correlation Hamiltonian as 
\begin{equation}\label{eq: new eff corr en}
  \widetilde{\mathcal{H}{}}_{\mathrm{corr}}^{\mathrm{eff}}:= (1/2)\mathbb{H}_0 + \widetilde{\mathbb{Q}}_>,
\end{equation}
where $\mathbb{H}_0$ and $\widetilde{\mathbb{Q}}_>$ are given by \eqref{eq: def H0} and \eqref{eq: def tildeQ>}, respectively. 
\section{Conjugation under the transformation $T_2$}\label{sec: T2 conjugation}
To establish the optimal lower bound, we conjugate the effective correlation energy $\widetilde{\mathcal{H}}_{\mathrm{corr}}^{\mathrm{eff}}$, defined in \eqref{eq: new eff corr en}, under the unitary transformation introduced in Section \ref{sec: T2 unitary} (see also \eqref{eq: def T2 unitary}). Before turning to the rigorous analysis, we first outline the main ideas.\\

\noindent\textbf{Ideas on the conjugation under $T_2$}\label{sec: strategy T2}. Starting from \eqref{eq: final T1}, i.e., 
\begin{equation}\label{eq: starting T2 strategy}
E_{L}(N_\uparrow, N_\downarrow) \geq \frac{1}{2}\langle T^\ast_1 R^\ast \psi, (\mathbb{H}_0  + 2\widetilde{\mathbb{Q}}_>) T^\ast_1 R^\ast \psi \rangle + \frac{3}{5}(6\pi^2)^{\frac{2}{3}} (\rho_\uparrow^{\frac{5}{3}} + \rho_\downarrow^{\frac{5}{3}})L^3 + 8\pi a\rho_\uparrow\rho_\downarrow L^3  + \mathcal{E}_{\mathrm{T}_1},\quad |\mathcal{E}_{T_1} |\leq CL^3\rho^{\frac{7}{3}},
\end{equation}
and conjugating $(1/2)\mathbb{H}_0 + \widetilde{\mathbb{Q}}_>$ under $T_2$, we find that 
\begin{multline*}
   \frac{1}{2}\langle T^\ast_{1}R^\ast \psi, (\mathbb{H}_0   + 2\widetilde{\mathbb{Q}}_>)T^\ast_{1}R^\ast \psi\rangle =  \frac{1}{2}\langle T^\ast_2 T^\ast_{1}R^\ast \psi, (\mathbb{H}_0   + 2\widetilde{\mathbb{Q}}_>)T^\ast_2 T^\ast_{1}R^\ast \psi\rangle
   \\
   -  \frac{1}{2}\int_0^1 d\lambda\, \partial_\lambda\, \langle T^\ast_{2;\lambda} T^\ast_{1}R^\ast \psi, (\mathbb{H}_0   + 2\widetilde{\mathbb{Q}}_>) T^\ast_{2;\lambda}T^\ast_{1}R^\ast \psi\rangle.
\end{multline*}
Using now that $\partial_\lambda T_{2;\lambda}\mathbb{H}_0 T^\ast_{2;\lambda}  = - T_{2;\lambda}[\mathbb{H}_0, B_2 -B_2^\ast] T^\ast_{2;\lambda}$ and that $(1/2)[\mathbb{H}_0, B_2 - B_2^\ast] = - \widetilde{\mathbb{Q}}_>$, we get 
\begin{eqnarray}\label{eq: scatt eq canc T2}
   \frac{1}{2}\langle T^\ast_{1}R^\ast \psi, (\mathbb{H}_0   + 2\widetilde{\mathbb{Q}}_>)T^\ast_{1}R^\ast \psi\rangle &=&  \frac{1}{2}\langle T^\ast_2 T^\ast_{1}R^\ast \psi, (\mathbb{H}_0   + 2\widetilde{\mathbb{Q}}_>)T^\ast_2 T^\ast_{1}R^\ast \psi\rangle\nonumber
   \\
   && +  \int_0^1 d\lambda\,  \langle T^\ast_{2;\lambda} T^\ast_{1}R^\ast \psi, \big(- \widetilde{\mathbb{Q}}_> + [\widetilde{\mathbb{Q}}_>, B_2 - B_2^\ast ]\big) T^\ast_{2;\lambda}T^\ast_{1}R^\ast \psi\rangle.
\end{eqnarray}
Applying Duhamel's formula again (see Section \ref{sec: lw bd T2}), we then get that 
\textcolor{black}{\begin{eqnarray*}
   \frac{1}{2}\langle T^\ast_{1}R^\ast \psi, (\mathbb{H}_0   + 2\widetilde{\mathbb{Q}}_>)T^\ast_{1}R^\ast \psi\rangle &=&  \frac{1}{2}\langle T^\ast_2 T^\ast_{1}R^\ast \psi, \mathbb{H}_0 T^\ast_2 T^\ast_{1}R^\ast \psi\rangle
   +  \int_0^1 d\lambda\,  \langle T^\ast_{2;\lambda} T^\ast_{1}R^\ast \psi, [\widetilde{\mathbb{Q}}_>, B_2 - B_2^\ast ]T^\ast_{2;\lambda} T^\ast_{1}R^\ast \psi\rangle
   \\
   && + \int_0^1 d\lambda\int_1^\lambda d\lambda^{\prime}\,  \langle T^\ast_{2;\lambda^\prime} T^\ast_{1}R^\ast \psi, [\widetilde{\mathbb{Q}}_>, B_2 - B_2^\ast ] T^\ast_{2;\lambda^\prime} T^\ast_{1}R^\ast \psi\rangle,
\end{eqnarray*}}
which implies (see Proposition \ref{pro: tildeQ}) that 
\begin{equation}\label{eq: fin conj T2 strategy}
   \frac{1}{2}\langle T^\ast_{1}R^\ast \psi, (\mathbb{H}_0   + 2\widetilde{\mathbb{Q}}_>)T^\ast_{1}R^\ast \psi\rangle \geq  \frac{1}{2}\langle T^\ast_2 T^\ast_{1}R^\ast \psi, \mathbb{H}_0  T^\ast_2 T^\ast_{1}R^\ast \psi\rangle - CL^3\rho^{\frac{7}{3}}.
\end{equation}
Inserting then \eqref{eq: fin conj T2 strategy} in \eqref{eq: starting T2 strategy}, and taking the thermodynamic limit, we get
\[
  e(\rho_\uparrow, \rho_\downarrow) \geq  \frac{3}{5}(6\pi^2)^{\frac{2}{3}} \left(\rho_\uparrow^{\frac{5}{3}} + \rho_\downarrow^{\frac{5}{3}}\right) + 8\pi a \rho_\uparrow \rho_\downarrow\ - C\rho^{\frac{7}{3}}.
\]
\subsection{Propagation estimates}
In this section we prove that some propagation estimates for $\mathbb{H}_0$.

\begin{proposition}[Propagation estimates for $\mathbb{H}_0$ -- Part II]\label{pro: prop est H0 T2} Let $\lambda\in [0,1]$ and let $\psi$ be an approximate ground state in the sense of Definition \ref{def: approx gs}. Under the same assumptions as in Theorem \ref{thm: optimal lw bd}, it holds that 
\begin{equation}\label{eq: prop H0 T2}
  \langle T^\ast_{2;\lambda} T^\ast_1 R^\ast \psi, \mathbb{H}_0  T^\ast_{2;\lambda} T^\ast_1 R^\ast\rangle \leq CL^3\rho^2.
\end{equation}
\end{proposition}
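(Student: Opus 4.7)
The idea is to run a Grönwall-type bootstrap starting from Proposition \ref{pro: propagation est}. Set $\xi_\lambda := T^\ast_{2;\lambda}T^\ast_1 R^\ast \psi$, so that at $\lambda=0$ one has $\xi_0 = T^\ast_1 R^\ast \psi$ and $\langle \xi_0, \mathbb{H}_0 \xi_0\rangle \leq CL^3\rho^2$. Since $(B_2 - B_2^\ast)$ commutes with $T_{2;\lambda}=\exp(\lambda(B_2-B_2^\ast))$, differentiation yields
\begin{equation*}
\partial_\lambda \langle \xi_\lambda, \mathbb{H}_0 \xi_\lambda\rangle = -\langle \xi_\lambda, [\mathbb{H}_0, B_2 - B_2^\ast]\xi_\lambda\rangle.
\end{equation*}
The defining property of $\hat f_{r,r^\prime}$ in \eqref{eq: def frr'} is precisely that $[\mathbb{H}_0, B_2 - B_2^\ast] = -2\widetilde{\mathbb{Q}}_{>}$. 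This is in fact an exact operator identity, with no quasi-bosonic error: $\mathbb{H}_0$ is quadratic in the $\hat a$'s, so $[\mathbb{H}_0, \widetilde b_{p,r,\sigma}] = -(|r+p|^2-|r|^2)\widetilde b_{p,r,\sigma}$ holds by a straightforward CAR computation, and multiplying by $\hat f_{r,r'}(p)$ cancels the energy denominator to reproduce the kernel $2\widehat{V_\varphi}(p)$ of $\widetilde{\mathbb{Q}}_>$. Hence
\begin{equation*}
\partial_\lambda \langle \xi_\lambda, \mathbb{H}_0 \xi_\lambda\rangle = 2\langle \xi_\lambda, \widetilde{\mathbb{Q}}_> \xi_\lambda\rangle.
\end{equation*}

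Next I would bound $\widetilde{\mathbb{Q}}_>$ by Cauchy-Schwarz exploiting the support properties of $\hat u^\gg_\sigma$ and $\hat v^\gg_\sigma$. Estimating three of the four annihilation operators in operator norm via $\|u^\gg_\sigma\|_2, \|v^\gg_\sigma\|_2 \leq C\rho^{1/2}$ (since both are supported on a set of volume $\sim \rho$) and $\|a^\ast_\sigma(u^\gg_x)\xi_\lambda\|\leq \|u^\gg_\sigma\|_2 \leq C\rho^{1/2}$, together with $\|V_\varphi\|_1 \leq C\|V\|_1(1+\|\varphi\|_\infty)\leq C$, one obtains
\begin{equation*}
|\langle \xi_\lambda, \widetilde{\mathbb{Q}}_> \xi_\lambda\rangle| \leq C\rho^{3/2}\int dxdy\, V_\varphi(x-y)\,\|a_\downarrow(v^\gg_y)\xi_\lambda\| \leq CL^{3/2}\rho^{3/2}\Big(\int dy\,\|a_\downarrow(v^\gg_y)\xi_\lambda\|^2\Big)^{1/2}.
\end{equation*}
The crucial input is the spectral gap built into the $(k_F^\sigma)^{3/2}$ cutoffs: for $k\in\mathrm{supp}(\hat v^\gg_\sigma)$ one has $||k|^2 - (k_F^\sigma)^2|\geq 2k_F^\sigma (k_F^\sigma)^{3/2}\geq c\rho^{5/6}$, hence
\begin{equation*}
\int dy\, \|a_\sigma(v^\gg_y)\xi_\lambda\|^2 = \sum_k|\hat v^\gg_\sigma(k)|^2\,\langle \xi_\lambda, \hat a^\ast_{k,\sigma}\hat a_{k,\sigma}\xi_\lambda\rangle \leq C\rho^{-5/6}\langle \xi_\lambda, \mathbb{H}_0 \xi_\lambda\rangle.
\end{equation*}
Combining yields $|\langle \xi_\lambda, \widetilde{\mathbb{Q}}_> \xi_\lambda\rangle|\leq CL^{3/2}\rho^{13/12}\langle \xi_\lambda, \mathbb{H}_0 \xi_\lambda\rangle^{1/2}$.

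Setting $g(\lambda):=\langle \xi_\lambda, \mathbb{H}_0 \xi_\lambda\rangle^{1/2}$, the previous two displays give the pointwise differential inequality $\partial_\lambda (g(\lambda)^2) \leq CL^{3/2}\rho^{13/12}g(\lambda)$, equivalently $g'(\lambda)\leq CL^{3/2}\rho^{13/12}$. Integrating from $0$ to $\lambda$ and using $g(0)\leq CL^{3/2}\rho$ together with $\rho^{13/12}\leq \rho$ for $\rho<1$ gives $g(\lambda)\leq CL^{3/2}\rho$, and squaring establishes \eqref{eq: prop H0 T2}. The only conceptual step is the exact cancellation $[\mathbb{H}_0, B_2 - B_2^\ast]=-2\widetilde{\mathbb{Q}}_>$; once one has confirmed by direct computation that the denominators of $\hat f_{r,r'}$ are nonzero on the supports of $\hat u^\gg_\sigma, \hat v^\gg_\sigma$ (they are $\geq c\rho^{5/6}$) and that they cancel exactly, the rest is straightforward Cauchy–Schwarz, in contrast with the much more delicate $T_1$-propagation of Proposition \ref{pro: propagation est}.
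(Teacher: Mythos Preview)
Your proposal is correct and follows essentially the same route as the paper: compute $\partial_\lambda\langle\xi_\lambda,\mathbb{H}_0\xi_\lambda\rangle = 2\langle\xi_\lambda,\widetilde{\mathbb{Q}}_>\xi_\lambda\rangle$ via the exact commutator identity coming from the choice of $\hat f_{r,r'}$, bound $\widetilde{\mathbb{Q}}_>$ by Cauchy--Schwarz using the $\rho^{5/6}$ spectral gap from the $(k_F^\sigma)^{3/2}$ cutoffs, and close by Gr\"onwall. The only cosmetic differences are that the paper keeps the factor $a_\uparrow(u^\gg_x)$ rather than $a_\downarrow(v^\gg_y)$ to feed into $\mathbb{H}_0$ (same gap, same exponent $13/12$), and applies Young's inequality followed by the standard Gr\"onwall lemma instead of integrating the differential inequality for $g(\lambda)=\langle\xi_\lambda,\mathbb{H}_0\xi_\lambda\rangle^{1/2}$ directly.
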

\begin{remark}
Note that the bound in \eqref{eq: prop H0 T2} is not optimal, but it is enough for our purposes.
\end{remark}
\begin{proof}
To prove \eqref{eq: prop H0 T2}, we use Gr\"onwall's Lemma. We compute 
\begin{equation}\label{eq: der H0 T2}
  \partial_\lambda \langle T^\ast_{2;\lambda} T^\ast_1 R^\ast \psi, \mathbb{H}_0 T^\ast_{2;\lambda} T^\ast_1 R^\ast \psi\rangle = -\langle T^\ast_{2;\lambda} T^\ast_1 R^\ast \psi, [\mathbb{H}_0, B_2 - B_2^\ast]T^\ast_{2;\lambda} T^\ast_1 R^\ast \psi\rangle.
\end{equation}
Computing the commutator above (the calculation is the same as the one in \cite[Proposition 5.1]{Gia1}, we omit the details), we find that 
\begin{multline}\label{eq: com H0 T2}
  [\mathbb{H}_0, B_2 - B_2^\ast] = -\frac{1}{L^3}\sum_{p,r,r^\prime} (|p+r|^2 - |r|^2 + |r^\prime - p|^2 - |r^\prime|^2)\hat{f}_{r,r^\prime}(p) \hat{u}^\gg_\uparrow(r+p)\hat{u}^\gg_\downarrow(r^\prime - p)\hat{v}^\gg_\uparrow(r)\hat{v}^\gg_\downarrow(r^\prime)\times 
\\
\times \hat{a}_{r+p,\uparrow}\hat{a}_{-r,\uparrow}\hat{a}_{r^\prime - p,\downarrow}\hat{a}_{-r^\prime,\downarrow} + \mathrm{h.c.}
\end{multline}
Using now the definition of $\hat{f}_{r,r^\prime}$ (see \eqref{eq: def frr'}), we get that 
\begin{equation}\label{eq: comm H0B2 f}
  [\mathbb{H}_0, B_2 - B_2^\ast] = -\frac{2}{L^3}\sum_{p,r,r^\prime} \widehat{V_\varphi}(p) \hat{u}^\gg_\uparrow(r+p)\hat{u}^\gg_\downarrow(r^\prime - p)\hat{v}^\gg_\uparrow(r)\hat{v}^\gg_\downarrow(r^\prime)\hat{a}_{r+p,\uparrow}\hat{a}_{-r,\uparrow}\hat{a}_{r^\prime - p,\downarrow}\hat{a}_{-r^\prime,\downarrow} + \mathrm{h.c.}
\end{equation}
Rewriting the term we want to estimate in configuration space, and using the notation $\xi_\lambda:= T^\ast_{2;\lambda}T^\ast_1 R^\ast \psi$, by Cauchy-Schwarz, we find that 
\begin{align}\label{eq: comm H0 B2}
  &|\langle \xi_\lambda, [\mathbb{H}_0, B_2 - B_2^\ast]\xi_\lambda\rangle | \leq 2\int dxdy\,| V_\varphi(x-y)||\langle \xi_\lambda, a_\uparrow(u^\gg_x)a_\uparrow( v^\gg_x)a_\downarrow(u^\gg_y)a_\downarrow( v^\gg_y)\xi_\lambda\rangle|\nonumber
  \\
  &\leq C\rho^{\frac{3}{2}} \int dxdy\, |V_\varphi(x-y)|\|a_\uparrow(u^\gg_x)\xi_\lambda\| \leq CL^{\frac{3}{2}}\|V_\varphi\|_1\rho^{\frac{3}{2} -\frac{1}{3}-\frac{1}{12}}\|\mathbb{H}_0^{\frac{1}{2}}\xi_\lambda\|\leq CL^3\rho^{\frac{7}{3} - \frac{1}{6}} + C\langle \xi_\lambda,\mathbb{H}_0 \xi_\lambda\rangle,
\end{align}
where we also used that $\|a_\sigma( v^\gg_\cdot)\|\leq C\rho^{1/2}$, $\|a_\sigma(u^\gg_\cdot)\|\leq C\rho^{1/2}$, $\|V_\varphi\|_1 \leq C$ and that 
\[
  \int dx\, \|a_\uparrow(u^{\gg}_x)\xi_\lambda\|^2 \leq C\rho^{-\frac{2}{3} -\frac{1}{6}}\langle \xi_\lambda,\mathbb{H}_0\xi_\lambda\rangle.
\]
Inserting \eqref{eq: comm H0 B2} in \eqref{eq: der H0 T2}, from Gr\"onwall's Lemma we get
\[
 \langle \xi_\lambda,\mathbb{H}_0 \xi_\lambda\rangle \leq C\langle \xi_0, \mathbb{H}_0\xi_0\rangle + CL^3\rho^{2+\frac{1}{6}}.
\]
From the propagation estimates proved in Proposition \ref{pro: propagation est}, we know that $\langle \xi_0, \mathbb{H}_0\xi_0\rangle \leq CL^3\rho^2$, this concludes the proof.
\end{proof}
\begin{lemma}[Useful bound]\label{lem: integral t} Let $\hat{u}^t_\sigma$, $\hat{v}^t_\sigma$ as in \eqref{eq def ut vt}. It holds that
\begin{equation}\label{eq: est int t uv final}
  \int_0^{\infty} dt \, \|\hat{u}^{t}_\sigma\|_2 \|\hat{v}^t_\sigma\|_2  \leq C\rho^{\frac{1}{6}}.
\end{equation}
\end{lemma}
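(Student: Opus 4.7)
The strategy is a direct momentum-space computation that turns the integrand into a Gaussian in $t$ whose decay rate is governed by the gap between the supports of $\hat{u}^\gg_\sigma$ and $\hat{v}^\gg_\sigma$. First I would rewrite the two norms by Parseval:
\[
\|\hat{u}^t_\sigma\|_2^2 = \frac{1}{L^3}\sum_{k}(\hat{u}^\gg_\sigma(k))^2 e^{-2t|k|^2}, \qquad \|\hat{v}^t_\sigma\|_2^2 = \frac{1}{L^3}\sum_{k}(\hat{v}^\gg_\sigma(k))^2 e^{2t|k|^2}.
\]
From the definitions of $\hat{\chi}^\gg_\sigma$ and $\hat{\eta}^\gg_\sigma$ in Section \ref{sec: T2 unitary}, $\hat{u}^\gg_\sigma$ is supported in the annulus $k_F^\sigma + (k_F^\sigma)^{3/2} \leq |k| \leq 3 k_F^\sigma$, while $\hat{v}^\gg_\sigma$ is supported in the ball $|k| \leq k_F^\sigma - (k_F^\sigma)^{3/2}$.

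Next I would extract the pointwise bounds $|k|^2 \geq (k_F^\sigma + (k_F^\sigma)^{3/2})^2$ on the support of $\hat{u}^\gg_\sigma$ and $|k|^2 \leq (k_F^\sigma - (k_F^\sigma)^{3/2})^2$ on the support of $\hat{v}^\gg_\sigma$, and use $\#\{k\in (2\pi/L)\mathbb{Z}^3 : |k|\leq C k_F^\sigma\}\leq C L^3 (k_F^\sigma)^3$ to get
\[
\|\hat{u}^t_\sigma\|_2^2 \leq C (k_F^\sigma)^3 \, e^{-2t(k_F^\sigma + (k_F^\sigma)^{3/2})^2}, \qquad \|\hat{v}^t_\sigma\|_2^2 \leq C (k_F^\sigma)^3 \, e^{2t(k_F^\sigma - (k_F^\sigma)^{3/2})^2}.
\]
Multiplying, taking the square root, and using the identity $(a+b)^2-(a-b)^2=4ab$ with $a=k_F^\sigma$, $b=(k_F^\sigma)^{3/2}$, the product reduces to
\[
\|\hat{u}^t_\sigma\|_2 \|\hat{v}^t_\sigma\|_2 \leq C(k_F^\sigma)^3 \, e^{-4t(k_F^\sigma)^{5/2}}.
\]

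Finally I would just integrate this elementary exponential in $t$, which gives
\[
\int_0^{+\infty} dt\, \|\hat{u}^t_\sigma\|_2 \|\hat{v}^t_\sigma\|_2 \leq \frac{C(k_F^\sigma)^3}{4(k_F^\sigma)^{5/2}} = C(k_F^\sigma)^{1/2} \leq C\rho^{1/6},
\]
using $k_F^\sigma \sim \rho^{1/3}$. There is no real obstacle; the only thing to be careful about is the bookkeeping of the supports (in particular, that the gap between them scales as $(k_F^\sigma)^{3/2}$, which is exactly what makes $\int_0^\infty e^{-4t(k_F^\sigma)^{5/2}}\,dt$ contribute a factor $(k_F^\sigma)^{-5/2}$ that beats the volume factor $(k_F^\sigma)^3$ down to $\rho^{1/6}$). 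This is exactly why the exponent $3/2$ in the definition of $\hat{\chi}^\gg_\sigma$ and $\hat{\eta}^\gg_\sigma$ is optimal, as announced in the footnote after \eqref{eq: def v>> u>>}.
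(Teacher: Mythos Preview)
Your argument is correct, but it follows a more elementary route than the paper's. The paper inserts a weight $e^{\pm t(k_F^\sigma)^2}$ and applies Cauchy--Schwarz in the $t$-variable to decouple the $u$- and $v$-factors, then integrates in $t$ mode by mode (Fubini) to produce the two sums
\[
\frac{1}{L^3}\sum_k \frac{|\hat{u}^\gg_\sigma(k)|^2}{|k|^2-(k_F^\sigma)^2}, \qquad \frac{1}{L^3}\sum_k \frac{|\hat{v}^\gg_\sigma(k)|^2}{(k_F^\sigma)^2-|k|^2},
\]
which are then bounded using the spectral gap $|k|^2-(k_F^\sigma)^2 \geq C\rho^{5/6}$ on the relevant supports. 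You bypass both the Cauchy--Schwarz step and the exchange of sum and integral by bounding $\|\hat{u}^t_\sigma\|_2$ and $\|\hat{v}^t_\sigma\|_2$ pointwise in $t$ using only the extremal support values, so that the whole integrand reduces to a single exponential $e^{-4t(k_F^\sigma)^{5/2}}$. This is cleaner for the present lemma. The paper's version has the minor advantage that the intermediate quantities above are precisely the ones that reappear in Section~\ref{sec: number excit}, where a sharper ($\delta$-dependent) estimate of the same integral is needed; there your pointwise bound would not suffice, and one would anyway have to revert to the paper's mode-by-mode treatment.
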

\begin{proof}
From the definitions of $\hat{u}^t_\sigma$, $\hat{v}^t_\sigma$ in \eqref{eq def ut vt} and by Cauchy-Schwarz's inequality, we get
\[
  \int_0^{\infty} dt \, \|\hat{u}^{t}_\sigma\|_2 \|\hat{v}^t_\sigma\|_2 \leq \sqrt{\int_0^{\infty} dt \frac{1}{L^3}\sum_k |\hat{u}^{\gg}_\sigma(k)|^2 e^{-2t(|k|^2 - (k_F^\sigma)^2)}}\sqrt{ \int_0^{\infty} dt\, \frac{1}{L^3}\sum_k |\hat{v}^{\gg}_\sigma (k)|^2e^{-2t((k_F^\sigma)^2 - |k|^2)}}.
\]
We now estimate the right hand side above. We have 
\begin{equation}\label{eq: int t u}
  \int_0^{\infty}dt\,  \frac{1}{L^3}\sum_k |\hat{u}^{\gg}_\sigma(k)|^2 e^{-2t(|k|^2 - (k_F^\sigma)^2)} = \frac{1}{L^3}\sum_k \frac{|u^\gg_\sigma(k)|^2}{|k|^2 - (k_F^\sigma)^2}\leq \frac{1}{L^3}\sum_{k_F^\sigma + (k_F^\sigma)^{3/2} \leq |k| \leq 3k_F^\sigma}\frac{1}{|k|^2 - (k_F^\sigma)^2} \leq C\rho^{\frac{1}{6}},
\end{equation}
Proceeding similarly as above, we get 
\begin{equation}\label{eq: est int t vt}
  \int_0^{\infty} dt\, \frac{1}{L^3}\sum_k |\hat{v}^\gg_\sigma(k)|^2 e^{-2t( (k_F^\sigma)^2 - |k|^2)} = \frac{1}{L^3}\sum_k\frac{|\hat{v}^\gg_\sigma(k)|^2}{(k_F^\sigma)^2 - |k|^2}\leq C\rho^{\frac{1}{6}}.
\end{equation}
Combining the estimates in \eqref{eq: int t u} and \eqref{eq: est int t vt}, we conclude the proof.
\end{proof}
\begin{proposition}[Propagation estimates for $\widetilde{\mathbb{Q}}_>$]\label{pro: tildeQ} Let $\lambda\in [0,1]$ and let $\psi$ be an approximate ground state in the sense of Definition \ref{def: approx gs}. Under the same assumptions as in Theorem \ref{thm: optimal lw bd}, it holds that
\[
  \left|\partial_\lambda\langle T^\ast_{2;\lambda}T^\ast_1 R^\ast \psi, \widetilde{\mathbb{Q}}_> T^\ast_{2;\lambda}T^\ast_1 R^\ast\psi\rangle \right| \leq CL^3\rho^{\frac{7}{3}}.
\]
\end{proposition}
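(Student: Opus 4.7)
The derivative is
\begin{equation*}
\partial_\lambda \langle \xi_\lambda, \widetilde{\mathbb{Q}}_> \xi_\lambda\rangle = -\langle \xi_\lambda, [\widetilde{\mathbb{Q}}_>, B_2 - B_2^\ast]\xi_\lambda\rangle + \mathrm{c.c.},
\end{equation*}
where $\xi_\lambda := T_{2;\lambda}^\ast T_1^\ast R^\ast \psi$. Since the pure-annihilation quartic part of $\widetilde{\mathbb{Q}}_>$ commutes with the pure-annihilation quartic $B_2$ (four annihilation operators anticommute past four others, yielding zero), the only contribution comes from commuting the annihilation part of $\widetilde{\mathbb{Q}}_>$ with the creation part of $B_2^\ast$ (and the h.c.). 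I would use the configuration-space representation \eqref{eq: B2 conf space} and anticommute the eight fermionic operators pairwise.

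Expanding via the canonical anticommutation relations produces three types of terms: (i) a fully contracted \emph{constant} term (four Wick-type contractions), (ii) terms quadratic in the $a_\sigma, a_\sigma^\ast$ (two contractions), and (iii) quartic terms (one contraction). Each contraction of $a_\sigma(u^\gg_x)$ with $a_\sigma^\ast(u^t_z)$ yields a kernel $(u^\gg \ast u^t)(x-z)$, and similarly for $v$-type contractions; these will be controlled via the $L^2$-norms $\|u^t\|_2$ and $\|v^t\|_2$.

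I would then estimate each type of error term by Cauchy-Schwarz, combining: (a) the operator bounds $\|a_\sigma(v^\gg_\cdot)\|, \|a_\sigma(u^\gg_\cdot)\| \leq C\rho^{1/2}$; (b) the $L^1$-bound $\|V_\varphi\|_1 \leq C$; (c) the dispersive estimate $\int dx\, \|a_\sigma(u^\gg_x)\xi_\lambda\|^2 \leq C\rho^{-5/6}\langle \xi_\lambda,\mathbb{H}_0\xi_\lambda\rangle$ (which exploits the spectral gap of order $\rho^{5/6}$ in the support of $\hat{u}^\gg_\sigma$); (d) the key integral estimate $\int_0^{+\infty}dt\,\|u^t\|_2\|v^t\|_2 \leq C\rho^{1/6}$ from Lemma \ref{lem: integral t}; and (e) the propagation bound $\langle \xi_\lambda,\mathbb{H}_0\xi_\lambda\rangle \leq CL^3\rho^2$ from Proposition \ref{pro: prop est H0 T2}. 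The strategy parallels the proofs of Propositions \ref{pro: Q2} and \ref{pro: T2<}, but with the $(k_F^\sigma)^{3/2}$ cutoff replacing the roles of $\hat{u}^<_\sigma$ and $\hat{v}_\sigma$ and with the $t$-integral supplying the $\rho^{1/6}$ gain that was previously given by $\|\varphi\|_{L^2}$.

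The main obstacle will be the fully contracted constant term: after integrating one variable (giving $L^3$ by translation invariance), what remains is a double integral of $V_\varphi \cdot V_\varphi$ against a product of kernels that contract $u^\gg$ with $u^t$ and $v^\gg$ with $v^t$, integrated over $t \in (0, +\infty)$. Here the careful interplay is: two factors of $\|V_\varphi\|_1$ contribute $O(1)$, the $v^\gg$ contractions contribute a factor $\rho$ each (via $\|v^\gg\|_\infty \leq C\rho$), and the remaining $u$-contraction norms combined with the $t$-integral yield, through Lemma \ref{lem: integral t}, the precise $\rho^{1/6}$ gain that upgrades the naive $\rho^2$ bound to $\rho^{7/3}$. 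The secondary difficulty is that the propagation estimate \eqref{eq: prop H0 T2} is only $\rho^2$ rather than optimal, so all error terms involving $\mathbb{H}_0^{1/2}\xi_\lambda$ must be paired with enough extra powers of $\rho$ (via the $\rho^{5/6}$ gap) to absorb this loss into the target $\rho^{7/3}$.
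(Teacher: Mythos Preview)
Your overall strategy and your treatment of the quartic and quadratic error terms are essentially the same as the paper's: commute the annihilation part of $\widetilde{\mathbb{Q}}_>$ with $B_2^\ast$ in configuration space, estimate each resulting term by Cauchy--Schwarz using the bounds $\|a_\sigma(u^\gg_\cdot)\|,\|a_\sigma(v^\gg_\cdot)\|\le C\rho^{1/2}$, the gap estimate $\int dx\,\|a_\sigma(u^\gg_x)\xi_\lambda\|^2\le C\rho^{-5/6}\langle\xi_\lambda,\mathbb{H}_0\xi_\lambda\rangle$, Lemma~\ref{lem: integral t}, and Proposition~\ref{pro: prop est H0 T2}. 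This part is fine.

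The gap is in the fully contracted constant term. Your proposed bookkeeping --- two factors of $\rho$ from $\|v^\gg\|_\infty$, then a $\rho^{1/6}$ from Lemma~\ref{lem: integral t} applied to the remaining $u$-contractions --- does not close. First, Lemma~\ref{lem: integral t} bounds $\int_0^\infty\|u^t\|_2\|v^t\|_2\,dt$; once you have already consumed the $v$-kernels via $\|v^\gg\|_\infty$, there is no $\|v^t\|_2$ factor left to pair with $\|u^t\|_2$, so the lemma does not apply as stated. Second, even granting a $\rho^{1/6}$ gain, the arithmetic gives $\rho^2\cdot\rho^{1/6}=\rho^{13/6}$, not $\rho^{7/3}=\rho^{14/6}$. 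If one carries out the crude $L^\infty/L^2$ estimate honestly --- $\|v^t_\sigma\|_\infty\le C\rho\, e^{t(k_F^\sigma)^2}$, $\|u^t_\sigma\|_2\le C\rho^{1/2}e^{-t(k_F^\sigma+(k_F^\sigma)^{3/2})^2}$, then integrate in $t$ --- the exponential decay rate is only $\sim(k_F^\sigma)^{5/2}$, so the $t$-integral costs $\rho^{-5/6}$ and the final bound is $CL^3\rho^{13/6}$, which is too weak by $\rho^{1/6}$.

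The paper handles the constant term differently: it rewrites it in momentum space as
\[
\frac{2}{L^6}\sum_{p,r,r'}\widehat{V_\varphi}(p)^2\,\frac{(\hat u^\gg_\uparrow(r{+}p))^2(\hat u^\gg_\downarrow(r'{-}p))^2\hat v^\gg_\uparrow(r)\hat v^\gg_\downarrow(r')}{|r{+}p|^2-|r|^2+|r'{-}p|^2-|r'|^2},
\]
bounds $|\widehat{V_\varphi}(p)|^2\le C$, and then rescales by $k_F$ to obtain $(k_F)^7\sim\rho^{7/3}$ times the dimensionless Huang--Yang-type integral, which is known to be uniformly bounded (the paper cites \cite{CW,Kanno} and \cite[Appendix~C]{GHNS}). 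This boundedness is a nontrivial fact about the specific structure of the energy denominator and the Fermi-surface constraints; it is not recoverable from the soft $L^p$ bounds you propose. You need to invoke it (or reprove it) to close the argument.
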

\begin{proof}
To simplify the notation, we write $\xi_\lambda = T^\ast_{2;\lambda}T^\ast_1 R^\ast\psi$. We then start the proof by computing 
\[
  \partial_\lambda\langle \xi_\lambda, \widetilde{\mathbb{Q}}_> \xi_\lambda\rangle = -\langle \xi_\lambda, [\widetilde{\mathbb{Q}}_>, B_2]\xi_\lambda\rangle + \mathrm{c.c.}
\]
Writing $B_2$ in configuration space (see \eqref{eq: B2 conf space})
\[
  B_2 = 2\int_0^{\infty} dt \int dzdz^\prime\, V_\varphi(z-z^\prime)\,a_\uparrow(u^t_z)a_\uparrow( v^t_z)a_\downarrow(u^t_{z^\prime})a_\downarrow( v^t_{z^\prime}).
\]
and computing the commutator above, we find the error terms to estimate. The structure of those is similar to the ones in Proposition \ref{pro: Q2}. We then omit the calculations and directly estimate them. The first one we consider corresponds to \eqref{eq: term I Q4}, and explicitly is:
\[
  \mathrm{I}_a = 2\int_0^{\infty}dt\int dxdydzdz^\prime\, V(x-y)V_\varphi(z-z^\prime)v_\uparrow^t(x;z) \langle \xi_\lambda, a_\uparrow^\ast(u_x^\gg)a_\downarrow^\ast(u_y^\gg) a^\ast_\downarrow( v_y^\gg)a_\downarrow( v_{z^\prime}^t) a_\downarrow(u_{z^\prime}^t)a_\uparrow(u_{z}^t)\xi_\lambda\rangle,
\]
where $v_\sigma^t(x;y)$  is as in \eqref{eq: def ut vt x space} (see also \eqref{eq: def ut vt x space}). It is convenient to write 
\begin{multline*}
  \mathrm{I}_a = \int_0^{\infty}dt\, \mathrm{I}_a^t = \int_0^{\infty}dt \frac{2}{L^3}\sum_k \hat{v}^\gg_\uparrow(k)\hat{v}^t_\uparrow(k)\times 
  \\
  \times\left\langle \left(\int dxdy\, e^{-ik\cdot x}V(x-y)a_\uparrow(u^\gg_x)a_\downarrow(u^\gg_y)a_\downarrow( v^\gg_y)\xi_\lambda\right), \left(\int dzdz^\prime\, e^{-ik\cdot z}V_\varphi(z-z^\prime)a_\downarrow( v_{z^\prime}^t) a_\downarrow(u_{z^\prime}^t)a_\uparrow(u_{z}^t)\xi_\lambda\right)\right\rangle.
\end{multline*}
Therefore, by Cauchy-Schwarz's inequality and using that $|\hat{v}^t_\uparrow(k)| \leq e^{t(k_F^\uparrow)^2}$ , we find that 
\begin{multline}\label{eq: CS Iat}
  |\mathrm{I}_a^t| \leq e^{t(k_F^\uparrow)^2}\sqrt{\frac{1}{L^3}\sum_k\left\|\int dxdy\, e^{-ik\cdot x}V(x-y)a_\uparrow(u^\gg_x)a_\downarrow(u^\gg_y)a_\downarrow( v^\gg_y)\xi_\lambda\right\|^2}\times 
  \\
  \times \sqrt{\frac{1}{L^3}\sum_k\left\|\int dzdz^\prime\, e^{-ik\cdot z}V_\varphi(z-z^\prime)a_\downarrow( v_{z^\prime}^t) a_\downarrow(u_{z^\prime}^t)a_\uparrow(u_{z}^t)\xi_\lambda\right\|^2}.
\end{multline}
We now estimate the two terms above. For the first one, using that $\|a_\sigma(u^\gg_\cdot)\|, \|a_\sigma( v^\gg_\cdot)\|\leq C\rho^{1/2}$ and \eqref{eq: N>> H0} (with $\epsilon =1/2$), we obtain
\begin{multline*}
\frac{1}{L^3}\sum_k\left\|\int dxdy\, e^{-ik\cdot x}V(x-y)a_\uparrow(u^\gg_x)a_\downarrow(u^\gg_y)a_\downarrow( v^\gg_y)\xi_\lambda\right\|^2
\leq C\rho^2 \int dxdydy^\prime |V(x-y)||V(x-y^\prime)| \|a_\uparrow(u_x^\gg)\xi_\lambda\|^2 
\\
\leq C\|V\|_1^2 \rho^{2} \int dx\, \|a_\uparrow(u_x^\gg)\xi_\lambda\|^2\leq C\rho^{2-\frac{5}{6}}\langle \xi_\lambda,\mathbb{H}_0 \xi_\lambda\rangle.
\end{multline*}
We can proceed similarly for the other term in \eqref{eq: CS Iat}. Therefore, we get
\begin{multline*}
\frac{1}{L^3}\sum_k\left\|\int dzdz^\prime\, e^{-ik\cdot z}V_\varphi(z-z^\prime)a_\downarrow( v_{z^\prime}^t) a_\downarrow(u_{z^\prime}^t)a_\uparrow(u_{z}^t)\xi_\lambda\right\|^2
\\
\leq C\|V_\varphi\|_1^2 \|u^t_\downarrow\|_2^2 \|v^t_\downarrow\|_2^2\int dz\, \|a_\uparrow(u^t_z)\xi_\lambda\|^2 \leq C\|u^t_\downarrow\|_2^2 \|v^t_\downarrow\|_2^2 e^{-2t(k_F^\uparrow)^2}\rho^{-\frac{5}{6}} \langle \xi_\lambda,\mathbb{H}_0 \xi_\lambda\rangle,
\end{multline*}
where we also used that 
\begin{multline}\label{eq: Nt H0}
  \int dz\,\|a_\uparrow(u^t_z)\xi_\lambda\|^2 =\sum_k |\hat{u}^t_\uparrow(k)|^2\|\hat{a}_{k,\uparrow} \xi_\lambda\|^2 \leq e^{-2t(k_F^\uparrow)^2}\sum_{|k| \geq k_F^\uparrow + (k_F^\uparrow)^{\frac{3}{2}}} \|\hat{a}_{k,\uparrow}\xi_\lambda\|^2 \leq Ce^{-2t(k_F^\downarrow)^2}\rho^{-\frac{5}{6}}\langle \xi_\lambda, \mathbb{H}_0 \xi_\lambda\rangle.
\end{multline}
Combining the estimates together, we get
\[
  |\mathrm{I}_a| \leq \int_0^{\infty}dt\, |\mathrm{I}_a^t| \leq C \|V\|_1 \|V_\varphi\|_1 \rho^{1-\frac{5}{6}}\left(\int dt\, e^{t(k_F^\uparrow)^2}e^{-t(k_F^\uparrow)^2} \|u^t_\uparrow\|_2 \|v^t_\uparrow\|_2\right)\langle\xi_\lambda,\mathbb{H}_0 \xi_\lambda\rangle \leq C\rho^{\frac{1}{3}}\langle\xi_\lambda,\mathbb{H}_0 \xi_\lambda\rangle,
\]
where we also used Lemma \ref{lem: integral t}.
The next error term that we consider corresponds to the one in \eqref{eq: term Ib Q4} in Proposition \ref{pro: Q2}. Explicitly it reads
\[
  \mathrm{I}_b= 2\int_0^{\infty}\hspace{-0.25cm}dt\int dxdydzdz^\prime\, V(x-y)V_\varphi(z-z^\prime)v_\uparrow^t(x;z)v_\downarrow^t(y;z^\prime)\langle \xi_\lambda, a^\ast_\uparrow(u^\gg_x)a^\ast_{\downarrow}(u^\gg_y) a_\downarrow(u^t_{z^\prime})a_\uparrow(u^t_{z})\xi_\lambda\rangle = \int_0^{\infty}\hspace{-0.25cm}dt\, \mathrm{I}_b^t.
\]
We then use that $\|a_\downarrow^\ast(u^\gg_y) a_\downarrow(u^t_{z^\prime})\|\leq \|u^\gg_\downarrow\|_2 \|u^t_\downarrow\|_2\leq C\rho^{1/2}\|u^t_\downarrow\|_2$ and Cauchy-Schwarz to write
\begin{eqnarray*}
  |\mathrm{I}_b^t| 
  &\leq& C\rho^{\frac{1}{2}}\|u_\downarrow^t\|_2\left(\int dxdydzdz^\prime\, |V(x-y)||V_\varphi(z-z^\prime)||v_\uparrow^t(x;z)||v_\downarrow^t(y;z^\prime)|\|a_\uparrow(u^\gg_x)\xi_\lambda\|^2\right)^{\frac{1}{2}}\times 
  \\
  &&\times \left(\int dxdydzdz^\prime\, |V(x-y)||V_\varphi(z-z^\prime)||v_\uparrow^t(x;z)||v_\downarrow^t(y;z^\prime)|\|a_\uparrow(u^t_{z})\xi_\lambda\|^2\right)^{\frac{1}{2}}
  \\
  &\leq& C\|V\|_1\|V_\varphi\|_1\rho^{\frac{1}{2}-\frac{5}{6}}\|u^t_\downarrow\|_2 \|v^t_\downarrow\|_2 \|v^t_\uparrow\|_2 e^{-t(k_F^\uparrow)^2}\langle \xi_\lambda,\mathbb{H}_0 \xi_\lambda\rangle,
\end{eqnarray*}
where we used again \eqref{eq: N>> H0} (with $\varepsilon = 1/2$), \eqref{eq: Nt H0} together with 
\begin{equation}\label{eq: int V ww CS Q2 T2}
  \int dxdy\, |V(x-y)||v_\uparrow^t(x;z)||v_\downarrow^t(y;z^\prime)|\leq \|V\|_1\|v_\uparrow^t\|_2 \|v_\downarrow^t\|_2,\quad 
\end{equation}
and the analogous estimate with $V$ replaced by $V_\varphi$.
Using now that  $\|v^t_\uparrow\|_2 \leq C\rho^{1/2} e^{t(k_F^\uparrow)^2}$, we find 
\[
  |\mathrm{I}_b| \leq C\rho^{1-\frac{5}{6}}\left(\int_0^{\infty}dt\, e^{t(k_F^\uparrow)^2}e^{-t(k_F^\uparrow)^2} \|u^t_\downarrow\|\|v^t_\downarrow\|\right)\langle \xi_\lambda,\mathbb{H}_0\xi_\lambda\rangle \leq C\rho^{\frac{1}{3}}\langle \xi_\lambda,\mathbb{H}_0\xi_\lambda\rangle,
\]
using again \ref{lem: integral t}.
We now consider the error term corresponding to the one in \eqref{eq: IIb Q2}. After a change of variables, we  have 
\[
  \mathrm{II}_a = 2\int_0^{\infty}\hspace{-0.35cm} dt\int dxdydzdz^\prime\,{V}(x-y)V_\varphi(z-z^\prime)u^{t}_\uparrow(z;x) \langle \xi_\lambda, a^\ast_\downarrow(u^\gg_y)a_\downarrow(u^t_{z^\prime}) a_\uparrow( v_z^t)a_\downarrow( v_{z^\prime}^t) a^\ast_{\downarrow}( v_y^\gg)a^\ast_\uparrow( v_x^\gg)  \xi_\lambda\rangle=: \int_0^{\infty}\hspace{-0.25cm}\mathrm{II}_b^t,
\]
where we used the notation 
\[
  u^t_\sigma(x;y) = \frac{1}{L^3}\sum_k \hat{u}^t_\sigma(k)\hat{u}^\gg_\sigma(k)e^{ik\cdot (x-y)}.
\]
Note that in the estimate for $\mathrm{II}_a$, the dependence on the spin is important. We have in total four possible errors of the form of $\mathrm{II}_a$ with different spin combinations. All of these terms can be estimated similarly as $\mathrm{II}_a$, we therefore estimate $\mathrm{II}_a$ and omit the details for the others. It is convenient to rewrite $\mathrm{II}_a$ in normal order. We have $\mathrm{II}_a = \mathrm{II}_{a;1}+\mathrm{II}_{a;2} + \mathrm{II}_{a;3} + \mathrm{II}_{a;4}$ with
\begin{equation*}
  \mathrm{II}_{a;1} = -2\int_0^{\infty}dt\int dxdydzdz^\prime\, V(x-y)V_\varphi(z-z^\prime)u^t_\uparrow(z;x) \langle \xi_\lambda, a^\ast_{\downarrow}( v_y^\gg)a^\ast_\uparrow( v_x^\gg)a^\ast_\downarrow(u^\gg_y)a_\downarrow(u^t_{z^\prime})a_\uparrow( v_z^t)a_\downarrow( v_{z^\prime}^t)\xi_\lambda\rangle,
\end{equation*}
\[
  \mathrm{II}_{a;2} = 2\int_0^{\infty}dt\, \int dxdydzdz^\prime\, V(x-y)V_\varphi(z-z^\prime) v_\uparrow^t(x;z)v_\downarrow^t(y;z^\prime) u^t_\uparrow(z;x) \langle\xi_\lambda, a^\ast_\downarrow(u^\gg_y)a_\downarrow(u^t_{z^\prime})\xi_\lambda\rangle,
\]
\[
 \mathrm{II}_{a;3} = -2\int_0^{\infty}dt \int dxdydzdz^\prime V(x-y)V_\varphi(z-z^\prime)u^t_\uparrow(z;x)v_\downarrow^t(y;z^\prime) \langle \xi_\lambda, a^\ast_\downarrow(u^\gg_y) a^\ast_\uparrow( v_x^\gg) a_\uparrow( v_{z}^t)a_\downarrow (u_{z^\prime}^t) \xi_\lambda\rangle,
\]
\[
 \mathrm{II}_{a;4} = 2\int_0^{\infty}dt \int dxdydzdz^\prime V(x-y)V_\varphi(z-z^\prime)u^t_\uparrow(z;x)v_\uparrow^t(x;z) \langle \xi_\lambda, a^\ast_\downarrow(u^\gg_y) a^\ast_\downarrow( v_y^\gg) a_\downarrow( v_{z^\prime}^t)a_\downarrow (u_{z^\prime}^t) \xi_\lambda\rangle.
\]
We write $\mathrm{II}_{a;1}=\int_0^{\infty}dt\,\mathrm{II}_{a;1}^t$, and begin by estimating $\mathrm{II}_{a;1}^t$. The argument proceeds similarly to the case $\mathrm{I}_a$. Expressing $u^t_\uparrow(z;x)$ in momentum space and using that $|\hat{u}^t_\uparrow(k)|\leq e^{-t(k_F^\uparrow)^2}$, we obtain by the Cauchy-Schwarz inequality that
\begin{align*}
  &|\mathrm{II}_{a;1}^t| \leq Ce^{-t(k_F^\uparrow)^2}\left(\frac{1}{L^3}\sum_k \left\| \int dxdy\, e^{ik\cdot x}\, V(x-y) a_\downarrow(u^\gg_y)a_\uparrow( v^\gg_x)a_\downarrow( v^\gg_y)\xi_\lambda\right\|^2\right)^{\frac{1}{2}}\times
  \\
  &\hspace{5cm}\times \left(\frac{1}{L^3}\sum_k \left\| \int dzdz^\prime e^{ik\cdot z}\, V_\varphi(z-z^\prime) a_\downarrow(u^t_{z^\prime})a_\uparrow( v_z^t)a_\downarrow( v_{z^\prime}^t)\xi_\lambda\right\|^2\right)^{\frac{1}{2}}
  \\
  &\leq C\rho e^{-t(k_F^\uparrow)^2}\|V\|_1\|V_\varphi\|_1\|u^t_\downarrow\|_2\|v^t_\downarrow\|_2\sqrt{\int dy\, \|a_\downarrow(u^\gg_y)\xi_\lambda\|^2}\sqrt{\int dz\, \|a_\uparrow( v^t_z)\xi_\lambda\|^2}.
\end{align*}
Using now \eqref{eq: N>> H0} and 
\begin{equation}\label{eq: vt N}
  \int dz\, \|a_\downarrow( v^t_{z})\xi_\lambda\|^2 = \sum_k |\hat{v}^t_\downarrow(k)|^2 \hat{a}_{k,\downarrow}^\ast \hat{a}_{k,\downarrow} \leq e^{t(k_F^\uparrow)^2}\sum_k |\hat{v}^\gg_\downarrow(k)|^2 \hat{a}_{k,\downarrow}^\ast \hat{a}_{k,\downarrow} \leq Ce^{t(k_F^\uparrow)^2}\rho^{-\frac{5}{6}}\langle \xi_\lambda,\mathbb{H}_0\xi_\lambda\rangle,
\end{equation}
we get
\[
  |\mathrm{II}_{a;1}| \leq C\rho^{\frac{1}{6}} \left(\int_0^{\infty} dt \, \|u^t_\downarrow\|_2 \|v^t_\downarrow\|_2\right)\langle \xi_\lambda, \mathbb{H}_0 \xi_\lambda\rangle \leq C\rho^{\frac{1}{3}}\langle \xi_\lambda,\mathbb{H}_0\xi_\lambda\rangle,
\]
where Lemma \ref{lem: integral t} was used in the last step. We now proceed to estimate $\mathrm{II}_{a;2}$. Using the bound $\|v^t_\downarrow\|_\infty \leq C\rho e^{t(k_F^\downarrow)^2}$ and applying the Cauchy-Schwarz inequality, we obtain
\begin{eqnarray*}
  |\mathrm{II}_{a;2}| 
  &\leq& C\rho\int_0^{\infty}dt\, e^{t(k_F^\downarrow)^2}\left(\int dxdydzdz\, |V(x-y)||V_\varphi(z-z^\prime)| |v_\uparrow^t(x;z)|| u^t_\uparrow(z;x)|  \|a_\downarrow(u^\gg_y)\xi_\lambda\|^2\right)^{\frac{1}{2}}\times 
  \\
  && \times \left(\int dxdydzdz\, |V(x-y)||V_\varphi(z-z^\prime)| |v_\uparrow^t(x;z)|| u^t_\uparrow(z;x)|  \|a_\downarrow(u^t_{z^\prime})\xi_\lambda\rangle\|^2\right)^{\frac{1}{2}}
  \\
  &\leq& C\rho\|V\|_1\|V_\varphi\|_1\int_0^{+\infty}dt\, e^{t(k_F^\downarrow)^2} e^{-t(k_F^\downarrow)^2}\|u^t_\uparrow\|_2\|v^t_\uparrow\|_2 \sqrt{\int dy\,  \|a_\downarrow(u^\gg_y)\xi_\lambda\|^2}\sqrt{\int dz\, \|a_\downarrow(u^t_{z^\prime})\xi_\lambda\rangle\|^2},
\end{eqnarray*}
where we use analogous estimates to \eqref{eq: int V ww CS Q2 T2}. Using also \eqref{eq: N>> H0} (with $\varepsilon = 1/2$) and \eqref{eq: Nt H0}, we find that 
\[
  |\mathrm{II}_{a;2}| \leq C\rho^{1-\frac{5}{6}}\left(\int_0^{\infty} dt\, \|u^t_\uparrow\|_2 \|v^t_\uparrow\|_2\right)\langle \xi_\lambda,\mathbb{H}_0 \xi_\lambda\rangle \leq C\rho^{\frac{1}{3}}\langle \xi_\lambda,\mathbb{H}_0 \xi_\lambda\rangle,
\]
where we have also used Lemma \ref{lem: integral t}. We now consider the next error term, $\mathrm{II}_{a;3}$. Using the bound $\|a^\ast_\uparrow( v_x^\gg) a_\uparrow( v_{z}^t)\| = \|v^\gg_\uparrow\|_2 \|v^t_\uparrow\|_2 \leq C\rho^{1/2}\|v^t_\uparrow\|_2$ and applying  Cauchy-Schwarz, we proceed in the same way as above to obtain
\begin{eqnarray*}
  |\mathrm{II}_{a;3}| 
  &\leq& C\rho^{\frac{1}{2}}\int_0^{\infty}dt\,\|v^t_\uparrow\|_2\left(\int dxdydzdz^\prime |V(x-y)||V_\varphi(z-z^\prime)||u^t_\uparrow(z;x)||v_\downarrow^t(y;z^\prime)| \|a_\downarrow(u^\gg_y)\xi_\lambda\|^2 \right)^{\frac{1}{2}}\times 
  \\
  &&\times \left(\int dxdydzdz^\prime |V(x-y)||V_\varphi(z-z^\prime)||u^t_\uparrow(z;x)||v_\downarrow^t(y;z^\prime)|  \|a_\downarrow (u_{z^\prime}^t) \xi_\lambda\|^2\right)^{\frac{1}{2}}
  \\
  &\leq& C\rho^{\frac{1}{2}}\|V\|_1\|V_\varphi\|_1\int_0^{\infty}dt\,\|v^t_\uparrow\|_2\|u^t_\uparrow\|_2 \|v_\downarrow^t\|_2\left(\int dy \, \|a_\downarrow(u^\gg_y)\xi_\lambda\|^2\right)^{\frac{1}{2}}\left(\int dz^\prime \, \|a_\downarrow(u^t_{z^\prime})\xi_\lambda\|^2 \right)^{\frac{1}{2}}.
\end{eqnarray*}
By $\|v_\downarrow^t\|_2 \leq C\rho^{1/2} e^{t(k_F^\downarrow)^2}$ together with \eqref{eq: N>> H0} (for $\varepsilon =1/2$), \eqref{eq: Nt H0} and Lemma \ref{lem: integral t}, we obtain
\[
  |\mathrm{II}_{b;3}| \leq  C\rho^{\frac{1}{3} - \frac{1}{6}} \left(\int_0^{\infty} dt \, \|u^t_\downarrow\|_2 \|v^t_\downarrow\|_2\right)\langle \xi_\lambda, \mathbb{H}_0 \xi_\lambda\rangle \leq C\rho^{\frac{1}{3}}\langle \xi_\lambda,\mathbb{H}_0\xi_\lambda\rangle.
\]
The error term $\mathrm{II}_{a;4}$ can be estimated similarly, we omit the details. Combining all the estimates, we find 
\[
  |\mathrm{II}_{a}| \leq C\rho^{\frac{1}{3}}\langle \xi_\lambda,\mathbb{H}_0 \xi_\lambda\rangle.
\]
We now consider $\mathrm{II}_b$, which corresponds to \eqref{eq: term IIa Q2}: 
\[
  \mathrm{II}_b = -4\int_0^{\infty}\hspace{-0.35cm}dt \int dxdydzdz^\prime\, V(x-y)V_\varphi(z-z^\prime)u^t_\downarrow(z^\prime;y)u^t_\uparrow(z;x) \langle \xi_\lambda, a_\uparrow( v_z^t)a_\downarrow( v_{z^\prime}^t) a^\ast_{\downarrow}( v_y^\gg)a^\ast_\uparrow( v_x^\gg)\xi_\lambda\rangle .
\]
It is convenient to put it in normal order, we have $\mathrm{II}_b = \mathrm{II}_{b;1} + \mathrm{II}_{b;2} + \mathrm{II}_{b;3} + \mathrm{II}_{b;4}$ with
\[
  \mathrm{II}_{b;1} =-2 \int_0^{\infty}dt\int dxdydzdz^\prime\, V(x-y)V_\varphi(z-z^\prime)u^t_\downarrow(z^\prime;y)u^t_\uparrow(z;x) \langle \xi_\lambda, a^\ast_{\downarrow}( v_y^\gg)a^\ast_\uparrow( v_x^\gg)a_\uparrow( v_z^t)a_\downarrow( v_{z^\prime}^t) \xi_\lambda\rangle,
\]
\[
  \mathrm{II}_{b;2} = 2\int_0^{\infty}dt\int dxdydzdz^\prime V(x-y)V_\varphi(z-z^\prime)u^t_\downarrow(z^\prime;y)u^t_\uparrow(z;x)v_\uparrow^t(x;z) \langle \xi_\lambda,a^\ast_\downarrow( v_y^\gg) a_\downarrow ( v_{z^\prime}^t)\xi_\lambda\rangle,
\]
\[
  \mathrm{II}_{b;3} = 2\int_0^{\infty}dt\int dxdydzdz^\prime V(x-y)V_\varphi(z-z^\prime)u^t_\downarrow(z^\prime;y)u^t_\uparrow(z;x)v_\uparrow^t(y;z^\prime) \langle \xi_\lambda,a^\ast_\downarrow( v_x^\gg) a_\downarrow (v_{z}^t)\xi_\lambda\rangle,
\]
\[
  \mathrm{II}_{b;4} = -2\frac{1}{L^6}\sum_{p,r,r^\prime}\hat{V}(p)\hat{f}_{r,r^\prime}(p)(\hat{u}^\gg_\uparrow(r+p))^2(\hat{u}^\gg_\downarrow(r^\prime - p))^2\hat{v}_\uparrow^\gg(r)\hat{v}_\downarrow^\gg(r^\prime).
\]
Note that in $\mathrm{II}_{a;4}$ we used that $\hat{v}^\gg = (\hat{v}^\gg)^2$, which does not hold for $\hat{u}^\gg$.
Using the bound  $\|a^\ast_\uparrow( v_x^\gg)a_\uparrow( v_z^t)\|\leq C\rho^{1/2}\|v^t_\uparrow\|_2$ and applying Cauchy-Schwarz inequality, we find
\begin{eqnarray*}
  |\mathrm{II}_{b;1}^t| \leq  C\rho^{\frac{1}{2}-\frac{5}{6}}\|V\|_1\|V_\varphi\|_1\|u_\uparrow^t\|_2\|u^t_\downarrow\|_2 \|v^t_\uparrow\|_2 e^{t(k_F^\downarrow)^2}\langle \xi_\lambda,\mathbb{H}_0\xi_\lambda\rangle,
\end{eqnarray*}
where we used analogous estimates to \eqref{eq: int V ww CS Q2 T2}, \eqref{eq: vt N} and 
\begin{equation}\label{eq: v>> N}
  \int dy, \|a_\downarrow( v^\gg_y)\xi_\lambda\|^2 = \sum_k |\hat{v}^\gg_\downarrow(k)|^2 \hat{a}_{k,\downarrow}^\ast \hat{a}_{k,\downarrow} \leq \sum_{|k|<k_F^\downarrow - (k_F^\downarrow)^{\frac{3}{2}}}  \hat{a}_{k,\downarrow}^\ast \hat{a}_{k,\downarrow}\leq C\rho^{-\frac{5}{6}}\langle \xi_\lambda,\mathbb{H}_0 \xi_\lambda\rangle.
\end{equation}
Therefore, using also that $\|u^t_\downarrow\|_2\leq C\rho^{1/2}e^{-t(k_F^\downarrow)^2}$, we obtain
\[
  |\mathrm{II}_{b;1}| \leq C\rho^{\frac{1}{6}} \left(\int_0^{+\infty}dt\, \|u^t_\uparrow\|_2\|v^t_\uparrow\|\right)\langle \xi_\lambda,\mathbb{H}_0\xi_\lambda\rangle\leq C\rho^{\frac{1}{3}}\langle \xi_\lambda,\mathbb{H}_0\xi_\lambda\rangle, 
\]
where we have again used Lemma \ref{lem: integral t}. Proceeding analogously, and additionally using the bound $\|u^t_\downarrow\|_\infty \leq C\rho e^{-t(k_F^\downarrow)^2}$, we obtain the following estimate for $\mathrm{II}_{b;2}$:
\begin{eqnarray*}
  |\mathrm{II}_{b;2}| \leq C\rho^{\frac{1}{6}} \|V\|_1 \|V_\varphi\|_1 \left(\int dt\, \|u^t_\uparrow\|_2 \|v^t_\uparrow\|_2\right) \langle \xi_\lambda,\mathbb{H}_0\xi_\lambda\rangle \leq C\rho^{\frac{1}{3}}\langle \xi_\lambda, \mathbb{H}_0\xi_\lambda\rangle,
\end{eqnarray*}
where we have also used \eqref{eq: v>> N} and \eqref{eq: vt N}. The term $\mathrm{II}_{b;3}$
can be estimated in the same way as $\mathrm{II}_{b;2}$,  we omit the details. Finally, we consider $\mathrm{II}_{b;4}$. Using the definition of $\hat{f}_{r,r^\prime}$ (see \eqref{eq: def frr'}), we  write
\[
  \mathrm{II}_{\mathrm{b;4}} = -\frac{4}{L^6}\sum_{p,r,r^\prime}\hat{V}(p)\frac{\widehat{V_\varphi}(p)}{|r+p|^2 - |r|^2 + |r^\prime - p|^2 - |r^\prime|^2}(\hat{u}^\gg_\uparrow(r+p))^2(\hat{u}^\gg_\downarrow(r^\prime - p))^2\hat{v}_\uparrow^\gg(r)\hat{v}_\downarrow^\gg(r^\prime).
\]
Using that $\|\hat{V}(p)\widehat{V_\varphi}(p)\|_\infty \leq \|V\|_1\|V_\varphi\|_1 \leq C$ and taking $L$ large enough, we find that
\[
  L^{-3}|\mathrm{II}_{\mathrm{b;4}}| \leq C\int_{\mathbb{R}^3}dp\int_{ |r| < k_F^\uparrow <|r+p|> k_F^\uparrow}dr\int_{ |r^\prime| < k_F^\downarrow < |r^\prime - p|> k_F^\downarrow }dr^\prime \frac{1}{|r+p|^2 - |r|^2 + |r^\prime - p|^2 - |r^\prime|^2}.
\]
By re-scaling via (for instance) $k_F^\uparrow$, we get that 
\[
  L^{-3}|\mathrm{II}_{b;4}| \leq C(k_F^\uparrow)^{\frac{7}{3}}\int_{\mathbb{R}^3}dp\int_{ |r| < 1 < |r+p|}dr\int_{ |r^\prime| < \frac{k_F^\downarrow}{k_F^\uparrow} < |r^\prime -p|}dr^\prime \frac{1}{|r+p|^2 - |r|^2 + |r^\prime - p|^2 - |r^\prime|^2}.
\]
The integral above is well-know to be uniformly bounded (see e.g. \cite{CW,Kanno} and \cite[Appendix C]{GHNS24}). Therefore, we can conclude that 
\[
  |\mathrm{II}_{4;b}| \leq CL^3 \rho^{\frac{7}{3}}.
\]
Putting all the estimates together and using the propagation estimate in \eqref{eq: prop H0 T2}, we conclude the proof. 
\end{proof}
\subsection{Conjugation under the first quadratic quasi-bosonic transformation $T_2$: conclusions}\label{sec: lw bd T2}
In this section we conclude the proof of Theorem \ref{thm: optimal lw bd}. The starting point is \eqref{eq: final T1}, which holds for any $\psi$ approximate ground state (see Definition \ref{def: approx gs}) and for $L$ large enough:
\begin{equation}\label{eq: fin T1 sec T2}
E_{L}(N_\uparrow, N_\downarrow) \geq \frac{1}{2}\langle T^\ast_1 R^\ast \psi, (\mathbb{H}_0  + 2\widetilde{\mathbb{Q}}_>) T^\ast_1 R^\ast \psi \rangle + \frac{3}{5}(6\pi^2)^{\frac{2}{3}} (\rho_\uparrow^{\frac{5}{3}} + \rho_\downarrow^{\frac{5}{3}})L^3 + 8\pi a\rho_\uparrow\rho_\downarrow L^3  + \mathcal{E}_{\mathrm{T}_1},
\end{equation}
with $|\mathcal{E}_{\mathrm{T}_1}|\leq CL^3\rho^{7/3}$. We now use Duhamel's formula and get: 
\begin{multline*}
  \frac{1}{2}\langle T^\ast_1 R^\ast \psi, (\mathbb{H}_0  + 2\widetilde{\mathbb{Q}}_>) T^\ast_1 R^\ast \psi \rangle 
  \\
  = \frac{1}{2}\langle T^\ast_2 T^\ast_1 R^\ast \psi, (\mathbb{H}_0  +2\widetilde{\mathbb{Q}}_>) T^\ast_2T^\ast_1 R^\ast \psi \rangle - \frac{1}{2}\int_0^1 d\lambda\, \partial_\lambda \langle T^\ast_{2;\lambda}T^\ast_1 R^\ast \psi, (\mathbb{H}_0 +  2\widetilde{\mathbb{Q}}_>) T^\ast_{2;\lambda}T^\ast_1 R^\ast \psi\rangle.
\end{multline*}
We now use Proposition \ref{pro: prop est H0 T2}, see \eqref{eq: der H0 T2} and \eqref{eq: comm H0B2 f}, to write 
\[
  \partial_\lambda\langle T^\ast_{2;\lambda}T^\ast_1 R^\ast \psi, \mathbb{H}_0  T^\ast_{2;\lambda}T^\ast_1 R^\ast \psi\rangle
  =  2\langle T^\ast_{2;\lambda}T^\ast_1 R^\ast \psi,\widetilde{\mathbb{Q}}_> T^\ast_{2;\lambda}T^\ast_1 R^\ast \psi \rangle.
\]
Using Duhamel's formula once more, we thus get 
\begin{multline*}
  \frac{1}{2}\langle T^\ast_1 R^\ast \psi, (\mathbb{H}_0  +2\widetilde{\mathbb{Q}}_>) T^\ast_1 R^\ast \psi \rangle  = \frac{1}{2}\langle T^\ast_2 T^\ast_1 R^\ast \psi, \mathbb{H}_0 T^\ast_2T^\ast_1 R^\ast \psi \rangle 
  \\
  - \int_0^1d\lambda\int_{1}^\lambda d\lambda^\prime\, \partial_{\lambda^\prime}\langle T^\ast_{2;\lambda^\prime}T^\ast_1 R^\ast \psi, \widetilde{\mathbb{Q}}_>  T^\ast_{2;\lambda^\prime}T^\ast_1 R^\ast \psi\rangle - \int_0^1 d\lambda\, \partial_\lambda\langle T^\ast_{2;\lambda}T^\ast_1 R^\ast \psi, \widetilde{\mathbb{Q}}_> T^\ast_{2;\lambda}T^\ast_1 R^\ast \psi\rangle .
\end{multline*}
From Proposition \ref{pro: tildeQ}, we deduce that 
\[
  \frac{1}{2}\langle T^\ast_1 R^\ast \psi, (\mathbb{H}_0  +2\widetilde{\mathbb{Q}}_>) T^\ast_1 R^\ast \psi \rangle  \geq  \frac{1}{2}\langle T^\ast_2 T^\ast_1 R^\ast \psi, \mathbb{H}_0 T^\ast_2T^\ast_1 R^\ast \psi \rangle - CL^3\rho^{\frac{7}{3}}.
\]
Inserting the bound above in \eqref{eq: fin T1 sec T2} and taking the thermodynamic limit, we find that the energy density satisfies
\[
e(\rho_\uparrow, \rho_\downarrow) \geq \frac{1}{2}\langle T^\ast_2 T^\ast_1 R^\ast \psi, \mathbb{H}_0   T^\ast_2T^\ast_1 R^\ast \psi \rangle + \frac{3}{5}(6\pi^2)^{\frac{2}{3}} (\rho_\uparrow^{\frac{5}{3}} + \rho_\downarrow^{\frac{5}{3}}) + 8\pi a\rho_\uparrow\rho_\downarrow   -C\rho^{\frac{7}{3}} \geq \frac{3}{5}(6\pi^2)^{\frac{2}{3}} (\rho_\uparrow^{\frac{5}{3}} + \rho_\downarrow^{\frac{5}{3}}) + 8\pi a\rho_\uparrow\rho_\downarrow   -C\rho^{\frac{7}{3}},
\]
where we also used the positivity of $\mathbb{H}_0$.
\section{Proof of Theorem \ref{thm: optimal number operator} }\label{sec: number excit}
The proof of Theorem \ref{thm: optimal number operator} is a consequence of Theorem \ref{thm: optimal lw bd}. Indeed, from \eqref{eq: opt lw bd}, we get that for $\psi\in \mathcal{F}_{\mathrm{f}}$ being a normalized $N$--particle state with $N = N_\uparrow + N_\downarrow$ and satisfying,
\[
  \left|\frac{\langle \psi,\mathcal{H}\psi\rangle}{L^3} -\frac{3}{5}(6\pi^2)^{\frac{2}{3}}(\rho_\uparrow^{\frac{5}{3}} + \rho_\downarrow^{\frac{5}{3}}) - 8\pi a \rho_\uparrow\rho_\downarrow\right| \leq C\rho^{\frac{7}{3}},
\]
it holds that
\begin{equation}\label{eq: est H0 T2 7/3}
  \langle T^\ast_2 T^\ast_1 R^\ast\psi, \mathbb{H}_0 T^\ast_2 T^\ast_1 R^\ast \psi\rangle \leq CL^3\rho^{\frac{7}{3}}.
\end{equation}
To prove Theorem \ref{thm: optimal number operator}, we first prove some propagation bounds for the number operator. Similarly as for the proof for Proposition \ref{pro: N}, we use Gr\"onwall's Lemma. We then write $\xi_\lambda$ in place of $T^\ast_{2;\lambda}T^\ast_1 R^\ast \psi$ and compute 
  \begin{eqnarray*}
    |\partial_\lambda \langle \xi_\lambda, \mathcal{N} \xi_\lambda\rangle | &=& |\langle \xi_\lambda, [\mathcal{N}, B_2 - B_2^\ast] \xi_\lambda\rangle |
    \\
    &=&  8\int_0^{\infty} dt\, e^{-tk_F^2}\int dxdy\, V_\varphi(x-y) \langle\xi_\lambda, a_\uparrow(u^{t}_x)a_\uparrow({v}^{t}_x)a_\downarrow(u^{t}_y)a_\downarrow({v}^{t}_y)\xi_\lambda\rangle + \mathrm{c.c.},
  \end{eqnarray*}
  where we used the notations introduced in \eqref{eq def ut vt}.      
By Cauchy-Schwarz inequality and using that $\|a_\sigma(u^t_\cdot)\| = \|u^t_\sigma\|_2$, $\|a_\sigma(v^t_\cdot)\| = \|v^t_\sigma\|_2\leq C\rho^{1/2}e^{t(k_F^\sigma)^2}$, we get
  \begin{align}\label{eq: est der N T2}
    |\partial_\lambda \langle \xi_\lambda, \mathcal{N}\xi_\lambda\rangle| 
    &\leq C\rho^{\frac{1}{2}}\int_0^{\infty} dt\, e^{t(k_F^\downarrow)^2}\|{u}^{t}_\uparrow\|_2\|{v}^{t}_\uparrow\|_2\int dxdy\, V_\varphi(x-y)\|a_\downarrow(u^{t}_y)\xi_\lambda\|\nonumber
    \\
    &\leq CL^{\frac{3}{2}}\rho^{\frac{1}{2}}\|V_\varphi\|_1\left(\int_0^{\infty}dt\,\|{u}^{t}_\uparrow\|_2\|{v}^{,t}_\uparrow\|_2\right)\|\mathcal{N}^{\frac{1}{2}}\xi_\lambda\| \leq CL^{\frac{3}{2}} \rho^{\frac{2}{3}} \|\mathcal{N}^{\frac{1}{2}}\xi_\lambda\|\leq \langle\xi_\lambda, \mathcal{N}\xi_\lambda\rangle + CL^3\rho^{\frac{4}{3}},\nonumber
  \end{align}
  where we also used Lemma \ref{lem: integral t}, together with $\|V_\varphi\|_1 \leq C$ and 
\[
  \int dy\, \|a_\downarrow(u^t_y)\xi_\lambda\|^2  = \sum_k |\hat{u}^{t}_\downarrow(k)|^2\langle \xi_\lambda, \hat{a}_{k,\downarrow}\hat{a}_{k,\downarrow}\xi_\lambda\rangle \leq \sum_{|k| \geq k_F^\downarrow + (k_F^\downarrow)^{3/2}} e^{-t|k|^2}\langle \xi_\lambda, \hat{a}_{k,\downarrow}\hat{a}_{k,\downarrow}\xi_\lambda\rangle\leq Ce^{-t(k_F^\downarrow)^2} \langle \xi_\lambda, \mathcal{N}\xi_\lambda\rangle.
\]
By Gr\"onwall's Lemma, we can then conclude that for any $\lambda\in [0,1]$,
\[
  \langle T^\ast_{2;\lambda}T^\ast_1 R^\ast \psi, \mathcal{N} T^\ast_{2,\lambda}T^\ast_1 R^\ast \psi \rangle \leq C L^3 \rho^{\frac{4}{3}} + \langle  T^\ast_{2}T^\ast_1 R^\ast \psi, \mathcal{N} T^\ast_{2}T^\ast_1 R^\ast \psi \rangle.
\] 
From the propagation estimate \eqref{eq: prop H0 T2} and proceeding as in \cite[Proposition 5.9]{FGHP} (see also \eqref{eq: improv N lw bd}), one can prove that for any $\lambda\in [0,1]$, it holds that
\[
   \langle T^\ast_{2;\lambda}T^\ast_1 R^\ast \psi,\mathcal{N} T^\ast_{2;\lambda}T^\ast_1 R^\ast \psi\rangle \leq CL^{\frac{3}{2}}\rho^{\frac{1}{6}}\|\mathbb{H}_0^{\frac{1}{2}}T^\ast_2T^\ast_1 R^\ast\psi\| + CL^3\rho^{\frac{4}{3}} \leq CL^3\rho^{\frac{4}{3}},
\]
where we also used \eqref{eq: est H0 T2 7/3}. It then follows that  $\langle T^\ast_1 R^\ast \psi,\mathcal{N} T^\ast_1 R^\ast \psi\rangle \leq CL^3\rho^{4/3}$, which together with Proposition \ref{pro: N}, implies that 
\begin{equation}\label{eq: est R*NR}
  \langle R^\ast \psi, \mathcal{N} R^\ast \psi\rangle \leq CL^3\rho^{\frac{4}{3}}.
\end{equation}
From the definition of the particle-hole transformation (see in particular \eqref{eq: def R momentum space}), we then can conclude that 
\[
  \sum_{\sigma = \{\uparrow, \downarrow\}}\sum_{\substack{k\in\frac{2\pi}{L}\mathbb{Z}^3 \\ |k|> k_F^\sigma}}\langle \psi, \hat{a}_{k,\sigma}^\ast \hat{a}_{k,\sigma}\psi\rangle \leq CL^3\rho^{\frac{4}{3}}, \qquad  \sum_{\sigma = \{\uparrow, \downarrow\}}\sum_{\substack{k\in\frac{2\pi}{L}\mathbb{Z}^3 \\ |k| \leq k_F^\sigma}}\langle \psi, \hat{a}_{k,\sigma} \hat{a}_{k,\sigma}^\ast \psi\rangle \leq CL^3\rho^{\frac{4}{3}}.
\]
We now prove the second part of Theorem \ref{thm: optimal number operator}.
We start by recalling that
\[
  \mathcal{N}^{(\epsilon)}_> = \sum_\sigma \sum_{|k| > k_F^\sigma + (k_F^\sigma)^{1+\epsilon}}\hat{a}_{k,\sigma}^\ast\hat{a}_{k,\sigma} =: \sum_\sigma \mathcal{N}^{(\epsilon)}_{>,\sigma},
\]
with $0\leq \epsilon < 1$.
This immediately implies that 
\begin{equation}\label{eq: est Neps T2 lambda 1}
  \langle T^\ast_2 T^\ast_1 R^\ast \psi,  \mathcal{N}^{(\epsilon)}_>T^\ast_2 T^\ast_1 R^\ast \psi\rangle \leq C\rho^{-\frac{2}{3}- \frac{\epsilon}{3}}\langle T^\ast_2 T^\ast_1 R^\ast \psi, \mathbb{H}_0 T^\ast_2 T^\ast_1 R^\ast \psi\rangle \leq CL^3\rho^{\frac{5}{3} -\frac{\epsilon}{3}}.
\end{equation}
We now want to use Gr\"onwall's Lemma to propagate this estimate. A before, to have a shorter notation, we write $\xi_\lambda: = T^\ast_{2;\lambda}T^\ast_1 R^\ast \psi$. In particular, computing $\partial_\lambda \langle \xi_\lambda, \mathcal{N}^{(\epsilon)}_{>,\sigma}\xi_\lambda\rangle$ , see \cite[Proposition 4.15, Corollary 4.16]{Gia1} for more details,  we can prove that for $\sigma \neq \sigma^\prime$:
\begin{equation}\label{eq: gronwall Neps}
  \partial_\lambda\langle \xi_\lambda, \mathcal{N}^{(\epsilon)}_{>,\sigma} \xi_\lambda\rangle \leq CL^{\frac{3}{2}}\rho^{\frac{1}{2}}\|V_\varphi\|_1 \left(\int_0^{+\infty}dt\, \|\hat{u}^t_{\sigma^\prime}\|_2\|\hat{v}_{\sigma^\prime}^t\|_2 \right) \|(\mathcal{N}^{(\epsilon)}_{>,\sigma})^{\frac{1}{2}}\xi_\lambda\|.
\end{equation}
To propagate the estimate for $\mathcal{N}^{(\epsilon)}_{>,\sigma}$  we now estimate the integral in $t$ in a more refined way than in Lemma \ref{lem: integral t}. From the definitions of $\hat{u}^t_\sigma$, $\hat{v}^t_\sigma$ in \eqref{eq: def ut vt x space} and by Cauchy-Schwarz, we get
\begin{align*}
  \int_0^{\infty} dt \, \|\hat{u}^{t}_{\sigma}\|_2 \|\hat{v}^t_{\sigma}\|_2 &\leq \sqrt{\int_0^{\infty} dt \frac{1}{L^3}\sum_k |\hat{u}^{\gg}_{\sigma}(k)|^2 e^{-2t(|k|^2 - (k_F^{\sigma})^2)}}\sqrt{\int_0^{\infty} dt\, \frac{1}{L^3}\sum_k |\hat{v}^{\gg}_{\sigma} (k)|^2e^{-2t((k_F^{\sigma})^2 - |k|^2)}}
  \\
  &= \sqrt{\frac{1}{L^3}\sum_k \frac{|u^\gg_{\sigma}(k)|^2}{|k|^2 - (k_F^\sigma)^2}} \sqrt{\frac{1}{L^3}\sum_k\frac{|\hat{v}^\gg_{\sigma}(k)|^2}{(k_F^\sigma)^2 - |k|^2}}.
\end{align*}
We now estimate the two factors above. We proceed differently than in Lemma \ref{lem: integral t}. In particular, we define $\eta = (k_F^{\sigma})^2 \rho^\delta$ for some $\delta > 0$ and we write 
\begin{eqnarray*}
   \frac{1}{L^3}\sum_k \frac{|u^\gg_\sigma(k)|^2}{|k|^2 - (k_F^\sigma)^2} &=& \frac{1}{L^3}\sum_k \frac{|u^\gg_\sigma(k)|^2}{|k|^2 - (k_F^\sigma)^2 + \eta} + \frac{\eta}{L^3}\sum_k \frac{|u^\gg_\sigma(k)|^2}{(|k|^2 - (k_F^\sigma)^2)(|k|^2 - (k_F^\sigma)^2 + \eta)}
   \\
   &\leq&  \frac{1}{L^3}\sum_k \frac{|u^\gg_\sigma(k)|^2}{|k|^2 - (k_F^\sigma)^2 + \eta} + \frac{\eta}{L^3}\sum_k \frac{|u^\gg_\sigma(k)|^2}{(|k|^2 - (k_F^\sigma)^2)^2}.
\end{eqnarray*}
We then estimate the two terms above. We consider the first sum. Taking $L$ large enough, and rescaling with respect to $k_F^{\sigma}$, we get  
\[
   \frac{1}{L^3}\sum_k \frac{|u^\gg_\sigma(k)|^2}{|k|^2 - (k_F^\sigma)^2 + \eta} \leq C\rho^{\frac{1}{3}}\int_{1 + (k_F^{\sigma^\prime})^{1/2} <|r| <2 } dr\frac{1}{(|r|-1)(|r| + 1) + \rho^{\delta}}  \leq   C\rho^{\frac{1}{3}}\log (1 + \rho^{-\delta}).
\]
Similarly, by rescaling, we also get that 
\[
  \frac{2\eta}{L^3}\sum_k \frac{|u^\gg_\sigma(k)|^2}{(|k|^2 - (k_F^\sigma)^2)^2} \leq C\rho^{\frac{1}{3} + \delta}\int_{1 + (k_F^{\sigma^\prime})^{1/2} <|r| <2 } dr\frac{1}{(|r|^2-1)^2}  \leq C\rho^{\frac{1}{6} + \delta}.
\]
Therefore, we can conclude that for $\delta >0$ small enough, we have
\[
   \frac{1}{L^3}\sum_k \frac{|u^\gg_\sigma(k)|^2}{|k|^2 - (k_F^\sigma)^2}  \leq  C\rho^{\frac{1}{6} + \delta}
\]
Proceeding in a similar way, we can also prove that 
\[
  \frac{1}{L^3}\sum_k\frac{|\hat{v}^\gg_\sigma(k)|^2}{(k_F^\sigma)^2 - |k|^2} \leq  C\rho^{\frac{1}{6} + \delta}.
\]
Inserting the estimates above in \eqref{eq: gronwall Neps} and using that $\|V_\varphi\|_1 \leq C$, we get 
\[
  |\partial_\lambda\langle \xi_\lambda, \mathcal{N}^{(\epsilon)}_{>,\sigma} \xi_\lambda\rangle| \leq  CL^{\frac{3}{2}}\rho^{\frac{1}{2} + \frac{1}{3} + 2\delta} \|(\mathcal{N}^{(\varepsilon)}_{>,\sigma})^{\frac{1}{2}}\xi_\lambda\|\leq \langle \xi_\lambda, \mathcal{N}^{(\epsilon)}_{>,\sigma} \xi_\lambda\rangle + CL^3\rho^{\frac{5}{3} + 4\delta}.
\]
From Gr\"onwall's Lemma together with \eqref{eq: est Neps T2 lambda 1}, we then find that for any $\lambda\in[0,1]$, we have
\begin{equation}\label{eq: N eps T2}
  \langle T^\ast_{2;\lambda}T^\ast_1 R^\ast \psi, \mathcal{N}^{(\epsilon)}_{>,\sigma} T^\ast_{2;\lambda}T^\ast_1 R^\ast \psi\rangle \leq C\langle T^\ast_{2}T^\ast_1 R^\ast \psi, \mathcal{N}^{(\epsilon)}_{>,\sigma} T^\ast_{2;\lambda}T^\ast_1 R^\ast \psi\rangle +  CL^3\rho^{\frac{5}{3} +4\delta} \leq  CL^3\rho^{\frac{5}{3} - \frac{\epsilon}{3}}, 
\end{equation}
which implies that for any $\lambda \in [0,1]$,
\[
   \langle T^\ast_{2;\lambda}T^\ast_1 R^\ast \psi, \mathcal{N}^{(\epsilon)}_{>} T^\ast_{2;\lambda}T^\ast_1 R^\ast \psi\rangle = \sum_{\sigma} \langle T^\ast_{2;\lambda}T^\ast_1 R^\ast \psi, \mathcal{N}^{(\epsilon)}_{>,\sigma} T^\ast_{2;\lambda}T^\ast_1 R^\ast \psi\rangle \leq  CL^3\rho^{\frac{5}{3} - \frac{\epsilon}{3}}.
\]
Using Gr\"onwall's Lemma again (see \cite[Proposition 4.15, Corollary 4.16]{Gia1} for more details), we deduce that 
\begin{equation}\label{eq: N eps T1}
  \langle T^\ast_{1;\lambda} R^\ast \psi,\mathcal{N}^{(\epsilon)}_> T^\ast_{1;\lambda} R^\ast \psi\rangle \leq  \langle T^\ast_{1} R^\ast \psi,\mathcal{N}^{(\epsilon)}_> T^\ast_{1} R^\ast \psi\rangle + CL^3\rho^{\frac{5}{3}}.
\end{equation}
Combining \eqref{eq: N eps T2} with \eqref{eq: N eps T1}, we then get that 
\[
  \langle R^\ast \psi, \mathcal{N}^{(\epsilon)}_>R^\ast \psi\rangle = \langle \psi, \mathcal{N}^{(\epsilon)}_> \psi\rangle \leq CL^3\rho^{\frac{5}{3} -\frac{\epsilon}{3}},
\]
where in the first equality above we used the definition of the particle-hole transformation. This concludes the proof of Theorem \ref{thm: optimal number operator}.\\

\noindent\textbf{Acknowledgments.} The author thanks Christian Hainzl, Phan Thành Nam and Robert Seiringer for helpful discussions. Financial support from the Deutsche Forschungsgemeinschaft (DFG) -- Project-ID 470903074 -- TRR 352 is gratefully acknowledged. \\
 
\noindent\textbf{Data Availibility.} The author declares that all data supporting this article are available within the article.

\end{document}